\def\timestamp{\number\hh:\ifnum\mm<10{}0\fi\number\mm}
\numberwithin{equation}{section}
\def\reftab#1{\ref{#1} on page~\pageref{#1}}
\newtheorem{thm}[equation]{Theorem}
\newtheorem{cor}[equation]{Corollary}
\newtheorem{conj}[equation]{Conjecture}
\theoremstyle{definition}
\newtheorem{prop}{Proposition}[section]
\newtheorem{defn}[prop]{Definition}
\newtheorem{example}[prop]{Example}
 \DeclareMathOperator{\tr}{tr} 
\numberwithin{equation}{section}
\renewcommand\a{\alpha}
\renewcommand\b{\beta}
\newcommand\g{\gamma}
\renewcommand\d{\delta}
\newcommand\e{\varepsilon}
\renewcommand\l{\lambda}
\newcommand\G{\Gamma}
\newcommand\f{\frac}
\newcommand\smallf[2]{{\textstyle{\frac{#1}{#2}}}}
\newcommand\smallerf[2]{{\scriptstyle{\frac{#1}{#2}}}}
\newcommand\srel[2]{\begin{smallmatrix} {#1} \\ {#2} \end{smallmatrix}}
\newcommand\calB{\mathcal B}
\newcommand{\Z}{{\mathbb{Z}}}
\newcommand{\R}{{\mathbb{R}}}
\newcommand{\C}{{\mathbb{C}}}
\newcommand{\Q}{{\mathbb{Q}}}
\newcommand{\U}{{\mathbb{H}}}
\renewcommand\Re{\text{Re~}}
\renewcommand\({\left(}
\renewcommand\){\right)}
\newcommand{\ttwo}[4]{
\(\begin{smallmatrix}{#1} & {#2}
\\ {#3} & {#4} \end{smallmatrix}\)}
\newcommand{\tthree}[9]{
\(\begin{smallmatrix}{#1} & {#2} & {#3}
\\ {#4} & {#5} & {#6} \\ {#7} & {#8} & {#9} \end{smallmatrix}\)}
\newcommand{\AV}{\mathrm{AV}}
\newcommand{\gobble}[1]{}
  \newcommand{\rangeref}[2]{%
    \ref{#1}--\afterassignment\gobble\fam 0\ref{#2}%
  }
\def\imod#1{\allowbreak\mkern10mu({\operator@font mod}\,\,#1)}
\def\hE#1#2#3{ {\hat E}^{#1}_{\textbf{#2};{#3}}}
\def\ZZ{{\mathbb Z}}
\def\IC{{\mathbb C}}
\def\IR{{\mathbb R}}
\def\IZ{{\mathbb Z}}
\def\cE{{\cal E}}
\def\cE{\mathcal{E}}
\def\cF{\mathcal{F}}
\def\cV{\mathcal{V}}
\def\cR{\mathcal{R}}
\newcommand{\be}{\begin{equation}}
\newcommand{\ee}{\end{equation}}
\newcommand{\bea}{\begin{eqnarray}}
\newcommand{\eea}{\end{eqnarray}}
\def\a{\alpha }
\def\g{\gamma }
\def\threeh{{\scriptstyle {3 \over 2}}}
\def\fiveh{{\scriptstyle {5 \over 2}}}
\def\R{\cR}
\def\bE{{\bf E}}
\def\hE{{\hat{E}}}
\def\calE{{\mathcal E}}
\def\rT{\textrm{T}}
\def\calT{{\mathcal T}}
\def\calU{{\mathcal U}}
\def\bsz{\backslash\{0\}}
\def\calV{{\mathcal V}}
\def\nn{\nonumber}
\def\half{{\scriptstyle {1 \over 2}}}
\def\third{{\scriptstyle {1 \over 3}}}
\def\quart{{\scriptstyle {1 \over 4}}}
\title[Small representations, string instantons, and Fourier modes]{Small representations, string instantons, and Fourier modes of  Eisenstein series  \\ \hspace{1cm} \\ \scriptsize{Michael B. Green, Stephen D. Miller, and Pierre Vanhove}  \\ \hspace{1cm} \\ \scriptsize{with appendix ``Special unipotent representations'' by Dan Ciubotaru and Peter E.~Trapa}
}
\author[M.B. Green]{}
\address{Michael B. Green\\
Department of Applied Mathematics and
Theoretical Physics\\
 Wilberforce Road, Cambridge CB3 0WA, UK}
\email{ M.B.Green@damtp.cam.ac.uk}
\author[S.D. Miller]{}
\address{Stephen D Miller\\
Department of Mathematics\\
 Rutgers University, Piscataway, NJ 08854-8019, USA}
\email{miller@math.rutgers.edu}
\author[P. Vanhove]{}
 \address{Pierre Vanhove\\
Institut des Hautes Etudes Scientifiques\\
 Le Bois-Marie, 35 route de Chartres\\
 F-91440 Bures-sur-Yvette, France\hfill\break
Institut de Physique Th{\'e}orique,\\
CEA, IPhT, F-91191 Gif-sur-Yvette, France\\
CNRS, URA 2306, F-91191 Gif-sur-Yvette, France}
\email{pierre.vanhove@cea.fr}
\thanks{DAMTP-2011-102, IPHT-t11/188, IHES/P/11/25}
\date{}
\begin{document}

 \begin{abstract}

This paper concerns some novel features of    maximal parabolic
Eisenstein series at certain special values of their analytic
parameter, $s$.  These series arise as coefficients in the $\R^4$ and
$ \partial^4\, \R^4$ interactions in the low energy expansion of the
scattering amplitudes in  maximally supersymmetric string theory
reduced to $D=10-d$ dimensions on a torus, $\rT^{d}$ ($0\le d \le 7$).
For each $d$ these amplitudes are automorphic functions on  the  rank
$d+1$ symmetry
group $E_{d+1}$.

Of particular significance is the orbit content of the Fourier modes of these series when expanded
in three different parabolic subgroups, corresponding to certain
limits of string theory. This  is of interest in the classification of
a variety of instantons that correspond to minimal or ``next-to-minimal''  BPS orbits.  In the limit of decompactification from $D$ to $D+1$ dimensions many such instantons are related to charged $\smallf 12$-BPS or $\smallf 14$-BPS black holes with euclidean world-lines wrapped around the large dimension. In a different limit the instantons give nonperturbative corrections to string perturbation theory, while  in a third limit they describe nonperturbative contributions in eleven-dimensional supergravity.

A proof is given that these three distinct Fourier expansions have certain vanishing coefficients that are expected from string theory.  In particular, the Eisenstein series for these special values of $s$  have markedly fewer Fourier coefficients
than typical maximal parabolic Eisenstein series.  The corresponding mathematics involves showing that the wavefront sets of the Eisenstein series in question are supported on only a limited number of coadjoint nilpotent orbits
 -- just the minimal and trivial orbits in the $\smallf12$-BPS case, and  just the next-to-minimal, minimal and trivial orbits in the $\smallf14$-BPS case.  Thus as a byproduct we demonstrate
 that the next-to-minimal representations occur automorphically for $E_6$, $E_7$, and $E_8$, and hence the first two nontrivial low energy coefficients in scattering amplitudes can be thought of as exotic $\theta$-functions for these groups.
The proof includes an appendix by Dan Ciubotaru and Peter E. Trapa which calculates wavefront sets for these and other special unipotent representations.

{\bf keywords:~}automorphic forms, scattering amplitudes, string theory, small representations, Eisenstein series, Fourier expansions,  unipotent representations, charge lattice, BPS states, coadjoint nilpotent orbits.
\end{abstract}

\maketitle
\newpage\tableofcontents

\section{Introduction}\label{sec:introduction}

String theory is expected to be invariant under a very large set of discrete symmetries (``dualities''),
 associated with arithmetic subgroups of a variety of reductive Lie
 groups.    For example,  maximally supersymmetric string theory (type
 II superstring theory),   compactified on a $d$-torus to $D = 10-d$
 space-time dimensions, is strongly conjectured to be invariant under
 $E_{d+1}(\Z)$, the integral  points of the  rank $d+1$ split real
 form\footnote{The split real forms are conventionally denoted
   $E_{n(n)}$, but in this paper we will truncate this to $E_n$ except when other forms of $E_n$ are needed.} of one of the groups in the
 sequence $E_8$, $E_{7}$, $E_{6}$, $Spin(5,5)$, $SL(5)$, $SL(3)\times
 SL(2)$, $SL(2)\times \IR^+$, $SL(2)$ listed in table~\ref{tab:Udual}.\footnote{\label{Gzdeffootnote}
Unfortunately the literature contains  some disagreement over precisely which groups  $E_{d+1}(\IR)$ occur here, an ambiguity amongst the split real groups having the same Lie algebra.  For example some authors have $SO(5,5,\IR)$ instead of its double cover $Spin(5,5,\IR)$; in general possible groups are related by taking quotients by a subgroup $G_0$ of the center of the larger group.  The choices listed here, which  represent the current consensus, are  each the real points of an (algebraically) simply connected Chevalley group.  (The real groups $E_{d+1}(\IR)$ and $K_{d+1}$ are not topologically simply connected, except in the trivial $D=10A$ case.)

Although we will try to be precise in our definitions, this discrepancy does not affect the results in this paper.  We note, in particular, that $E_{d+1}(\Z)$ is mathematically defined as the stabilizer of the Chevalley lattice in the Lie algebra ${\mathfrak e}_{d+1}$ under the adjoint action.  Since the  center acts trivially under the adjoint action, the integral points of the larger group factors as the direct product of $G_0$ with the integral points of the smaller group.  In particular the Eisenstein series for the two groups are the same (see for example (\ref{SpinddandSoddseries})).}
 \begin{figure}[ht]
 \centering\includegraphics[width=8cm]{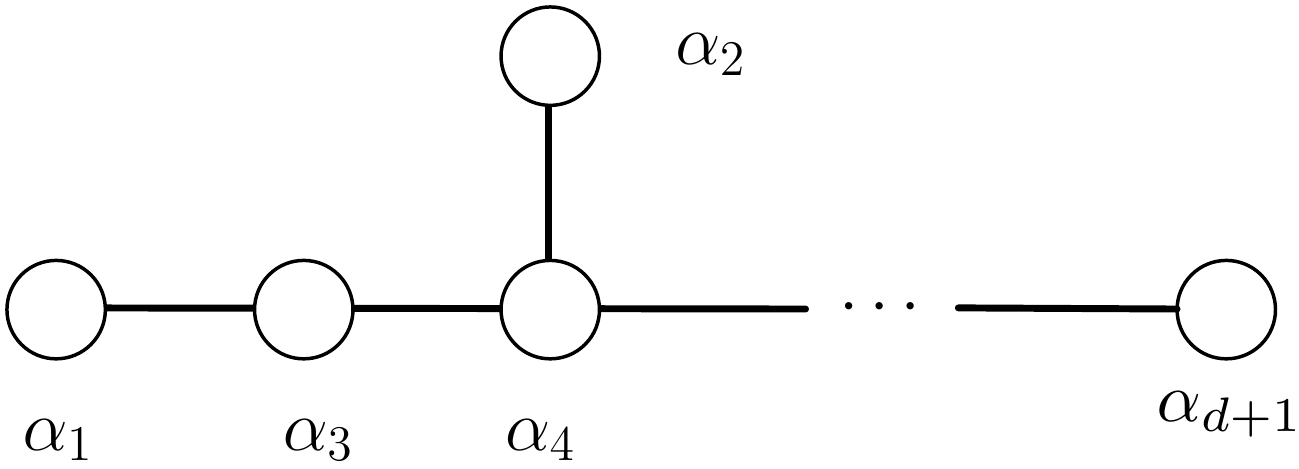}
 \caption{\label{fig:dynkin}  The  Dynkin  diagram for  the rank  $d+1$ Lie group $E_{d+1}$, which defines the symmetry group for $D=10-d$.}
 \end{figure}

These symmetries  severely constrain the dependence of string scattering amplitudes on the symmetric space coordinates (or ``moduli''),  $\phi_{d+1}$,  which parameterise the coset
$E_{d+1}/K_{d+1}$, where the stabiliser $K_{d+1}$ is the maximal
compact subgroup of $E_{d+1}$.  The list of these
symmetry\footnote{The continuous groups,
  $E_{d+1}(\IR)$, will be referred to as {\sl symmetry groups} while the  discrete arithmetic  subgroups, $E_{d+1}(\ZZ)$, will be referred to as  {\sl duality groups}.} groups and stabilisers is given in   table~\ref{tab:Udual}.  These moduli are scalar fields that are interpreted as coupling constants in string theory.  A general consequence of the dualities is that scattering amplitudes are functions of $\phi_{d+1}$ that must transform as automorphic  functions  under the appropriate duality group $E_{d+1}(\Z)$.  It is   difficult to determine the precise restrictions these dualities impose on general amplitudes, but certain exact properties  have been obtained in the case of the   four-graviton interactions, where a  considerable amount of information has been obtained  for  the first three terms in the low energy (or ``derivative'')  expansion of the four graviton scattering amplitude  in \cite{Green:2010kv} (and references cited therein).  These are described by terms in the effective action of the form
\be
\calE_{(0,0)}^{(D)} (\phi_{d+1})\, \R^4\,,\quad  \calE_{(1,0)}^{(D)}(\phi_{d+1})\, \partial^4\, \R^4\,,  \ \ \  \text{and}  \ \ \  \calE_{(0,1)}^{(D)}(\phi_{d+1})\,\partial^6\, \R^4\,,
\label{coeffs}
\ee
 where the symbol $\R^4$ indicates a contraction of four powers of the Riemann tensor with a standard rank  16 tensor.
 The coefficient functions,  $\calE_{(p,q)}^{(D)}(\phi_{d+1})$,  are automorphic functions that are the main focus of our interests (the notation is taken from \cite{Green:2010wi, Green:2010kv} and will be reviewed later  in (\ref{amp})).  More precisely we will focus on the three terms shown in \eqref{coeffs}  that are protected by supersymmetry, which accounts for the relatively simple form of their coefficients.
 \begin{table}[h]
   \centering
   \begin{tabular}{||c|c|c|c||}
   \hline
 $D$ & $E_{d+1}(\IR)$ & $K_{d+1}$&$E_{d+1}(\ZZ)$\\
 \hline
 10A&$ \IR^+$&1&1\\
 10B&$SL(2,\IR)$&$SO(2)$& $SL(2,\ZZ)$\\
 9&$SL(2,\IR)\times \IR^+$& $SO(2)$& $SL(2,\ZZ)$\\
 8&$SL(3,\IR)\times SL(2,\IR)$& $SO(3)\times SO(2)$& $SL(3,\ZZ)\times SL(2,\ZZ)$ \\
 7&$SL(5,\IR)$ & $SO(5)$& $SL(5,\ZZ)$\\
 6 &  $Spin(5,5,\IR)$ & $(Spin(5)\times Spin(5))/\ZZ_2$& $Spin(5,5,\ZZ)$\\
 5&$ E_{6}(\IR)$& $USp(8)/\ZZ_2$& $ E_{6}(\ZZ)$\\
 4 &$ E_{7}(\IR)$& $SU(8)/\ZZ_2$& $E_{7}(\ZZ)$\\
 3 &$ E_{8}(\IR)$& $Spin(16)/\ZZ_2$& $E_{8}(\ZZ)$\\
 \hline
   \end{tabular}
   \label{tab:Udual}
      \caption{
      The   symmetry   groups  of   maximal  supergravity   in  $D=10-d\le 10$
     dimensions.  The group $E_{d+1}(\IR)$ is a split real form of
       rank  $d+1$,
    and $K_{d+1}$ is its maximal compact subgroup.
     In  string  theory  these  groups  are  broken  to  the  discrete
     subgroups, $E_{d+1}(\ZZ)$, as indicated in the last column (see \cite{Hull:1994ys} and its updated version in \cite{Hull:2007}).
     The split real form $E_{d+1}(\IR)$   is determined among possible
     covers or quotients  by its maximal compact subgroup $K_{d+1}$,
     which shares the same fundamental group. The terminology $10A$ and $10B$ in the first column  refers to the two possible superstring theories (types IIA and IIB) in $D=10$ dimensions. }
 \end{table}

The coefficients of the first two terms satisfy Laplace eigenvalue equations (\ref{laplaceeigenone}-\ref{laplaceeigentwo}) and are subject to specific boundary conditions that  are required for consistency with string perturbation theory and  M-theory.   The solutions to these equations are particular maximal parabolic Eisenstein series that were studied in \cite{Green:2010wi} (for cases with rank $\le 5$) and \cite{Green:2010kv} (for the $E_6$, $E_7$ and $E_8$
cases), and will be reviewed in the next section.    The required
boundary conditions in each limit  amount to conditions on the
constant terms  in the expansion of these series in three limits
associated with particular  maximal\footnote{The
   $D= 8$ case is degenerate and also involves  non-maximal
   parabolics (see table~\ref{tab:Udual}).} parabolic subgroups of relevance
to the string theory analysis.   Such subgroups have the form
$P_\alpha = L_\alpha\, U_\alpha$, where $\alpha$ labels a simple root,
$U_\alpha$ is the unipotent radical and $L_\alpha=  GL(1) \times
M_\alpha$ is the Levi factor.\footnote{For clarity, we emphasize that its usage here indicates that every element of $L_\a$ can be written as an element of $GL(1)$ times an element of $M_\a$ (and not that $L_\a$ is the direct product of the two factors, which is a stronger statement). }  The  three subgroups of relevance here
have Levi factors  $L_{\alpha_1}= GL(1)\times Spin(d,d)$,
$L_{\alpha_2}= GL(1)\times SL(d+1)$, and $L_{\alpha_{d+1}}=
GL(1)\times E_d$,  respectively.  In each case the $GL(1)$ parameter,
$r$, can be thought of as measuring the distance to the
cusp\footnote{Each of the groups we are considering has a single cusp.
  The various limits correspond to different ways of approaching this
  cusp.},  as will be discussed in the next section.
 A key feature of the boundary conditions is that they require these constant terms to have very few components with distinct  powers of the parameter  $r$.    These conditions pick out the unique
solutions  to the Laplace equations, which
are,\footnote{In~\cite{Green:2010wi,Green:2010sp,Green:2010kv}
    the series were indexed by the  label $[1\, 0\cdots0]$ of
    the root $\alpha_1$. In the present paper, we will index the series
    according the labeling of the simple root in
    figure~\ref{fig:dynkin}. We have as well changed the
    normalisations of the Eisenstein series, since our series there was instead
    $\bE_{[10\cdots0];s}^{E_{d+1}}=2\zeta(2s) E^{E_{d+1}}_{\alpha_1;s}$.
  }
\begin{equation}
\calE_{(0,0)}^{(10-d)} \ \ = \ \ 2\,\zeta(3)\, E^{E_{d+1}}_{\alpha_1;\, \threeh}\,,
\label{rfourcoeff}
\end{equation}
for the groups $E_1$, $E_4$, $E_5$, $E_6$, $E_7$, and $E_8$
\cite{Green:2010wi,Green:2010kv} and
\begin{equation}
\calE_{(1,0)}^{(10-d)} \ \  = \ \  \zeta(5)\,E^{E_{d+1}}_{\alpha_1;\, \fiveh}\,,
\label{dfourrfourcoeff}
\end{equation}
for the groups $E_1$, $E_6$, $E_7$, and $E_8$ \cite{Green:2010kv}. Here
$E_{\beta;s}^{G}$ is the maximal parabolic Eisenstein series for a parabolic subgroup  $P_\b\subset G$  that is specified by the node $\beta$ of the Dynkin diagram (see (\ref{maxparabeisdef}) for a precise definition).
 This generalizes results for the $SL(2,\Z)$ case (relevant to the ten-dimensional type IIB string theory). The functions  $\calE_{(0,0)}^{(10-d)}$ and $\calE_{(1,0)}^{(10-d)}$ in the intermediate rank cases involve linear combinations of Eisenstein series  \cite{Green:2010wi}, which will be discussed later in section~\ref{lowrank}.
The third coefficient function, $\calE_{(0,1)}^{(10-d)}$  satisfies an
interesting inhomogeneous Laplace equation and is not an Eisenstein
series~\cite{Green:2005ba,Green:2010kv}.  Its constant terms in the
three limits under consideration were also analysed in the earlier
references but it  will not be considered in this paper, which is
entirely concerned with Eisenstein series.

In other words, our previous work showed that  the particular
Eisenstein series in \eqref{rfourcoeff} and \eqref{dfourrfourcoeff}
have strikingly sparse constant terms as required to correctly
describe the coefficients of the $\smallf 12$-BPS and $\smallf 14$-BPS
interactions.  But the string theory boundary conditions also
determine the support of the non-zero Fourier coefficients in each of
the three limits under consideration.
In string theory, the non-zero Fourier modes describe instanton
contributions to the amplitude. These are classified in
BPS orbits obtained by acting on a representative instanton
configuration with the appropriate Levi subgroup.
A given instanton configuration generally depends on only a subset of
the parameters of the Levi group, $L_\alpha = GL(1) \times M_\alpha$,
so that a given orbit depends on the  subset of the moduli that live
in a coset space of the form $M_\alpha/H^{(i)}$,  where
$H^{(i)}\subset M_\a$ denotes the stabiliser of the $i$-th orbit.
The dimension of the  $i$-th orbit   is the dimension of this coset
space.

In particular, the coefficients in the $s=3/2$ cases covered by \eqref{rfourcoeff} must  be
  localized within  the smallest possible non-trivial orbits
  (``minimal orbits'') of the Levi actions, as required by the
  $\smallf 12$-BPS condition.  Furthermore,  in the $s=5/2$ cases covered by \eqref{dfourrfourcoeff} the
  coefficients  are shown to be localized within the ``next-to-minimal'' (NTM)
  orbits (see section~\ref{sec:mathematicsbackground}).  The role of next-to-minmal orbits was also considered in~\cite{Pioline:2010kb}.  However, the specific suggestion there was based on the next-to-minimal representations of Gross and Wallach~\cite{GrossWI,GrossWII}, who did not consider the split groups of relevance to the duality symmetries of type IIB string theory, which have very distinctive properties (as we shall see).
\vspace{.2cm}

This provides motivation  from string theory for the following
\vspace{.2cm}

\noindent
 \centerline {\bf String motivated vanishing of Fourier modes of Eisenstein series:}
{\it
 \begin{itemize}\label{stringmotivatedprediction}
\item[(i)] The non-zero Fourier coefficients of  $E^{E_{d+1}}_{\alpha_1;\,
    \smallerf32}$ ($d=5,6,7$) in any of the three parabolic subgroups
  of relevance are localized within the smallest possible non-trivial orbits (``minimal orbits'') of the action of the Levi subgroup associated with that parabolic, as required by the $\smallf12$-BPS condition.

\item[(ii)] The non-zero Fourier coefficients of $E^{E_{d+1}}_{\alpha_1;\, \smallerf52}$ ($d=5,6,7$) are localized within ``next-to-minimal'' (NTM)   orbits, as required by the $\smallf14$-BPS condition.
\end{itemize}}

\vspace{.1cm}

While the special properties of the Fourier coefficients of the
$s=3/2$ series is implied by the results in \cite{grs}, the corresponding properties for the NTM orbits at $s=5/2$ is novel.     One of the main mathematical contributions of this paper is to give a
rigorous proof of these statements using techniques from representation
theory, by connecting these automorphic forms to small representations
of the  split real groups $E_{d+1}$.   The Fourier coefficients in the intermediate rank cases not covered by \eqref{rfourcoeff} and \eqref{dfourrfourcoeff} satisfy analogous properties as we will determine by explicit calculation later in this paper.


  \section{Overview of scattering amplitudes and Eisenstein series}

Since this paper covers topics of interest in both string theory and mathematics, this section will  present a brief description of the background to these topics from both points of view followed by a detailed outline of the rest of the paper.

\subsection{String theory Background}\label{sec:stringtheorybackground}

We are concerned with exact (i.e., non-perturbative) properties of the low energy expansion of the four-graviton scattering amplitude in dimension $D=10-d$, which   is a function of the moduli, $\phi_{d+1}$,  as well as of the  particle momenta $k_r$ ($r=1,\dots,4$) that are null Lorentz $D$-vectors ($k_r^2=k_r\cdot k_r=0$) which are conserved ($\sum_{r=1}^4 k_r=0$).
They arise in the invariant combinations (Mandelstam invariants), $s=-(k_1+k_2)^2$, $t=-(k_1+k_4)^2$ and $u=-(k_1+k_3)^2$ that satisfy $s+t+u=0$.  At low orders in the low-energy expansion the amplitude can usefully be separated into analytic and nonanalytic parts
  \be
  A_D(s,t,u) \ \ = \ \  A_D^{analytic}(s,t,u)  \ + \  A_D^{nonanalytic}(s,t,u)
  \label{ampcomp}
  \ee
  (where the dependence on $\phi_{d+1}$ has been suppressed).
 The analytic part of the amplitude has the form
  \be
  A_D^{analytic} (s,t,u) \  \ = \  \  T_D(s,t,u)\, \ell_D^6\,\R^4\,,
  \label{scalardef}
  \ee
where $\ell_D$ denotes the $D$-dimensional Planck length scale and    the factor $\R^4$ represents the particular contraction  of four Riemann curvature tensors,  $\tr(\R^4) - (\tr \R^2)^2/4$, that is fixed by maximal supersymmetry in a standard fashion~\cite{Green:1987mn}.  The scalar function  $T_D$ has the expansion
(in the Einstein frame\footnote{The Einstein frame is the frame in which lengths are measured in Planck units rather than string units, and is useful for discussing dualities.})
  \bea\label{amp}
 T_D(s,t,u) &=& \cE_{(0,-1)} \, \sigma_3^{-1}+ \sum_{p,q\geq0}   \calE^{(D)}_{(p,q)}\,
\sigma_2^p\, \sigma_3^q\\
&=&3\, \sigma_3^{-1}
  +  \,  \calE^{(D)}_{(0,0)} \,   +  \,    \calE^{(D)}_{(1,0)}\, \sigma_2 \,   +  \,   \calE^{(D)}_{(0,1)}\, \sigma_3 \,   +  \, \cdots \,.\nn
\eea
  Symmetry under interchange of the four gravitons implies that the Mandelstam invariants only appear in the combinations $\sigma_2$ and $\sigma_3$ with  $\sigma_n =(s^n + t^n + u^n)\, (\ell_D^{2}/4)^n$. Since $s,t,u$ are quadratic in momenta the successive terms in the expansion are of order $n = 2p +3q$ in powers of  $($momenta$)^2$.
The degeneracy, $d_n=\lfloor (n+2)/2\rfloor-\lfloor (n+2)/3\rfloor$,  of terms with power $n$  is given by the generating function\footnote{This is the same as the well-known dimension formula for the space of weight $2n$ holomorphic modular forms for $SL(2,\Z)$,  which are expressed  as polynomials in the (holomorphic)  Eisenstein series $G_4$ and $G_6$.},
\be
{1\over (1-x^2)(1-x^3)} \ \ = \  \  \sum_{n=0}^\infty d_n \,x^n\,,
\label{degen}
\ee
so $d_0=1$, $d_1=0$ and $d_n=1$ for  $2\leq n\le 5$.

The coefficient functions in \eqref{amp},
$\calE^{(D)}_{(p,q)}(\phi_{d+1})$,  are automorphic  functions of the
moduli $\phi_{d+1}$ appropriate to compactification  on $\rT^d$. The
first term on the right-hand side of (\ref{amp})
 is  identified with the tree-level contribution of classical
 supergravity and has a constant coefficient given by $\calE^{(D)}_{(0,-1)}(\phi_{d+1})=3$.  The terms of higher order in $s$, $t$, $u$ represent stringy modifications of supergravity, which depend on the moduli in a manner consistent with duality invariance.
This expansion is presented in the Einstein frame so  the curvature, $\cR$, is invariant under $E_{d+1}(\Z)$ transformations, whereas it transforms nontrivially in the string frame   since   it is nonconstant in
 $\phi_{d+1}\in E_{d+1}(\IR)/K_{d+1}$.

Apart from the   first
  term, the power series expansion  in \eqref{amp} translates into a
  sum of local interactions in the effective action.  The first two of
  these have the form
\be
\ell_{D}^{8-D}\int d^{D}x\, \sqrt{-G^{(D)}}  \,  \calE^{(D)}_{(0,0)}\, \R^4\,, \quad
\ell_{D}^{12-D}\int d^{D}x\, \sqrt{-G^{(D)}}\, \,
\calE^{(D)}_{(1,0)} \partial^4\R^4\,.
\label{effacts}
\ee
The three interactions with coefficient functions  $\calE^{(D)}_{(0,0)}$,  $\calE^{(D)}_{(1,0)}$  and  $\calE^{(D)}_{(0,1)}$  displayed in the second equality in \eqref{amp} are specially  simple since they are protected by supersymmetry from renormalisation beyond a given order in perturbation theory.
 In particular, the $\R^4$ interaction breaks $16$ of the $32$
 supersymmetries of the type II theories and is thus $\smallf 12$-BPS,
 while the $\partial^4 \R^4$ interaction breaks $24$ supersymmetries
 and is $\smallf 14$-BPS;  likewise,  the $\partial^6 R^4$ interaction
 breaks 28 supersymmetries and is $\smallf 18$-BPS.
 The next  interaction is the $p=2, q=0$ term in \eqref{amp},
 $\calE^{(D)}_{(2,0)}\,\partial^8\R^4$.
 Naively this interaction breaks all supersymmetries, in which case it
 is expected to be much more complicated, but it would be of interest
 to discover if  supersymmetry does constrain this interaction.\footnote{A discussion of the properties of
     $\cE^{(9)}_{(2,0)}$ in nine dimensions can be found
     in~\cite[section~4.1.1]{Green:2008bf}.}

It was argued in   \cite{Green:2010wi}, based on consistency under various dualities,  that  the coefficients $\calE^{(D)}_{(0,0)}$,  $\calE^{(D)}_{(1,0)}$  and  $\calE^{(D)}_{(0,1)}$  satisfy the equations
  \begin{eqnarray}
 \!\!\!\!\!\! \left( \Delta^{(D)}                  -{3(11-D)  (D-8)\over
    D-2}\right)\,\cE^{(D)}_{(0,0)}&=& \!  6\,\pi\,\delta_{D,8} \,,
\label{laplaceeigenone}\\
\!\!\!\!\!\! \left( \Delta^{(D)}  -{5(12-D) (D-7)\over D-2}\right)\,\cE^{(D)}_{(1,0)}&=&\! 40\,\zeta(2)\, \delta_{D,7} \,,
\label{laplaceeigentwo}\\
\! \!\! \!\!\! \left( \Delta^{(D)} -{6(14-D) (D-6)\over D-2} \right)\,\cE^{(D)}_{(0,1)}&=& \! -\left(\cE_{(0,0)}^{(D)}\right)^2\!\! + 120\zeta(3)  \delta_{D,6}\,,
\label{laplaceeigenthree}\end{eqnarray}
where $\Delta^{(D)}$  is the Laplace  operator on the  symmetric space  $E_{11-D}/K_{11-D}$.  The discrete  Kronecker $\delta$ contributions on the
right-hand-side of these equations arise  from anomalous behaviour and can be related to the logarithmic ultraviolet divergences of  loop amplitudes in maximally supersymmetric supergravity \cite{Green:2010sp}.

Recall that automorphic forms for $SL(2,\Z)$ have Fourier expansions (i.e., $q$-expansions)  in their cusp.  For higher rank groups,
automorphic forms have Fourier expansions coming from any one of
several maximal  parabolic subgroups $P_{\alpha_r}$, where the simple
root $\alpha_r$ corresponds to  node $r$  in the Dynkin
diagram for $E_{d+1}$  in figure~\ref{fig:dynkin}.
  We are particularly interested in this Fourier expansion for $r=1$, 2, or $d+1$, because
each of these expansions has a distinct string theory interpretation in terms of the contributions of instantons in the limit in which a special combination of moduli degenerate.
These three limits are:
\begin{itemize}
\item[(i)]  {\it The decompactification limit} in which one circular
  dimension, $r_d$, becomes large.  In this case the amplitude reduces to the
  $D+1$-dimensional case  with $D=10-d$.  The BPS  instantons of the
  $D=(10-d)$-dimensional theory  are classified by orbits of  the Levi
  subgroup  $GL(1)\times E_d $.  Apart from one exception,  these instantons  can be described in terms of  the wrapping of the world-lines of black hole states in the decompactified $D+1$-dimensional theory around the large circular dimension (the exception will be described later).
  This limit is  associated with the parabolic subgroup $P_{\alpha_{d+1}}$.

\item[(ii)] {\it The string perturbation theory limit} of small string coupling constant, in which  the string coupling constant, $\sqrt{y_D}$,  is small, and  string perturbation theory amplitudes are reproduced.
    The instantons are exponentially suppressed contributions  that
    are classified by orbits of the Levi subgroup
    $GL(1)\times Spin(d,d)$.
    This  limit is associated with the parabolic subgroup $P_{\alpha_1}$.

\item[(iii)] {\it The $M$-theory limit} in which the $M$-theory torus has large volume $\calV_{d+1},$ and the semi-classical approximation to eleven-dimensional supergravity is valid.  This involves
      the compactification of M-theory from $11$ dimensions  on the
      $(d+1)$-dimensional $M$-theory torus, where  the instantons are
      classified by orbits of the Levi subgroup $GL(1)\times SL(d+1)$.
 This  limit is associated with the parabolic subgroup $P_{\alpha_2}$.

\end{itemize}

\vspace{.1cm}

 The special features of the constant terms that lead to consistency of all perturbative properties in these three limits appear to be highly nontrivial, and indicate particularly special mathematical properties of the Eisenstein series that define the coefficients of the $\R^4$ and $\partial^4 \R^4$ interactions.  The
 solutions to equations
 (\ref{laplaceeigenone}-\ref{laplaceeigenthree}) satisfying requisite
 boundary conditions on the constant terms (zero modes) in the Fourier
 expansions in the limits (i), (ii), and (iii)   were obtained for
 $7\le D \le 10$ in   \cite{Green:2010wi}, and for $3\le D \le 6$ in
 \cite{Green:2010kv}.   In particular, \eqref{rfourcoeff} and
 \eqref{dfourrfourcoeff} were found to be solutions for the cases with
 duality groups $E_6$, $E_7$ and $E_8$.
Whereas the coefficient functions $\calE_{(0,0)}^{(D)}$ and $\calE_{(1,0)}^{(D)}$ are given in terms of Eisenstein series that satisfy Laplace eigenvalue equations on the moduli space, the  coefficient $\calE_{(0,1)}^{(D)}$, of the $\smallf 18$-BPS interaction $\partial^6 \R^4$,  is an automorphic function that  satisfies an inhomogeneous Laplace equation.  Various properties of its constant terms in these three
limits were also determined in \cite{Green:2010wi,Green:2010kv}.

Whereas the earlier work concerned the zero Fourier modes of the coefficient functions, in this paper we are concerned with the non-zero modes in the Fourier expansion in any of the three limits listed above.  These Fourier coefficients should have the exponentially suppressed form that is characteristic of instanton contributions.
In more precise terms, the angular variables involved in the  Fourier expansion with respect to a maximal parabolic subgroup $P_\a$ come from the  abelianization\footnote{See (\ref{fourierexp2}).} $U_\a/[U_\a,U_\a]$ of the unipotent radical $U_\a$ of $P_\a$, and are conjugate to integers that define the instanton ``charge lattice".
Asymptotically close to a cusp a given Fourier coefficient is expected to have  an exponential factor of $\exp{ (-S^{(p)})}$, where $S^{(p)}$ is the action for an instanton of a given charge, as will be defined in section~\ref{sec:BPSinst}.  In the case of fractional BPS instantons the leading asymptotic behaviour in the cusp is  the real part of $S^{(p)}$, and  is related to the  charge \eqref{tencharge}, which enters the phase of the mode.

In each limit the $\smallf 12$-BPS orbits are minimal orbits
(i.e., smallest nontrivial orbits) while the $\smallf 14$-BPS orbits are ``next-to-minimal'' (NTM)  orbits  (i.e., smallest nonminimal or nontrivial orbits).   The next largest are $\smallf 18$-BPS orbits, which only arise for groups of sufficiently high rank; in the $E_8$ case there is a further $\smallf 18$-BPS orbit beyond that.
These come up again as ``character variety orbits'', a major
consideration in sections~\ref{sec:NTMdetails}  and~\ref{sec:shrunkFourierCoeff}. They are closely related to  -- but not to be confused with  -- the minimal and next-to-minimal  coadjoint nilpotent orbits that are attached to the Eisenstein series that arise in the solutions for the coefficients, $\calE_{(0,0)}^{(D)}$ and $\calE_{(1,0)}^{(D)}$ in \eqref{rfourcoeff}  and \eqref{dfourrfourcoeff}, respectively.


{\bf Note on conventions.}  Following
\cite[Section 2.4]{Green:2010kv},
  the parameter associated with the $GL(1)$ factor that parameterises the approach to any cusp will be called $r$ and is normalised in a mathematically convenient manner.   It translates into distinct  physical parameters  in each  of  the three limits described above, that correspond to parabolic subgroups defined at nodes  $d+1$,  $1$ and $2$, respectively,
of the Dynkin diagram  in figure~\ref{fig:dynkin}.  These are summarised as follows:
\bea
&&{\rm Limit~(i)}\, \quad \  r^2  \ = \  {r_d/ \ell_{11-d}}\,,  \ \ \,  r_d \ = \  {\rm radius\ of\ decompactifying\ circle}\,,\nn\\
&&{\rm Limit~(ii)} \, \quad \  r^{-2}  \  = \ \sqrt{ y_D} \ = \  {\rm string\ coupling\ constant} \,,\nn\\
&&{\rm Limit~(iii)} \quad \,  r^{2(1+d)\over 3}  \ = \ { \calV_{d+1}/ \ell_{11}^{d+1}} \,, \ \ {\calV_{d+1}}  = {\rm vol.\, of\, M-theory\ torus}\,.\nn\\
\label{notation}
\eea
The  $D$-dimensional string  coupling constant  is defined  by  $ y_D =
g_s^2\, \ell_s^d/V_d$,  where $D=10-d$ and  $g_s$  is either the $D=10$  IIA string coupling constant, $g_A$, or the  IIB string coupling constant, $g_B$,  and $V_d$ is the volume of $\rT^d$ in string units.\footnote{We will use the symbol $\rT^d$ to denote the string theory $d$-torus while using the symbol $\calT^{d+1}$ for the corresponding M-theory $(d+1)$-torus expressed in eleven-dimensional units.}
The Planck length scales in different dimensions are related to each other and to the string scale, $\ell_s$, by
\bea
 (\ell^A_{10})^8 &=& \ell_s^8\, g_A^2\,, \quad   (\ell^B_{10})^8\ =\  \ell_s^8 \, g_B^2\,,\quad \ell_{11}\ =\ g_A^{\third}\, \ell_s\,,\nn\\(\ell_D)^{D-2} &=&  \ell_s^{D-2}\, y_D\ =\ (\ell_{D+1})^{D-1}\, {1\over r_d}\,,\  \ {\rm for}\ D\le 8 \ (d \ge 2)\nn\\
\ell_9^7
 &=&
  \ell_s^{7}\, y_9= (\ell^A_{10})^{8}\, {1\over r_A}\ =\  (\ell^B_{10})^{8}\, {1\over r_B}\,.
\label{plancks}
\eea
(note the two distinct Planck lengths in the ten-dimensional case and the distinction between $r_1=r_A$ and $r_1=r_B$  in the two type II theories).

\subsection{Mathematics background}\label{sec:mathematicsbackground}

Let us begin by recalling some  notions from the theory of automorphic
forms that are relevant to the expansion (\ref{amp}), specifically
from \cite[Section 2]{Green:2010kv}.   Let $G$ denote the split real Lie group $E_n$, $n\le 8$, defined in table~\ref{tab:Udual}.
For convenience we fix (as we may) a Chevalley basis of the Lie algebra $\mathfrak g$ of $G$, and a choice of positive roots $\Phi_+$ for its root system $\Phi$.  Letting $\Sigma\subset \Phi_+$ denote the positive simple roots, the Lie algebra $\mathfrak g$ has the triangular decomposition
\begin{equation}\label{triangulardecomp}
    \mathfrak g \ \ = \ \ \mathfrak n \,\oplus\,\mathfrak a\,\oplus\,\mathfrak n_{-}\,,
\end{equation}
where $\mathfrak n$ (respectively, $\mathfrak n_{-}$) is spanned by the
Chevalley basis root vectors $X_\a$ for   $\a\in \Phi_+$
(respectively, $\a\in \Phi_{-}$), and $\mathfrak a$ is spanned by their
commutators $H_\a=[X_\a,X_{-\a}]$.  Let $N\subset G$ be the
exponential of $\mathfrak n$; it is a maximal unipotent subgroup. Likewise
$A=\exp(\mathfrak a)$ is a maximal torus, and is isomorphic to
$\operatorname{rank}(G)$ copies of $\IR^+$.  The group $G$ has an Iwasawa decomposition $G=NAK$, where $K=K_n$ is the maximal compact subgroup of $G$ listed in table~\ref{tab:Udual}.  There thus
  exists a logarithm map $H:A\rightarrow {\mathfrak a}$ which is inverse
  to the exponential, and which extends to all $g\in G$ via its
  value on the $A$-factor of the Iwasawa decomposition of $g$.   The integral points $G(\Z)$ are defined as all elements $\g \in G$ such that the adjoint action $Ad(\g)$ on $\mathfrak g$ preserves the integral span of the Chevalley basis.

The standard maximal parabolic subgroups of $G$ are in bijective correspondence with the positive simple roots of $G$.  Given such a root $\beta$ and a standard maximal parabolic $P_\b$,   the {\sl maximal parabolic Eisenstein series} induced from the constant function on $P_\b$ is defined by the sum
\begin{equation}\label{maxparabeisdef}
    E_{\beta;s}^G \ \ := \ \ \sum_{\g\,\in\,(P_\b\cap G(\Z))\backslash G(\Z)} e^{2\,s\,  \omega_\b(H(\g g))}\ , \ \ \ \Re s \, \gg \, 0\,,
\end{equation}
where
 $\omega_\b$, the fundamental weight associated to $\beta$, is defined by the condition $\langle \omega_\b,\a\rangle=\d_{\a,\b}$.
These series  generalize the classical nonholomorphic Eisenstein series (the case of $G=SL(2)$), and more generally  the Epstein Zeta functions (the case of $G=SL(n)$ and $\b$  either the first or last node of the $A_{n-1}$ Dynkin diagram).  Because of this special case, we     often   refer to the $\b=\a_1$ series (in the numbering of figure~\ref{fig:dynkin}) as the {\sl Epstein} series for a particular group, even if it is not $SL(n)$.
 These series are the main mathematical objects of   this paper.

  As we remarked in footnote~\ref{Gzdeffootnote} changing $G$ to another Chevalley group with Lie algebra $\mathfrak g$ changes $G(\Z)$ by a central subgroup, and so Eisenstein series for the cover descend to the corresponding Eisenstein series on the quotient.  For example,
\begin{equation}\label{SpinddandSoddseries}
    E^{Spin(d,d)}_{\beta;s}(g) \ \ = \ \ E^{SO(d,d)}_{\beta;s}(\pi(g))\,,
\end{equation}
where $\pi:Spin(d,d,\IR)\rightarrow SO(d,d,\IR)$ is the covering map. We shall sometimes refer to either as $E^{D_d}_{\beta;s}$ when we wish to emphasize that a particular statement applies to both $E^{Spin(d,d)}_{\beta;s}$ and $E^{SO(d,d)}_{\beta;s}$.

As shorthand, we often denote a root  by its ``root label'', that is, stringing together its coefficients when written as a linear combination of the positive simple roots $\Sigma$.   Thus $\a_2+\a_3+2\a_4+\a_5$ could be denoted $0112100\cdots$ or $[0112100\cdots]$, with brackets sometimes added for clarity.
  Note that
  Eisenstein series of the type (\ref{maxparabeisdef}) are parameterized by a single complex variable, $s$,  whereas the more general minimal parabolic series in (\ref{minparabseries}) has $\operatorname{rank}(G)$ complex parameters.

The series (\ref{maxparabeisdef}) is initially absolutely convergent
for $\Re{s}$ large, and has a meromorphic continuation to the entire
complex plane as part of a more general analytic continuation of
Eisenstein series due to Langlands.  Its special value at $s=0$ is the
constant function identically equal to one.  This corresponds to the
trivial representation of $G(\IR)$, and clearly has no nontrivial
Fourier coefficients.  The main mathematical content of this paper
extends this phenomenon to other special values of $s$ which are
connected to small representations of real groups (see
sections~\ref{sec:autreps} and \ref{sec:NTMdetails}), and which have very few nontrivial Fourier
coefficients.  This will be demonstrated to be in complete agreement
with a number of  string theoretic predictions, in particular the one stated
at the end of  section~\ref{sec:introduction}.

The main results of \cite{Green:2010kv} were the identifications (\ref{rfourcoeff}) and (\ref{dfourrfourcoeff}) of $\cE^{(D)}_{(0,0)}$ and $\cE^{(D)}_{(1,0)}$, respectively, in terms of special values of the Epstein series, for $3 \le D=10-d \le 5$. The more general automorphic function $\cE^{(D)}_{(0,1)}$ which  satisfies  \eqref{laplaceeigenthree}  was also analysed in  \cite{Green:2010kv}, but will not be relevant in this paper.       The case of $Spin(5,5)$ was also covered in \cite{Green:2010kv}, but is somewhat more intricate; it will be explained separately.   We will show in a precise sense that these Epstein series at the special values at $s=0$, $3/2$, and $5/2$ correspond, respectively, to the three smallest types of representations of $G$ (see theorem~\ref{mainthm}) below.

\subsubsection{Coadjoint nilpotent orbits}

Let $\mathfrak g$ be the Lie algebra of a matrix Lie group $G$, whether over $\IR$ or $\C$.  An element of $\mathfrak g$ is {\sl nilpotent} if it is nilpotent as a matrix, i.e., some power of it is zero.  The group $G$ acts on its Lie algebra $\mathfrak g$ by the adjoint action $Ad(g)X=gXg^{-1}$, and hence dually on linear functionals $\l:{\mathfrak g}\rightarrow\C$ through the coadjoint action given by $(Coad(g)\l)(X)=\l(Ad(g^{-1})X)=\l(g^{-1}Xg)$.  Actually $\mathfrak g$ is isomorphic to its space of linear functionals via the Killing form, and so the coadjoint action is equivalent to the adjoint action.  Following common usage, we thus refer to the orbits of the adjoint action of $G$ on $\mathfrak g$ as {\sl coadjoint nilpotent orbits} (even though they are, technically speaking, adjoint orbits).

The book \cite{Collingwood} is a standard reference for the general theory of  coadjoint nilpotent orbits.  When $G$ is a real or complex semisimple Lie group there are a finite number of orbits, each of which is even dimensional.  The smallest of these is the trivial orbit, $\{0\}$.  On the other hand, there is always an open, dense  orbit, usually referred to as the {\sl principal} or {\sl regular orbit}.   Another orbit which will be important for us is the {\sl minimal} orbit, the  smallest orbit aside from the trivial orbit.  Since our groups $G$ are all simply laced, it can be described as the orbit of any root vector $X_\a$, for any  root $\a$.

Table \ref{tab:minimalnilpotent3}
gives a list of some orbits that are important to us, along with their
basepoints.

\begin{table}[center]
  \centering
  \begin{tabular}[ht]{||c|c|c||}
    \hline
Group & Orbit Dimension & Basepoint\\
\hline
$SL(2)$&0&0\\
&2&$X_{1}$\\
\hline
&  0 & 0 \\
$SL(3)\times SL(2)$ & 2& an $SL(2)$ root  \\
& 4 & an $SL(3)$ root \\
\hline
 & 0 & 0 \\
$SL(5)$ & 8 & $X_{1111}$\\
 & 12 & $X_{1110}+X_{0111}$\\
\hline
 & 0 & 0 \\
$Spin(5,5)$ & 14 & $X_{12211}$\\
& 16 & $X_{11110}+X_{11101}$ \\
 & 20 & $X_{01111}+X_{11211}$\\
\hline
 & 0 & 0 \\
$E_6$&  22 & $X_{122321}$ \\
 & 32 &$ X_{111221}+X_{112211}$ \\
 & 40 & $X_{011221}+X_{111210}+X_{112211}$ \\
\hline
 & 0 & 0 \\
$E_7$ & 34 & $X_{2234321}$\\
 & 52 & $X_{1123321}+X_{1223221}$\\
 & 54 &$X_{0112210}+X_{1112221}+X_{1122110}$ \\
\hline
 & 0 & 0 \\
 & 58 & $X_{23465432}$\\
$E_8$ & 92 & $X_{23354321}+X_{22454321}$\\
 & 112 & $X_{22343221}+X_{12343321}+X_{12244321}$ \\
 & 114 & $X_{11232221}+X_{12233211 }$ \\
\hline
  \end{tabular}
  \caption{Basepoints  of  the smallest  coadjoint nilpotent   orbits
    for the complexified $E_n$ groups.  The notation $X_{\a}$ denotes the
    Chevalley basis root vector for the simple root $\a$, which is written
     here in terms of the root labels described in the text.   The basepoints are given as a description of the orbit
     but are not otherwised used.
    The $SL(3)\times SL(2)$ case comes from the $E_3$ Dynkin diagram,
    which is the $E_8$ Dynkin diagram from figure~\ref{fig:dynkin}
    after the removal of  nodes 4, 5, 6, 7, and 8.  Its Lie algebra is a product of
    two simple Lie algebras  and has a different orbit structure than
    the others;  its smallest orbits come from the respective
    factors. }
  \label{tab:minimalnilpotent3}
\end{table}

\subsubsection{Automorphic representations}\label{sec:autreps}

The right translates of an automorphic function by the group $G$ span a vector space on which $G$ acts.  For a suitable basis of square-integrable automorphic forms and  most Eisenstein series, this action furnishes an irreducible representation.
As we discussed in \cite[Section 2]{Green:2010kv}, the Eisenstein
series are specializations of the larger  ``minimal parabolic
Eisenstein series'' defined in~(\ref{minparabseries}).  The automorphic representations connected to the  latter are generically principal series representations, an identification which can be made by comparing the infinitesimal characters (that is, the action of all $G$-invariant differential operators).  However, at special points the principal series reduces, and the Eisenstein series is part of a smaller  representation.

An irreducible representation is related to coadjoint nilpotent orbits
through its {\sl wavefront set}, also known as the ``associated
variety'' of its ``annihilator ideal''.  It is a theorem of Joseph
\cite{joseph} and Borho-Brylinski \cite{borhobryl}    that this set is
always the closure of a unique coadjoint nilpotent orbit.  Thus a
coadjoint nilpotent orbit is attached to every irreducible
representation of $G$.

 \begin{figure}[ht]
\qquad\qquad\qquad\qquad\centering\includegraphics{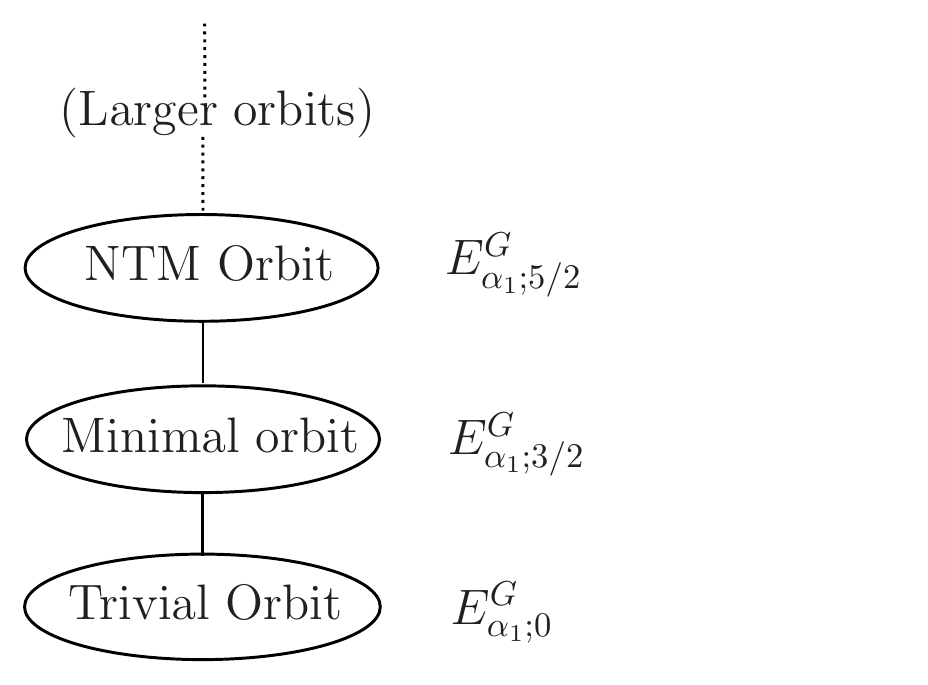}
 \caption{\label{fig:smallorbit}  Schematic of small representations and Eisenstein special values.}
 \end{figure}

Part (iii) of the following theorem is the main mathematical result of this paper, in particular the cases of $E_7$ and $E_8$.
Part (i) is trivial, while part (ii) is contained in results of Ginzburg-Rallis-Soudry~\cite{grs},
following earlier work of Kazhdan-Savin~\cite{KazhdanSavin}.

\begin{thm}\label{mainthm}  Let $G$ be  one of the groups $E_6$, $E_7$, or $E_8$ from table~\ref{tab:Udual}.  Then
\begin{itemize}
  \item[(i)] The wavefront set of the automorphic representation attached to the $s=0$ Epstein series is the trivial orbit.
  \item[(ii)] The wavefront set of the automorphic representation attached to the $s=3/2$ Epstein series is the closure of the minimal orbit.
  \item[(iii)] The wavefront set of the automorphic representation attached to the $s=5/2$ Epstein series is the closure of the next-to-minimal  (NTM) orbit.
\end{itemize}
\end{thm}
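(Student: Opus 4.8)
The plan is to identify the automorphic representation $\Pi$ attached to the $s=5/2$ Epstein series $E^{G}_{\alpha_1;5/2}$ with a specific \emph{small} (special unipotent) representation of $G(\mathbb{R})$, and then to pin down its wavefront set as $\overline{\mathcal{O}_{\mathrm{NTM}}}$ by a two-sided argument: a representation-theoretic upper bound on the support of the Fourier coefficients, and an infinitesimal-character lower bound. Parts (i) and (ii) fit the same template --- (i) is immediate since the $s=0$ value is constant, and (ii) is extracted from \cite{grs} (itself building on \cite{KazhdanSavin}) --- so the substance is (iii), and in particular the cases $E_7$ and $E_8$.

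\emph{Step 1 (infinitesimal character).}
I would first realize the maximal parabolic series $E^{G}_{\alpha_1;s}$ as the restriction to a line of a minimal parabolic Eisenstein series $E(\lambda,g)$, so that the infinitesimal character of $E^{G}_{\alpha_1;s}$ is the $W$-orbit of an explicit $\lambda(s)\in\mathfrak{a}^{*}_{\mathbb{C}}$, affine-linear in $s$. Evaluating at $s=5/2$ (and, for comparison, $s=3/2$) one checks --- a finite root-system computation, done case by case for $E_6,E_7,E_8$ --- that $\lambda(5/2)$ (resp.\ $\lambda(3/2)$) is exactly the infinitesimal character of the special unipotent representation attached, via the Barbasch--Vogan / Lusztig--Spaltenstein dual, to the NTM orbit (resp.\ the minimal orbit). \emph{Step 2 (local upper bound).} At the archimedean place, the spherical constituent $\Pi_\infty$ of the degenerate principal series $\mathrm{Ind}_{P_{\alpha_1}(\mathbb{R})}^{G(\mathbb{R})}|\cdot|^{5/2}$ containing the Eisenstein vector is, by Step 1 together with the classification of representations of that infinitesimal character, precisely the relevant special unipotent representation, whose wavefront set is the closure of the NTM orbit. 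This last computation is exactly the content of the appendix of Ciubotaru and Trapa. The non-archimedean components of $\Pi$ are unramified with Satake parameters dictated by $\lambda(5/2)$ and are correspondingly small. Invoking the local--global control of degenerate Whittaker--Fourier coefficients --- any nilpotent orbit $\mathcal{O}$ over which a Fourier coefficient of a vector in $\Pi$ is nonzero must have $\mathcal{O}_{\mathbb{C}}$ contained in the wavefront set of every local component --- one gets $\mathrm{WF}(\Pi)\subseteq\overline{\mathcal{O}_{\mathrm{NTM}}}$. As a cross-check (and because this is what actually governs the string-theoretic Fourier expansions in the three relevant parabolics), I would also derive the same bound directly, unfolding $E^{G}_{\alpha_1;s}$ against a character $\psi_{\mathcal{O}}$ and using the $P_{\alpha_1}$-orbit structure on the relevant partial flag variety together with the constant-term data of \cite{Green:2010kv} to see that the resulting adelic integrals vanish whenever $\mathcal{O}\nleq\mathcal{O}_{\mathrm{NTM}}$, in the style of Ginzburg--Rallis--Soudry.

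\emph{Step 3 (lower bound).}
It remains to rule out that $\mathrm{WF}(\Pi)$ is strictly smaller. For $E_6,E_7,E_8$ the only nilpotent orbits contained in $\overline{\mathcal{O}_{\mathrm{NTM}}}$ are $\{0\}$, the minimal orbit, and the NTM orbit itself. But $\Pi$ is not the trivial representation, since $E^{G}_{\alpha_1;5/2}$ is non-constant, and $\Pi$ is not the minimal representation, since the minimal representation has infinitesimal character $\lambda(3/2)\neq\lambda(5/2)$. Combined with the upper bound of Step 2, this forces $\mathrm{WF}(\Pi)=\overline{\mathcal{O}_{\mathrm{NTM}}}$. (If one prefers a lower bound independent of the classification of small representations, it suffices instead to exhibit a single nonvanishing NTM Fourier coefficient coming from one explicit term of the unfolded sum.)

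\emph{Main obstacle.}
The delicate point is the global upper bound in Step 2: one needs the precise local--global statement that the support of the global Fourier coefficients of $\Pi$ is controlled by the archimedean wavefront set $\mathrm{WF}(\Pi_\infty)$, together with the assurance that no constituent of the degenerate principal series with a larger wavefront set contributes to $\Pi$ --- equivalently, that the automorphic realization really lands in the small constituent and nothing bigger. Supplying this bridge in exactly the form required (or, if one proceeds by brute-force unfolding instead, carrying out the combinatorics of the $P_{\alpha_1}$-orbits on the flag varieties and the ensuing cancellations) is where the real work lies; the Ciubotaru--Trapa appendix provides the indispensable local half by computing the wavefront sets of the relevant special unipotent representations.
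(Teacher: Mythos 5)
Your core argument for $E_7$ and $E_8$ is the same as the paper's: match $\l_{\text{dom}}$ (for $\l=2s\omega_{\a_1}-\rho$ at $s=5/2$) with half the weighted Dynkin diagram of the sub\mbox{-}subregular dual orbit $\mathcal O^\vee$, conclude via theorem~\ref{t:su}/corollary~\ref{c:su} of the Ciubotaru--Trapa appendix that the spherical constituent of $V_{\l_{\text{dom}}}$ is the special unipotent representation attached to $\mathcal O^\vee$, and read off its wavefront set as $\overline{d(\mathcal O^\vee)}=\overline{\mathcal O_{\mathrm{NTM}}}$ from the Barbasch--Vogan equality in theorem~\ref{t:BVav}. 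Where your proposal goes astray is in the diagnosis of the ``main obstacle'': the wavefront set in theorem~\ref{mainthm} is, by the paper's definition, the associated variety of the annihilator ideal of the (archimedean) representation generated by the Eisenstein series, so once the spherical constituent is identified as special unipotent the theorem is already proved --- the upper \emph{and} lower bounds both come from the equality in theorem~\ref{t:BVav}, and no local--global control of global Fourier coefficients is required. That bridge (Matumoto's theorem, or M{\oe}glin--Waldspurger $p$-adically) is used later, to deduce the Fourier-coefficient vanishing statement (theorem~\ref{thm:vanishingofcoefficients}) \emph{from} theorem~\ref{mainthm}, not to prove theorem~\ref{mainthm} itself; likewise your Step~3 is redundant on this route.

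Two further points of comparison. First, the paper does not run the unipotent argument uniformly over all three groups: for $E_6$ it instead bounds the Gelfand--Kirillov dimension of any constituent of the degenerate principal series induced from $P_{\a_1}$ by $\dim U_{\a_1}=16$, so the wavefront set has dimension at most $32$ and must be the trivial, minimal, or NTM orbit closure; the trivial orbit is excluded because only the trivial representation is attached to it, and the minimal orbit is excluded by the Kazhdan--Savin uniqueness of the minimal representation. Your uniform Step~1 claim for $E_6$ at $s=5/2$ would need to be checked separately (cf.~the caveat in table~\ref{tab:varioussandlambda} that this point behaves differently), so the paper's dimension argument is the safer fallback there. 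Second, in your Step~3 the exclusion of $\mathrm{WF}(\Pi)=\overline{\mathcal O_{\min}}$ does not follow merely from ``$\Pi$ is not the minimal representation because $\l(5/2)\neq\l(3/2)$''; one needs the uniqueness statement that an irreducible spherical representation whose wavefront set is the closure of the minimal orbit \emph{is} the minimal representation --- exactly the Kazhdan--Savin input the paper invokes in its $E_6$ argument. With those two repairs your outline reproduces the paper's proof.
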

\noindent The closure of the minimal orbit is simply the union of the minimal orbit and the trivial orbit, while the closure of the next-to-minimal orbit is the union of itself, the minimal orbit, and the trivial orbit.  Theorem~\ref{mainthm} will be used in   proving theorem \ref{thm:vanishingofcoefficients}, which is the mathematical proof of the  statement concerning vanishing Fourier modes at the end of section \ref{sec:introduction}  that was motivated by string considerations.

\subsection{Outline of paper.}

This paper combines information deduced from string theory with results in number theory involving properties of Eisenstein series, which we hope will  be  of interest to both string theorists and number theorists. In particular, each subject is used to make nontrivial statements about the other. Sections  \ref{bpsinstantons}--\ref{lowrank} and appendices \ref{susyinst}--\ref{euclidDbrane}   are framed in  string theory language and provide information concerning the structure expected of the non-zero Fourier modes  based on instanton contributions in superstring theory and supergravity.   The subsequent sections provide the mathematical foundations of these observations and generalize them significantly.

Section \ref{bpsinstantons}   presents the classification of the expected orbits of fractional  BPS  instantons in the
three limits  (i), (ii), and (iii) considered in section~\ref{sec:stringtheorybackground},
  from the point of view of  string theory.  The BPS constraints imply
  that these instantons  span particular small orbits generated by the
  action of the Levi subgroup acting on the unipotent radical
  associated with the parabolic  subgroup appropriate to a given
  limit.    These orbits can be thus thought of as character
    variety orbits, which are discussed at the beginning of  section
    \ref{lowrank}.

In the rest of section~\ref{lowrank} and appendix~ \ref{modesdetails}  we
will consider explicit low-rank examples (with rank $d+1 \le 5$) of the Fourier expansions of the functions $\calE_{(0,0)}^{(10-d)}$ and $\calE_{(1,0)}^{(10-d)}$ in the  parabolic subgroups corresponding to each limit.
 In the  cases with $d+1\le 4$ ($D\ge 7$), the definition \eqref{maxparabeisdef} implies that the coefficient functions are combinations of $SL(n)$ Eisenstein series that can easily be expressed in terms of elementary lattice sums.  In these cases it is straightforward to use standard  Poisson summation techniques to exhibit the precise form of their Fourier modes.
 In particular, the non-zero Fourier modes of $\calE_{(0,0)}^{(10-d)}$ will be
 determined in the three limits under consideration for the rank $d+1\le
 4$ cases. These modes are localized within the minimal
   character variety orbits that contain
 precisely the $\smallf 12$-BPS instantons that are anticipated in section~\ref{bpsinstantons}.
  We
 will see, in particular,  that  in the decompactification limit (i)
 the precise form for each of these coefficients matches in detail
 with the expression determined directly from a quantum mechanical
 treatment of $D$-particle world-lines wrapped around an  $S^1 \subset \rT^d$.\footnote{The term {\sl $D$-particle}  refers to any point-like BPS particle state obtained by completely wrapping the spatial directions of $Dp$-brane states.}

  Explicit examples of Fourier expansions of
 the coefficients of the $\smallf 14$-BPS interactions, $\calE_{(1,0)}^{(D)}$,
 will also be presented in section~\ref{lowrank} and
 appendix~\ref{modesdetails}.
In the $D=10B$ case (with symmetry group $SL(2)$)  this function is simply equal to
 $\zeta(5)\,E_{\alpha_1;5/2}^{SL(2)}$,  and  the extension to $D=9$ and $D=8$ is also straightforward.
 But in the $D=7$ case (with symmetry group $E_4=SL(5)$)  the coefficient function $\calE^{(7)}_{(1,0)}$ is a sum
 of the regularized Epstein series $\hat E_{\alpha_1;5/2}^{E_4}$ and
 the non-Epstein Eisenstein series $\hat E_{\alpha_4;5/2}^{E_4}$ (coming from the third node of the Dynkin diagram).
 The analysis of the Fourier modes of $E_{\alpha_4;s}^{E_4}$
 involves the use of several lattice summation identities that are
 proved in  appendices~\ref{sec:latticesums},
 \ref{sec:latticeidentity} and \ref{sec:unfolding}.  In particular we
 will derive an expression for the non-Epstein Eisenstein
 series coming from either the second or second-to-last nodes as a Mellin transform of a certain lattice
 sum that is closely related to the $Spin(d,d)$ Epstein Eisenstein series,
 $E_{\alpha_1;s-d/2}^{Spin(d,d)}$.  In appendix~\ref{sec:unfolding}  we
 will derive a theta lift between $SL(d)$ and $Spin(d,d)$ Eisenstein series.  This relation was presented in a less rigorous form in~\cite{Green:2010wi}.
The resulting Fourier  expansions contain instanton contributions localized within the minimal
 ($\smallf 12$-BPS) character variety orbit and the next-to-minimal
 ($\smallf 14$-BPS)  character variety orbit,  comprising precisely the instantons anticipated in
 section~\ref{bpsinstantons}.

The coefficient   $\calE^{(6)}_{(0,0)}$ is proportional to the series   $E_{\alpha_1;3/2}^{Spin(5,5)}$,  which we will analyse by using the integral representation  proved in proposition~\ref{prop:Ddintegralrepn}. As expected, its non-zero Fourier modes are supported within the minimal ($\smallf 12$-BPS) character variety orbits in any of the three limits.  On the other hand the $\smallf 14$-BPS coefficient,
      $\calE_{(1,0)}^{(6)}$, involves the sum of the regularized values of
      $\hE_{\alpha_1;5/2}^{Spin(5,5)}$  and  $\hE_{\alpha_5;3}^{Spin(5,5)}$.  Although we have not computed the Fourier expansion of the second series, it is still possible to show that the non-zero Fourier coefficients of this sum are supported within the minimal and next-to-minimal (i.e., $\smallf 12$- and $\smallf 14$-BPS) character variety orbits in each of the three limits.
This will be discussed at the end of  section~\ref{lowrank}.

Sections~\ref{sec:NTMdetails},~\ref{sec:shrunkFourierCoeff}, and~\ref{sec:L2} are primarily concerned with the exceptional group cases, which correspond to $d\ge 5$ and  $D\le 5$.
Since classical lattice summation techniques are difficult to apply in this context, we instead use results from representation theory to show a large number of the Fourier coefficients vanish.  Indeed, avoiding explicit computations here is one of the main novelties of the paper.
 Section~\ref{sec:NTMdetails} discusses aspects connected to representation theory  and contains a proof of theorem (2.13), which makes important use of appendix~\ref{sec:trapendix} by Ciubotaru and Trapa on  special unipotent representations.

 Section~\ref{sec:shrunkFourierCoeff} then applies these results to Fourier expansions, using a detailed analysis of character variety orbits.  We will see that the spectrum of instantons that are expected to vanish on the
 basis of string theory is  precisely  reproduced by the Eisenstein
 series in \eqref{rfourcoeff} and \eqref{dfourrfourcoeff}.
 For the $s=3/2$ case (the $\smallf 12$-BPS case) we will reproduce
 the statements  in~\cite{grs,Kazhdan:2001nx,KazhdanPolishchuk:2002} that only the
 minimal orbit and  the trivial orbit contribute to the Fourier expansions of the
 Eisenstein series.  The relevance of this work to $\smallf 12$-BPS states was
 suggested by~\cite{Pioline:2010kb,Neitzke}. In addition,  we will find that this generalizes for $s=5/2$ (the $\smallf 14$-BPS case) to the statement that no orbits larger than the next-to-minimal (NTM) orbit can contribute.
The analysis  in \cite{Green:2010kv} showed the striking fact that the
particular Eisenstein series in \eqref{rfourcoeff} and
\eqref{dfourrfourcoeff}  have constant terms  with very few powers of
$r$ (defined in \ref{notation}) in their expansion around any of the
three limits under consideration.  The analysis in this paper
demonstrates analogous special features of the orbit structure of the
non-zero modes.
Theorem~\ref{thm:vanishingofcoefficients} gives a precise statement
about which Fourier modes automatically vanish because of
representation theoretic reasons.    This set of vanishing
coefficients is exactly those that  are argued to vanish for string
theory reasons   in section~\ref{bpsinstantons}.

 It is important to point out that our methods show the vanishing of a
 precise set of Fourier coefficients, but typically do not show the
 {\sl nonvanishing} of the remaining Fourier coefficients.  However,
 this is accomplished in a number of low rank cases by explicit
 calculations in section~\ref{lowrank} and
 appendix~\ref{modesdetails}, and we hope to treat some of the
 higher rank cases in the future.
   Section~\ref{sec:L2} discusses square-integrability of the coefficients and  conditions under which
    $\cE^{(D)}_{(0,0)}$ and $\cE^{(D)}_{(1,0)}$ are square-integrable for higher rank groups.

\section{Orbits of supersymmetric instantons}
\label{bpsinstantons}
From the string theory point of view our main interest is in the systematics of orbits of BPS instantons that enter the Fourier expansions of the coefficients of the low order terms in the low energy expansion of the four graviton amplitude.  Before describing these orbits in sections \ref{sec:representation} -- \ref{sec:BPSMorbits} we begin with a short overview of the special features of such instantons that follow from supersymmetry.  A short summary of the M-theory supersymmetry algebra and BPS particle states is given in appendix~\ref{susyinst} (although this barely skims the surface of a huge subject), where  the structure of the eleven-dimensional superalgebra is seen to imply the presence of an extended two-brane (the $M2$-brane) and five-brane (the $M5$-brane) in eleven dimensions.   Compactification on a torus also leads to Kaluza--Klein ($KK$) point-like states and Kaluza--Klein monopoles ($KKM$), one of which is interpreted in string theory as a $D6$-brane.    All the particle states in lower dimensions can be obtained by wrapping the spatial directions of these objects around cycles of the torus.
\subsection{BPS instantons}\label{sec:BPSinst}

 One class of BPS instantons can be described from the eleven-dimensional semi-classical M-theory point of view by wrapping euclidean
world-volumes of $M2$- and $M5$- branes around compact directions so that the
brane actions are finite. These branes couple to the three-form M-theory potential and its dual, and the BPS conditions constrain  their charges, $Q^{(p)}$,  to be proportional to their tensions,  $T^{(p)}$, where $p=2$ or $5$ (as briefly reviewed in appendix~\ref{susyinst}).
Wrapping  the world-volume of a euclidean $M2$-brane  around a 3-torus, $\calT^{3} \subset \calT^{d+1}$,  or a euclidean $M5$-brane around a 6-torus, $\calT^6\subset \calT^{d+1}$,
gives a  $\smallf 12$-BPS instanton, which has a euclidean action of
the form $S^{(p)}  =2\pi \, (T^{(p)} + i Q^{(p)} )$.   This gives a factor in amplitude of the form $e^{-  S^{(p)}}$ that has a characteristic phase determined by the charge of the brane.

In addition, the ``$KK$ instanton''  is identified with the euclidean world-line of a $KK$ charge winding around a circular dimension.  The magnetic version of this is the ``$KKM$ instanton'', one manifestation of which appears in string theory as a wrapped euclidean $D6$-brane.
Recall that a $KK$ monopole  in eleven dimensional (super)gravity with
one compactified direction labelled $x^\#$ has a metric of the form~\cite{Townsend:1995kk}
\begin{equation}
ds^2=  V^{-1}\, (dx^\#+ {\bf A}\cdot {\bf dy})^2     +V\, {\bf dy}\cdot {\bf dy}   -dt^2 + dx_6^2 \,,\qquad V=1 +\frac{R}{2|{\bf y}|}\,,
\label{kkmonopole}
\end{equation}
where $ds_7^2=-dt^2+dx_6^2$ is the seven-dimensional Minkowski metric
and the other four dimensions, $x^\#$, ${\bf y}=(y_1,y_2,y_3)$,
define a Taub--NUT space, and $|{\bf y}|^2=\sum_{i=1}^3 y_i^2$.  The coordinate $x^\#$ is periodic with
period $2\pi R$ and the potential, ${\bf A}$, satisfies the equation
${\bf \nabla} \times {\bf A} = - {\bf \nabla} V= {\bf B}$.
Poincar\'e duality in the ten dimensions $(t, x_6, {\bf y})$ relates
the 1-form potential, ${\bf A}$, to a 7-form,  i.e, $*d{\bf A}= dC^{(7)}$.  If  $x^\#$ is identified with the M-theory circle, $C^{(7)}$ couples to a $D6$-brane in the string theory limit.  This gives an instanton when its world-volume is wrapped around a 7-torus.
 More generally,  $x^\#$ can be identified with other circular
 dimensions of the torus $\calT^{d+1}$, giving a further $d$ distinct
 $KKM$'s, each one of which appears as a finite action instanton when
 wrapped on an M-theory 8-torus, $\calT^8$ (i.e., when $d=7$).  When
 describing these in the string theory parameterisation (on the string
 torus $\rT^7$) these will be referred to as ``stringy $KKM$
 instantons''.
   Furthermore, it is well understood how to combine wrapped
   branes to make $\smallf 12-$,  $\smallf 14-$ and $\smallf 18$-BPS instantons
   \cite{Becker:1995kb,Harvey:1999as}\footnote{We are concerned with
     compactification on tori, but more generally the BPS condition
     requires branes to be  wrapped on  special lagrangian
     submanifolds (SLAGs) or on holomorphic cycles
     \cite{Becker:1995kb}.}    in a manner analogous to combining
   $p$-branes to make states preserving a fraction of the symmetry.

 This  description of instantons is directly relevant to  the discussion of  the semi-classical M-theory limit  (case (iii))  associated with the Fourier expansion in the parabolic subgroup $P_{\alpha_2}$ in section~\ref{sec:BPSMorbits}. This is the large-volume limit in which eleven-dimensional supergravity is a valid approximation.  Similarly, the instanton contributions in limits (i) and (ii)  can be described  by translating from the M-theory description to the string theory description of the wrapped branes. These wrapped string theory objects comprise: the fundamental string and
 the Neveu--Schwarz five-brane (NS5-brane) that couple to  $B_{\rm
   NS}$; $Dp$-branes that couple to the
Ramond--Ramond $(p+1)$-form potentials $C^{(p+1)}$ (with $-1\le p\le
9$);  and $KK$ charges and  $KK$ monopoles that couple to
modes of the metric associated with toroidal compactification on
$\rT^d$.

  Knowledge of this instanton spectrum is a valuable ingredient in understanding the systematics of the Fourier modes of the Eisenstein series that enter into the definitions of the coefficients of the low order interactions in the expansion of the scattering amplitude.   In particular, it connects closely with the study of the Fourier expansions of specific Eisenstein series
that enter into $\calE_{(0,0)}^{(D)}$ and $\calE_{(1,0)}^{(D)}$ (that
will be discussed later in this paper),  as well as with the Fourier
expansion of the more general automorphic function
$\calE_{(p,q)}^{(D)}$ (that will not be discussed in this paper).

\subsection{Fourier modes and orbits of BPS charges.}
\label{fourierorbits}

The Fourier expansion  associated with any parabolic subgroup,
$P_\alpha= L_{\alpha}\, U_\alpha$, of $E_{d+1}$ is a sum over integer
charges that  are conjugate to the angular variables that enter in  its unipotent radical
$U_\alpha$.  These determine the phases of the modes.
The Levi factor
is a  reductive group that has the form
$L_\alpha =GL(1)\times M_{\alpha}$, where $M_\a$ is its
  semisimple component.

  The conjugation action on $U_\alpha$ of    $L_{ \alpha}$  -- or more specifically,
  its intersection with the discrete duality group  $L_\a\cap E_{d+1}(\Z)$   --   relates
  these charges  by Fourier duality.  Thus this action carves out orbits within the charge
  lattice, with each given orbit  only covering a subset of the total charge space.  This viewpoint is
   expanded upon in more detail in section~\ref{sec:nreFourierCoeff}.  In this subsection we classify
   these orbits in cruder form, by considering the action of the continuous group  $L_{\a}$  on the charge
    lattice.  Indeed,
     since we are mainly interested in the algebraic nature of the group action, we sometimes look at the
     less refined action of the complexification of  $L_\a$, e.g.,  in order to avoid subtle issues about
     square roots.    Though this loses information by grouping charges into broader families,
    those families still retain some important common features.

As will be explained in section~\ref{sec:nreFourierCoeff}, the action of  $L_\a$  on the charge lattice is related to the adjoint representation on the Lie algebra of $U_\a$.  This representation is irreducible if and only if $U_\a$ is abelian.   That is the case for the unipotent radicals we consider of every symmetry group $E_{d+1}(\IR)$ of rank $d+1 <6$.  Otherwise,  the Fourier expansion is only well-defined after averaging over the commutator subgroup (see (\ref{fourierexp2})), and hence does not capture the full content of the function.  We devote the rest of this section to relating these orbits to BPS instantons in the three limits we consider.
In each particular case we will explain the origin of the non-abelian
nature of the unipotent radicals,
 which have  charges that do not commute with the other brane charges.  A
 discussion of such effects within string theory can be
 found, for example, in~\cite{Freed:2006yc}.

We now describe the adjoint action
$V_{\hat\alpha}$ on the unipotent radical, where $\hat \alpha$ labels the node immediately
adjacent to $\alpha$ in the Dynkin diagram
(fig.~\ref{fig:dynkin}).
For the three parabolic subgroups of interest  to us the representations of the unipotent radical are as follows:

\begin{itemize}
\item[(i)]{The maximal parabolic $P_{\alpha_{d+1}}$.}

In this case ${\hat \alpha}=\alpha_d$ and  $L_{\alpha_{d+1}}=GL(1)\times E_d$.
The following lists the representations $V_{\alpha_d}$ for each value of $2\le d\le 7$.
\begin{center}
   \begin{tabular}[c]{|c|c|c|}
\hline $ E_{d+1}$ & $M_{\alpha_{d+1}}$  &$ V_{\alpha_d}$ \\
\hline
  $  E_8$ & $E_7$ & $q^i : {\bf 56}$, $q : {\bf 1}$\\
  $  E_7$ &$ E_6$ & $q^i : {\bf 27}$\\
   $ E_6$& $Spin(5,5)$& $S_\alpha : {\bf 16}$\\
   $ Spin(5,5)$ &$ SL(5)$& $v_{[ij]} : {\bf 10}$\\
  $ SL(5) $& $SL(3)\times SL(2) $& $v_{ia} : {\bf 3}\times {\bf 2}$\\
  $ SL(3)\times SL(2)$& $SL(2)\times \IR^+$& $v v_a : {\bf 2}$\\
\hline
  \end{tabular}
  \end{center}
The notation in the last column indicates the irreducible
  representations are indexed by their dimensions.  Both the fundamental representation and the
  trivial representation of $E_7$ occur, because the unipotent radical $U_{\a_8}$ is a Heisenberg group. The lower dimensional representations are:~
the fundamental representation for $E_6$;  a spinor
representation for $Spin(5,5)$; the rank  2 antisymmetric tensor
representation for $SL(5)$; a bivector representation for $SL(3)\times
SL(2)$; and a scalar-vector representation for $SL(2)\times\IR^+$.

\item[(ii)] {The maximal parabolic $P_{\alpha_1}$.

In this case ${\hat \alpha} = \alpha_3$, which is a spinor node
(following the numbering of figure~\ref{fig:dynkin})  and
$L_{\alpha_1} = GL(1)\times Spin(d,d)$. The representation
$V_{\hat{\alpha}}$ always includes a spinor representation of
$Spin(d,d)$.  It is irreducible except in  the cases of $d=6,7$.    The
case of   $Spin(6,6)\subset E_7$ also includes a copy of the trivial
representation, because the unipotent radical is again a Heisenberg
group; the case of $Spin(7,7)\subset E_8$ also includes a copy of the
standard 14-dimensional ``vector'' representation.}

\item[(iii)] {The maximal parabolic $P_{\alpha_2}$.

In this case ${\hat \alpha} = \alpha_4$ and   $L_{\alpha_2} = GL(1)\times SL(d+1)$.   The representation $V_{\hat{\a}}$ always includes   a rank 3 antisymmetric tensor of $SL(d+1)$, $v_{ijk}$, of dimension $\f{1}{3!}(d+1)d(d-1)$.  It is irreducible when the rank is less than 6 (see table~\ref{tab:dimUnipotent} for the dimensions in the higher rank cases.)}
\end{itemize}

\begin{table}[center]
  \centering
  \begin{tabular}[t]{||c|c|c||c|c||c|c||}
    \hline
Group & \multicolumn{2}{|c||}{first node}& \multicolumn{2}{|c||}{second node}&\multicolumn{2}{|c||}{last node}\\
\hline
$SL(3)\times SL(2)$&2&0& 1 &0&3&0\\
$SL(5)$ & 4& 0 &4 &0 &6 &0\\
$Spin(5,5)$ & 8&0&10&0&10&0\\
$E_6$ & 16&0&20&1&16&0\\
$E_7$ &32&1&35&7&27&0        \\
$E_8$ & 64  & 14 & 56 & 28 + 8 & 56 &1      \\
\hline
  \end{tabular}
  \caption{Dimensions of  the unipotent  radical $U_{\a_i}$ for  the standard
    maximal parabolic  subgroup $P_{\alpha_i}$ where  $i=1$, $i=2$ and
    $i=d+1$. For each  node the first column gives  the dimension of the
    character variety $\mathfrak u_{-1}$  (see section~\ref{sec:nreFourierCoeff}), and the second column gives the dimension of
the derived subgroup $[U,U]$.  The sum of the two is the dimension of $U$.  The unipotent radical $U$ is abelian when the   dimension in the second column  is zero; it is a Heisenberg group when this dimension equals 1 and even more non-abelian when it is $>1$.}
  \label{tab:dimUnipotent}
\end{table}

 In each case, the charges form a lattice within the first listed
 piece of $V_{\hat{\a}}$, that is, the irreducible subrepresentation
 coming from the ``abelian part'' of $U_\a$.  More precisely, these
 are the nontrivial representations in part (i), the spinor
 representations in part (ii), and the rank 3 antisymmetric tensors
 $v_{ijk}$ in part (iii).  The space $V_{\hat{\a}}$ is identical with the
 ``character variety orbit'' $\mathfrak u_{-1}$ introduced in
 section~\ref{sec:nreFourierCoeff}.

Before proceeding with the explicit list of orbits based on the
counting of states and instantons in the next three subsections, we will recall
basic properties of the space of  nontrivial charges.   Apart from the most trivial
case (with duality group $SL(2,\Z)$), the $\smallf 12$-BPS orbits only fill a
small fraction of the whole space.
For the $E_{d+1}$ groups with $1\le d \le 5$ the complementary space to
the $\smallf 12$-BPS space is filled out by $\smallf 14$-BPS orbits.
 For  $E_7$ and $E_8$  the full space is spanned by the union of
 $\smallf 12$-, $\smallf 14$- and $\smallf 18$-BPS orbits.
The Fourier coefficients of the BPS protected operators will have
nonvanishing Fourier coefficients only associated to these nilpotent orbits.
The classification of possible charge orbits only depends on the
semi-classical nature of the associated BPS configurations, but does
not provide any detailed information about strong quantum corrections. Such
information should be encoded in the precise form  of the instanton contributions to the Fourier modes.

The instanton spectrum will now be considered in each of these limits
in turn.  In each case we will list the single BPS instantons  that form   basepoints  of
the charge orbits.  The dimension of the full spaces of charges spanned by the orbits in each case of interest is shown in table~\ref{tab:dimUnipotent}.  For each of the three limits (i), (ii), (iii), the two columns in the table show the dimensions of the abelian and nonabelian charge spaces, respectively.
Since we will be only interested in BPS (supersymmetric)
  orbits we will not discuss all the possible nilpotent orbits of $E_7$
  and $E_8$. A  complete discussion of the orbit structure is
  given in section~\ref{sec:OrbitCharacter}.

\subsection{BPS instantons in the decompactification limit:  $P_{\alpha_{d+1}}$}\hfill\break
 \label{sec:representation}
 The parabolic subgroup of relevance to the expansion of the amplitude in $D=10-d$ dimensions when  the radius $r_d$ defined in~(\ref{notation})  of one circle of the torus
$\rT^d$ becomes large is  $P_{\alpha_{d+1}}$, which has Levi factor $L_{\alpha_{d+1}} = GL(1)\times E_d$.   In this limit there is a close correspondence between the spectrum of instantons in $D=10-d$ dimensions and the spectrum of black hole states in $D+1=11-d$ dimensions.  This follows from the identification of the euclidean world-line of a charged black hole of mass $M$  wrapping around a circular dimension of radius $r$ with an instanton with action $2\pi M r$ that gives rise to an exponential factor of $e^{-2\pi Mr}$ in the amplitude.  In addition to instantons of this type, there can be instantons that do not decompactify to particle states in the higher dimension because their actions are singular in the large-$r$ limit.
 In any dimension there are also instantons with actions independent of $r$ that are inherited from the higher dimension in a trivial manner.

The spectrum of BPS black hole states in compactified string theory
has been studied extensively.  We will here follow the analysis
in~\cite{Ferrara:1997ci,Ferrara:1997uz},  which considered the
spectrum of  branes wrapped on $\rT^d$.  This generates charged
$\smallf 12$- and $\smallf 14$- BPS black hole states that correspond to singular solutions in supergravity since they have zero horizon size and hence zero entropy.  In addition,  for $E_6$, $E_7$ and $E_8$ there are $\smallf 18$-BPS states that correspond to black holes that have non-zero entropy  (as well as states with zero entropy), the prototypes being the analysis of black holes in $D=5$ dimensions (with $E_6$ duality group) in \cite{Strominger:1996sh,Callan:1996dv}.  The discussion of the associated  nilpotent  orbits  was given  in~\cite{Lu:1997bg}.   Our main interest is to extend the analysis in order to account for BPS instantons.

  We shall, for convenience, use the M-theory description starting from eleven dimensional supergravity compactified on a $(d+1)$-torus that will be denoted $\calT^{d+1}$.   The BPS  particle states  in any dimension are
obtained  by   wrapping all the spatial dimensions of the various
extended objects in supergravity around the torus.    These include
the $M2$-brane  and the $M5$-brane, together with the Kaluza--Klein
modes of the metric and the magnetic dual Kaluza--Klein monopoles.
The BPS instantons  can be listed by completely wrapping the euclidean
world-volumes of these objects on  these tori.

Despite their similarities, there is a fundamental mathematical
difference between the orbits of BPS states and the orbits of BPS
instantons.  The former are orbits under the semisimple part
$M_{\a_{d+1}}$ of the Levi component $L_{\a_{d+1}}=GL(1)\times
M_{\a_{d+1}}$, while the latter are orbits under the larger group
$L_{\a_{d+1}}$ itself.  Often these orbits coincide, but not
always:~the 27-dimensional orbit of $E_6$ and 56-dimensional orbit of
$E_7$ are actually unions of infinitely many $M_{\a_{d+1}}$-orbits
which are related by the $GL(1)$ action. This $GL(1)$ action is reminicent of the so-called trombone symmetry of supergravity~\cite{trombone}.  Similar examples occur in
other limits as well.  The $GL(1)$ parameter, $r$, described in
(\ref{notation}) is always normalized to act by the scalar factor of
$r^2$ on the BPS instantons, and so never acts trivially.  This action
is typically compensated by a different $GL(1)$ factor in the
stabilizer of a BPS instanton. When this happens we will shorten the
orbit notation by canceling these two factors, even though they are
mathematically different.  We use a horizontal line to denote a
quotient $\f{G}{H}$ of a group $G$ by a stabilizer $H$, in order to
match orbit descriptions with those commonly found in the physics
literature.  We have also made an attempt to correct mathematical
imprecisions in some existing descriptions. Since we do not use the
explicit descriptions of these orbits this should cause no confusion.

\subsubsection{Features of $P_{\alpha_{d+1}}$ orbits }
\label{generalfeatures}
The details of the enumeration of BPS states and instantons in the decompactification limit are reviewed in appendix~\ref{orbit1appendix}, the results of which are summarised in this subsection.  These states are labelled by a set of charges that couple to components of the various tensor potentials in the theory and span a space whose dimension is given in the second-to-last column of table~\ref{tab:BPScounting} for each Levi group, $M_{\alpha_{d+1}}$, with $0\le d \le 7$.  Correspondingly, the dimension of the space of instanton charges is given in the last column.   Table~\ref{tab:bpsorbits1} lists the  BPS orbits for each Levi group in the range $0\le d \le 7$.


\begin{table}[center]
  \centering
  \begin{tabular}[t]{||c|c|c|c||}
    \hline
  $D=$& $M_{\alpha_{d+1}}=E_d$ &dim point charges& dim instanton charges\\
$10-d$ && = \ $\dim\,U_{\alpha_{d+1}}$  &  =\  \# $+$ve roots of $E_d$\\
\hline
10A&1& 1&0\\
10B&$SL(2)$&0&1\\
9&$SL(2)\times \IR^+$& 3& 1\\
8&$SL(3)\times SL(2)$&6&4\\
7&$SL(5)$&10&10\\
6&$Spin(5,5)$&16&20\\
5&$E_{6}$&27&36\\
4&$E_{7}$&56\, (57)&63\\
3&$E_{8}$&120&120\\
\hline
  \end{tabular}\vskip 0.2cm
  \caption{The dimensions of the spaces spanned by the BPS point-like charges  and BPS  instantons of maximal  supergravity for the Levi subgroups in $P_{\alpha_{d+1}}$.
  The parentheses for $M_{\alpha_8}= E_{7}$  indicate  that the number of BPS states is one less than the dimension of the unipotent radical, $U_{\alpha_8}$, of the parabolic subgroup $P_{\alpha_8}$ of $E_{8}$.
  }
 \label{tab:BPScounting}
\end{table}

Table~\ref{tab:BPScounting} shows that, with one exception, the number of BPS
instantons in dimension $D$ equals the sum of the number of  BPS
particle states and the BPS instantons in dimension $D+1$, as
anticipated above.   The exceptional case is the parabolic subgroup
with $M_{\alpha_8}= E_7$,  where the number of instantons, 120,
is one greater than  the  number of BPS states, 56,  plus instantons, 63 in $D=4$.   The
 string theory interpretation of this extra state is discussed at the end of section~\ref{spinorbit}.

\begin{table}[!ht]
  \centering
    \begin{tabular}[t]{||c|c|c|c|c||}
   \hline
 $M_{\alpha_{d+1}}=E_d$   &    BPS &      BPS condition &   Orbit & Dim.\\
\hline
&&&&\\[-1em]
\hline
$GL(1)$&$\smallf 12$& - & GL(1)& 1\\
\hline
$SL(2)\times \IR^+$& $\smallf 12$& $v\, v_a=0$ &   Union of 2 orbits  &  1 and 2  \\
        & $\smallf 14$& $v\, v_a\neq 0$ &   $  \frac{GL(1)\times SL(2)}{\IR}   $  &3 \\
\hline
& $\smallf 12$  & $\epsilon^{ab}   \,  v_{i\,a}  v_{j\, b}=0$   &
                            $\frac{SL(3)\times
                             SL(2)}{ GL(2) \ltimes \IR^3} $ & 4
                           \\
$SL(3)\times SL(2)$ &&&&\\[-1.2ex]
                         & $\smallf 14$  & $\epsilon^{ab}   \,  v_{i\,a}  v_{j\, b}\neq 0$   &
                          $\frac{SL(3)\times
                             SL(2)}{SL(2)\ltimes \IR^2}$  &6 \\
\hline
 &  $\smallf 12$  &  $\epsilon^{ijklm}\, v_{ij}\, v_{kl}=0$ &
$\frac{SL(5)}{(SL(3)\times SL(2))\ltimes \IR^6}$  &7\\
$SL(5)$ &&&&\\[-1.2ex]
                       &  $\smallf 14$  &  $\epsilon^{ijklm}\, v_{ij}\, v_{kl} \neq 0$ &  $\frac{SL(5)}{Spin(2,3)\ltimes \IR^4}$  &10\\
\hline
       &  $\smallf 12$    &   ${\scriptstyle (S\Gamma^mS)=0}$
 &  $\frac {Spin(5,5)}{SL(5)\ltimes \IR^{10}}$   &11\\
$Spin(5,5)$ &&&&\\[-1.2ex]
                                               &  $\smallf 14$    &   ${\scriptstyle (S\Gamma^mS)\neq 0}$        & $\frac{Spin(5,5)}{Spin(3,4)\ltimes \IR^8}$   &16 \\
\hline
     &  $\smallf 12$     &    $\srel{I_3 = {\partial I_3\over \partial q^i}=0,}{ \text{and~} {\partial^2 I_3\over \partial q^i\partial q^j}\neq0.} $    &  $\frac{E_{6}}{Spin(5,5)\ltimes
  \IR^{16}} $  &17\\
 &&&&\\[-1.2ex]
  $E_6$                 &  $\smallf 14$     &   $ \srel{{\scriptstyle
      I_3=0,\  {\partial I_3\over \partial q^i}\neq0}}{}$
  & $\frac{E_{6}}{Spin(4,5)\ltimes \IR^{16}}$   & 26\\
 &&&&\\[-1.2ex]
                         & $\smallf 18$    &   ${\scriptstyle I_3\neq 0} $           &  $\frac{ GL(1)\times  E_{6}}{F_{4(4)}}$  &27\\
\hline
      &  $\smallf 12$      &  $\srel{ I_4  = {\partial I_4\over \partial q^i}=\left.{\partial^2   I_4\over   \partial
    q^i\partial q^j}\right|_{Adj_{E_7}}=0\,,}{\srel{\text{and~} {\partial^3   I_4\over   \partial
    q^i\partial q^j\partial q^k}\neq0.}{\srel{}{}}}$   &$\frac{E_{7}}{E_{6(6)}\ltimes
  \IR^{27}}$      &28 \\
                $E_{7}$                   &  $\smallf 14$      &  $\srel{ I_4= {\partial I_4\over \partial q^i}=0,}{\srel{\text{and~}
\left.{\partial^2 I_4\over \partial q^i\partial q^j}\right|_{Adj_{E_7}}\neq0.}{}}$        &$\frac{E_{7}}{Spin(5,6) \ltimes(
\IR^{32}\ltimes\IR)}$  &   45 \\
                                  & $\smallf 18$     & $ {\scriptstyle I_4=0, \  {\partial I_4\over \partial q^i}\neq0}$       &   $\frac{E_{7 } } {F_{4(4)}\ltimes
\IR^{26}} $  & 55\\
         &&&&\\[-1.2ex]
                            & $\smallf 18$     &  ${\scriptstyle I_4>0}$  &$\frac{  \IR^+\times E_{7}}{E_{6(2)}}$  &56 \\
\hline
  \end{tabular}
\vskip 0.2cm
    \caption{The orbits of instantons associated with the parabolic
      subgroup $P_{\alpha_{d+1}}$.  With one exception these are
      orbits of charged black hole states satisfying fractional BPS
      conditions that  are generated by the action of the Levi
      subgroup, $GL(1)\times E_d$, on a representative BPS state.
      The notation is explained in the text.  The degenerate case
      with $d=0$ is omitted here but will be discussed  in
      section~\ref{sl2examples}.
 The information in the  third and fourth columns is
  taken from~\cite{Ferrara:1997ci} and \cite{Lu:1997bg}, respectively. Details are provided in appendix~\ref{orbit1appendix}.
  Note the presence of the nonabelian 33-dimensional unipotent radical $\IR^{32}\ltimes \IR$ in the $\smallf 14$-BPS entry for $E_7$.
 }
  \label{tab:bpsorbits1}
\end{table}

The BPS orbits for each value of $d=10-D$ with Levi  factor $L_{\a_{d+1}}=GL(1)\times E_{d}$ are shown in table~\ref{tab:bpsorbits1}.
The tensors $v,\, v_a,\, v_{ia},\, v_{ij}$ and the spinor $S$ were introduced  in section~\ref{fourierorbits}.  $I_3$ and $I_4$ are cubic and
quartic invariants of $E_6$ and $E_7$, respectively, which are defined
in terms of the fundamental representation, $q^i$, of $E_6$ and $E_7$, as reviewed in appendices~\ref{sec:D5} and \ref{sec:D4}.
A general feature that is valid  for $d>1$  is that the $\smallf 12$-BPS states fill out orbits
 of the form
   \begin{equation}
    \label{e:HalfOrbit}
\mathcal{O}_{\frac12-BPS}\,=\,\frac{E_{d+1}}{E_{d}\ltimes \IR^{n_{d+1}}}\,,\qquad (n_3,\dots,n_7)\,=\,(3,6,10,16,27)\,.
  \end{equation}
The integers $n_{d+1}$  are the dimensions of the unipotent radicals,
$U_{\alpha_{d+1}}$, listed in table~\ref{tab:dimUnipotent}; they are  also
the dimensions of the spaces of BPS point charges for the symmetry  groups $E_{d+1}$ listed
in table~\ref{tab:BPScounting}, apart from the case of $d=7$
where $U_{\alpha_8}$ is   a non-abelian Heisenberg group.  As mentioned earlier, $U_{\alpha_8}$ has dimension $57$ while the $E_7$ point-like states (charged black holes) are labelled by only 56 charges.
The missing charge arises from the fact that among the 120 instantons in $D=3$ dimensions (see table \ref{tab:BPScounting}) there is one that is a wrapped $KKM$ with $x^\#$ (the fibre coordinate in \eqref{kkmonopole}) wrapped around the direction that is identified with (euclidean) time.
Since particle states in $D=4$ dimensions are obtained by identifying the decompactified direction with time, the exceptional instanton is one for which $x^\#$ grows in the cusp and its action becomes singular. By contrast,  56 of the $D=3$ instantons have action proportional to $r_7$ and are seen as point-like states in four dimensions, and the other 63 have no $r_7$ dependence and decompactify to instantons in four dimensions.

It is interesting to speculate about an additional line to table~\ref{tab:bpsorbits1} which we did not list, namely one for $M_{\a_9}=E_8$ inside the  affine Kac-Moody group $E_9$. While this latter group is infinite dimensional, one can still make sense of the orbits in terms of the finite dimensional vector space $\mathfrak{u}_{-1}$ in (\ref{gradedstructureforuminus}).  Indeed, $\mathfrak{u}_{-1}$ here is 248-dimensional and the action of $E_8$ is equivalent to the adjoint action
on its Lie algebra.  Thus
the orbits there coincide with the coadjoint nilpotent orbits for $E_8$.

\subsection{The string perturbation theory limit: $P_{\alpha_1}$.}
\label{sec:BPSTorbits}
In this limit BPS  instantons give non-perturbative corrections to
string perturbation theory. This involves an expansion in the
parabolic subgroup $P_{\alpha_1}$, with Levi factor $L_{\alpha_1}=
GL(1)\times Spin(d,d)$.
This limit is analogous to the limit considered in the previous subsection with the role of the decompactifying circle radius, $r_d$, replaced by the inverse string coupling in $D=10-d$ dimensions, which is denoted $1/\sqrt{y_D}$.
 In this case the orbits of BPS charges do not correspond to black hole charge orbits.

The BPS instantons that enter in this limit are easiest to analyse in
terms of the wrapping of euclidean world-volumes of $Dp$-branes, the
NS5-brane and stringy $KKM$ instantons.  The $Dp$-branes enter for all
values of $d\ge 0$ and their contribution alone leads to an abelian
unipotent radical, $U_{\alpha_1}$.  The NS5-branes contribute on tori
of dimension $d\ge 6$ and the $KKM$ instantons contribute for
$d=7$. Both these kinds of instantons render the unipotent radical nonabelian.   In section~\ref{spinorbit}  and
appendix~\ref{euclidDbrane} we review  the classification of
$Dp$-brane instantons in terms of the classification of $Spin(d,d)$
chiral spinor orbits, which leads to the following
features:
\begin{itemize}
\item{
For $d\le 3$ there is only one non-trivial orbit, which is $\smallf 12$-BPS. }
\item{
$\smallf 14$-BPS orbits arise when $d \geq 4$.  For $d=4$ and $5$ there is one orbit, namely the full spinor space of dimension $2^{d-1}$.   For $d=6$ and $d=7$ there is again a single  $\smallf 14$-BPS orbit  given by constrained spinors, which has dimensions 25 and 35, respectively.
}
\item {For $d=4$
the $\smallf 12$-BPS orbit is parameterised by a  spinor satisfying the
$Spin(4,4)$ pure spinor constraint, $S\cdot S=0$, while the full
eight-component spinor space (with $S\cdot S\ne 0$)  parameterises the $\smallf 14$-BPS orbit.}
\item{For $d=5$
the $\smallf 12$-BPS orbit is parameterised by an $Spin(5,5)$ spinor satisfying the pure
spinor constraint,\footnote{The Dirac matrices $\Gamma^i$ ($i=1,\dots
  2d$) form a $2^{\frac{d}{2}-1}\times 2^{\frac{d}{2}-1}$
 representation of the Clifford algebra $C\ell(d,d)$. We will   denote
the antisymmetric product of $r$ Dirac $\Gamma$ matrices  by $\Gamma^{i_1\cdots i_r} ={1\over r!} \,\sum_{\sigma\in\mathfrak
  S_r}\,(-)^{\sigma}\Gamma^{i_{\sigma(1)}}\cdots\Gamma^{i_{\sigma(r)}}$,
where $(-)^{\sigma}$ is the signature of the permutation $\sigma$.}
$S\Gamma^i S=0$, and once again the unconstrained spinor parameterises
the $\smallf 14$-BPS orbit. }
\item{For  $d=6$ the $\smallf 12$-BPS orbit is defined by an $Spin(6,6)$ spinor satisfying the  pure spinor constraint,
\be
\label{sixpure}
F_2 \ \  := \ \  \smallf12\,\sum_{i,j=1}^{12} S\Gamma^{ij} S \, dx^i\wedge dx^j  \ \ = \ \ 0\,,
\ee
where
the $\smallf 14$-BPS orbit is parameterised by a spinor satisfying the weaker constraints
\be
F_2 \ \  \ne \ \  0\ \,, \qquad F_2\wedge F_2 \ \ = \ \ 0\,.
\label{sixpuretwo}
\ee
In addition there is a $\smallf 18$-BPS orbit which is identified with the space of a spinor satisfying
\be
F_2\wedge F_2 \ \ \ne  \ \ 0\ \,, \qquad *F_2\wedge F_2 \ \ = \ \ 0\,,
\label{sixpurethree}
\ee
where $*$ is the Hodge star operator, and  a second $\smallf 18$-BPS orbit identified with the space spanned
by an unconstrained 32-component spinor.}
\item{For $d=7$ there are nine nontrivial orbits (in addition to the trivial orbit) that were determined by Popov \cite{Popov}.
The $\smallf 12$-BPS case is the smallest non-trivial orbit, which is the space spanned by a spinor satisfying
\be
F_3 \ \ := \ \ {1\over3!}\,\sum_{i,j,k=1}^{14} S\Gamma^{ijk} S \, dx^i \wedge dx^j \wedge dx^k  \ \ = \ \ 0\,,
\label{sevenpure}
\ee
where $S$ is a $Spin(7,7)$ spinor and $\Gamma^i$ ($i=1,\dots,14$) are
corresponding Dirac matrices.  However, the description of the
remaining orbits in terms of covariant constraints involving $F_3$
analogous to those of \eqref{sixpuretwo} and \eqref{sixpurethree} is apparently unknown.
}
\end{itemize}

We now turn to a detailed description of these orbits, which draws from the information in section~\ref{sec:OrbitCharacter}.
\subsubsection{Classification of spinor orbits}
\label{spinorbit}

 A review of the method for classifying spinor  orbits of  $G=
 Spin(d,d)$, when viewed as the subgroup of even and invertible elements of the
 Clifford algebra $C\ell(d,d)$ associated with $SO(d,d)$, can be found in~\cite{Trautman} (based on the original work in \cite{Igusa}   for   $d\leq   6$,    and\cite{Popov} for   $d=7$).

   The following tables will summarise some facts about these orbits, which are typically  cosets of the form $\mathcal O=Spin(d,d)/H$, $H$ being the stabilizer of a point in the orbit;  in three particular cases the quotients are actually $(GL(1)\times G)/H$ for reasons similar to those explained just above section~\ref{generalfeatures}.
  Since we do not require any specific features of these orbits we shall simplify their description by writing the real points the corresponding complex algebraic variety.
     For each value of $d$ we will
give a representative spinor of each orbit  (labelled $S^0$ in column
1 and defined in appendix~\ref{euclidDbrane}),
together with its stabiliser ($H$ in column $2$), its dimension
(dim$(G/H)$ in column 3)   and the fraction of supersymmetry it
preserves -- i.e., its BPS degree $N/2^{d-1}$, which  is
  determined by  the number of linearly independent spinors $N$ of
  the orbit representative $S^0$. In the
  following we will only list the BPS orbits appearing into the
  Fourier coefficients of the coefficients we are interested in. A
  more complete discussion is given  in section~\ref{sec:OrbitCharacter}.

The tables that follow have the following general properties:
\begin{itemize}
\item The bottom row is the trivial orbit and the top row is the
  dense orbit of a full spinor.
\item
The   second to bottom row  is the smallest  non-trivial orbit, which is the $\smallf 12$-BPS configuration with
orbit parametrized by the coset\footnote{Although the orbits listed in this
section are over  $\IR$  or $\C$, the structures are largely independent of the ground field.  For example, this particular orbit has the same form over
any  field $k$ with characteristic different from 2, but with the $\IR$ factor replaced by $k^{d(d-1)\over2}$.}
\begin{equation}
  \label{eq:halfBPS-orbit}
\mathcal O_{\tfrac12-BPS}= {Spin(d,d)\over SL(d)\ltimes \IR^{d(d-1)\over2}}
\end{equation}
of dimension $1+d(d-1)/2$. This is the  orbit of a spinor satisfying the pure spinor constraint and can be obtained by acting on the ground state of the Fock space representation of the spinor  with $SO(d,d)$ rotations.

\item The third to bottom row is the second smallest  non-trivial orbit (the NTM, or $\smallf 14$-BPS,  orbit), which  arises for $d\ge 4$ and is the coset
  \begin{equation}
    \label{eq:quarterBPS-orbit}
 \mathcal O_{\tfrac14-BPS} \ \ = \ \    {Spin(d,d)\over (Spin(7)\times SL(d-4))\ltimes U_{(d-4)(d+11)\over2}}\ ,
  \end{equation}
 where $U_s$ is a
 unipotent group of dimension $s$ (which is nonabelian for $d\ge 6$).
\end{itemize}

In more detail, the specific orbits for each $Spin(d,d)$ group are as follows:

\smallskip

\noindent$\blacktriangleright$ $Spin(1,1)$ is trivial.  For $Spin(2,2)$ and $Spin(3,3)$  the action of the spin group
is transitive  and there are only two orbits:~the trivial one of dimension $0$, and the
Weyl spinor orbit.  This is in accord with the discussion in the previous subsection.
$$
  \begin{tabular}[h]{||c|c|c|c||}
\hline
\multicolumn{4}{||c||}{$G=Spin(2,2)$}\\
\hline
$S^0$& stabilizer $H$ & dim$(G/H)$ & BPS\\
\hline
    1 & $ SL(2)\ltimes \IR$& $2$ & $\smallf 12$\\
0  & $Spin(2,2)$ & 0  & $--$\\
\hline
  \end{tabular}
  $$
  $$
 \begin{tabular}[h]{||c|c|c|c||}
\hline
\multicolumn{4}{||c||}{$G=Spin(3,3)$}\\
\hline
$S^0$& stabilizer $H$ & dim$(G/H)$ & BPS\\
\hline
    1 & $ SL(3)\ltimes\IR^3 $& $4$ & $\smallf 12$\\
0  & $Spin(3,3)$ & 0  & $--$\\
\hline
  \end{tabular}
$$

\noindent$\blacktriangleright$ For $d\geq4$ the action of the spin group is
not transitive and there are several non-trivial orbits represented by
constrained spinors.\footnote{The symbols  $e_{i_1\cdots i_r}$ and
 $e^*_{i_1\cdots   i_r}$ labelling the spinor $S^0$ are defined in appendix~\ref{euclidDbrane}.}  The first orbit listed  in the $Spin(4,4)$ table, the full spinor orbit of dimension 8, is actually the quotient $( GL(1) \times Spin(4,4))/Spin(7)$.  A similar $GL(1)$ factor occurs  for the largest orbit  of the groups   $Spin(6,6)$ and $Spin(7,7)$, but not for $Spin(5,5)$ (see below).
$$  \begin{tabular}[h]{||c|c|c|c||}
\hline
\hline
\multicolumn{4}{||c||}{$G=Spin(4,4)$}\\
\hline
$S^0$& stabilizer $H$ & dim$(G/H)$ & BPS \\
\hline
$1+e_{1234}$ & $Spin(7)$ & 8& $\smallf 14$\\[1ex]
    1 & $ SL(4)\ltimes \IR^{6}$& 7 & $\smallf 12$\\
0  & $Spin(4,4)$ & 0  & $--$\\
\hline
  \end{tabular}
$$
$$  \begin{tabular}[h]{||c|c|c|c||}
\hline
\hline
\multicolumn{4}{||c||}{$G=Spin(5,5)$}\\
\hline
$S^0$& stabilizer $H$ & dim$(G/H)$ & BPS\\
\hline
$1+e_{1234}$&  $ Spin(7)\ltimes \IR^{8}$& 16&$\smallf 14$\\[1ex]
    1 & $ SL(5)\ltimes \IR^{10}$& 11 & $\smallf 12$\\
0  & $Spin(5,5)$ & 0  & $--$\\
\hline
  \end{tabular}
$$

\noindent$\blacktriangleright$ The $Spin(6,6)$ case involves  some noncommutative unipotent subgroups $U_{s}$ of dimension $s$. The full spinor orbit of dimension 32 is $(GL(1)\times
Spin(6,6))/SL(6)$.
$$  \begin{tabular}[h]{||c|c|c|c||}
\hline
\hline
\multicolumn{4}{||c||}{$G=Spin(6,6)$}\\
\hline
$S^0$& stabilizer $H$ & dim$(G/H)$ & BPS\\
\hline
$1+e^*_{14}+e^*_{25}+ e^*_{36} $&$SL(6) $&32&0\\
$1+e^*_{14}+e^*_{25} $&$  Sp(6)\ltimes\IR^{14} $&31&$\smallf 18$\\[1ex]
$1+e^*_{14}$ &$( SL(2)\times Spin(7))\ltimes U_{17}$& 25&$\smallf  14$\\[1ex]
    1 & $ SL(6)\ltimes \IR^{15}$& 16 & $\smallf 12$\\
0  & $Spin(6,6)$ & 0  & $--$\\
\hline
  \end{tabular}
$$

\noindent$\blacktriangleright$ For $Spin(7,7)$ the full spinor orbit of dimension 64 is
$( GL(1)\times
Spin(7,7))/(G_2\times_{\ZZ_2} G_2)$, where $G_2$ is the exceptional
group of rank 2 and where $H_1\times_{\ZZ_2} H_2$
denotes the almost direct product of two groups intersecting on
$\ZZ_2$.  Of the total of 10 orbits obtained in~\cite{Popov},
  we  quote only  the ones relevant for the analysis of the Fourier modes
  discussed in this paper:
$$
  \begin{tabular}[h]{||c|c|c|c||}
\hline
\hline
\multicolumn{4}{||c||}{$G=Spin(7,7)$}\\
\hline
$S^0$& stabilizer $H$ & dim$(G/H)$ & BPS\\
\hline
$1+e^*_{7} $&$ SL(6)\ltimes \IR^{12} $&44&$\smallf 18$\\[1ex]
$1+e^*_{147}+e^*_{257} $&$(  Sp(6)\times_{\ZZ_2}GL(1))\ltimes\IR^{26}
$&43&$\smallf 18$\\[1ex]
$1+e_{1234}$ &$( SL(3)\times Spin(7))\ltimes U_{27}$& 35&$\smallf 14$\\[1ex]
    1 & $ SL(7)\ltimes \IR^{21}$& 22 & $\smallf  12$\\
0  & $Spin(7,7)$ & 0  & $--$\\
\hline
  \end{tabular}
$$
\subsubsection{Neveu--Schwarz five-brane and stringy $KKM$ instantons}\hfill\break
\label{sec:nsinstantons}
The wrapped world-volume of the NS5-brane produces a new kind of
instanton when $d\ge 6$, which is a source of $B_{\rm NS}$ flux.  Whereas
the $Dp$-brane instantons have actions of the form $C/g_s$ with
  $C$ independent of $g_s$, the
wrapped  NS5-brane   has an action of the form $C/g_s^2$.  This means
that such NS5-instantons are suppressed by $e^{-C/g_s^2}$, and so, in
the string  perturbation theory regime they are suppressed relative to
the $Dp$-brane instantons.
The presence of the charge carried by this
wrapped NS5-brane instanton leads to a non-commutativity of the unipotent
radical, $U_{\alpha_1}$, which is  a Heisenberg group  (this is
analogous to the fact that the $KKM$ instanton in $D=3$ led to
non-commutativity of the  unipotent radical $U_{\alpha_8}$ in the
$P_{\alpha_8}$ parabolic subgroup of $E_8$).
 The non-commutativity arises because the presence of a
    NS5-brane  charge generates a non-trivial $B_{\rm
      NS}$ background.  This affects the  definition of the $D$-brane charges
    due to the dependence on    $B_{\rm
      NS}$ of their field-strengths,  $F^{(4)}:=dC^{(3)}+dB_{\rm NS}\wedge C^{(1)}$ and
$*F^{(4)}=dC^{(5)}+C^{(3)}\wedge dB_{\rm NS}- dC^{(3)}\wedge B_{\rm
  NS}$.
 Since there is only one euclidean NS5-brane configuration on a
6-torus (the $D=4$ case) the non-commutative part of
$U_{\alpha_1}$ is one-dimensional, so the unipotent radical forms a Heisenberg group.

Upon further compactification on $\rT^7$ to $D=3$ there are $7$
distinct wrapped NS5-brane world-volume instantons, one for each
six-cycle.  In addition, there are 8 M-theory $KKM$ instantons that are distinguished from each other in the M-theory
description by identifying the coordinate $x^\#$ with any one of the
1-cycles, as explained earlier.  In string language, one of these is
the wrapped euclidean $D6$-brane  that has been counted as one of the
64 components of the $SO(7,7)$ spinor space and contributes to the
abelian part of the unipotent radical $U_{\alpha_1}$.  The other $7$
are $KKM$ instantons with $x^\#$ identified with a circle in one of
the $7$ other directions.  These are T-dual to the 7 wrapped
NS5-branes. The presence of the $D6$-brane and KKM instantons  leads to a higher degree of non-commutativity of the unipotent radical, due for example,  to
the non-linear dependence of the D6-brane field strength on  $B_{\rm NS}$ through
$*dC^{(1)}=dC^{(7)}+\frac12\, B_{\rm NS}\wedge dC^{(5)}- \frac12
dB_{\rm NS}\wedge C^{(5)}-\frac13 B_{\rm NS}\wedge B_{\rm NS} \wedge
dC^{(3)}+\frac13 B_{\rm NS}\wedge dB_{\rm NS}\wedge dC^{(5)}$.

 This counting coincides with that expected from
a group theoretic analysis of the dimension of the abelian and
non-abelian (i.e., derived subgroup) parts  of the  unipotent radical
summarised in the columns labelled ``first node'' of table
\reftab{tab:dimUnipotent}.

\subsection{ BPS instantons in the semi-classical M-theory limit: $P_{\alpha_2}$.}
\label{sec:BPSMorbits}

 This is the  limit in which  the volume, $\calV_{d+1}$, of  the
 M-theory torus  $\calT^{d+1}$ becomes large and semi-classical
 eleven-dimensional supergravity is a good approximation.  The Fourier
 modes of interest are those associated with the maximal parabolic
 subgroup  $P_{\alpha_2}$, which has Levi subgroup
 $L_{\alpha_2}=GL(1)\times SL(d+1)$.
The constant terms in the Fourier expansion  were considered in \cite{Green:2010kv} and shown to match expectations based on perturbative eleven-dimensional supergravity.

The instanton charge space can be described as follows. The wrapped $KK$
world-lines do not give instantons in this limit since their action is
independent of the volume, $\calV_{d+1}$.  Wrapped euclidean
$M2$-branes appear in $D\le 8$ dimensions (corresponding to symmetry
groups with rank $\ge 3$), while the wrapped euclidean $M5$-brane
arises for $D\le 5$ dimensions (corresponding to symmetry groups with
rank   $ \ge 6$) and the wrapped world-volume associated with the $KKM$
enters first in $D=3$ dimensions (i.e., for symmetry group $E_8$).
These instanton  actions have the  exponentially suppressed form  $\exp(-C\calV_{d+1}^a)$,   where $C$ is independent of $\calV_{d+1}$ in the
limit $\calV_{d+1}\to \infty$, and $a=3/(d+1)$ for the wrapped $M2$-brane,
$a=6/(d+1)$ for the wrapped $M5$-brane and $a=7/(d+1)$ for the wrapped
$KKM$.

  The space spanned by the 3-form, $v_{[ijk]}$ that couples  to
  $M2$-brane world-sheets wrapping 3-cycles inside  the $M$-theory torus $\calT^{d+1}$ has dimension
\bea
  D^{d+1}_{M2}  \ \ = \ \  {(d+1)!\over 3!\,  (d-2)!}\,,
    \label{e:dM2}
  \eea which equals $1$, $4$, $10$, $20$, $35$, and $56$, respectively, for tori of dimensions $d+1=3$, $4$, $5$, $6$, $7$, and $8$ (corresponding to the duality groups $E_3, \dots, E_8$).
Similarly,  the space of euclidean five-branes wrapping
6-cycles inside $\calT^{d+1}$ has dimension
\begin{equation}
  \label{e:dM5}
  D^{d+1}_{M5}= {(d+1)!\over 6! (d-5)!}\,,
\end{equation} which equals $1$, $7$, and $28$, respectively, for $d+1=6$, $7$, and $8$ (corresponding to duality groups $E_6$, $E_7$, and $E_8$).
Finally,  a finite action $KKM$ instanton only exists if there are 8 circular dimensions, so it only contributes for the $E_8$ case.  As argued earlier, there are 8 distinct objects of this kind since $x^\#$ is distinguished from the other circular coordinates.

Again these dimensions can be compared with those  listed in section~\ref{sec:OrbitCharacter}  and summarised in table \reftab{tab:dimUnipotent} under the heading ``second node''. The wrapped euclidean $M2$-branes contribute the dimensions of abelian part of the
unipotent  radical for this  maximal parabolic  subgroup.  In fact the numbers in the left-hand column of the second node heading are equal to $D^{d+1}_{M2}$ for all $0\le d\le 7$.
 The $M5$-brane charge space of dimension
$D^{d+1}_{M5}$,   equals  the dimension of the non-commutative   part  (i.e., derived subgroup)  of  the  unipotent  radical for $E_6$ and $E_7$, while
for $E_8$ there is also a contribution of 8 from the  $KKM$
instantons.
In this case the  non-abelian component of the unipotent radical
     arises from the $KKM$ instanton dependence on the  3-form
     $A^{(3)}$ configurations (analogous to the way the $B_{\rm
       NS}$ configurations induced the non-commutativity in the
     previous section).

Although we have given a list of dimensions of the space spanned by the orbits, in this case we have not analysed the BPS conditions to discover how the complete space decomposes into orbits with fractional supersymmetry.   However, the latter part of this paper analyses the complete orbit structure for the subgroup $P_{\alpha_2}$ and the
list of orbits is given in table~\reftab{tab:minimalnilpotent22}.
From this we can identify, for each value of $d$, the minimal
($\smallf 12$-BPS) and NTM ($\smallf 14$-BPS) orbits, as well as many others that arise when $d\ge 5$ (i.e. for $E_6$, $E_7$ and $E_8$).

\section{Explicit examples of Fourier modes for rank $\le 5$.}
\label{lowrank}

\subsection{Fourier expansions for higher rank groups  }\label{sec:nreFourierCoeff}

Suppose that $\phi \in C^\infty(\G\backslash G)$ is an automorphic function, and that $A\subset G$ is an abelian subgroup which is isomorphic to $\IR^m$ for some $m>0$.  If $\G\cap A$ corresponds to a lattice in $\IR^m$ under this identification, then $\phi$'s restriction to $A$, $\phi(a)$, has a Fourier expansion.  The same is true for any right translate $\phi(ag)$, for $g$ fixed.
  A prime example of this is  $A$ equal to  the unipotent radical $U$ of a maximal parabolic subgroup $P=LU$ of $G$, when $U$ is abelian and $\G$ is arithmetically defined:
\begin{equation}\label{fourierexp1a}
    \phi(ug) \ \ = \ \  \sum_{\chi}\chi(u)\phi_\chi(g)\ , \  \ \ \phi_\chi(g) \ \ = \ \ \int_{\G\cap U\backslash U} \phi(ug)\,\chi(u)^{-1}\, du\,,
\end{equation}
where the sum is taken over all characters $\chi$ of $U$ which are trivial on $\G\cap U$.
In particular the special case $u=e$,
\begin{equation}\label{fourierexp1b}
    \phi(g) \ \ = \ \ \sum_{\chi}\,\phi_\chi(g)\,,
\end{equation}
 reconstructs $\phi$ as a sum of its Fourier coefficients $\phi_\chi$.  These Fourier coefficients
 are in general distinct from Whittaker functions, which are Fourier coefficients for the minimal parabolic.
 When $U$ fails to be abelian the coefficients $\phi_{\chi}$ defined by (\ref{fourierexp1a}) still make sense, though $\phi$ is no longer a sum of them.  Instead, it is the integral of $\phi$ over the commutator subgroup\footnote{The commutator subgroup $[U,U]$ is the smallest normal subgroup of $U$ which contains all elements of the form $[u_1,u_2]$, for $u_1,u_2\in U$.} $[U,U]$ of $U$ which has an expansion
\begin{equation}\label{fourierexp2}
    \int_{\G\cap [U,U]\backslash [U,U]}\phi(ug)\,du \ \ = \ \ \sum_{\chi}\,\phi_\chi(g)\,;
\end{equation}
in other words, the Fourier expansion only captures a small part of $\phi$'s restriction to $U$ -- the part which transforms trivially under $[U,U]$.

A character on $U$ can be viewed as a linear functional on its Lie algebra $\mathfrak u$ via the differential.  In our case, in which $U$ is the unipotent radical of a maximal parabolic subgroup $P=P_{\a_j}$ for some simple root $\a_j$, $\mathfrak u$ has a graded structure
\begin{equation}\label{gradedstructureforu}
    \mathfrak u \ \ = \ \ {\mathfrak u}_1 \,\oplus\,{\mathfrak u}_2 \,\oplus\,\cdots
\end{equation}
where $\mathfrak u_k$ is the span of root vectors for roots of the form $\a=\sum c_k \a_k$, with $c_j=k$.  The Killing form $B(\cdot,\cdot)$ exhibits the dual ${\mathfrak u}^*$ of $\mathfrak u$ as the complexification of the Lie algebra
\begin{equation}\label{gradedstructureforuminus}
    \mathfrak u_{-} \ \ = \ \ {\mathfrak u}_{-1} \,\oplus\,{\mathfrak u}_{-2} \,\oplus\,\cdots\,.
\end{equation}
The commutator subgroup $[U,U]$ has Lie algebra ${\mathfrak u}_2\oplus{\mathfrak u}_3\oplus\cdots$, so the differential of a character is sensitive only to $\mathfrak u_1$.  Again through the bilinear pairing of the Killing form, its dual space $\mathfrak u_1^*$ is isomorphic to the complexification ${\mathfrak u}_{-1}\otimes\C$ of ${\mathfrak u}_{-1}$.  The exponential of any such a linear functional is a character of $U$, and hence ${\mathfrak u}_{-1}\otimes\C$ is known as the {\sl character variety} of $U$.

Now let $\chi$ be a character of $U$ which is invariant under the discrete subgroup $\G\cap U$.  The above correspondence guarantees the existence of a unique
\begin{equation}\label{Ychar}
Y \  \in  \  {\mathfrak u}_{-1}\otimes \C  \ \ \  \text{such that} \ \ \ \chi(e^X)\  = \  e^{iB(Y,X)}\,.
\end{equation}
 The set of all such $Y$ produced from characters $\chi$ of $(\G\cap U)\backslash U$ forms the {\em charge lattice} in $  {\mathfrak u}_{-1}$.
 Decompose $P=LU$, where $L$ is the Levi component.  Then   formula (\ref{fourierexp1a}) and the automorphy  of $\phi$ under any  $\g\in \G\cap L$
imply that
\begin{equation}\label{fourierexp3}
\aligned
    \phi_\chi(\g g) \ \  & = \ \ \int_{\G\cap U\backslash U} \phi(\g^{-1} u\g g)\,\chi(u)^{-1}\, du \\
& = \ \ \int_{\G\cap U\backslash U} \phi(u g)\,\chi(\g u\g^{-1})^{-1}\, du \,.
\endaligned
\end{equation}
Here we have changed variables $u \mapsto \g u \g^{-1}$, which preserves the measure $du$
 because $\g$ lies in the arithmetic subgroup $\G\cap L$.
  In terms of (\ref{Ychar})
\begin{equation}\label{fourierexp4}
    \chi(\g e^{X} \g^{-1}) \ \  = \ \  \chi(e^{\g X \g^{-1}}) \ \ = \ \ e^{iB(Y,\g X \g^{-1})}
    \ \ = \ \ e^{iB(\g^{-1} Y\g ,X )}\,,
\end{equation}
because of the invariance of the Killing form under the adjoint action; the character in the second line of (\ref{fourierexp3})   is hence equal to the character for the Lie algebra element $\g^{-1}Y\g \in {\mathfrak u}_{-1}\otimes \C$.

Consequently, the Fourier coefficients (\ref{fourierexp1a}) are related for characters $\chi$ which lie in the same $\G \cap L$-orbit  under the adjoint action on ${\mathfrak u}_{-1}\otimes \C$.
It should be remarked that $\mathfrak u_{-1}$ -- like each space $\mathfrak u_j$ -- is
invariant under the adjoint action of  $L$, and in fact furnishes an irreducible representation of $L$ (a fact
which can be verified in each example using the Weyl character formula -- see the tables in \cite[\S5]{Miller-Sahi}, for example, for a complete list).
The complexification  $L_\C$ of $L$ likewise acts on ${\mathfrak u}_{-1}\otimes \C$ according to an irreducible representation, and  carves it up into finitely many complex character variety orbits.

  Similarly, the adjoint action of $\G\cap L$  on the set of characters of $U$ which are trivial on $\G\cap U$ refines these complex orbits into myriad further ``integral'' orbits.
   Those characters naturally form a lattice inside of $i {\mathfrak u}_{-1} \subset {\mathfrak u_{-1}}\otimes \C$, and this last action is that of a discrete subgroup of $L$ on a lattice, e.g., the action of $GL(n,\Z)$ on $\Z^n$ in a particular special case.  The integral orbits are more subtle to describe because of number-theoretic reasons; indeed, even describing $\G\cap L$ for a large exceptional group is quite complicated.

Each of these complex character variety orbits (and hence each of the $\G\cap L$-orbits on the set of characters that are trivial on $\G\cap U$) is thus contained in a single (complex) coadjoint
nilpotent orbit.  It therefore makes sense to categorize the
complex  character variety orbits by giving their basepoints and
dimensions.
 Some of this information  was provided in section~\ref{bpsinstantons}, based on the analysis of BPS states in string
  theory. This analysis  focused on the  supersymmetric orbits and did
  not cover all possible orbits. A  systematic and detailed analysis of the remaining orbits for the maximal parabolic subgroups we study will be given in~\ref{sec:OrbitCharacter}.  These have long been known for the classical groups by the study of ``classical rank theory''; the paper \cite{Miller-Sahi} contains a listing for all maximal parabolic subgroups of exceptional groups.  In addition, the integral orbits are also known in some important cases:~Bhargava  \cite[Section 4]{bhargava} and Krutelevich~\cite{krutelevich} treat certain cases, with additional cases to appear in forthcoming work of Bhargava.

Note that the calculation (\ref{fourierexp3}) shows that each coefficient $\phi_\chi$ -- which is determined by its values on $L$ -- is automorphic under any $\g$ that lies in both $\G$ and $\operatorname{Stab}_{L}(\chi)$, the stabilizer of $\chi$ within $L$. In terms of the differential, these are the elements of $\G\cap L$
for which
the adjoint action fixes the element $Y\in {\mathfrak u}_{-1}\otimes \C$  from  (\ref{Ychar}).  One can therefore use (\ref{fourierexp3}) to write the sum of $\phi_\chi(g)$ ranging over $\chi$ in one of the integral orbits, as the sum of left $\g$-translates of a fixed $\phi_\chi$, where $\g$ now ranges over cosets of $\G\cap L$ modulo the stabilizer of this fixed character.   This shows that not all of the Fourier coefficients need to be  computed individually; knowing them for orbit representatives of characters is tantamount to knowing them for all characters.    Furthermore, the vanishing of any Fourier coefficient $\phi_\chi$ as a function of $L$ is equivalent to that of all Fourier coefficients in its orbit.


The following subsections (together with details that are presented in
appendix~\ref{modesdetails})  concern some specific, explicit
examples of the   Fourier modes of  the coefficient functions
$\calE_{(0,0)}^{(D)}$ and $\calE_{(1,0)}^{(D)}$   for the low rank
duality groups with $d \le 4$ (i.e. $D\ge 6$).   In these cases
standard, classical techniques can be used to obtain exact
expressions, including the arithmetical divisor sums that appear.
These techniques have the virtue of being relatively simple in these
special low rank cases; the higher rank cases of $E_6$, $E_7$ and
$E_8$ will be discussed in the later sections, although without
precise calculations -- our chief contribution is to use
representation theory to show that   many of them vanish.  The divisor sums could also be calculated using Hecke operators, though we do not do so here.

 In
 each particular case we will explicitly identify the character $\chi$,
 which lies in the lattice of characters of $U$ that are trivial on
 $\G\cap U$, with a tuple of integral parameters $m_i$, and use the
 notation
\begin{equation}
 \cF^{(D)\alpha}_{(p,q)}(\ell;m_i) \ \ : = \ \ \(\calE_{(p,q)}^{(D)}\)_\chi\!\!(\ell)   \ \ \ \ \text{and} \ \  \ \ \
  F^{G\,\alpha}_{\beta; s}(\ell;m_i)\ \ : = \ \ \(E_{\beta;s}^{G}\)_\chi\!\!(\ell)
\label{invfouriermodes}
\end{equation}
to refer to the Fourier modes of $\calE_{(p,q)}^{(D)}$ and $E_{\beta;s}^{G} $, respectively.   For brevity we shall sometimes drop the dependence on $\ell \in L$ from the notation.

The precise details of these Fourier coefficients could, in principle, be independently checked against an explicit evaluation of instanton contributions to the graviton scattering amplitude, but in practice such detailed verification is very difficult.  However,  most details of the contribution of $\smallf 12$-BPS instantons  to these coefficients in limit (i),  the decompactification limit in which  $r_d\gg1$, can be motivated directly from string theory.  This is the limit in which, for these low rank cases, the instantons are identified with wrapped world-lines of small black holes of the $(D+1)$-dimensional theory.
The
asymptotic behaviour can be understood by studying the
fluctuations around $\smallf 12$-BPS $D$-particle
configurations in a manner that generalises the arguments of~\cite{Green:1997tn}, leading to an expression
for the modes  in  $D=10-d \le 9$ dimensions of the form
\begin{equation}\label{e:AsympBPS}
\cF^{(D)\alpha_{d+1}}_{(0,0)}(k) \ \  = \ \
(\smallf{r_d}{\ell_{D+1}})^{n_D}\,  \sigma_{7-D}(|k|) \,
{e^{-S_D(k)}\over
  S_D(k)^{8-D\over 2}}\, \left(c_D+O(\smallf{\ell_{D+1}}{r_d})\right).
\end{equation}
Here   $c_D$ is a positive constant and  $S_D(k)=2\pi |k|
r_d m_{\frac12}$ is the action for the world-line of the $D$-particle
wound around the circle of radius $r_d$ and $m_{\frac12}$, which is a
function of the moduli, is the mass of a ``minimal'' $\smallf 12$-BPS
point-like particle state in $D+1$ dimensions -- that is, a state that
is related by duality to the lightest mass single-charge $D$-particle.
Such states can form threshold bound $D$-particles of mass $p\,
m_{\frac12}$. The divisor sum, $\sigma_n(k) = k^n\, \sigma_{-n}(k)=
\sum_{q|k} q^n$, sums over the winding number $q$ of the world-lines
of such  $D$-particles (where $k = p\times q$) and can be identified
with a matrix model partition function.  The factor of
$S_D(k)^{(D-8)/2}$  comes from
integration over the  bosonic  and fermionic zero modes and $n_D$ is a constant that depends on the dimension  $D$.  Because of the high degree of supersymmetry
preserved by the $\smallf  12$-BPS configuration  it turns out that
this approximation  is exact in several cases.  In $D=6$ our
  results are in
  agreement with~\cite{Pioline:2010kb}.    We have not completed
an independent quantum calculation of the $\smallf 14$-BPS instanton
contributions, which are more subtle.   We do not know a general pattern for the
exponent  $n_D$, though it is easily computable in each of the examples below.

The Fourier coefficients for different characters satisfy a number of relations between them due to (\ref{fourierexp3}).  This phenomenon is particularly striking on the symmetry groups with $D\ge 7$, which are products of $SL(n)$'s.  For example, the formula in (\ref{e:431}) depends on $p_1$ and $p_2$ only through the combination $|p_2+p_1\Omega|$, which is actually an instance of the principle in (\ref{fourierexp3}) (see \cite{bumpgl3} for more details).  Thus these coefficient functions (aside from a substitution in their argument) are determined by the ones  having $(p_1,p_2)=(1,0)$.  In general, a theorem of Piatetski-Shapiro \cite{psmult} and Shalika \cite{shalika} computes the Fourier expansion of an automorphic form on $SL(n)$ in terms of similarly simple Fourier coefficients.  In particular, they demonstrate that  the ``abelian Fourier coefficients'' that appear in (\ref{fourierexp2}) determine ones absent there that come from nonabelian charges.

 However, this theorem is not true for groups other than $SL(n)$.  Certain features still persist for the ``small'' automorphic representations which are the focus of this paper; see \cite[section 4]{Miller-Sahi} for the analogous result for minimal automorphic representations of $E_6$ and $E_7$.
 An expression that includes contributions from non-commutative charges (which are not addressed in this paper) is presented in
  \cite{Pioline:2010kb} in the case of $D=3$.  See also \cite{Pioline:2009qt,Bao:2009fg,Persson:2011xi} for a discussion of noncommutative contributions in a
  different context.

\subsection{$D=10B$: $SL(2,\Z)$}
\label{sl2examples}
The simplest nontrivial (but very degenerate) example arises in the case of the IIB theory with $D=10$,
where  the  discrete duality  group  is  $SL(2,\Z)$.\footnote{The type IIA theory has no instantons, which means that only the $0$-dimensional trivial orbit contributes.}
In this case  the   $\smallf 12$-    and   $\smallf 14$-BPS   interactions,    $\cE^{(10)}_{(0,0)}$   and
$\cE^{(10)}_{(1,0)}$, are given by Eisenstein series \cite{Green:1997tv,Green:1998by}
\begin{eqnarray}
  \label{e:Couplings10D}
  \cE^{(10)}_{(0,0)}= 2\zeta(3)\,E^{SL(2)}_{3\over2}(\Omega)\,, \qquad  \cE^{(10)}_{(1,0)}= \zeta(5)\,E^{SL(2)}_{5\over2}(\Omega)\,,
\end{eqnarray}
where   $E^{SL(2)}_s(\Omega)$ is a non-holomorphic Eisenstein
series and   $\Omega:= \Omega_1 +
i\Omega_2=C^{(0)} + i/\sqrt{y_{10}}$.

It is useful to parametrize the coset  space  $SL(2,\IR)/SO(2)$ (i.e., the upper half
plane)  associated with the continuous symmetry group, $SL(2,\IR)$,
by   matrix representatives   of the form
 \be
 e_2 \ \  = \   \begin{pmatrix}
1& \Omega_1\\
 0 &1
 \end{pmatrix}
 \begin{pmatrix}
\sqrt{\Omega_2}&0\\
 0 & {1\over\sqrt{\Omega_2}}
 \end{pmatrix}.
 \label{sl2param}
 \ee
 This matrix lies in the maximal parabolic subgroup of upper triangular matrices in $SL(2,\IR)$;
its first factor is in the unipotent radical and the second factor lies in its standard Levi component.
 The
$SL(2)$  Eisenstein series can be expressed as
\begin{equation}
\label{e:EisSl2}
 2\zeta(2s)\,  E_s^{SL(2)} (\Omega) :=     \sum_{M_2\in\ZZ^2\bsz}
  (m^2_{SL(2)})^{-s}  =\sum_{(m,n)\in\ZZ^2\bsz} \frac{\Omega_2^s} {|n\Omega+m|^{2s}} \,,
  \end{equation}
where the $SL(2,\IZ)$-invariant  $($mass$)^2$ is defined   by
\begin{equation}
  m_{SL(2)}^2:=M_2G_2 M_2^t = {|n\Omega+m |^2\over \Omega_2}\,,
  \label{sl2mass}
  \end{equation}
 with  $G_2= e_2e_2^t$  and $M_2=(n\ m)\in \ZZ^2\bsz$.

It   is  straightforward   to  determine   the  Fourier
coefficients   using the  standard
expansion of such series in terms of Bessel functions,
\begin{equation}
  \label{e:ESl2fourier}
 E^{SL(2)}_s(\Omega)=\sum_{n\in\ZZ} F^{SL(2)}_s(n) \,e^{2i\pi n \Omega_1}\, .
\end{equation}
The zero Fourier mode is
\be
  \label{e:ESl2fourierCoef1}
   F_s^{SL(2)}(0)= \Omega_2^s+
  {\xi(2s-1)\over\xi(2s)}\,\Omega_2^{1-s}\,,
  \ee
  where $\xi(s)=\pi^{-s/2}\Gamma(s/2)\zeta(s)$.   The non-zero mode with phase $e^{2i\pi n \Omega_1}$ is
\begin{equation}
F^{SL(2)}_s(n) ={2\,        \Omega_2^{\frac12}\over\xi(2s)}
{\sigma_{2s-1}(|n|)\over |n|^{s-\frac12}}\,
K_{s-\frac12}(2\pi|n|\Omega_2)\, ,
  \label{e:ESl2fourierCoef}
\end{equation}
where  $\sigma_\alpha(n)=\sum_{0<d|n} d^{\alpha}$ is a divisor  sum.  Thus  the non-zero mode with frequency $n$ is  proportional to $K_{s-\half}$, which  is a modified Bessel function of the second kind.

In this degenerate case the only  limit to consider is
$\Omega_2\to \infty$, which is the limit of string perturbation theory
organized as a power series in $\Omega_2^{-2}$ corresponding to
  the genus expansion of a closed Riemann surface.  In
this limit the expansion of the coefficient functions  is dominated by
the two power behaved constant terms in the zero mode  $F_s^{SL(2)}(0)$ in
\eqref{e:ESl2fourierCoef1}, while the non-zero modes have asymptotic
behaviour at large $\Omega_2$,
\begin{equation}
F^{SL(2)}_s(n)
 =
{\sigma_{2s-1}(|n|)\over \xi(2s)|n|^{s}}\, e^{-2\pi|n|\Omega_2}\left(1 + O(\Omega_2^{-1})    \right) ,
\label{asymptotic}
\end{equation}
where the asymptotic expansion of the Bessel function
\be
    K_\nu (x) = \sqrt{\frac{\pi}{2x}} e^{-x} \left(1 +  O(x^{-1}) \right) \,,  \ \ \  x\gg 1\,,
\label{besseldef}
\ee
has been used.

The two power behaved terms   in (\ref{e:ESl2fourierCoef1})  have the interpretation of terms in
string perturbation theory, which is an expansion  in  $y_{10}$,
the square of the string coupling constant.  Furthermore, the Eisenstein series with $s=3/2$ and with
$s=5/2$ have the correct power-behaved terms to account precisely  for
the known behaviour of the $\R^4$ and $\partial^4\R^4$ terms in the
low energy expansion of the four graviton amplitude
  in 10 dimensions.  In \cite{Green:2010kv} it was
shown that this is in agreement with string perturbation theory  and  extends to the higher rank cases where the pattern of constant terms is more elaborate.
 Furthermore, the exponential terms in the expansion in \eqref{asymptotic} correspond to the expected $D$-instantons that
 arise in the $D=10$ type IIB theory.  This  illustrates the fact,
common to all BPS instanton processes, that the exponential decay of a
Fourier mode is proportional to the charge $n$ that determines the
phase of the mode.  The correction term of order $\Omega_2^{-1}$ in
\eqref{asymptotic}  indicates  perturbative corrections to the instanton contribution  given by an expansion in powers of  the string coupling constant that corresponds to the addition of boundaries in the
Riemann surface.

In this case the only instantons are $\smallf 12$-BPS $D$-instantons
-- there are no $\smallf 14$-BPS instantons in the ten-dimensional
type IIB theory.  However, it is known from string theory arguments
that the Eisenstein series at $s=3/2$ is associated with the $\smallf
12$-BPS $\R^4$ term while the series at $s=5/2$ is associated with the
$\smallf 14$-BPS $\partial^4 \R^4$  contribution  (\ref{e:Couplings10D}).   This leaves
unresolved  the question as to what features of these series at
special values of $s$ encode the fraction of supersymmetry that these
terms preserve?  This must be encoded in the measure.    Indeed in the
$s=3/2$ case it was argued  in \cite{Green:1997tn,Green:1998yf} that
the measure factor  $\sigma_{2}(|n|)$   arises from the $\smallf
12$-BPS $D$-instanton matrix model, which was verified in
\cite{Moore:1998et}.  Presumably, the $s=5/2$ measure
should arise  in a similar manner.

In most of the higher-rank examples that follow there is a less subtle
distinction between the $\smallf 12$-BPS and $\smallf 14$-BPS cases
since in typical cases there are $\smallf 14$-BPS instanton
configurations that break $\smallf 34$ of the supersymmetry. As will
be shown in the following,  these generally enter into  non-zero
Fourier modes of the coefficient $\calE_{(1,0)}^{(D)}$ for $3\le D
<10$ (although, as will also be seen later, only the $\smallf 12$-BPS
orbit contributes in  the $P_{\alpha_1}$ parabolic with $D=7,8,9$).
The subtleties of the measure factor are not required in order to
identify the fraction of supersymmetry preserved in such cases.
However, there are no $\smallf 18$-BPS configurations for $D>5$.
Therefore, for  $D>5$ the distinction between the coefficient
$\calE_{(0,1)}^{(D)}$ and the ones which preserve more supersymmetry
is again not determined by the spectrum of instantons that contribute
in the various limits under consideration. This indicates that the
$\smallf 18$-BPS nature of $\calE_{(0,1)}^{(D)}$ must be encoded in
the form of the measure factor.

\subsection{$D=9$: $SL(2,\ZZ)$}\label{section:4.3}\hfill\break
The coefficients of the $\R^4$ and $\partial^4\R^4$ interactions in this case are \cite{Green:1997di, Basu:2007ru,Green:2010wi}
\bea
  \label{e:e9D1}
  \cE^{(9)}_{(0,0)}=2\zeta(3)\,\nu_1^{-\frac37}\,E^{SL(2)}_{\frac32}+4\zeta(2)\nu_1^{\frac47},
  \eea
  \be
    \label{e:e9D2}
  \cE^{(9)}_{(1,0)}=\zeta(5)\nu_1^{-\frac57}\,E^{SL(2)}_{\frac52}+{4\zeta(2)\zeta(3)\over15}\nu_1^{\frac97}E^{SL(2)}_{\frac32}+{4\zeta(2)\zeta(3)\over15}\nu_1^{-\frac{12}7}\,,
\ee
where $\nu_1=(\ell^B_{10}/r_B)^2=g_A^{\scriptstyle{\frac {7}{8}}}\, (r_A/\ell^A_{10})^\threeh $ with $r_B$ the radius of the compact dimension in the IIB theory and $r_A = \ell_s^2/r_B$  the radius in the IIA theory.  The IIA string coupling, $g_A$, is related to that of the IIB theory by  $g_A = g_B \, \ell_s/r_B$.  Furthermore, the $D=9$ theory can be viewed as the compactification of M-theory from 11 dimensions on a 2-torus, $\calT^2$, with volume $\calV_2 = \nu_1^{2/3}\,\ell_{11}^2$.

The limit  $\nu_1\to 0$ is the limit in which the $\IR^+$ parameter of the continuous symmetry,  $SL(2,\IR) \times \IR^+$,  becomes infinite, which is the  decompactification limit  to the $D=10$ IIB theory ($r_B\to\infty$), while the limit  $\nu_1\to \infty$  is the semi-classical M-theory limit  in which   $\calV_2$,  the volume of $\calT^2$,  becomes infinite.
 Equations \eqref{e:e9D1} and \eqref{e:e9D2} show that there are no
 non-zero modes in either of these limits.  Since   $\Omega_2=
 g_A^{-1} r_A/ \ell_s$, the  perturbative IIB limit, $\Omega_2\to
 \infty$, is also the $D=10$ type IIA limit, $r_A \to \infty$.  This
 is the limit in the parabolic subgroup $GL(1) \times U$
 of the $SL(2)$ factor (given in \eqref{sl2param})   in which the parameter  in the $GL(1)$ Levi factor in the $SL(2)$
 becomes infinite.
 The non-zero Fourier modes of the expression for $\cE^{(9)}_{(0,0)}$
in \eqref{e:e9D1}  that contribute  in this limit are obtained by using the mode expansion
of $E_{3/2}$  given in the previous section in~\eqref{e:e9D1},  giving
\begin{eqnarray}
\nn \cF^{(9)}_{(0,0)}(k)&:=& \int_{[0,1]} d\Omega_1\, \cE^{(9)}_{(0,0)}\,
 e^{-2i\pi k\Omega_1}\\
 &=&8\pi   \Omega_2^{\frac12}\,   \nu_1^{-\frac37}\,
 {\sigma_2(|k|)\over   |k|}\,    K_1(2\pi|k|\Omega_2)\,.
\end{eqnarray}
The limit $\Omega_2\to \infty$ in the Bessel function in the
second line gives the $D$-instanton contribution to the coefficient of the $\R^4$ interaction in the  type IIB  perturbative string theory limit, which has the form, after reinstating the power of $\ell_9$ in the effective action, \eqref{effacts},
\begin{equation}
 \frac{1}{\ell_9}\,\cF^{(9)}_{(0,0)}(k)= {r_B\over\ell_s^2}\, \sqrt{8\pi}\,
 \sigma_{-2}(|k|)\, { e^{-2\pi   |k|
  \Omega_2}\over (2\pi |k|\Omega_2)^{-\frac12}}\, (1+O(\Omega_2^{-1}))\,,
\end{equation}
where the factor of $r_B/\ell_{s}$ shows that this term survives the limit $r_B\to \infty$.
Here we have used the relations $\nu_1=(\ell_{10}/r_B)^2$, $\ell_9^7=\ell_{10}^8/r_B$, and $\ell_{10}=\ell_s\Omega_2^{-1/4}$.

On the other hand, taking the large radius $r_A/\ell_{10}\to\infty$
limit in the IIA case gives
\begin{equation}
\frac{1}{\ell_9}\, \cF^{(9)}_{(0,0)}(k)
=\frac{1}{\ r_A} \,\sqrt{8\pi}\,
 \sigma_{-2}(|k|)\,  {e^{-2\pi   |k|
   r_A m_{\frac 12}} \over (2\pi|k| r_A m_{\frac12})^{-\frac12}}\,
\left(1+O(\ell_{10}/r_A)\right)\,,
\label{tenawrap}
\end{equation}
 where  $m_{\frac12}=1/(\ell_s g_A)$.    Here we have used the relations $\Omega_2=\f{r_A}{\ell_s g_A}$, $\nu_1=g_A^{1/2}r_A^{3/2}\ell_s^{-3/2}$, and $\ell_9=g_A^{2/7}\ell_s^{8/7}r_A^{-1/7}$. This  expression  reproduces  the
 asymptotic behaviour for the $\smallf 12$-BPS contribution given
 in~\eqref{e:AsympBPS} with $D=9$,
$n_D=-8/7$ and $S_9(k)=2\pi|k| r_A m_{\frac12}$.
The exponent has the interpretation of the action of the euclidean world-line of a type IIA $D0$-brane of charge $p$ wrapped $q$ times around the circle of radius $r_1=r_A$, where $k=p\times q$ (and the sum over $q$ is in $ \sigma_{-2}(|k|)$).

A similar expansion of the two Eisenstein series in~\eqref{e:e9D2}
gives the mode expansion of the coefficient $\calE^{(9)}_{(1,0)}$ as
the sum of two terms.  The occurrence of both the $s=3/2$ and $s=5/2$
series demonstrates that the $\partial^4\, \R^4$ interaction contains
a piece that is $\smallf 14$-BPS as well as a piece that is $\smallf
12$-BPS.
Repeating the above analysis for the $\smallf 14$-BPS part of $\cE_{(1,0)}^{(9)}$ (the $E_{5/2}$ term in  \eqref{e:e9D2}) and  making use of  \eqref{asymptotic} with $s=5/2$ gives (after multiplying by $\ell^3_9$ to reproduce the $\partial^4\, \R^4$ interaction in \eqref{effacts})
\bea
\left. \ell_9^3\, \cF^{(9)}_{(1,0)}(k)\right |_{\quart-BPS}
&\sim &   \f{\sqrt{2\pi}}{3} \,  (\ell^A_{10})^3   \, g_A^{-\frac 12} \,
\left(\frac{\ell_{10}^A}{r_A} \right)^{3} \,  \sigma_{-4}(|k|)\,\,
{e^{-S_9(k)}\over (S_9(k))^{-\threeh}} \,.  \nn\\
\label{quarter9}
\eea
 As with the $D=10$ examples, the distinction between the
$s=3/2$ and $s=5/2$ Eisenstein series is not seen in the instanton orbits (both series
contain the same 1-dimensional orbit) but must be encoded in the
different measure factors, such as  the divisor sum, which takes
the form $\sigma_{-4}(|k|)$ when $s=5/2$.  In contrast to the $\smallf
12$-BPS case we have not derived \eqref{quarter9}, or the analogous
expressions for $D <9$ obtained below,  by explicitly evaluating the
$\smallf 14$-BPS instanton contributions.

\subsection{$D=8$:  $SL(3,\ZZ)\times SL(2,\ZZ)$}\hfill\break
\label{exampleeight}

The coefficient function $ \cE^{(8)}_{(0,0)}$ is  given in terms of Eisenstein series by \cite{Green:1997di,Kiritsis:1997em,Basu:2007ru,Green:2010wi}
\begin{equation}
  \label{e:eEsi}
  \cE^{(8)}_{(0,0)}:= \lim_{\epsilon\to 0}
  \,\left(2\zeta(3+2\epsilon)\,E^{SL(3)}_{\alpha_1;\threeh+\epsilon}+4\zeta(2-2\epsilon)
  \,E^{SL(2)}_{1-\epsilon}(\calU)\right).
  \end{equation}
It was shown  in~\cite{Green:2010wi} that the  poles in $\epsilon$ of the  individual series in
  parentheses cancel and the expression is analytic at $\epsilon=0$. The coefficient function $  \cE^{(8)}_{(1,0)}$ is given by
    \begin{equation}
\label{e:eEti}
  \cE^{(8)}_{(1,0)}=\zeta(5)\,E^{SL(3)}_{\alpha_1;\frac52}+{4\zeta(4)\over3}\,E^{SL(3)}_{\alpha_1;-\frac12}\,E^{SL(2)}_2(\calU)\,.
\end{equation}
We have suppressed the dependence of the $SL(3)$ series on the $5$ parameters of the $SL(3)/SO(3)$ coset, but have indicated that the $SL(2)$ series depends on $\calU$, the complex structure of the 2-torus, $\rT^2$ (see appendix~\ref{appendixcoeff8} for details).

\medskip
 (i)  {\bf The nonmaximal parabolic\footnote{ In this
      somewhat degenerate case, the decompactification limit is
      associated with a nonmaximal parabolic so that its Levi matches the $D=9$ duality group.} $P_{\alpha_3}=GL(1)\times
  SL(2)\times \IR^+\times U_{\alpha_3}$}

This is relevant for the decompactification limit $r_2/\ell_9\to
\infty$. The Fourier modes, which are  integrals with respect to the
$U_{\alpha_3}$ factor   \eqref{e:UnipotentN3E3},  get contributions
from the sum of the modes of the $SL(3)$ and $SL(2)$ Eisenstein
series.  The modes of  $\calE_{(0,0)}^{(8)}$    are defined by
\begin{equation}
  \cF^{(8)\alpha_3}_{(0,0)} (kp_1,kp_2,k') :=\int_{[0,1]^3}\!\!\!\!dC^{(2)}dB_{\rm NS}d\calU_1\, e^{-2i\pi k(p_1C^{(2)}+p_2B_{\rm
      NS})-2i\pi k' \calU_1}\, \cE_{(0,0)}^{(8)}\,,
 \label{fouriereight}
 \end{equation}
where $\gcd(p_1,p_2)=1$ and $C^{(2)}$, $B_{\rm NS}$ and $\calU_1$ are the components of
the unipotent radical in~(\ref{e:UnipotentN3E3}).
Using the definition in~(\ref{e:eEsi}) the Fourier modes of  $ \cE_{(0,0)}^{(8)}$
are given by the sum of the Fourier modes of the $SL(3)$ and $SL(2)$
series defined in~\eqref{e:F2def}
and~\eqref{e:F22def}:
\begin{equation}
  \cF^{(8)\alpha_3}_{(0,0)} (kp_1,kp_2,k') = 2\zeta(3)\,F^{SL(3)\, \beta_2}_{\beta_1;\threeh} (kp_1,kp_2) + 4\zeta(2) \,F^{SL(2)}_1 (k')\,.
 \label{fouriereight2}
 \end{equation}
We have used the notation $\b_1$ and $\b_2$ on the righthand side to indicate the nodes of the $SL(3)$ Dynkin diagram that correspond to $\a_1$ and $\a_3$  (see figure~\ref{fig:E3lab}).

\begin{figure}[h]
  \centering
  \includegraphics[]{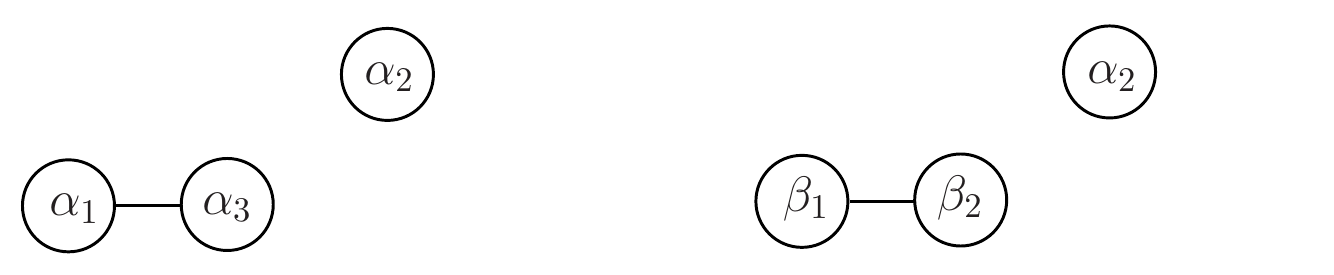}
  \caption{Correspondence between the labelling of the $SL(3)$ nodes
    in the $E_3$ Dynkin diagram according to figure~\ref{fig:dynkin} (in terms of
    $\alpha_1$ and $\alpha_3$) and the conventional labelling of the $SL(3)$
    Dynkin diagram (in terms of $\beta_1$ and $\beta_2$).}
  \label{fig:E3lab}
\end{figure}

Note that both contributions are nonsingular at $\e=0$ despite the simple poles in (\ref{e:eEsi}).  The reason that these Fourier coefficients do not have poles is that the residues of the series in (\ref{e:eEsi}) are constant.
Using the expression
  \eqref{e:F22}  for the $SL(2)$ Fourier modes and  setting  $\calU_2=r_2/r_1=r_2/r_B$ we obtain\footnote{Here, and in the following we will use the type IIB description, in which $r_1=r_B$.}
 \begin{equation}
  4\zeta(2)\,  F^{SL(2)}_1 (k')=4\pi\,
    \sigma_{-1}(|k'|)\,  e^{-2\pi \,  |k'|\,  r_2\times
      {1 \over r_1}}\, .
    \label{e:N2R4bis} \end{equation}
The exponent  can be identified with minus the action of the
  world-line of a $\smallf 12$-BPS charge $p$ KK state wrapped $q$ times
  around a circle of radius $r_2$, with  $p\times q = k'$. The divisor
  sum $\sigma_{-1}(|k'|)$ weights the different values of $p$ with a
  factor of $1/p$.   The expression \eqref{e:N2R4bis} agrees with the
  general asymptotic formula \eqref{e:AsympBPS},
 but it is notable that in this case there are no perturbative corrections.

The $SL(3)$ part is obtained from~\eqref{e:F2} with $s=3/2$,
 \begin{eqnarray}\label{e:431}
 2\zeta(3)\,  F^{ SL(3)\,\beta_2}_{\beta_1;\threeh} (kp_1,kp_2)
=4\pi
    \,\sigma_{-1}(|k|)\, e^{-2\pi |k|
  {|p_2+p_1\Omega|\over\sqrt\Omega_2}{1\over \sqrt\nu_2}} \,,
 \label{e:N3R41}
 \end{eqnarray}
where $\gcd (p_1,p_2) =1$.
This expression reproduces the asymptotic behaviour (which is again
exact) for the $\smallf 12$-BPS contribution given
  in~\eqref{e:AsympBPS} with $D=8$.
The exponent can be written as
\bea
  \label{e:DFbound}
- 2\pi |k|
  {|p_2+p_1\Omega|\over\sqrt\Omega_2}{1\over     \sqrt\nu_2} &=&  -2\pi |k| r_2 \, m_{p_1,p_2}\,,
  \eea
 where the $k=1$ contribution is  minus the action for the world-line of a state of mass
  \be
  m_{p_1,p_2} \, \ell_s =  |p_2+p_1\Omega|\,{r_1\over \ell_s}\,,
  \label{stringuni}
  \ee
  wound around the circle of radius $r_2$.  This is  the mass of a (non-threshold) bound state of $p_2$ fundamental strings and $p_1$ $D$-strings wound around the dimension of radius $r_1$.
 In the limit $r_2/\ell_9 \to \infty$ the Fourier coefficients with
 different $p_1$'s and $p_2$'s fill out an orbit under the action of
 the discrete subgroup $SL(2,\Z)$ of the Levi factor, which is the
 nine-dimensional duality group. This is
 made manifest by expressing $  m_{p_1,p_2}$  in nine-dimensional
 Planck units,
  \be
    m_{p_1,p_2} \, \ell_9 =  {|p_2+p_1\Omega|\over  \sqrt{\Omega_2}}\, \nu_1^{-3/7}\,,
 \ee
 where $SL(2,\Z)$ acts  with the usual linear fractional
 transformation on $\Omega$ and leaves $\nu_1$  invariant. When $k>1$ (\ref{e:DFbound}) is minus the world-line action of a threshold bound state
    of mass $p \times m_{p_1,p_2}$  wound $q$ times around the circle of radius $r_2$, where $k=p\times q$ and the divisor sum weights the contributions with a factor of $1/|q|$.

Thus,  in the  decompactification limit these instantons correspond to the expected contributions from the
point-like $\smallf 12$-BPS black hole  states in nine dimensions listed in appendix~\ref{sec:D9}.
The  Kaluza--Klein  $\smallf 12$-BPS   states  in \eqref{e:N2R4bis} are  in  the  singlet   $v$  and  the
$(p,q)$-string bound state in \eqref{e:N3R41} in the doublet $v_a$ of $SL(2)$.
 These contributions come from separate configurations
 ($v=0$, $v_a\neq0$)  and  ($v\neq0$,
$v_a=0$)    so that the condition $v v_a=0$ is satisfied.

The Fourier modes of the coefficient $\cE_{(1,0)}^{(8)}$ in the
$P_{\alpha_3} $ parabolic are defined as
\begin{equation}
 \label{e:N3D4R4}
\cF_{(1,0)}^{(8) \alpha_3}(kp_1,kp_2,k'):= \int_{[0,1]^3} \!\!\!\!\!dC^{(2)}
dB_{\rm NS}d\calU_1\, e^{-2i\pi \, k\, (p_1C^{(2)}+p_2B_{\rm
        NS})-2i\pi k'\,\calU_1}\, \cE^{(8)}_{(1,0)}\, ,
\end{equation}
where we have chosen to extract the greatest common divisor  $k$ of the coefficients of $C^{(2)}$ and $B_{\rm NS}$ so that $\gcd(p_1,p_2)=1$.
 Note that, unlike in the case of $\calE^{(8)}_{(0,0)}$, the integral does not split into the sum of two terms even though $U_{\alpha_3}$ is block diagonal since $\calE^{(8)}_{(1,0)}$ contains the product of two Eisenstein series.
Substituting the expression \eqref{e:eEti} for $\calE_{(1,0)}^{(8)}$
(which includes a term quadratic in Eisenstein series), it is
straightforward to perform the Fourier integration with the result
\begin{eqnarray}\label{e:d4R48a3}
 \cF^{(8)\alpha_3}_{(1,0)}(kp_1,kp_2,k')&=&
  \zeta(5)\,F^{SL(3)\beta_2}_{\beta_1;\frac52}(kp_1,kp_2)\\
\nn&+&
  {2\pi^4\over135}\, F^{SL(3)\beta_2}_{\beta_1;-\frac12}(kp_1,kp_2)\,F^{SL(2)}_{2}(k').
\end{eqnarray}
The $k=0$ or $k'=0$ terms are determined by $\smallf 12$-BPS instantons
arising from the  winding  of the nine-dimensional $\smallf 12$-BPS states,
listed in appendix~\ref{sec:D9}, around the  decompactifying circle.

The $\smallf 14$-BPS part is contained in the $k\ne 0$,  $k'\ne 0$
modes of the second contribution
  in~\eqref{e:d4R48a3}.  For the physical interpretation we
 extract the greatest common divisor $\ell=\gcd(k,k')$, and set
 $k = \ell q_1, k' = \ell q_2$ with $\gcd(q_1,q_2)=1$.  Applying   \eqref{e:F2} with $s=-1/2$ and \eqref{e:F22} with $s=2$,  it  can be written as
 \begin{eqnarray}
 &&{2\over
  \pi}\,{\Omega_2^{4\over3}\over T_2^{1\over3}}\,       \sigma_{-3}(|\ell  q_1 |) \,    \sigma_{-3}(|\ell q_2|)\, {1+2\pi
  |\ell q_1 | |p_2+p_1\Omega| T_2\over
  |p_2+p_1\Omega|^3}\, {1+2\pi
  |\ell  q_2 |  \calU_2\over
  \calU_2}\nn\\
&&\qquad\qquad \times\, \exp(-2\pi |\ell q_1 |
  |p_2+p_1\Omega|T_2-2\pi  |\ell  q_2|  \calU_2)\,.
  \label{d4r4d8}
  \end{eqnarray}
   Taking the limit $r_2/\ell_9\to\infty$ and recalling that  $T_2=
 \nu_1^{-{3\over7}}\,\Omega_2^{-\frac12}\,r_2/\ell_9$ and $\calU_2 =
 r_2/r_1= \nu_1^{{4\over7}}\,r_2/\ell_9$, the leading behaviour of this expression is
\begin{equation}\label{exponfac}
 8\pi\,{\ell_9^4\over\ell_8^4}\,\sigma_{3}(|\ell  q_1 |) \,    \sigma_{3}(|\ell q_2|)\,{\exp(-2\pi\,\ell\, r_2 m_{\quart})\over(|\ell q_1|\,
  {|p_2+p_1\Omega|\over\sqrt{\Omega_2}}\, \nu_1^{-{3\over7}})^2\times(|\ell q_2|\, \nu_1^{{4\over7}})^2}\,,
\end{equation}
where  $r_2/\ell_9^7=1/\ell_8^6$ and   the  instanton action is described by the world-lines of the constituents  (in this case bound states of $F$ and $D$ strings and the KK charge)  of
$\smallf 14$-BPS bound states wound $\ell$ times around the circle
$S^1$ of radius $r_2$. The $\smallf 14$-BPS mass is given by
\begin{equation}
\label{massquart} m_{\quart}\, \ell_9= |q_1|
  {|p_2+p_1\Omega|\over\sqrt{\Omega_2}}\, \nu_1^{-{3\over7}}+ |q_2|\,
  \nu_1^{{4\over7}}\,,
\end{equation}
or in string units
\begin{equation}
m_{\quart}\, \ell_s=   |q_1|    |p_2+p_1\Omega|{r_1\over\ell_s}+|q_2| {\ell_s\over r_1}\,.
\end{equation}
Much as before,  the divisor sums in~\eqref{exponfac} encode the combinations of winding numbers and charges carried by these world-lines (although the combinatorics are here more complicated than in the $\smallf 12$-BPS case and deserve further study).

\medskip
 (ii) {\bf The maximal parabolic\footnote{Note that $Spin(2,2)$ is isomorphic to $SL(2)\times SL(2)$.} $P_{\alpha_1}=GL(1)\times
  Spin(2,2)\times U_{\alpha_1}$.}

This is relevant to the string perturbation theory limit, in which the string coupling constant, $y_8$  gets small.  The  unipotent  factor $U_{\alpha_1}$ in  \eqref{e:UnipotentN1E3} is  parametrized  by
  $(C^{(2)},\Omega_1)$.  In  this case the non-zero Fourier modes of $\calE^{(8)}_{(0,0)}$ are obtained from \eqref{e:F1} with $s=3/2$,
\bea \label{e:R4Dstring}
\cF^{(8)\alpha_1}_{(0,0)}(kp_1,kp_2)&:=& \int_{[0,1]^2}d\Omega_1dC^{(2)} e^{-2i\pi k\,(p_1 C^{(2)}+p_2\Omega_1)} \cE^{(8)}_{(0,0)}\\
&=& {8\pi\over  \sqrt{y_8}}\,  {\sigma_2(|k|)\over     |k|}\,    {\sqrt{T_2}\over    |p_2+p_1T|}K_1\left(2\pi
    |k|{|p_2+p_1T|\over\sqrt{T_2 y_8}}\right),\nn
\eea
where again $\gcd(p_1,p_2)=1$.
 Note that the second term in \eqref{e:eEsi} does not contribute since it is constant in $(C^{(2)},\Omega_1)$.
Its asymptotic form for $y_8\to0$ is given  by
\begin{equation}
\lim_{y_8\to0}  \cF^{(8)\alpha_1}_{(0,0)}(kp_1,kp_2)  \sim  {4\pi\over y_8}\,
  \sigma_{2}(|k|)\,    \left(\sqrt{T_2 y_8}\over
   |k|\, |p_2+p_1T|\right)^{3\over2} e^{-2\pi |k|\frac{|p_2+p_1T |
    }{\sqrt{T_2y_8}}} \,,
\end{equation}
where $\gcd(p_1,p_2)=1$ and the asymptotic form of the Bessel function  \eqref{besseldef} has been used in the last line in order to extract the leading instanton contribution in the perturbative limit, $y_8\to0$
 with  $T_2$   fixed~\cite{Green:2010wi} (recall  $y_8=(\Omega_2^2T_2)^{-1}$ is the square of the string coupling).  In this limit  these  non-perturbative
 effects behave as $e^{-C/\sqrt {y_8}}$, as expected of $D$-brane instantons. The $p_1=0$ and $p_2\neq0$ terms
are $D$-instanton contributions and those with $p_1\neq0$ are the wrapped $D$-string
contributions of charge $(p_1,p_2)$ that are related by
 the $SL(2,\ZZ)$ action on the $T$ modulus, which is  part  of  the
 perturbative T-duality symmetry.

The Fourier modes of $\calE_{(1,0)}^{(8)}$ are given by
\begin{eqnarray}\label{e:D4R4Dstring}
     \cF_{(1,0)}^{(8) \alpha_1} &:=&\int_{[0,1]^2}d\Omega_1dC^{(2)}\,e^{-2i\pi    k(p_1 C^{(2)}+p_2\Omega_1)} \cE^{(8)}_{(1,0)}\\
\nn &=& \f{  16\zeta(2)}{y_8^{\frac23}}\,
  {\sigma_4(|k|)\over |k|^2}\, {T_2\over |p_2+p_1T|^2}\, K_2\left(2\pi |k|{|p_2+p_1T|\over \sqrt{T_2y_8}}\right)\\
\nn&+&{16\zeta(4)E_2^{SL(2)}(\calU)\over  \pi\, y_8^{\frac16}}\,
{\sigma_2(|k|)\over |k|}\,{{|p_2+p_1T|\over \sqrt{T_2}}}\, K_1\left(2\pi|k|
    {|p_2+p_1T|\over \sqrt{T_2y_8}}\right)\,,
\end{eqnarray}
with $\gcd(p_1,p_2)=1$.   In the limit of small string coupling,
$y_8\to 0$  and recalling that  $\ell_8=\ell_s\,y_8^{1/6}$,
 the first line on the right-hand side behaves as
\begin{equation}
{\ell_s^4\over\ell_8^4}\, {8\zeta(2)\over y_8}\,
 \sigma_{4}(|k|)\,\left( \sqrt{ y_8 T_2}\over |k|\, |p_2+p_1T|\right)^{{5\over2}}\, \exp\left(-2\pi |k|{|p_2+p_1T|\over \sqrt{T_2y_8}}\right),
\label{asyme8}
\end{equation}
which is characteristic of the $\smallf 12$-BPS configuration due to  a
euclidean world-sheet of a $(p_1,p_2)$ $D$-string wrapped $k$ times around $\rT^2$.

 The second line  behaves in the small string coupling limit
   $y_8\to0$ as
\begin{equation}
{\ell_s^4\over\ell_8^4}\, \f{8\zeta(4)}{\pi} y_8\,E_2^{SL(2)}(\calU)\,
 \sigma_{-2}(|k|)\,\left( \sqrt{ y_8 T_2}\over |k|\, |p_2+p_1T|\right)^{-{1\over2}}\, \exp\left(-2\pi |k|{|p_2+p_1T|\over \sqrt{T_2y_8}}\right),
\label{asyme82}
\end{equation}
which  is suppressed  relative to~\eqref{asyme8} by  $y_8^2$ (which is  itself four powers of the string coupling).
As in the $D=9$ and $D=10$ cases, the distinction between the $\smallf
12$-BPS and $\smallf 14$-BPS cases is not seen in the argument of the
Bessel function, which determines the exponential suppression at small
$y_8$.  In other words, there are no $\smallf 14$-BPS instantons so
the second term on the right-hand side  of \eqref{e:D4R4Dstring} has
the same exponential suppression in the $y_8\to 0$ limit as the first
term. The distinction between the $\smallf 12$- and $\smallf 14$-BPS contributions in \eqref{e:D4R4Dstring} again lies in the properties of the measure rather than in the spectrum of instantons.

\medskip
 (iii) {\bf The maximal parabolic $P_{\alpha_2}=GL(1)\times
  SL(3)\times U_{\alpha_2}$}

This corresponds to the limit in which the volume of the M-theory 3-torus, $\calV_3$, gets large. The unipotent  factor  $U_{\alpha_2}$ \eqref{e:UnipotentN2E3} depends only on $\calU_1$ and the Fourier modes in this case only involve the modes of the $SL(2,\Z)$ Eisenstein series,
  \begin{equation}
    \label{e:N2R4}
 \cF^{(8)\alpha_2}_{(0,0)} :=    \int_{[0,1]}   d\calU_1\,  e^{-2i\pi\,  k   \calU_1}\,  \cE^{(8)}_{(0,0)}=4\pi
    {\sigma_{-1}(|k|)}\, e^{-2\pi |k|\calU_2}\, .
  \end{equation}
Recalling~\cite{Green:2010wi} that  $\calU_2=\cV_3/\ell_P^3$  is  the
volume of the M-theory 3-torus, we see that
these coefficients  are exponentially suppressed in $\cV_3$,  and correspond to the expected contributions from  euclidean $M2$-branes wrapped $k$ times on the 3-torus.

Furthermore, the  divisor   sum  reproduces the  one derived  from a direct partition
function   calculation in~\cite{Sugino:2001iq}.
The form of this measure factor can also be seen from a simple duality
argument using the fact that the wrapped $M2$-brane instanton is
related to the Kaluza--Klein world-line instanton by the $SL(2,\ZZ)$
part of the duality group.  This duality interchanges $T$ and $\calU$
and, hence, the factor $\exp(-2\pi |k|/\sqrt{\Omega_2\nu_2})  =
\exp(-2\pi |k|\,T_2)$ in~\eqref{e:N3R41} for $p_1=0$ and $p_2=1$  is related to $\exp(-2\pi
|k|\,\calU_2)$ in \eqref{e:N2R4}.  This explains the fact that the
measure factor, $\sigma_{-1}(|k|)$,  is the same in both these
equations.

\subsection{$D=7$: $SL(5,\ZZ)$}\hfill\break
\label{exampleseven}

{\bf Convention on $SL(d)$ labelling}:
In the following we will consider the maximal parabolic series
  $E^{SL(d)}_{\beta_i;s}$ associated with a node $\beta_1,\ldots,\b_{d-1}$ of
  the $SL(d)$ Dynkin diagram using its usual labeling.  For example, in the particular case of $SL(5)$ this labeling is shown on the righthand side of  figure~\ref{fig:A4vsE4},
  whereas the previous labeling (coming from the $E_4$ labeling in figure~\ref{fig:dynkin})
  is shown on the lefthand side.
  The
  correspondence between the two labelings is given by $\beta_1=\alpha_1$,
  $\beta_4=\alpha_2$, $\beta_2=\alpha_3$, $\beta_3=\alpha_4$.
 \begin{figure}[ht]
 \centering\includegraphics[width=12cm]{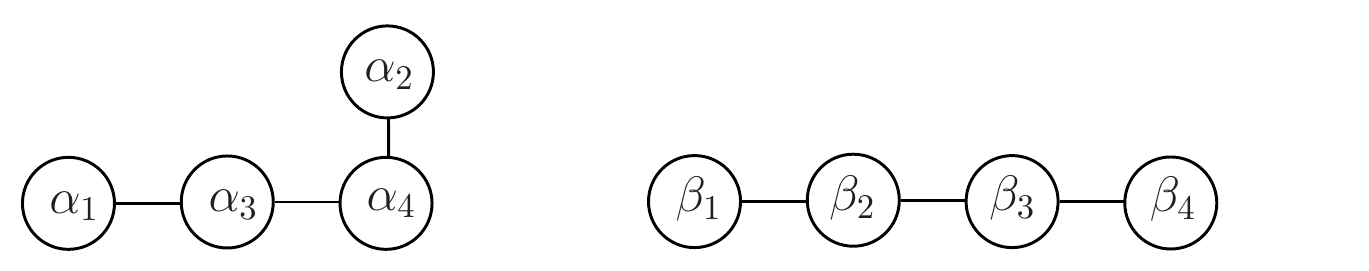}
 \caption{\label{fig:A4vsE4}  Different labelings of the $A_4$ Dynkin diagram.}
 \end{figure}

In the case  $D=7$ the coefficient functions are given in terms of Eisenstein series by\footnote{In this
  work this non-Epstein series is related to the one
  in~\cite{Green:2010wi} by
  $\bE^{SL(5)}_{\beta_3;s}=2\zeta(2s-1)\zeta(2s)\,
  E^{SL(5)}_{[0010];s}$. The $SL(d)$ nodes are labelled according the
  natural order as indicated in figure~\ref{fig:A4vsE4}.} \cite{Green:2010wi}
\begin{eqnarray}
  \cE^{(7)}_{(0,0)}=&2\zeta(3)\,E^{SL(5)}_{\beta_1;\frac32}\,,\label{esl510a}\\
\cE^{(7)}_{(1,0)} =&\!\!\lim_{\epsilon\to0}\left(\! \zeta(5+2\epsilon)E^{SL(5)}_{\beta_1;\frac52+\epsilon}+{6\zeta(4-2\epsilon)\zeta(5-2\epsilon)\over\pi^3}E^{SL(5)}_{\beta_3;\frac52-\epsilon}\right)\label{esl5501a}
\end{eqnarray}
It was shown in~\cite{Green:2010wi} that the poles of the
  individual series in the parenthesis cancel in the limit
  $\epsilon\to0$ and the resulting expression is analytic at $\epsilon=0$.
 The detailed analysis of properties of the Fourier modes of Epstein series $E_{\beta_1;s}^{SL(5)}$ in the three limits of
 interest is determined in appendix~\ref{sec:sl5modes}.  The modes of
 the non-Epstein series, $E^{SL(5)}_{\beta_{3};\frac52+\epsilon}$ in
 these three limits are obtained in appendix~\ref{nonepsteinmodes},
 making use of the representation of $E^{SL(d)}_{\beta_3;s}$ as a
 Mellin transform of the automorphic lift of a certain lattice sum (see proposition~\ref{prop:nonepsteinintegralrep} below).

\medskip
 (i) {\bf The maximal parabolic $P_{\alpha_4}=GL(1)\times
  SL(2)\times SL(3)\times U_{\alpha_4}$}

This is the decompactification limit in which $r_3/\ell_8= r^2 \to \infty$ (where $r$ is the $GL(1)$ parameter that
parameterises the approach to the cusp).  Recalling the relation  between the  volume of the 3-torus $\nu_3$  and the volume of
the 2-torus $\nu_2$~\cite{Green:2010wi},  the limit under consideration is one in which
$\nu_3 = \nu_2^{\frac{5}{6}}\, (r_3/\ell_8)^{-2} \to 0$.
The unipotent radical is abelian  and has the form
\begin{equation}
  U_{\alpha_4} \ \ = \ \ \left\{
  \begin{pmatrix}
    I_2&Q_4 \\
    0&I_3\\
  \end{pmatrix}\right\},
  \label{unialpha4}
  \end{equation}
  where $I_n$ is the rank $n$ identity matrix and $Q_4$ is the $2\times 3$ matrix defined in \eqref{Q4defa}. In the discussion of this limit in this subsection we will write the Levi component as
\begin{equation}\label{levialpha4}
    \ttwo{r^{6/5}e_2}{0}{0}{r^{-4/5}e_3}\,,
\end{equation}
where $e_2\in SL(2,\IR)$ and $e_3\in SL(3,\IR)$.

Specialising the Fourier modes of $E^{E_4}_{\alpha_1;s}=E^{SL(5)}_{\beta_1; s}$  that are given
in~\eqref{e:EpPalpha4FourierModes} to the case $s=3/2$ and using the relation  between the  $GL(1)$ parameter
and the radius of compactification,  $r^2= r_3/\ell_8$, gives the
Fourier modes of $\calE_{(0,0)}^{(7)}$ in~\eqref{esl510a}
\begin{eqnarray}\label{e:E4P4R4}
\cF^{(7)\alpha_4}_{(0,0)}(k,\tilde N_4) &:=& \int_{[0,1]^6} d^6Q_4 \,e^{-2i\pi k \tr(\tilde N_4\cdot Q_4)}\,
  \cE^{(7)}_{(0,0)}\cr
 & =& \left(r_3\over \ell_8\right)^{\frac65}\, 8\pi \,
  \sigma_0(|k|) \, K_0(2\pi|k|\, r_3\, m_{\frac12}),
\end{eqnarray}
where $\gcd (\tilde N_4)=1$ and the support of the non-vanishing Fourier coefficients is  equal to  the rank 1  integer-valued matrices  $k\tilde N_4$   in
$M_{3,2}(\ZZ)$; these have the form  $kn^tm$ with  $n=(n_i) \in
\ZZ^3$  and $m=(m_a)\in\ZZ^2$  row vectors satisfying $\gcd(n_1,n_2,n_3)=\gcd(m_1,m_2)=1$.  This factorization is unique up to signs of the three factors.     The matrix $\tilde N_4=  n^tm$  satisfies the relation
\begin{equation}
  \sum_{a,b=1}^2\,  \epsilon_{ab}
(\tilde N_4)_i{}^a (\tilde N_4)_j{}^b=0,\qquad \forall i,j=1,2,3\,,
\end{equation}
with $\epsilon_{12}=-\epsilon_{21}=1$ and $\epsilon_{11}=\epsilon_{22}=0$, which is precisely
$\smallf 12$-BPS condition discussed in appendix~\ref{sec:D8}.
The argument  of the Bessel  function in~\eqref{e:E4P4R4} is
proportional to the mass of $\smallf 12$-BPS states, where
\begin{equation}\label{e:mhalf8}
  m_{\frac12}\,\ell_8  \ \ :=  \ \  \|me_2\|\times\|n(e_3^t)^{-1}\|\,.
\end{equation}
This expression does not depend on the factorization $N_4=k\tilde N_4=kn^tm$,
and transforms  covariantly
  under the $SL(2)$ and $SL(3)$ factors of the Levi component. This
 is the mass of a  $\smallf 12$-BPS bound state of  fundamental
strings and $D$-strings with
Kaluza--Klein momentum.  This
expression is covariant under the action of the
symmetry group
$SL(2)\times SL(3)$ of the Levi factor.
In the limit $r_3/\ell_8\to\infty$ the expression for the
  Fourier modes $\cF^{(7)\alpha_4}_{(0,0)}$ takes the form
\begin{equation}\label{e:E4P4R4Asymp}
\cF^{(7)\alpha_4}_{(0,0)}(k,\tilde N_4)= \left(r_3\over \ell_8\right)^{6\over5}\, 4\pi \,
  \sigma_0(|k|) \,{e^{-2\pi |k|\, r_3\, m_{\frac12}}
   \over \sqrt{|k| r_3\, m_{\frac12}}}\, (1+O(\ell_8/r_3)) \, ,
\end{equation}
where $\ell_8/r_3$  is the inverse square of
  the $GL(1)$ parameter  (see (\ref{e:EpPalpha4FourierModes})).  The exponent is proportional to $r_3m_{1/2}$ with
  $r_3\to\infty$ and $m_{1/2}$ fixed,
which is in accord with the behaviour described in~\eqref{e:AsympBPS} with $D=7$.

The Fourier modes of $\calE_{(1,0)}^{(7)}$ in~\eqref{esl5501a} in this
parabolic subgroup are defined as
\begin{equation}
  \label{e:F7D4R4}
  \cF^{(7)\alpha_4}_{(1,0)}(k,\tilde N_4) \ \ := \ \ \int_{[0,1]^6} d^6Q_4 \,e^{-2i\pi k \tr(\tilde N_4\cdot Q_4)}\,\cE^{(7)}_{(1,0)}\,,
\end{equation}
with $\gcd(\tilde N_4)=1$.  An expression for these Fourier modes is obtained by adding
\eqref{e:EpPalpha4FourierModes} for the  Epstein series
$E_{\b_1;s}^{SL(5)}$ to the  modes of the non-Epstein  series $E_{\b_3;s}^{SL(5)}$  with the  correct proportionality constants
and setting $s=5/2$.  Since each has a constant residue at $s=5/2$ we can directly use the formulas for the  nonzero Fourier modes derived in appendix~\ref{modesdetails}.

 The Fourier modes of
   $E^{SL(5)}_{\beta_3;s}$  are computed via
   appendix~\ref{nonepsteinmodes}  using the following proposition,
   which represents this series as
   the  Mellin transform of the lattice sum
   \begin{equation}
    {\mathcal G}(\tau,X) \ \ := \ \ \sum_{ [\srel{m}{n}]\, \in \, {\mathcal M}_{2,d}^{(2)}(\Z)}
    e^{-\pi \tau_2^{-1} (m+n\tau)X(m+n\bar\tau)^t}\,.
\end{equation}
Here as  in the  usual physics notation $\tau=\tau_1+ i\tau_2\in\U$ and   $X=G+B$, with $G$  a positive definite symmetric $d\times d$ matrix and  $B$  an antisymmetric $d\times d$ matrix;
${\mathcal M}_{2,d}^{(i)}$ represents $2\times d$ matrices of rank
$i$.
This contribution is the rank 2 part of the lattice sum
  $\Gamma_{(d,d)}$ for even self-dual Lorentzian lattices.
 The properties of this sum are studied in appendix~\ref{sec:cG}, and the proof of the proposition given at the end of appendix~\ref{sec:E3}.
\begin{prop}\label{prop:nonepsteinintegralrep}
For $\Re{s}$ large (and consequently for all $s\in \C$ by meromorphic continuation)
\begin{equation}\label{I04}
\aligned
     \int_0^\infty I(0,uG) \, u^{2s-1}\,du \ \ & = \ \ \frac12\,\xi(2s)\xi(2s-1)
E^{SL(d)}_{\beta_2;s}(e)\\ &
= \ \ \frac12\,\xi(d-2s)\xi(d-2s-1) E^{SL(d)}_{\b_{d-2};\f{d}{2}-s}(e)  \, ,
\endaligned
\end{equation}
where the function $I(s,X)$ is defined as
\begin{equation}\label{Isdeff}
    I(s,X) \ \ := \ \ \int_{SL(2,\Z)\backslash \U}E^{SL(2)}_s(\tau)\,{\mathcal G}(\tau,X) \,\f{d^2\tau}{\tau_2^2}\,.
\end{equation}
\end{prop}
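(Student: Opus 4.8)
The strategy is a Rankin--Selberg--type unfolding followed by a Mellin transform, using the fact that $E^{SL(2)}_0\equiv 1$ (apparent from \eqref{e:ESl2fourierCoef1}, where the completed zeta factors have a pole at $s=0$ that kills all but the leading term, and the nonzero modes \eqref{e:ESl2fourierCoef} vanish because of the $1/\xi(2s)$), so that $I(0,uG)=\int_{SL(2,\Z)\backslash\U}\mathcal G(\tau,uG)\,\frac{d^2\tau}{\tau_2^2}$ is just the integral of the rank-$2$ lattice theta sum over the modular curve; the argument $uG$ being purely symmetric, the phases that would accompany a general $X=G+B$ in \eqref{Isdeff} are absent. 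For $\Re s$ large the Mellin integrand, the theta sum, and the hyperbolic measure are all positive, so Tonelli permits freely interchanging $\int_0^\infty\!du$, $\int_{SL(2,\Z)\backslash\U}$ and $\sum_{M}$; the identity \eqref{I04} obtained for $\Re s$ large then extends to all $s$ by meromorphic continuation.

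First I would unfold the fundamental domain. Each $M\in\mathcal{M}^{(2)}_{2,d}(\Z)$ has rank $2$, hence a left inverse over $\Q$, so $\gamma M=M$ forces $\gamma=I$ and the $SL(2,\Z)$-stabilizer of $M$ is trivial; combined with the invariance of $\mathcal G(\tau,X)$ under the simultaneous action $(\tau,M)\mapsto(\gamma\tau,\gamma M)$ this gives
\[
\int_{SL(2,\Z)\backslash\U}\mathcal G(\tau,uG)\,\frac{d^2\tau}{\tau_2^2} \ \ = \ \ \sum_{[M]}\ \int_{\U} e^{-\pi u\tau_2^{-1}(m+n\tau)G(m+n\bar\tau)^t}\,\frac{d^2\tau}{\tau_2^2}\,,
\]
the sum running over $SL(2,\Z)$-orbits of rank-$2$ matrices; since $SL(2,\Z)$ acts simply transitively on oriented $\Z$-bases of a rank-$2$ lattice, these orbits correspond to oriented rank-$2$ sublattices $\Lambda\subset\Z^d$. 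Setting $P=MGM^t$ (positive-definite $2\times2$, with $\sqrt{\det P}=\mathrm{covol}_G(\Lambda)$, the covolume of $\Lambda$ for the metric $G$), I would complete the square in $\tau_1$, perform the Gaussian $\tau_1$-integral, and recognize the resulting $\tau_2$-integral as a $K_{1/2}$-Bessel integral; using $K_{1/2}(z)=\sqrt{\pi/(2z)}\,e^{-z}$ this collapses to the elementary closed form
\[
\int_{\U} e^{-\pi u\tau_2^{-1}(m+n\tau)G(m+n\bar\tau)^t}\,\frac{d^2\tau}{\tau_2^2} \ \ = \ \ u^{-1}\,\mathrm{covol}_G(\Lambda)^{-1}\,e^{-2\pi u\,\mathrm{covol}_G(\Lambda)}\,,
\]
which depends on $M$ only through $\mathrm{covol}_G(\Lambda)$.

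Now I take the Mellin transform term by term: $\int_0^\infty u^{-1}\mathrm{covol}_G(\Lambda)^{-1}e^{-2\pi u\,\mathrm{covol}_G(\Lambda)}u^{2s-1}\,du=\Gamma(2s-1)(2\pi)^{1-2s}\mathrm{covol}_G(\Lambda)^{-2s}$. Summing over orientations, then writing each rank-$2$ sublattice $\Lambda$ as a finite-index sublattice of its saturation $\widehat\Lambda$ (a primitive rank-$2$ sublattice, with $\mathrm{covol}_G(\Lambda)=[\widehat\Lambda:\Lambda]\,\mathrm{covol}_G(\widehat\Lambda)$), factors the $\Lambda$-sum against $\sum_{n\ge1}\sigma_1(n)n^{-2s}=\zeta(2s)\zeta(2s-1)$, yielding
\[
\int_0^\infty I(0,uG)\,u^{2s-1}\,du \ \ = \ \ 2\,\Gamma(2s-1)(2\pi)^{1-2s}\,\zeta(2s)\zeta(2s-1)\sum_{\widehat\Lambda\ \mathrm{primitive}}\mathrm{covol}_G(\widehat\Lambda)^{-2s}\,.
\]
The Legendre duplication formula $\Gamma(2s-1)=2^{2s-2}\pi^{-1/2}\Gamma(s-\tfrac12)\Gamma(s)$ repackages the prefactor $2\,\Gamma(2s-1)(2\pi)^{1-2s}\zeta(2s)\zeta(2s-1)$ as exactly $\xi(2s)\xi(2s-1)$, while the sum over primitive rank-$2$ sublattices of $\Z^d$ is the standard lattice-sum form of the node-$\beta_2$ Epstein series (the parabolic $P_{\beta_2}$ of $SL(d)$ being the stabilizer of a $2$-plane): by ``classical rank theory'' this identifies $\sum_{\widehat\Lambda}\mathrm{covol}_G(\widehat\Lambda)^{-2s}$ with $E^{SL(d)}_{\beta_2;s}$ at the point of $SL(d)/SO(d)$ attached to $G$ (at $e$ when $G=I_d$), up to the normalization in \eqref{maxparabeisdef}; this is the first equality of \eqref{I04}. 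The second equality is the functional equation of the $SL(d)$ Eisenstein series interchanging $\beta_2$ and $\beta_{d-2}$ via $\widehat\Lambda\mapsto\widehat\Lambda^{\perp}$ (covolumes preserved when $\det G=1$), and alternatively drops out of Poisson summation on the rows of the matrices parametrizing $\mathcal G$, as developed in appendix~\ref{sec:cG}.

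\textbf{Main obstacle.} The analytic content --- trivial stabilizers, a Gaussian and a $K_{1/2}$ integral, a $\Gamma$-function Mellin integral, Tonelli --- is routine. The real work is the constant bookkeeping in the last step: matching the primitive-$2$-plane Dirichlet series to the precise normalization of $E^{SL(d)}_{\beta_2;s}$ in \eqref{maxparabeisdef}, in particular accounting for the factor $\tfrac12$ (which traces to how the center $\{\pm I\}$ enters the coset sum in that definition versus the ``all oriented sublattices'' count used here), and invoking the correct classical-rank-theory statement for the $2$-plane parabolic rather than settling for an unnamed proportionality constant.
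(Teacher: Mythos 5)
Your strategy is the same as the paper's (appendix~\ref{sec:latticesums}): use $E^{SL(2)}_0\equiv1$, unfold the modular integral of ${\mathcal G}$ via the orbit decomposition of rank-two matrices (proposition~\ref{lem:unfoldlatticesum}), evaluate the unfolded $\tau$-integral in closed form exactly as in (\ref{I01a})--(\ref{I01c}), Mellin-transform termwise to reach (\ref{I03}), and then identify the class sum with $\zeta(2s)\zeta(2s-1)$ times the node-$\beta_2$ series by a Hecke-type count -- which the paper carries out through the explicit Hermite/coset factorization quoted from Terras -- finishing with the duplication formula; the second equality is quoted as a functional equation. So the analytic skeleton matches the paper's proof step for step.

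The genuine gap is exactly the step you yourself call the main obstacle, and the resolution you sketch for it does not work. After the duplication formula your displayed answer is $\xi(2s)\xi(2s-1)\sum_{\hat\Lambda\ \mathrm{primitive}}\mathrm{covol}_G(\hat\Lambda)^{-2s}$, so to land on \eqref{I04} you would need the primitive-two-plane Dirichlet series to equal $\smallf12\,E^{SL(d)}_{\beta_2;s}(e)$, and you propose that the $\smallf12$ comes from how the center $\{\pm I\}$ enters the coset sum \eqref{maxparabeisdef}. But $-I_d$ already lies in $P_{\beta_2}(\Z)$, so $P_{\beta_2}(\Z)\backslash SL(d,\Z)$ is in bijection with \emph{unoriented} primitive two-planes, each counted once; in the same normalization that makes the $\beta_1$ series the sum over primitive vectors modulo $\pm1$ (cf.\ \eqref{e:R47d}), the primitive-plane sum is $E^{SL(d)}_{\beta_2;s}$ itself, with no extra $\smallf12$. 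Whatever constant is needed must instead be extracted from the $SL(2,\Z)$-versus-$PSL(2,\Z)$ and oriented-versus-unoriented bookkeeping at the two places where it enters: the unfolding of an $SL(2,\Z)$-orbit over the modular domain, and the count of classes lying over a fixed primitive plane (your orientation factor of $2$, i.e.\ Hermite representatives with both signs of determinant). Controlling precisely this is the content of the paper's factorization argument (the set ${\mathcal S}_+$ and the uniqueness claim around \eqref{terrasfactorization1}--\eqref{terrasfactorization2}), and it is the one step your proposal leaves unestablished -- indeed, as written your constants and the proposition's differ, and the mechanism you invoke cannot close that difference. A secondary slip: justifying the second equality of \eqref{I04} by $\hat\Lambda\mapsto\hat\Lambda^{\perp}$ proves the diagram-automorphism identity \eqref{diagramautfe} (same $s$, contragredient group element), not the $s\mapsto\f{d}{2}-s$ functional equation at fixed $e$ that \eqref{I04} uses; the latter is the Langlands functional equation the paper simply cites.
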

The equality of the two formulas on the righthand side of (\ref{I04}) represents a well-known functional equation of Eisenstein series.  There is an additional functional equation between these two Eisenstein series coming from the diagram automorphism:
\begin{equation}\label{diagramautfe}
    E_{\b_2;s}^{SL(d)}(e) \ \ = \ \   E_{\b_{d-2};s}^{SL(d)}(w_d(e^t)^{-1}w_d)\,,
\end{equation}
where $w_d$ is formed from the $d\times d$ identity matrix by reversing its columns.  Unlike the functional equation in (\ref{I04}), the functional equation (\ref{diagramautfe}) alters the group variable $e\in SL(d,\IR)$, and consequently relates Fourier coefficients of these series in different parabolics.

The formulas for Fourier coefficients of $E_{\b_2;s}^{SL(5)}$ in appendix~\ref{nonepsteinmodes} can be adapted to $E_{\b_3;s}^{SL(5)}$ using either functional equation, resulting in different (yet of course equivalent) formulas.
Using (\ref{diagramautfe}) and (\ref{levialpha4}) gives the identity
\begin{equation}\label{useofcontragredient2}
    F_{\b_3;s}^{SL(5)\,\b_3}(r^{6/5}e_2,r^{-4/5}e_3;N_4)    \ \ = \ \ F_{\b_2;s}^{SL(5)\,\b_2}(r^{4/5}\tilde{e}_3,r^{-6/5}\tilde{e}_2;-w_2N_4^tw_3)\, ,
\end{equation}
where $N_4\in M_{3,2}(\Z)$.
Here we have used   the ``contragredient'' notation $\widetilde{e}$ to represent $w_d(e^t)^{-1}w_d$ (see
 (\ref{Soddpf2})), and the relation
  \begin{equation}\label{contragredientofe2}
    \widetilde{e} \ \ = \ \ \ttwo{I_3}{-w_3Q^tw_2}{}{I_2}\ttwo{r^{4/5}\tilde{e}_3}{}{}{r^{-6/5}\tilde{e}_2}
\end{equation}
for   $e=\ttwo{I_2}{Q}{}{I_3}\ttwo{r^{6/5}e_2}{}{}{r^{-4/5}e_3}$.

 Applying    (\ref{gl5node2h22punchline})
we arrive at the formula
\begin{multline}\label{gl5node2h22punchlineconverted}
    F_{\b_3;s}^{SL(5)\,\b_3}(r^{6/5}e_2 ,r^{-4/5}e_3 ;N_4) \ \ = \\
    \f{8\,r^{4+4s/5}}{\xi(2s)\xi(2s-1)} \int_{\IR}
 \sum_{[\srel pq ] \, \in \, SL(2,\Z)\backslash {\mathcal M}_{2,3}^{(2)}(\Z)}
 \sum_{\srel{\hat{m},\hat{n}\,\in\,\Z^2}{\hat{m}p-\hat{n}q=-w_2N_4^tw_3}}
    \left(\|(p+q\tau_1) \tilde e_3\|\over \|\tilde e_2^{-1}\hat{m}\|  \right)^{1/2-s} \ \times \\
 \left( \|q \tilde e_3\|\over \|\tilde e_2^{-1}(\hat{n}+\hat{m}\tau_1)\|\right)^{3/2-s}
K_{s-1/2}(2\pi r^{2} \|(p+q\tau_1) \tilde e_3\|\|\tilde e_2^{-1}\hat{m}\|) \ \times \\ K_{s-3/2}(2\pi r^{2} \|q \tilde e_3\|\|\tilde e_2^{-1}(\hat{n}+\hat{m}\tau_1)\|) \, d\tau_1 \\
 + \ \
  \f{2\, \G\left(s-\frac12\right)}{\xi(2s)\xi(2s-1)} r^{1+14s/5} \sum_{\srel{p\,\neq\,0}{\srel{n\,\neq
        \,0}{\srel{\hat{m}\,\perp\,n}{\hat{m}p= -w_2N_4^tw_3}}}}
 \left(\| \tilde e_2^{-1}\hat{m}  \| \over
    \pi \|n\tilde e_2 \|^2 \|  p\tilde e_3 \|\right)^{s-1/2} \ \times \\
     K_{s-1/2}(2\pi r^{2}\|\tilde e_2^{-1}\hat{m} \| \| p\tilde e_3\| )
\end{multline}
(here $\hat{m}\in\Z^2$ is thought of as a column vector and $p\in\Z^3$ as a row vector).

Returning to (\ref{esl5501a}), we factor $N_4=k\tilde N_4$, where $k=\gcd(N_4)$,
and furthermore factor  $\tilde N_4^t$  as $\tilde N_4^t=\hat m' p'$, where $\gcd(\hat m')=\gcd(p')=1$.  This factorization is unique up to multiplication by $\pm 1$.    Fixing such a factorization, the solutions to the equation $\hat{m}p=-k\tilde N_4^t$ have the form $\hat{m}=\pm d \hat m'$ and $p=\mp \smallf kd p'$ for positive divisors $d$ of $k$.  We now group   the coefficient of the $\partial^4\R^4$ interaction as  the sum of  two  contributions
   \begin{equation}
     \label{eq:FFF}
       \cF^{(7)\alpha_4}_{(1,0)}(k,\tilde N_4) \ \ = \ \
       \cF^{(7)\alpha_4}_{(1,0)\,I}(k,\tilde N_4) \ + \   \cF^{(7)\alpha_4}_{(1,0)\,II}(k,\tilde N_4)\,,
   \end{equation}
where $ \cF^{(7)\alpha_4}_{(1,0)\,I}(k,\tilde N_4)$  comes from applying (\ref{e:EpPalpha4FourierModes}) to the first term in (\ref{esl5501a}),  and from the last line of~\eqref{gl5node2h22punchlineconverted}; it is supported on
rank one integer valued matrices $\tilde N_4$  (i.e., it contains  the $\smallf 12$-BPS configurations). The second contribution
   $ \cF^{(7)\alpha_4}_{(1,0)\,II}(k,\tilde N_4)$
   comes from the first term of  \eqref{gl5node2h22punchlineconverted} and  contains the $\smallf
   14$-BPS contributions.
    Using (\ref{e:mhalf8}) (with the current notation where $\tilde N_4^t=\hat m' p'$) explicit formulas for these are given as
\begin{multline}\label{new12bpsforsl5limiti}
 \cF^{(7)\alpha_4}_{(1,0)\,I}(k,\tilde N_4) \ \  = \ \ 8\pi^2\,  r_3 \,
  {\sigma_2(|k|)\over 3\,|k|}\,{m_{\frac12}\over  \|m'e_2\|^2}\,K_1(2\pi|k|\, r_3\,
  m_{\frac12}) \\
 +  \  \f{32}{\pi}\,\f{\sigma_4(|k|)}{k^2}\, \f{(r_3/\ell_8)^2\,(r_3 m_{\f12})^2}{ \|p'(e_3^t)^{-1}\|^4}K_2(2\pi|k|r_3m_{\frac12})\sum_{\srel{n\,\neq\,0}{n\tilde N_4^t=0}}\|n(e_2^t)^{-1}\|^{-4}
  \,.
\end{multline}

The remaining contribution to (\ref{eq:FFF}) is given by the formula
\begin{multline}\label{e:E4P4D4R4b}
\cF^{(7)\alpha_4}_{(1,0)\,II}(k,  \tilde N_4) \ \ =  \\   64\pi^4 r^6
   \int_{\IR}
 \sum_{\srel{[\srel pq ] \, \in \, SL(2,\Z)\backslash {\mathcal M}_{2,3}^{(2)}(\Z)}{\srel{\hat{m},\hat{n}\,\in\,\Z^2}{\hat{m}p-\hat{n}q=-kw_2\tilde N_4^tw_3}}} \left(\|\tilde e_2^{-1}\hat{m}\| \over \|(p+q\tau_1) \tilde e_3\|  \right)^{2} \ \times \\
{  \|\tilde e_2^{-1}(\hat{n}+\hat{m}\tau_1)\| \over  \|q \tilde e_3\| }
K_{2}(2\pi r^{2} \|(p+q\tau_1) \tilde e_3\|\|\tilde e_2^{-1}\hat{m}\|) \ \times \\ K_{1}(2\pi r^{2} \|q \tilde e_3\|\|\tilde e_2^{-1}(\hat{n}+\hat{m}\tau_1)\|) \, d\tau_1\,.
\end{multline}
 We have not succeeded in simplifying the $\tau_1$ integral in this expression and therefore the interpretation of the asymptotic behaviour  as $r_3/\ell_8\rightarrow \infty$   remains obscure.

\medskip
(ii) {\bf The maximal parabolic $P_{\alpha_1}=GL(1)\times
  SL(4)\times U_{\alpha_1}$}

The  instanton contributions to $\calE_{(0,0)}^{(7)}$  in  the  perturbative string limit associated with
$L_{\alpha_1}=GL(1)\times  SL(4)$ are  given  by~\eqref{e:EpPalplha1FourierModes} upon setting
$s=3/2$.  The  relation between the  $GL(1)$ parameter  and the  string coupling
constant   in  7  dimensions is  $r^{-2}=y_7^{\half}$  and   the  relation
between  the  7 dimension  Planck  length  and  the string  length is
$\ell_7=\ell_s \, y_7^{1/5}$~(cf.~(\ref{plancks})).
In this case the unipotent radical is abelian and  has the form
\begin{equation}
  U_{\alpha_1} \ \ = \ \
  \begin{pmatrix}
    I_4&Q_1 \\
    0&1\\
  \end{pmatrix},
  \label{Q1def}
  \end{equation}
  where $Q_1$ is a $SL(4)$ spinor  defined in \eqref{Q1defa}.

 This leads to the expression for the Fourier modes
\begin{eqnarray}
\nn \cF^{(7)\,\alpha_1}_{(0,0)}(k,\tilde N_1)&:=&  \int_{[0,1]^4} d^4Q_1   \,   e^{-2\pi i k\,  \tilde N_1    Q_1}
\, \cE^{(7)}_{(0,0)}\\
&=& {8\pi\over  y_7^{7\over10}}\, {\sigma_2(|k|)\over
    |k|}\,  {K_1\left(2\pi|k|\, \lVert \tilde N_1e_4\rVert\over  \sqrt{y_7}\right)\over \lVert \tilde N_1e_4\rVert}\,,
    \label{sevenzero}
\end{eqnarray}
where
$\tilde N_1
\neq 0 $ is a row vector in $\ZZ^4$ such that $\gcd(\tilde N_1)=1$.
In  the limit $y_7\to 0$ the right hand side of~\eqref{sevenzero} has the exponential
 suppression characteristic of an instanton contribution and contributes
  \begin{equation}
\ell_7\, \cF^{(7)\,\alpha_1}_{(0,0)}(k,\tilde N_1) \sim \ell_s\, {4\pi\over y_7} \, \sigma_2(|k|)\, \left(\sqrt{y_7}\over |k|\,\lVert \tilde N_1e_4\rVert
\right)^{{3\over2}}\, \exp\left(-2\pi|k|{\lVert \tilde N_1e_4\rVert\over \sqrt{y_7}}\right)\,
  \end{equation}
 to the effective $\R^4$ action with $D=7$ in ~\eqref{effacts}.

Terms with  $\tilde N_1=[1\,0\,0\,0]$ are $D$-instanton contributions.   Terms with  $\tilde N_1\neq
[1\,0\,0\,0]$ are $\smallf 12$-BPS contributions  due to wrapped Euclidean
bound states of fundamental and  $D$-strings.  The rank  4 integer vector
$k\tilde N_1$ is unrestricted, other than being nonzero.

The Fourier modes of $\cE^{(7)}_{(1,0)}$ can be computed in terms of the individual Eisenstein series it is expressed from in  (\ref{esl5501a}).  The modes of $E_{\beta_1;5/2}^{SL(5)}$  are given in~\eqref{e:EpPalplha1FourierModes}, while the modes of $E_{\beta_3;5/2}^{SL(5)}$ can be determined from those of $E_{\beta_2;5/2}^{SL(5)}$  in (\ref{nonepsinP4d}) using the contragredient mechanism described in (\ref{useofcontragredient2}-\ref{contragredientofe2}).
This results in the expression
\begin{eqnarray}
\nn \cF^{(7)\alpha_1}_{(1,0)}(k,\tilde N_1)&:=& \int_{[0,1]^4} d^4Q_1   \,   e^{-2\pi i k\,  \tilde N_1^t\cdot   Q_1}
\, \cE^{(7)}_{(1,0)}\\
&=& {8\pi^2\over 3\, y_7}\, {\sigma_4(|k|)\over
    |k|^2}\,  {1\over \lVert \tilde N_1e_4\rVert^2}\,K_2\left(2\pi|k|\,\lVert \tilde N_1e_4\rVert\over  \sqrt{ y_7}\right)
   \label{onezeroseven} \\
   &+& \f{16}{\pi\sqrt{y_7} }  \, \times \sum_{\srel{\srel{p\,>\,0}{n\,\neq\,0}}{\srel{p\hat{m}\,=\,-kw_4\tilde N_1^t }{n\,\perp\,w_4\tilde N_1^t}}}
  \f{  \|\tilde e_4^{-1}\hat{m}\| }{  p   \, \|n\tilde e_4\|^4} \,   K_1\left(2\pi|k|\,\lVert \tilde N_1e_4\rVert\over  \sqrt{ y_7}\right)\,, \nn
\end{eqnarray}
where again $\tilde N_1\in\ZZ^4\bsz$ such that $\gcd(\tilde N_1)=1$.
   Since all factorizations $p\hat m = -kw_4\tilde N_1^t$ with $p>0$ have the form  $\hat m = -\smallf kp w_4\tilde N_1^t$ for some divisor $p$ of $k$, the second term on the righthand side can be rewritten as
 \begin{equation}\label{onezerosevenpart2}
    \f{16}{\pi\sqrt{y_7} }  \,
   \| \tilde N_1 e_4  \|   \,  |k|\,\sigma_{-2}(|k|) \, K_1\left(2\pi|k|\,\lVert \tilde N_1e_4\rVert\over  \sqrt{ y_7}\right)\,\sum_{\srel{n\,\neq\,0}{n\,\perp\,w_4\tilde N_1^t}}  \|n\tilde e_4\|^{-4}\,.
 \end{equation}

The two contributions to the Fourier modes have the same
  support (i.e., in both cases the charges are labelled by the matrix $\tilde N_1$) because there  are no $\smallf 14$-BPS instantons in the expansion at node $\alpha_1$ (see section~\ref{spinorbit}). The different BPS nature of
  each contribution must be encoded in the  factor multiplying the Bessel functions.

  \medskip
 (iii) {\bf The maximal parabolic $P_{\alpha_2}=GL(1)\times
  SL(4)\times U_{\alpha_2}$}

  Although we do not work out the details here, explicit expressions for ${\mathcal F}_{(0,0)}^{(7)\,\a_2}$ and  ${\mathcal F}_{(1,0)}^{(7)\,\a_2}$ can be calculated using the expressions for the Fourier coefficients of $E_{\b_1;s}^{SL(5)}$ and $E_{\b_3;s}^{SL(5)}$ given in appendix~\ref{nonepsteinmodes}.

\subsection{$D=6$: $Spin(5,5, \ZZ)$}\hfill\break
\label{so55case}

The coefficient functions in this case are given by combinations of Eisenstein series \cite{Green:2010kv},
\be
\cE_{(0,0)}^{(6)}=2\zeta(3)\, E^{Spin(5,5)}_{\alpha_1;\frac32}\,,
\label{eisen006}
\ee
and
\be
 \cE_{(1,0)}^{(6)}=\lim_{\epsilon\to0}\,\left(\zeta(5+2\epsilon)\, E^{Spin(5,5)}_{\alpha_1;\frac52+\epsilon}+\frac{8\zeta(6 -2\e)}{45}\,E^{Spin(5,5)}_{\alpha_5;3-\epsilon}\right)\,.
\label{eisen016}
\ee
It was shown in~\cite{Green:2010kv} that the pole of the individual series in the parentheses cancel in the limit $\epsilon\to0$ and the resulting expression is analytic at $\epsilon=0$.
Whereas the previous cases involved $SL(n)$ Eisenstein series, which
could be expressed as lattice sums that were easy to manipulate, there
is much less understanding of the $Spin(5,5)$ series in terms of such
explicit lattice sums.  Various properties of  $E_{\alpha_1;s}^{Spin(5,5)}$  were considered in
\cite{Green:2010wi} (where the series was denoted $(2\zeta(2s))^{-1}\, \bE^{Spin(5,5)}_{[10000];s}$),
based on the integral representation contained in the following proposition.  We give a rigorous proof of it through proposition~\ref{prop:Ddintegralrepn} (from which it immediately follows via proposition~\ref{prop:nonepsteinintegralrep}).
\begin{prop}\label{prop:SldtoSOd}
For $\Re{s}$ large (and consequently for all $s\in \C$ by meromorphic continuation)
\begin{multline}\label{469}
\frac14 \xi(2s)\xi(2s-1) E^{SL(d)}_{\beta_2;s}(e)\\
= \int_0^\infty
\Big(   u^{-d/2}
  \xi(d-2)\,E^{Spin(d,d)}_{\a_1;d/2-1}\ttwo{u^{1/2}e}{}{}{u^{-1/2}\tilde{e}}
+u^{-1}\, \xi(d-2)\,E^{SL(d)}_{\beta_{d-1};d/2-1}(e) \\
 +   \xi(2)
\Big) u^{2s-1}\, du\,.
\end{multline}
\end{prop}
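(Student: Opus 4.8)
The plan is to Mellin‑invert Proposition~\ref{prop:nonepsteinintegralrep} against a pointwise evaluation of the Rankin--Selberg integral $I(0,\cdot)$. Since $E^{SL(2)}_0\equiv 1$, Proposition~\ref{prop:nonepsteinintegralrep} applied with $X=u\,ee^t$ says that $\tfrac12\,\xi(2s)\xi(2s-1)\,E^{SL(d)}_{\beta_2;s}(e)=\int_0^\infty I(0,u\,ee^t)\,u^{2s-1}\,du$ for $\Re s$ large. Hence it is enough to establish the pointwise identity that $I(0,u\,ee^t)$ equals twice the integrand on the right-hand side of~(\ref{469}) for every $u>0$ and every $e\in SL(d,\R)$; this pointwise identity is essentially the content of Proposition~\ref{prop:Ddintegralrepn}. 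Granting it, one substitutes and divides by $2$; the passage from $\Re s\gg 0$ to all $s\in\C$ then follows from the Langlands meromorphic continuations of the four Eisenstein series, once one notes that although $\int_0^\infty\xi(2)\,u^{2s-1}\,du$ and $\int_0^\infty u^{-1}\xi(d-2)E^{SL(d)}_{\beta_{d-1};d/2-1}(e)\,u^{2s-1}\,du$ diverge in isolation, the $u\to 0$ and $u\to\infty$ asymptotics of the $E^{Spin(d,d)}_{\a_1;d/2-1}$-term — governed by the constant term of that series along the one-parameter subgroup $u\mapsto\mathrm{diag}(u^{1/2}e,u^{-1/2}\tilde e)$ — cancel them, leaving a bracketed integrand of rapid decay at both ends. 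This is the same cancellation-of-poles mechanism already used in~(\ref{e:eEsi}), (\ref{esl5501a}) and (\ref{eisen016}).

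To prove the pointwise identity I would unfold $I(0,X)=\int_{SL(2,\Z)\backslash\U}\mathcal{G}(\tau,X)\,\tfrac{d^2\tau}{\tau_2^2}$. Recall $\mathcal{G}$ is the rank-$2$ part of the Siegel--Narain theta series $\Gamma_{(d,d)}(\tau,X)=\sum_{m,n\in\Z^d}e^{-\pi\tau_2^{-1}(m+n\tau)X(m+n\bar\tau)^t}$, so $\mathcal{G}=\Gamma_{(d,d)}-1-\Gamma^{[1]}_{(d,d)}$, where $\Gamma^{[1]}_{(d,d)}$ collects the summands with $[\srel mn]$ of rank exactly $1$. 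Carrying out the standard Rankin--Selberg unfolding of $\int_{\mathcal{F}}\Gamma_{(d,d)}(\tau,X)\,E^{SL(2)}_\sigma(\tau)\,\tfrac{d^2\tau}{\tau_2^2}$ — writing $E^{SL(2)}_\sigma(\tau)=\sum_{\gamma\in\Gamma_\infty\backslash SL(2,\Z)}(\Im\gamma\tau)^\sigma$ and collapsing the quotient to $\int_0^\infty\tau_2^{\sigma-2}\int_0^1\Gamma_{(d,d)}(\tau,X)\,d\tau_1\,d\tau_2$ — and passing to the Poisson-dual Hamiltonian form of $\Gamma_{(d,d)}$, which introduces the Gaussian-determinant factor $(\det X)^{-1/2}=u^{-d/2}$ when $X=u\,ee^t$: the $\tau_1$-integral over $[0,1]$ then forces the isotropy condition on the summation vectors, and the remaining $\tau_2$-integral (an elementary $\Gamma$-function integral) identifies the resulting sum over null vectors of the even self-dual $(d,d)$-lattice built from $X$, up to explicit $\Gamma$- and $\zeta$-factors and at $\sigma=0$, with the vector-node series $E^{Spin(d,d)}_{\a_1;d/2-1}$ evaluated at $\mathrm{diag}(u^{1/2}e,u^{-1/2}\tilde e)$ (the $w_d$-conjugations being the cosmetic ones of~(\ref{Soddpf2})). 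The rank-$\le 1$ remainder $\int_{\mathcal{F}}(1+\Gamma^{[1]}_{(d,d)})\,E^{SL(2)}_\sigma\,\tfrac{d^2\tau}{\tau_2^2}$, handled by Zagier's renormalised Rankin--Selberg integral, contributes $\vol(SL(2,\Z)\backslash\U)=\tfrac{\pi}{3}=2\xi(2)$ from the constant summand and, after a Poisson summation over $\Z^d$ and the analogous $\tau_2$-integral, the completed Epstein zeta of the lattice $\Z^d$ with Gram matrix $ee^t$ at the point $d/2-1$, which up to the normalisation in~(\ref{maxparabeisdef}) is $u^{-1}\xi(d-2)\,E^{SL(d)}_{\beta_{d-1};d/2-1}(e)$. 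Letting $\sigma\to0$ and using $E^{SL(2)}_0\equiv1$ then yields the pointwise identity; the functional equations around~(\ref{I04}) and the diagram-automorphism relation~(\ref{diagramautfe}) are used to pass between the $\beta_2/\beta_{d-2}$ and $\beta_1/\beta_{d-1}$ normalisations and to verify all numerical constants.

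The main obstacle is the regularisation bookkeeping in this unfolding: both the $\int_{\mathcal{F}}1\cdot\tfrac{d^2\tau}{\tau_2^2}$-type piece paired with $E^{SL(2)}_\sigma$ and the $\int_{\mathcal{F}}\Gamma^{[1]}_{(d,d)}\,\tfrac{d^2\tau}{\tau_2^2}$ piece are individually divergent — the latter because its degenerate summands tend to $1$ as $\tau_2\to\infty$, of which there are $\sim\tau_2^{d/2}$ — while only $I(0,X)=\int_{\mathcal{F}}\mathcal{G}\,\tfrac{d^2\tau}{\tau_2^2}$ converges, so one must carry the auxiliary weight $\sigma$ through the computation, extract the finite part at $\sigma=0$, and check that the $\sigma$-poles cancel in the combination. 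A second, more mechanical, difficulty is pinning down every constant and parameter — that the $Spin(d,d)$ series appears at exactly $s=d/2-1$, that the $SL(d)$ remainder sits at the last node $\beta_{d-1}$ rather than $\beta_1$, that the $\zeta$-factors are precisely $\xi(d-2)$ and the prefactor is $u^{-d/2}$ — which requires careful use of the $\Gamma$-function duplication formula together with the Eisenstein functional equations. Once Proposition~\ref{prop:Ddintegralrepn} is established in this way, the displayed formula follows from Proposition~\ref{prop:nonepsteinintegralrep} by the single Mellin transform described at the start.
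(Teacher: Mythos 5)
Your core argument is exactly the paper's proof: Proposition~\ref{prop:SldtoSOd} is obtained by feeding the $s=0$, $B=0$ specialization of Proposition~\ref{prop:Ddintegralrepn} (i.e.\ the relation (\ref{relationn})) into the Mellin transform of Proposition~\ref{prop:nonepsteinintegralrep}, and your first paragraph does precisely this. Where you diverge is in your optional re-derivation of the pointwise identity: you complete ${\mathcal G}$ to the full Narain sum $\G_{(d,d)}$, pair against $E^{SL(2)}_\sigma$, and invoke a Zagier-renormalised Rankin--Selberg analysis with $\sigma\to 0$, whereas appendix~\ref{sec:unfolding} never renormalises -- it keeps the genuine spectral parameter $s$, works with the absolutely convergent constrained sum ${\mathcal G}$, splits it according to whether $m$ is collinear with $n$, unfolds and Poisson-sums, and only then continues to $s=0$. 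Your route reproduces the physics-literature theta-lift picture but carries exactly the regularisation headaches you name (e.g.\ the renormalised integral of $E_\sigma$ against the constant summand vanishes for generic $\sigma$ and only produces $\operatorname{vol}=2\xi(2)$ after a non-uniform $\sigma\to0$ limit), all of which the paper's convergent-splitting argument avoids; since Proposition~\ref{prop:Ddintegralrepn} is available, this extra sketch is not needed for the statement at hand.

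Two points should be repaired. First, signs: combining the two cited propositions gives the bracketed integrand as $\tfrac12 I(0,u\,ee^t)$, i.e.\ the $Spin(d,d)$ term \emph{minus} $u^{-1}\xi(d-2)E^{SL(d)}_{\beta_{d-1};d/2-1}(e)$ \emph{minus} $\xi(2)$, exactly as in (\ref{relationn}); your decomposition ${\mathcal G}=\G_{(d,d)}-1-\G^{[1]}$ would also produce these subtractions, so the claim that $I(0,u\,ee^t)$ is twice the integrand as printed in (\ref{469}) needs to be reconciled with the signs there. Second, your justification of the analytic continuation is off: the bracketed integrand does \emph{not} decay rapidly at both ends -- by Proposition~\ref{lem:IsuSgrowth} it decays rapidly only as $u\to\infty$ (this is where the would-be divergences of the $\xi(2)$ and $u^{-1}$ pieces must cancel against the constant term of the $Spin(d,d)$ series), while as $u\to 0$ it grows polynomially, which is precisely why the identity is first asserted for $\Re s$ large and then extended to all $s$ by meromorphic continuation of both sides rather than by entireness of the integral.
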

The convergence of this integral is not {\it a priori} obvious and is explained in appendix~\ref{sec:unfolding} (cf.~its concluding remark).
 Proposition~\ref{prop:SldtoSOd} relates $E_{\b_2;s}^{SL(d)}(e)$ to a Mellin transform of
   $E_{\alpha_1;s}^{Spin(d,d)}$;  note that the last two terms in (\ref{469}) are not present in
  \cite{Green:2010wi,Obers:1999um,Angelantonj:2011br}.
This integral representation will be used  in  appendix~\ref{sec:dfive} to obtain the Fourier modes of
$E_{\alpha_1;s}^{Spin(d,d)}$.   This is
sufficient to discuss the Fourier modes of the coefficient $\calE_{(0,0)}^{(6)}$, but
$\calE_{(1,0)}^{(6)}$ also involves the series
$E_{\alpha_5;s}^{Spin(5,5)}$.  The evaluation of its Fourier modes
appears to be much more complicated and  will not be performed in this
paper.  However, we will be able to determine its orbit content as
will be discussed later.

(i) {\bf The maximal parabolic $P_{\alpha_5}=GL(1)\times
  SL(5)\times U_{\alpha_5}$ }

This parabolic subgroup has Levi factor $L_{\alpha_5} = GL(1)\times SL(5)$ (recalling from figure~\ref{fig:dynkin} that in our conventions  $\alpha_5$ is a
spinor node of $E_5= Spin(5,5)$).    Here we will   evaluate the Fourier modes using     methods similar to those  used in
  computing the  constant term of the
  series $E_{\alpha_1;s}^{Spin(d,d)}$  in~\cite[appendix~C]{Green:2010wi}.
The Fourier modes are defined as
 \begin{equation}
F^{Spin(5,5)\alpha_5}_{\alpha_1;s}(N_2) \ \ := \ \ \int_{[0,1]^{10}} dQ_2\, e^{-\pi i
  \tr(N_2  Q_2)}\, E^{Spin(5,5)}_{\alpha_1;s}
\end{equation}
where $Q_2$ is a $5\times 5$  antisymmetric matrix parameterizing
the  abelian unipotent radical $U_{\alpha_5}$, and $N_2$ is
an  antisymmetric $5\times 5$ matrix with integer entries.


We find that the Fourier modes of the series
$E^{Spin(5,5)}_{\alpha_1;s}$  are localized on the rank 1  contributions
where $N_2$ satisfies the constraints
\begin{equation}
 \sum_{i,j,k,l=1}^5 \epsilon^{ijklm} (N_2)_{ij}(N_2)_{kl}=0,\qquad \forall
 1\leq m\leq 5  \,,
\end{equation}
where $\epsilon^{ijklm}$ is the totally antisymmetric symbol with
$\epsilon^{12345}=1$.  This constraint is the $\smallf 12$-BPS condition
discussed in appendix~\ref{sec:D7}.
 This condition can be solved as
\begin{equation}\label{472}
  N_2=n^t m - m^t n; \qquad m, n\in \ZZ^5 -\{[0\,0\,0\,0\,0]\}\,.
\end{equation}
 In this case $e^{-i\pi\tr(N_2  Q_2)}=e^{-2\pi i mQ_2n^t}$.

The Fourier modes of $\cF_{\alpha_1;s}^{Spin(5,5)\,
  \alpha_5}$ are computed in (\ref{D5Epsinspinb}) using   the method of orbits for the $SL(2)$
action on $\tau$.
That formula simplifies for the special value of $s=3/2$ to
\begin{equation}\label{new473}
   \cF_{(0,0)}^{(6) \alpha_5}(N_2) \ \ = \ \ \f{1}{2\,\xi(3)}  \,\(\f{r_4}{\ell_7}\)^{5/2}\,  \sum_{\srel{ [\srel{m'}{n'}]\, \in \, GL(2,\Z)\backslash {\mathcal M}_{2,5}^{(2)}(\Z)'}{N_2\,=\,k((n')^t m'-(m')^t n')}}  \sigma_1(k)  \,\f{
   e^{-2\pi  \,k\, r_4\,m_{\frac12}}}{k\, r_4\,m_{\frac12}}\,,
\end{equation}
where
\begin{equation}\label{mhalfell7}
  m_{\frac12}^2\,\ell_7^2 \ \ := \ \ \det([\srel{m'}{n'}]G_5 [\srel{m'}{n'}]^t) \ \ = \ \ \|m'e_5\|^2\,\|n'e_5\|^2 \ - \ (m'e_5\cdot n'e_5)^2   \,,
\end{equation}
$k=\gcd(N_2)$, $G_5=e_5e_5^t$, and ${\mathcal M}_{2,5}^{(2)}(\Z)'$ represents all possible bottom two rows of matrices in $SL(5,\Z)$
(see (\ref{D5Epsinspine})).
The expression in~\eqref{new473} reproduces the
asymptotic (actually exact  in this case) behaviour for $\smallf 12$-BPS
contribution in~\eqref{e:AsympBPS} with $D=6$.

The Eisenstein series $E^{Spin(5,5)}_{\alpha_1;s}$
has a
single pole at $s=5/2$  with  residue proportional to  the $s=3/2$ series
$E^{Spin(5,5)}_{\alpha_1;3/2}$ discussed above. The finite part of the  $E^{Spin(5,5)}_{\alpha_1;s}$ series at $s=5/2$ only
receives $\smallf 12$-BPS contributions (see the comment following (\ref{D5Epsinspina})).
The complete coefficient $\cE^{(6)}_{(1,0)}$, defined
in~\eqref{eisen016}, also gets a $\smallf 14$-BPS contribution from
$E^{Spin(5,5)}_{\alpha_5;s}$, which has a pole  at $s=3$ such that the
resulting combination in~\eqref{eisen016} is analytic as shown in~\cite{Green:2010kv}.

\medskip
(ii) {\bf The maximal parabolic  $P_{\alpha_1}=GL(1)\times
Spin(4,4)\times U_{\alpha_1}$ }

In this parabolic subgroup the Levi factor is $L_{\alpha_1} =GL(1)\times Spin(4,4)$.
  The elements of the unipotent  radical are parametrized by the   $4\times 2$ matrix
  \begin{equation}
    \label{e:Q1D5}
    Q_1=
    \begin{pmatrix}
      Q_{1I}& Q^I_{2}
    \end{pmatrix},
\qquad \forall 1\leq I\leq 4\,,
  \end{equation}
  where    $Q_1=(u_1, u_2, u_3, u_4)$ and $Q_2=(u_8, u_7,
u_6, u_5)$  using the variables parameterizing the unipotent radical
in~\eqref{Ua1param} in appendix~\ref{sec:dfive}.
In the type~IIA string theory description this matrix is  parametrized by the four euclidean $D0$-branes wrapped on 1-cycles  and four euclidean $D2$-branes wrapped on 3-cycles of $\rT^4$.

The Fourier modes of  $E_{\alpha_1;s}^{Spin(5,5)}$ are defined as
\begin{equation}\label{e:477}
F^{Spin(5,5)\alpha_1}_{\alpha_1;s}(N_1):= \int_{[0,1]^8}\, d^8Q_1
e^{-2i\pi\tr(N_1 Q_1)}\, E_{\alpha_1;s}^{Spin(5,5)}\,.
\end{equation}
We will write the $2\times 4$ matrix  $N_1$ as
\begin{equation}\label{e:N11}
N_1 \ \ := \ \ [\srel MN]\,,
\end{equation}
where the row vectors have components $M=[m^1 \, m^2\, m^3 \, m^4]$ and $N=[n_1\, n_2\, n_3\, n_4]$.
The $ m^I$ ($I=1,2,3,4$) integers associated with the  windings of
the one-dimensional euclidean world-volume of a $D0$-brane on the four cycles of the 4-torus,
and $n_I$ are associated with the four distinct windings of the three dimensional euclidean
world-volume of a $D2$-brane on a  4-torus\footnote{As in the earlier cases each integer should be interpreted as a product of a $D$-particle charge and its  world volume winding number.}. This means, for example, that
on a square 4-torus with radii $R_I$ the action of a euclidean $D0$-brane
is $\sum_{I=1}^4  m^I R_I/(\ell_sg_s)$ while the action of a
$D2$-brane is $V_4\,\sum_{I=1}^4  n_I \ell_s/(R_I g_s)$, where
$V_4= R_1R_2R_3R_4/\ell_s^4$.  Because of space considerations we will omit the analysis of the case when $N=[0\,0\,0\,0]$, and instead indicate how the calculations can be performed in appendix~\ref{sec:dfive}.
More generally, the various configurations of $(D0,D2)$ states can be
classified by introducing the vector $(p_L,p_R)$ in the even self-dual
Lorentzian lattice $\Gamma_{(4,4)}$,
\begin{equation}\label{PLRdefgeneral}
    \aligned
 \sqrt2\, p_{L} \ \ & = \ \ (M+N(B - G_4))(e_4^t)^{-1}\\
\sqrt2\,   p_{R} \ \ & = \ \ (M+N(B + G_4))(e_4^t)^{-1}\,,
    \endaligned
\end{equation} where
$G_4=e_4e_4^t$ is the metric on the torus and $B$ is an antisymmetric $4\times 4$ matrix.
 Introducing $y_6=g_s^2/V_4$ the $GL(1)$ parameter is $r^2=y_6^{-
   \half}$ according to~\eqref{notation}. We remark that    the lattice is even because
$p_L^2-p_R^2=2 \sum_{I=1}^4 m_I n^I\in2\ZZ$. In terms  of the modes matrix $N_1$
in~\eqref{e:N11} this is expressed as $p_L^2-p_R^2=\tr(N_1 J N_1^t)$
where $J=
\begin{pmatrix}
  0&1\\1&0
\end{pmatrix}$.
By triality the $SO(4,4)$ vector
$(p_L,p_R)$ is equivalent to a $SO(4,4)$ chiral spinor used for the
orbit classification in section~\ref{spinorbit}.

The Fourier modes are derived in \eqref{D5epsineps6} using
  the $\theta$-lift representation of the $Spin(5,5)$ Eisenstein series,  yielding
\begin{eqnarray}
 F_{\alpha_1;s}^{Spin(5,5)\,
    \alpha_1}(N_1)&=&{1\over
   \xi(2s)y_6^{ 1/2}}
    \sum_{p\,|\,\gcd(N_1)}\!\!\int_0^\infty d\tau_2\, e^{-\pi
    {p^2\over\tau_2y_6}-\pi \frac{\tau_2}{p^2}\,(p_L^2+p_R^2)}\cr
&\times&\int_{-\frac12}^{\frac12}d\tau_1\,
  E^{SL(2)}_{s-\frac32}(\tau) \, e^{i\pi\tau_1 \, \frac{(p_L^2-p_R^2)}{p^2}}
\,,
  \label{so55modes}
\end{eqnarray}
 where we used that $\gcd(N_1)=\gcd(m^1,\ldots, m^4, n_1,\dots, n_4)$.
It is significant that setting $s=3/2$ and using $E^{SL(2)}_0(\tau)=1$,  the  integration over
$\tau_1$ projects onto the  condition $p_L^2-p_R^2=0$ which is the
pure spinor condition for $SO(4,4)$.  Using  the triality relation
between vector and spinor representation of $SO(4,4)$ this condition
is the  $\smallf 12$-BPS (pure spinor)
condition $S\cdot S=0$ discussed in section~\ref{spinorbit}.
It is then straightforward to compute the integrals in \eqref{so55modes} to evaluate the Fourier modes of the
coefficient function  $\cE_{(0,0)}^{(6)}$, giving
\begin{equation}\label{e:F00alpha16d}
\cF_{(0,0)}^{(6)\, \alpha_1}( N_1) \ \ = \ \  \f{4\sqrt{2}\,\pi\,\sigma_2(\gcd(N_1))}{y_6\sqrt{p_L^2}}\,K_1(2\pi y_6^{-1/2}\sqrt{2p_L^2})
 \, \delta_{p_L^2=p_R^2}\,,
\end{equation}
where the Kronecker $\d$-function localizes the contributions to $\smallf 12$-BPS pure
spinor  locus $p_L^2=p_R^2$ (specified by the condition
$\tr(N_1JN_1^t)=0$ on the mode matrix $N_1$).
As expected, the argument of the Bessel function is proportional to
$1/\sqrt{y_6}$, the inverse of the string coupling with
$D=6$, so its asymptotic expansion is that expected from the
contribution of $\smallf 12$-BPS states from wrapped
  D-brane on the 4-torus $\rT^4$.
The asymptotic form for $y_6\to\infty$ in the weak coupling
  regime is given by
\begin{equation}
\ell_6^2\,
\cF_{(0,0)}^{(6)\, \alpha_1}( N_1) \ \ \sim \ \
 {4\pi\ell_s^2\over y_6}\,
 \sigma_2(\gcd(N_1))\,{e^{-2\pi\,
  {\sqrt{2p_L^2}\over\sqrt{y_6}}}\over
   (\sqrt{2p_L^2}\, y_6^{-\frac12})^{3\over2}} \, \delta_{p_L^2=p_R^2}\,,
\end{equation}
where we made use of the relation between the Planck length in six
dimensions and the string scale $\ell_6=\ell_s\, y_6^{-\frac14}$.

When $s\neq 3/2$ the  $\tau_1$ integral in \eqref{so55modes} does not impose the restriction
$p_L^2-p_R^2=0$ and so the solution fills a generic $Spin(4,4)$ orbit
and  is $\smallf 14$-BPS.
Although the function $\cE_{(1,0)}^{(6)}$ in  \eqref{eisen016} is a linear combination of the vector Eisenstein series,  $E^{Spin(5,5)}_{\alpha_1;5/2}$, and the spinor series, $E^{Spin(5,5)}_{\alpha_5;3}$, at present we know little about the explicit structure of the latter, so we will only discuss the former here.    However, in this parabolic  the $\smallf 14$-BPS content of  $\cE_{(1,0)}^{(6)}$ is entirely contained in $E^{Spin(5,5)}_{\alpha_1;5/2}$.\footnote{The fact that the spinor series $E^{Spin(5,5)}_{\alpha_5;3}$ contains only the $\smallf 12$-BPS orbit follows from the theorem of Matumoto \cite{Matumoto} that will be used in the context of the higher-rank groups in  section~\ref{Matumoto}.}

 Therefore  we can obtain the complete $\smallf 14$-BPS content  of  (\ref{eisen016}) by analysing the Fourier modes of the Epstein series  $E^{Spin(5,5)}_{\alpha_1;5/2}$ when  $p_L$ and $p_R$ are assumed to satisfy  $\smallf 14$-BPS condition   $p_L^2-p_R^2\neq 0$.  We shall therefore assume that $p_L^2- p_R^2\neq 0$ for the rest of this section.  Hence
the $\smallf14$-BPS  Fourier modes of the first term  are obtained from the $s=5/2$ limit of (\ref{so55modes})\,,
\begin{eqnarray}
\label{485}
\cF_{(1,0)}^{(6)\, \alpha_1}( N_1)
&=&{\pi^{5/2}\over \G(\smallf 52)\,y_6^{1/2}}\,\sum_{p\,|\,\gcd(N_1)} \int_0^\infty d\tau_2\, e^{-\pi
    {p^2\over\tau_2y_6}-\pi {\tau_2\over p^2}\, (p_L^2+p_R^2)}\times\cr
&\times&\int_{-\frac12}^{\frac12}d\tau_1\,
  \hE_1^{SL(2)}(\tau) \, e^{i\pi\tau_1 \,{p_L^2-p_R^2\over p^2}}
\,,
\end{eqnarray}
where
\begin{equation}\label{E1hatSL2expansion}
    \hE_1^{SL(2)}(\tau) \ \ = \ \
    \tau_2-\smallf{3}{\pi}\log(\tau_2e^{-\hat{c}}) \ + \  \sum_{n\,\neq\,0} \f{\sigma_1(|n|)}{\xi(2)\,|n|}\,e^{-2\pi|n|\tau_2}\,e^{2\pi i n\tau_1}\,,
\end{equation}
where $\hat{c}=0.9080589548722\cdots$
(see (\ref{e:ESl2fourierCoef1}-\ref{e:ESl2fourierCoef})).  Note that since the residue of $E_s^{SL(2)}$ at $s=1$ is constant, the nonzero Fourier modes of $\hE_1^{SL(2)}$ are indeed the limits of the corresponding modes of $E_s^{SL(2)}$ as $s\rightarrow 1$; these are the only coefficients relevant to the $\tau_1$-integral in (\ref{485}) because of the assumption $p_L^2-p_R^2\neq 0$.
 Evaluation of (\ref{485}) gives the result
 \begin{eqnarray}
\nn   F_{\alpha_1;\frac52}^{Spin(5,5)\, \alpha_1}( N_1)&=&{16\pi
   \over y_6}\!
\sum_{p\,|\,\gcd(N_1)}  p^2\,
{\sigma_{-1}\left(|p_L^2-p_R^2|\over 2p^2\right) } \times \\
&\times&{ K_1(2\pi y_6^{-\frac12} \,\sqrt{p_L^2+p_R^2+|p_L^2-p_R^2|}) \over\sqrt{p_L^2+p_R^2+|p_L^2-p_R^2|}}
\,,
\end{eqnarray}
 where the lattice momenta are such that   $(p_L^2-p_R^2)/k^2\in
2\ZZ$. Using $SO(4,4)$ triality this corresponds to the full
spinor orbit $S$ characterizing the $\frac14$-BPS orbits as described in
section~\ref{spinorbit}.
In the weak coupling regime $y_6\to\infty$ these Fourier modes take
the form
 \begin{eqnarray}
\nn F_{\alpha_1;\frac52}^{Spin(5,5)\, \alpha_1}( N_1)&\sim&{8   \pi
   \over  y_6^{\frac34}}\!
  \,
\sum_{p\,|\,\gcd(N_1)}  p^2 \, {\sigma_{-1}\left(|p_L^2-p_R^2|\over 2p^2\right) } \times \\
&\times&{e^{-2\pi y_6^{-\frac12} \,\sqrt{p_L^2+p_R^2+|p_L^2-p_R^2|}}
\over(p_L^2+p_R^2+|p_L^2-p_R^2|)^{\frac34}}
\,.
\end{eqnarray}

In summary, the non-zero Fourier modes of $\calE_{(0,0)}^{(6)}$ have
support on the $\smallf 12$-BPS orbit in limits (i), (ii)  and (iii).
One of the contributions to $\calE_{(1,0)}^{(6)}$ is the regularised
series  $ E^{Spin(5,5)}_{\alpha_1;s}$ at $s=5/2$.  This has  non-zero
Fourier modes with support on the $\smallf 12$-BPS orbit in limits (i)
and (iii), but on both the $\smallf 12$-BPS and $\smallf 14$-BPS
orbits in limit (ii).  Although we have not computed the modes for the
other contribution to $\calE_{(1,0)}^{(6)}$ -- the  spinor series  -- we do know  its orbit content by use of
techniques similar to those in section~\ref{Matumoto}.  The result is
that the non-zero Fourier modes of this series have support on the
$\smallf 12$-BPS and $\smallf 14$-BPS orbits  in limits (i) and (iii),
but only on the $\smallf 12$-BPS  orbit in limit (ii).   In other
words the complete coefficient $\calE_{(1,0)}^{(6)}$ has the expected
content of both the  $\smallf 12$-BPS and $\smallf 14$-BPS in its
non-zero Fourier modes in all three limits.

\section{The next to minimal (NTM) representation}\label{sec:NTMdetails}

This section contains the proof of  theorem~\ref{mainthm}, drawing on
some results in representation theory that can be found in
appendix~\ref{sec:trapendix} by Ciubotaru and Trapa.  As we remarked just before its statement,  cases (i) and (ii) are by now well known, and so we restrict our attention to  case (iii):~the $s=5/2$ series.  To set some terminology, let $G=NAK$ be the Iwasawa decomposition of the split real Lie group $G$, $B$ the minimal parabolic subgroup of $G$ containing $NA$,  and  ${\mathfrak a}_\C = {\mathfrak a}\otimes_\IR \C$ be the complexification of the Lie algebra of $A$.  Without any loss of generality we may assume it is the complex span of the Chevalley basis vectors $H_\a$, where $\a$ ranges over the positive simple roots.  For any $\l\in {\mathfrak a}_\C^*$, the dual space of complex valued linear functionals on ${\mathfrak a}_\C$, define the vector space of functions on $G$
\begin{equation}\label{Vlambdadef}
    V_\l \ \ : = \ \ \left\{ \, f:G\rightarrow\C  \,|\,  f(nag) \,=\,e^{(\l+\rho)(H(a))}f(g), \,\forall \ n\in N, a\in A,g\in G \, \right\}.
\end{equation}
The transformation law and Iwasawa decomposition show that all functions in $V_\l$ are determined by their restriction to $K$.
Then $G$ acts on $V_\l$ by the right translation operator
\begin{equation}\label{pilambdadef}
    \(\pi_\l(h)f\)(g) \ \ : = \ \ f(gh)\,,
\end{equation}
making $(\pi_\l,V_\l)$ into a representation of $G$ commonly called a {\sl (nonunitary) principal series} representation.  It is irreducible for $\l$ in an open dense subset of ${\mathfrak a}_\C^*$, but reduces at special points with certain integrality properties -- such as the ones of interest to us. The representation $V_\l$ has a unique $K$-fixed vector up to scaling, namely any function whose restriction to $K$ is constant.  These are also known as the {\sl spherical} vectors of the representation, and any representation which contains them is also called ``spherical''.  When $V_\l$ is reducible, it clearly can have at most one irreducible spherical subrepresentation.

The minimal parabolic Eisenstein series is defined as
   \begin{equation}\label{minparabseries}
    E^G(\l,g) \ \ = \ \ \sum_{\g\in B(\Z)\backslash G(\Z)}e^{( \l+\rho)(H(\g g))}\, ,
 \end{equation}
 initially for $\l$ in Godement's range $\{\l | \langle \l,\a\rangle > 1$ for all $\a\in \Sigma\}$, and then by meromorphic continuation to  ${\mathfrak a}_\C^*$.   When $\l$ has the form $\l=2s\omega_\b-\rho$, it specializes to the maximal parabolic Eisenstein series (\ref{maxparabeisdef}).   For generic $\l$ in the range of convergence, the right translates of $E^G(\l,g)$ span a subspace of functions on $G(\Z)\backslash G(\IR)$ which furnish a representation of $G$ that is equivalent to  $V_\l$; the group action here is also given by the right translation operator (\ref{pilambdadef}).  The spherical vectors in this representation are the scalar multiplies of $E^G(\l,g)$, because the function $H(g)$ --  the logarithm of the Iwasawa $A$-component -- is necessarily right invariant under $K$.
  For general $\l$ at which $E^G(\l,g)$ is holomorphic, its right translates span a spherical subrepresentation of $V_\l$, again with the group action given by the right translation operator (\ref{pilambdadef}).

As mentioned above, the principal series $V_\l$ reduces for special values of $\l$.  This reducibility reflects special behavior of the Eisenstein series $E^G(\l,g)$.  This is most apparent at the point $\l=-\rho$, where the transformation law (\ref{Vlambdadef}) indicates that the constant functions on $K$ extend to constants on $G$, and hence that the
 trivial representation is a subrepresentation of $V_{-\rho}$.  Likewise,  the specialization of the minimal parabolic Eisenstein series at $\l=-\rho$  is the constant function identically equal to 1, a compatible fact.

   The proof of theorem~\ref{mainthm} rests upon special properties of the spherical constituent (i.e., Jordan-H\"older composition factor) of  $V_\l$ at the values of $\l$ relevant to the $s=5/2$ Epstein series. We recall that for this maximal parabolic Eisenstein series, $\l$ has the form
$\l=2s\omega_{\a_1}-\rho$; it is characterized by having
inner product $2s-1$ with $\a_1$, and inner product $-1$ with each
 $\a_j$, $j\ge 2$.  Write $\l_{\text{dom}}$ for  the  dominant weight in the Weyl orbit of $\l$, i.e., one whose inner product with all positive roots is nonnegative.  Table~\ref{tab:varioussandlambda} gives dominant weights for the groups  in Theorem~\ref{mainthm}  as well as its three values of $s\in \{0,3/2,5/2\}$, although of course only the last value is of immediate relevance in this section.

\begin{table}
\begin{center}\begin{tabular}{|c|c|c|c|}
\hline
  &  $G=E_6$ &  $G=E_7$ & $G=   E_8$ \\
\hline
 $s=0$ &   &   &  \\
\hline
 $\l_{\text{dom}}$ &  [1,1,1,1,1,1] &  [1,1,1,1,1,1,1] & [1,1,1,1,1,1,1,1] \\
 $s_{\text{GRS}}$ &   &   &  \\
 $z_{\text{KS}}$ &   &   &  \\
\hline
 $s=3/2$ &   &   &  \\
\hline
 $\l_{\text{dom}}$ &  [1,1,1,0,1,1] &  [1,1,1,0,1,1,1] & [1,1,1,0,1,1,1,1] \\
 $s_{\text{GRS}}$ &  1/4 &  5/18 & 19/58 \\
 $z_{\text{KS}}$ &  7/22 &  11/34 & 19/58 \\
\hline
 $s=5/2$ &   &   &  \\
\hline
 $\l_{\text{dom}}$ &  [0,1,1,0,1,1] &  [1,1,1,0,1,0,1] & [1,1,1,0,1,0,1,1] \\
 $s_{\text{GRS}}$ &  -1/2 &  1/18 & 11/58 \\
 $z_{\text{KS}}$ &  none &  33/34 & 11/58 \\
\hline
\end{tabular}\end{center}\caption{\label{tab:varioussandlambda} The values of $\l$ for the three values of $s$ and three groups in theorem~\ref{mainthm}.  Weights $\l\in {\mathfrak a}_\C^*$ are listed here in terms of their inner products with the positive simple roots as  $[\langle \l,\a_1 \rangle,\langle \l,\a_2 \rangle,\ldots]$.
For comparison with \cite{grs,KazhdanSavin}, we have listed the
 parameters $s_{\text{GRS}}$  (the quantity $s$  on  \cite[p.71]{grs}) and $z_{\text{KS}}$ (the quantify $z(G)$ from \cite[p.86]{grs}) for $s=3/2$, as well as  their corresponding generalizations for $s=5/2$.  These parameters coincide for the group $E_8$.  The parameter $z_{\text{KS}}$ is not defined in the $s=5/2$ case for $E_6$ because the relevant Weyl orbits do not intersect (cf.~\cite[Section 3.1]{Green:2010kv}).}
\end{table}

The case of $G=E_6$ is slightly easier than the others because of a
low-dimensional coincidence, which in fact is mostly independent of the actual
value of $s$ in that the same statement holds for generic $s$.
Namely, the representation $V_\l$ we consider is part of a family of
degenerate principal series representations, induced from the trivial representation
on the semisimple $Spin(5,5)$ factor of the Levi component $GL(1)\times
Spin(5,5)$ of the maximal parabolic subgroup $P_{\a_1}$.  These
representations are indexed by the one dimensional family
$\l=2s\omega_{\a_1}-\rho$, $s\in \C$, which is related to  the $GL(1)$
factor.  Though they may reduce at particular points, their
Gelfand-Kirillov dimension\footnote{The Gelfand-Kirillov dimension is
  a numerical index of how ``large'' a representation is; it is half
  the dimension of the associated coadjoint nilpotent orbit (i.e., the
  orbit whose closure is the wavefront set of the representation).
  For example, finite dimensional representations have
  Gelfand-Kirillov dimension equal to zero.} is equal to the dimension
of the unipotent radical of that parabolic, 16; likewise, any
subrepresentation of it cannot have larger dimension.  Since  the
dimension of the wavefront set of a representation is  twice the
Gelfand-Kirillov dimension, it is bounded by 32.  For $E_6$, the
orbits in figure~\ref{fig:smallorbit} have dimensions 0, 22, and 32;
all other orbits have larger Gelfand-Kirillov dimension.  Hence the
orbit attached to the $s=5/2$ Eisenstein series for $E_6$ is either
the trivial orbit, the minimal orbit, or the next-to-minimal orbit.
It cannot be the trivial orbit, because only the trivial
representation is attached to it.  Likewise,
Kazhdan-Savin~\cite{KazhdanSavin}  proved a uniqueness statement for the minimal orbit,
that (up to Weyl equivalence) only the $s=3/2$ series is related to
the minimal representation.  We thus conclude it is attached to the
next-to-minimal orbit.

To explain the $s=5/2$ cases for $E_7$ and $E_8$ we need to rely on
some recent results from representation theory, and some notions from
there concerning {\sl unipotent} and {\sl special unipotent}
representations (see appendix~\ref{sec:trapendix}).  A striking feature from table~\ref{tab:varioussandlambda} is that  $\langle
\l_{\text{dom}},\a_j \rangle$ has all 1's except for a single zero for
the $s=3/2$ case, and two zeroes for the $s=5/2$ case.  This
phenomenon, which came up here because of physical arguments, also
arose in work on special unipotent representations.
 These
$\l_{\text{dom}}$ each have the property that there exists an element $H$  of the Cartan subalgebra of
 $\mathfrak g$ such that $[H,X_\a]=\langle \l_{\text{dom}},\a_j\rangle X_{\a_j}$ for each positive simple root $\a_j$.
 Furthermore, there exists a  homomorphism
from ${\mathfrak{sl}}_2$ to $\mathfrak g$ carrying
$\ttwo 1{0}{0}{-1}$ to $H$, and  $\ttwo 0100$ to a nilpotent element $X$.  Thus a ``dual'' coadjoint
nilpotent orbit, namely
the one containing $X$, is associated to $\l_{\text{dom}}$.
 \begin{figure}[ht]
 \centering\includegraphics[width=12cm]{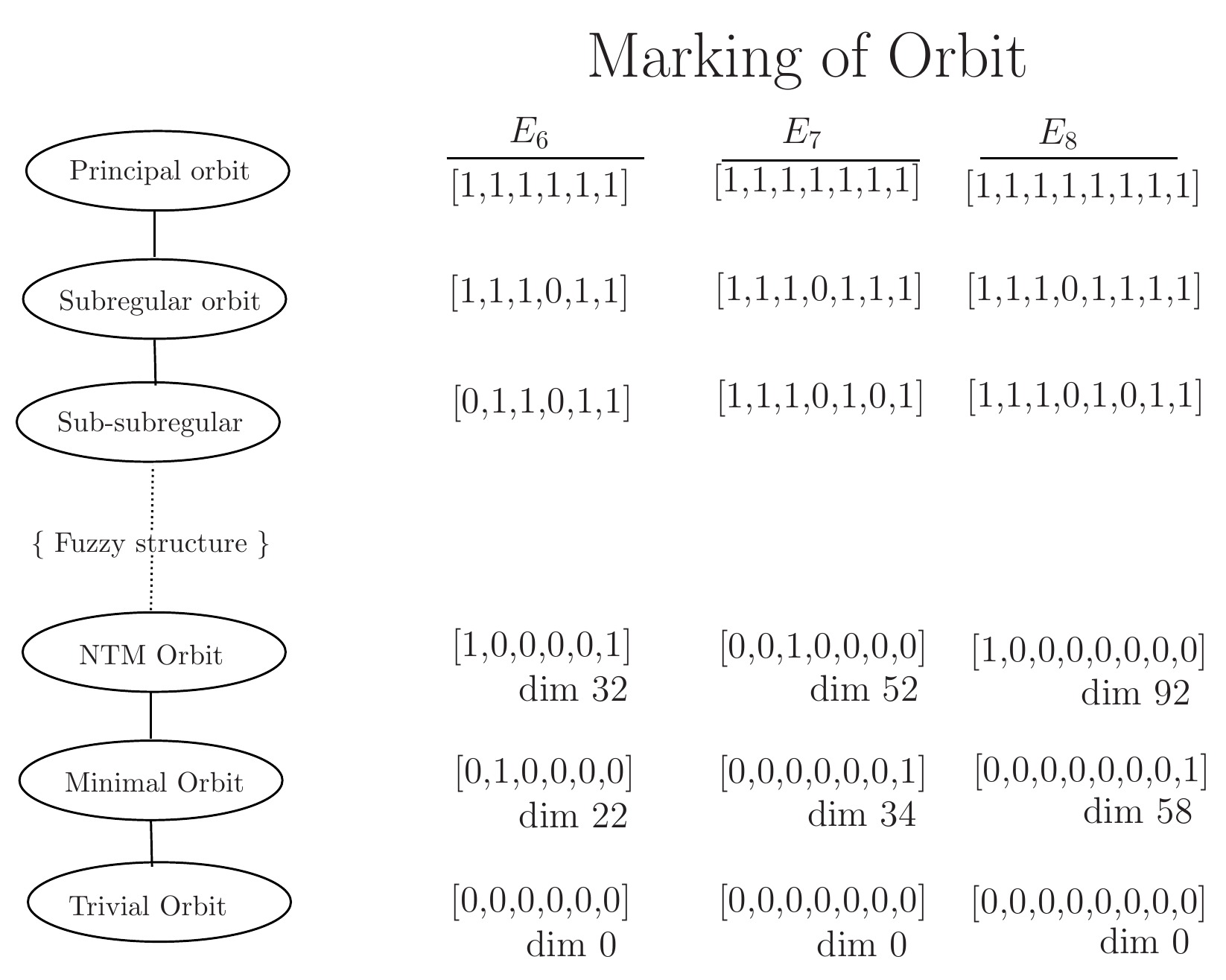}
 \caption{ The largest and smallest orbits, with markings (also known as ``weightings'') listed in terms of the inner products $\langle
\l_{\text{dom}},\a_j \rangle$  described in the text.\label{fig:topandbottomorbit} }
 \end{figure}
 In terms of
figure~\ref{fig:topandbottomorbit}, in our three examples these   related dual orbits are the top
three listed, though in the reverse order.    Appendix~\ref{sec:trapendix} describes a related construction for more general types of orbits beyond the ones considered in this paper.

As part of the more general result given in appendix~\ref{sec:trapendix}, corollary~\ref{c:su} then asserts that the spherical constituent of each of the three principal series
$V_{\l_{\text{dom}}}$  has wavefront set equal to     the closure of the dually related orbit listed in figure~\ref{fig:topandbottomorbit}.  This
proves theorem~\ref{mainthm} for $E_7$ and $E_8$.

\section{Fourier coefficients and their vanishing}\label{sec:shrunkFourierCoeff}

\subsection{Dimensions of orbits in the character variety}\label{sec:OrbitCharacter}

In  sections~\ref{sec:representation}-\ref{sec:BPSMorbits} we listed a number of explicit features of the orbits of instantons for the parabolic subgroups $P_{\a_1}$, $P_{\a_2}$, and $P_{\a_{d+1}}$ (in the numbering of figure~\ref{fig:dynkin}).  These are the character variety orbits discussed at the beginning of section~\ref{sec:nreFourierCoeff}.  In this section we give more details, in particular basepoints and dimensions for each of the finite number of orbits under the  complexification $L_\C$ of the Levi factor of the parabolic.  As shorthand, we will refer to these as the ``complex orbits of the Levi''.  We shall also use the notation $Y_{\a}$ to refer to the root vector $X_{-\a}$, in order to keep the listing of basepoints more readable.

  This information is quoted from the paper \cite{Miller-Sahi},
  which lists the corresponding information for any maximal parabolic subgroup of  any Chevalley group,
   whether classical or exceptional (see \cite[\S5]{Miller-Sahi} for more examples and details
    of how these are computed).
    We also describe the group action of the Levi in some of the  cases, the rest being described in \cite{Miller-Sahi}.
  Recall that the dimensions of the character varieties were given earlier in
  table~\reftab{tab:dimUnipotent}.  In the following subsections, we  expand upon this for the groups $E_5=Spin(5,5)$, $E_6$, $E_7$, and $E_8$.  For ease of reference, tables~\ref{tab:minimalnilpotent21}, \ref{tab:minimalnilpotent22}, and \ref{tab:minimalnilpotent2}  give  the orbit dimensions  for the parabolic subgroups  $P_{\a_1}$,  $P_{\a_2}$,  and $P_{\a_{d+1}}$ of each of these  groups, respectively.

\begin{table}[center]
  \centering
  \begin{tabular}[t]{||c|c|c|c|c|c|c|c|c|c|c||}
    \hline
Group & \multicolumn{10}{|c||}{dimensions}\\
\hline
$SL(2)$&0&1&-&-&- &-&-&-&-&-\\
$SL(3)\times SL(2)$&0& 2 &-&-&-&-&-&-&-&-\\
$SL(5)$ & 0 & 4 &-&-&-&-&-&-&-&-\\
$Spin(5,5)$ & 0 & 7 & 8&- &-&-&-&-&-&-\\
$E_6$ & 0 & 11 & 16&- &- &-&-&-&-&-\\
$E_7$ & 0  & 16 & 25 & 31 &   32&-&-&-&-&-       \\
$E_8$ & 0  &22& 35& 43& 44 & 50 &54 &59 & 63 & 64       \\
\hline
  \end{tabular}
  \caption{Dimensions of  character variety  orbits  for the Levi  component of
    the  parabolic formed by  deleting the  first node  of $E_4=SL(5)$,
    $E_5=Spin(5,5)$, $E_6$, $E_7$, and $E_8$. A dash, $-$,  signifies that
    there is no  orbit.
The character variety orbits in this parabolic subgroup are the $Spin(d,d)$ spinor orbits listed in section~\ref{spinorbit}.}
  \label{tab:minimalnilpotent21}
\end{table}

\begin{table}[center]
  \centering
  \begin{tabular}[t]{||c|c|c|c|c|c|c|c|c|c|c||}
    \hline
Group & \multicolumn{10}{|c||}{dimensions}\\
\hline
$SL(2)$&0&-&-&-&-&-&-&-&-&-\\
$SL(3)\times SL(2)$&0&1&-&-&-&-&-&-&-&-\\
$SL(5)$ & 0 & 4& -&-&-&-&-&-&-&-\\
$Spin(5,5)$ & 0 & 7 & 10&-&-&-&-&-&-&-\\
$E_6$ & 0 & 10 & 15&19&20&-&-&-&-&- \\
$E_7$ & 0  & 13 & 20 & 21 & 25& 26 & 28&31&34&35   \\
$E_8$ & 0  & 16& 25& 28&31& 32&35&38&40&$\cdots$\\
\hline
  \end{tabular}
  \caption{Dimensions of   character variety orbits  of the Levi  component for
    the  parabolic formed by  deleting the  second node  of $E_4=SL(5)$,
    $E_5=Spin(5,5)$, $E_6$, $E_7$, and  $E_8$.  A dash, $-$,  signifies that
    there is no orbit.  Not all $E_8$ orbits are listed
    (there are 23 in total).}
  \label{tab:minimalnilpotent22}
\end{table}

\begin{table}[center]
  \centering
  \begin{tabular}[t]{||c|c|c|c|c|c||}
    \hline
Group & \multicolumn{5}{|c||}{dimensions}\\
\hline
$SL(2)$&0&-&-&-&-\\
$SL(3)\times SL(2)$&0&1 &3&-&-\\
$SL(5)$ & 0 & 5 & 6&-&-\\
$Spin(5,5)$ & 0 & 7 & 10&-&-\\
$E_6$ & 0 & 11 & 16&-&- \\
$E_7$ & 0  & 17 & 26 & 27 &  -         \\
$E_8$ & 0  & 28 & 45 & 55 & 56       \\
\hline
  \end{tabular}
  \caption{Dimensions of   character variety orbits  of the Levi  component for
    the  parabolic formed by  deleting the  last node  of $E_4=SL(5)$,
    $E_5=Spin(5,5)$,  $E_6$, $E_7$,  and  $E_8$. A dash, $-$,  signifies that
    there is no orbit. The character variety
      orbits in this parabolic  subgroup were also listed in table~\ref{tab:bpsorbits1} based on enumeration of instanton orbits.}
  \label{tab:minimalnilpotent2}
\end{table}

\subsubsection{$Spin(5,5)$}

Recall that we label our $E_5=Spin(5,5)$ Dynkin diagram according to the numbering in
 figure~\ref{fig:dynkin}.  This does not match the customary numbering of the $Spin(5,5)$ Dynkin diagram,
  but has the advantage of allowing for a uniform discussion of all of our cases of interest.

Node 1 is the so-called ``vector'' node, because $P_{\a_1}$ has Levi
component isomorphic to $GL(1)\times Spin(4,4)$, which acts on the
$8$-dimensional, abelian unipotent radical by the  8-dimensional spin representation of $Spin(4,4)$.  This action breaks into 3 complex
orbits:~the trivial orbit; a 7-dimensional orbit with basepoint
$Y_{\a_1}$; and the open, dense 8-dimensional orbit with basepoint
$Y_{11110}+Y_{10111}$ (see table~\ref{tab:minimalnilpotent21}).

Nodes 2 and 5 are the ``spinor nodes'', and have identical orbit structure (up to relabeling the nodes).  Here the Levi component of
$P_{\a_2}$ or $P_{\a_5}$ is now isomorphic to $GL(1)\times SL(5)$, and
acts on the 10-dimensional abelian unipotent radical by the second
fundamental representation, also known as the exterior square
representation.  In other words, the action of the $SL(5)$ piece is
equivalent to that on antisymmetric 2-tensors $x\wedge y = - y \wedge
x$, where $x$ and $y$ are 5-dimensional vectors.  This action also
has 3 complex orbits (which can be seen as part of a general description for abelian
unipotent radicals of maximal parabolic subgroups given in
\cite{richardson-rohrle-steinberg}):~the trivial orbit; a
7-dimensional orbit with basepoint $Y_{\a_2}$ in the case of node 2,
and $Y_{\a_5}$ in the case of node 5; and the open, dense
10-dimensional orbit with basepoint $Y_{01121}+Y_{11111}$ (see
table~\ref{tab:minimalnilpotent22} or
table~\ref{tab:minimalnilpotent2}).  This last basepoint is in the
open dense orbit for either $P_{\a_2}$ or $P_{\a_5}$.

\subsubsection{$E_6$}

Nodes 1 and 6 are related by an automorphism of the Dynkin diagram, and
have identical orbit structure (up to relabeling the nodes).  Here the Levi component is isomorphic
to $GL(1)\times Spin(5,5)$, which acts on the 16-dimensional, abelian
unipotent radical by the spin representation of $Spin(5,5)$.  There are
three complex orbits:~the trivial orbit; an 11-dimensional orbit with
basepoint $Y_{\a_1}$ in the case of node 1, and $Y_{\a_6}$ in the case
of node 6; and the open, dense 16-dimension orbit with basepoint
$Y_{111221}+Y_{112211}$ for either nodes 1 or 6 (see
table~\ref{tab:minimalnilpotent21} or
table~\ref{tab:minimalnilpotent2}).

Node 2 is the first case we encounter with a non-abelian unipotent radical.  It is instead a 21-dimensional Heisenberg group, and its
character variety has 5 complex orbits (another general fact for
Heisenberg unipotent radicals of maximal parabolic subgroups
\cite{Rohrle}):~the trivial orbit; a 10-dimensional orbit with
basepoint $\a_2$; a 15-dimensional orbit with basepoint
$Y_{111221}+Y_{112211}$; a 19-dimensional orbit with basepoint
$Y_{011221}+Y_{111211}+Y_{112210}$; and the open, dense 20-dimensional
orbit with basepoint $Y_{010111}+Y_{112210}$
(see table~\ref{tab:minimalnilpotent22}).

\subsubsection{$E_7$}

This is the first group for which the three nodes have mathematically
different structures.  Node 1 has a 33-dimensional unipotent radical
which is a Heisenberg group, and Levi component isomorphic to
$GL(1)\times Spin(6,6)$.    The action on the  32-dimensional  character variety again has
5 complex orbits:~the trivial orbit; a 16-dimensional orbit with basepoint $Y_{\a_1}$; a 25-dimensional orbit with basepoint $Y_{1123321}+Y_{1223221}$; a 31-dimensional orbit with basepoint $Y_{1122221}+Y_{1123211}+Y_{1223210}$; and the open, dense 32-dimensional  orbit with basepoint $Y_{1011111}+Y_{1223210}$ (see Table~\ref{tab:minimalnilpotent21}).

Node 2 has a 42-dimensional unipotent radical, and a 35-dimensional
character variety.  The Levi component $GL(1)\times SL(7)$ acts with 10 complex orbits:~the trivial orbit; a 13-dimensional orbit with basepoint $Y_{\a_2}$; a 20-dimensional orbit with basepoint $Y_{1122221}+Y_{1123211}$; a 21-dimensional orbit with basepoint $Y_{0112221}+Y_{1112211}+Y_{1122111}$; a 25-dimensional orbit with basepoint $Y_{1112221}+Y_{1122211}+Y_{1123210}$; a 26-dimensional orbit with basepoint $Y_{1111111}+Y_{1123210}$; a 28-dimensional orbit with basepoint $Y_{0112221}+Y_{1112211}+Y_{1122111}+Y_{1123210}$; a 31-dimensional orbit with basepoint $Y_{0112221}+Y_{1111111}+Y_{1123210}$; a 34-dimensional orbit with basepoint $Y_{0112211}+Y_{1112111}+Y_{1112210}+Y_{1122110}$; and the open, dense 35-dimensional orbit with basepoint $Y_{0112111}+Y_{0112210}+Y_{1111111}+Y_{1112110}+Y_{1122100}$ (see table~\ref{tab:minimalnilpotent22}).

Node 7 has a 27-dimensional abelian unipotent radical, and Levi
component isomorphic to $GL(1)\times E_{6,6}$.  The latter acts with 4 complex orbits:~the trivial orbit, a 17-dimensional orbit with basepoint $Y_{\a_7}$, a 26-dimensional orbit with basepoint $Y_{1123321}+Y_{1223221}$, and the open, dense 27-dimensional orbit with basepoint $Y_{0112221}+Y_{1112211}+Y_{1122111}$ (see Table~\ref{tab:minimalnilpotent2}).

\subsubsection{$E_8$}

This is the largest of our groups, and the unipotent radicals of its maximal parabolics  are never abelian.

Node 1 has a 78-dimensional unipotent radical, and a 64-dimensional
character variety.  The Levi component is isomorphic to  $GL(1)\times Spin(7,7)$ and acts on the character variety according to the spin representation of $Spin(7,7)$, with 10 complex orbits:~the trivial orbit; a 22-dimensional orbit with basepoint $Y_{\a_1}$; a 35-dimensional orbit with basepoint $Y_{12244321}+Y_{12343321}$; a 43-dimensional orbit with basepoint $Y_{12233321}+Y_{12243221}+Y_{12343211}$; a 44-dimensional orbit with basepoint $Y_{11122221}+Y_{12343211}$; a 50-dimensional orbit with basepoint $Y_{11233321}+Y_{12233221}+Y_{12243211}+Y_{12343210}$; a 54-dimensional orbit with basepoint $Y_{11222221}+Y_{12243211}+Y_{12343210}$; a 59-dimensional orbit with basepoint $Y_{11122221}+Y_{11233211}+Y_{12232211}+Y_{12343210}$; a 63-dimensional orbit with basepoint $Y_{11222221}+Y_{11232211}+Y_{11233210}+Y_{12232111}+Y_{12232210}$; and the open, dense 64-dimensional orbit with basepoint $Y_{11122111}+Y_{11221111}+Y_{11233210}+Y_{12232210}$
(see table~\ref{tab:minimalnilpotent21}).

Node 2 has a 92-dimensional unipotent radical, and a 56-dimensional
character variety.  The Levi component is isomorphic to $GL(1)\times
SL(8)$ and acts according to the third fundamental representation of
$SL(8)$, also known as the exterior cube representation.  It acts with
23 complex orbits, the four smallest of which are:~the trivial orbit;
a 16-dimensional orbit with basepoint $Y_{\a_2}$; a 25-dimensional
orbit with basepoint $Y_{11232221}+Y_{11233211}$; and a 28-dimensional
orbit with basepoint $Y_{11122221}+Y_{11222211}+Y_{11232111}$ (see
table~\ref{tab:minimalnilpotent22}).

Node 8 has a 57 dimensional unipotent radical which is a Heisenberg
group.  The Levi factor is isomorphic to $GL(1)\times E_{7,7}$ and acts
with 5 complex orbits on the 56-dimensional character variety:~the
trivial orbit; a 28-dimensional orbit with basepoint $Y_{\a_8}$; a
45-dimensional orbit with basepoint $Y_{22454321}+Y_{23354321}$; a
55-dimensional orbit with basepoint
$Y_{12244321}+Y_{12343321}+Y_{22343221}$; and the open, dense
56-dimensional orbit with basepoint $Y_{01122221}+Y_{22343211}$ (see
table~\ref{tab:minimalnilpotent2}).

\subsection{Applications of Matumoto's theorem}
\label{Matumoto}
In Section~\ref{sec:autreps} we mentioned
 that representations of real groups have an
invariant attached to them, the wavefront set, that in a sense
measures how big the representation is.   Theorem~\ref{t:su} indeed computes this wavefront set in many cases, including ours.
There is  a theorem due
to Matumoto \cite{Matumoto}
 that asserts, in a precise sense, that automorphic forms in  small representations cannot have large Fourier coefficients.  Namely, he proves that if an element $Y\in {\mathfrak u}_{-1}$ associated to the character $\chi$ from (\ref{Ychar}) does not lie in the wavefront set, then the Fourier coefficient $\phi_\chi$ from (\ref{fourierexp1a}) must vanish identically.   We will use real group methods here in deference to the importance of the underlying symmetry groups $E_{d+1}(\IR)$, but it is notable that we could obtain the same results using $p$-adic methods via a vanishing result of M\oe glin-Waldspurger \cite{moewal}.  Related information is given at the end of appendix~\ref{sec:trapendix}.

For example, the trivial representation has wavefront set $\{0\}$, and likewise the constant function does not have any nontrivial Fourier coefficients.
In~\cite{Miller-Sahi} a detailed analysis is given of the different
character variety orbits for each maximal parabolic subgroup of an exceptional group, and which
coadjoint nilpotent orbits they are contained in. It is then a simple matter to apply Matumoto's theorem and determine a set of Fourier coefficients which automatically vanishes because their containing coadjoint nilpotent orbits lie outside the wavefront set.
In particular, it is shown in \cite{Miller-Sahi}  that the closure of the minimal coadjoint nilpotent orbit contains the two smallest character variety orbits in each of
the examples of $P_{\a_1}$, $P_{\a_2}$, and $P_{\a_{d+1}}$ for the groups $E_{d+1}$, $5\le d \le 7$, but no others
(this was known to experts, at least in special cases
-- see for example \cite{grs}).  Likewise, it is also verified
there that the closure of the  next-to-minimal coadjoint nilpotent orbit contains the three smallest
character variety orbits in each of these nine configurations of maximal parabolics and $E_{d+1}$ groups, but no others.

 Combining this with the
characterization in Theorem~\ref{mainthm} of the wavefront sets for
the Epstein series at $s=0$, $3/2$, and $5/2$, we get the following
statement about the vanishing of Fourier coefficients.
 This gives a rigorous proof of the vanishing statements on page~\pageref{stringmotivatedprediction}.

\begin{thm}\label{thm:vanishingofcoefficients}Let $5 \le d \le 7$ and
  $G=E_{d+1}$ as defined in table~\reftab{tab:Udual}.  Then:

\begin{enumerate}
  \item[(i)] All Fourier coefficients of the $s=0$ Epstein series vanish in any of the parabolics $P_{\a_1}$, $P_{\a_2}$, or $P_{\a_{d+1}}$,
      with the  exception of the constant terms (which were calculated in \cite{Green:2010kv}).
  \item[(ii)] All Fourier coefficients of the $s=3/2$ Epstein series $E^G_{\a_1;3/2}$ vanish in any of the parabolics $P_{\a_1}$, $P_{\a_2}$, or $P_{\a_{d+1}}$, with the  exceptions of the constant term and the smallest dimensional character variety orbit.  This  orbit has:~dimension 11 for $E_6$ and either $P_{\a_1}$ or $P_{\a_6}$, and  dimension 10 for  $P_{\a_2}$;  dimensions 16, 13, and 17 for $E_7$ and   $P_{\a_1}$,  $P_{\a_2}$, and  $P_{\a_7}$, respectively; and dimensions 22, 16, and 28 for $E_8$ and  $P_{\a_1}$,  $P_{\a_2}$, and  $P_{\a_8}$, respectively.
  \item[(iii)] All Fourier coefficients of the $s=5/2$ Epstein series $E^G_{\a_1;5/2}$  vanish in any of the parabolics $P_{\a_1}$, $P_{\a_2}$, or $P_{\a_{d+1}}$, with the  exceptions of the constant term and the next two smallest dimensional character variety orbits.  This additional character variety orbit is:~the 16, 15, and 16-dimensional orbit  for $E_6$ and  $P_{\a_1}$,  $P_{\a_2}$, and  $P_{\a_6}$, respectively; the 25, 20, and 26-dimensional orbit for $E_7$ and  $P_{\a_1}$,  $P_{\a_2}$, and  $P_{\a_7}$, respectively; and the 35, 25, and 45-dimensional orbit for $E_8$ and $P_{\a_1}$,  $P_{\a_2}$, and  $P_{\a_8}$, respectively.
\end{enumerate}
\end{thm}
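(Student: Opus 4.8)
The plan is to combine two inputs: Matumoto's vanishing theorem (\cite{Matumoto}), stated in section~\ref{Matumoto}, and the characterization of wavefront sets of the Epstein series at $s=0,3/2,5/2$ given in theorem~\ref{mainthm}. Recall that Matumoto's theorem says that if an automorphic form $\phi$ lies in an irreducible automorphic representation with wavefront set $\overline{\mathcal O}$, and if $Y\in{\mathfrak u}_{-1}$ is the element attached (via (\ref{Ychar})) to a character $\chi$ of the unipotent radical $U$ of a maximal parabolic $P=LU$, then $\phi_\chi\equiv 0$ whenever $Y$ does \emph{not} lie in $\overline{\mathcal O}$. The contrapositive is what we use: the only nonzero Fourier coefficients $\phi_\chi$ in the parabolic $P_{\a_j}$ are those for which the attached $Y$ lies in $\overline{\mathcal O}$. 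By the discussion at the beginning of section~\ref{sec:nreFourierCoeff}, each complex character variety orbit in ${\mathfrak u}_{-1}\otimes\C$ for the action of $L_\C$ is contained in a single complex coadjoint nilpotent orbit of $G$; hence the vanishing condition is determined orbit-by-orbit on the character variety. So the task reduces to:~(a) identifying the wavefront set of each Epstein series, and (b) determining exactly which character variety orbits of each of the three parabolics $P_{\a_1}$, $P_{\a_2}$, $P_{\a_{d+1}}$ are contained in the closure of that nilpotent orbit.

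For step (a) I would invoke theorem~\ref{mainthm}: the wavefront set is $\{0\}$ for $s=0$, the closure of the minimal orbit for $s=3/2$, and the closure of the NTM orbit for $s=5/2$. (One must note that the relevant automorphic representation is the spherical constituent spanned by the right translates of the Epstein series, which is the object to which theorem~\ref{mainthm} applies; this is exactly the representation whose Fourier coefficients we are computing, so Matumoto's theorem applies directly.) For step (b) I would quote the computation in \cite{Miller-Sahi}, referenced in section~\ref{Matumoto}, together with the explicit lists of character variety orbits and their basepoints compiled in section~\ref{sec:OrbitCharacter}: it is shown there that the closure of the minimal coadjoint nilpotent orbit contains precisely the two smallest character variety orbits (the trivial one and the smallest nontrivial one) in each of the nine configurations, and that the closure of the NTM orbit contains precisely the three smallest. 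Combining (a) and (b):~for $s=0$ only the trivial character variety orbit survives, i.e.\ only the constant term is nonzero; for $s=3/2$ only the trivial orbit and the smallest nontrivial orbit survive; for $s=5/2$ only those plus the next-to-smallest orbit survive. The explicit dimensions quoted in parts (ii) and (iii) of the theorem are then just read off from tables~\ref{tab:minimalnilpotent21}, \ref{tab:minimalnilpotent22}, and \ref{tab:minimalnilpotent2} (and the $E_6$ diagram automorphism identifying $P_{\a_1}$ with $P_{\a_6}$ is used to justify the statement about $P_{\a_6}$).

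The one subtlety I would be careful about — and what I expect to be the main obstacle — is part (i): showing that \emph{all} Fourier coefficients in \emph{all three} parabolics vanish except the constant terms, which might seem to require knowing that the $s=0$ series genuinely has wavefront set $\{0\}$ rather than something merely contained in a small orbit; but part (i) of theorem~\ref{mainthm} gives exactly this (the $s=0$ Epstein series is the constant function $1$, attached to the trivial representation), so this reduces to the trivial observation that a constant has no nonconstant Fourier modes. The genuine technical content sits entirely in step (b): one must be sure that no character variety orbit slips into the closure of the minimal or NTM orbit that is larger than the first (resp.\ first two) nontrivial ones — in particular that, e.g., the third orbit in $P_{\a_2}$ for $E_8$ (the 28-dimensional one) is \emph{not} contained in the closure of the NTM orbit. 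This containment/non-containment is precisely the bookkeeping carried out in \cite{Miller-Sahi}, and I would cite it rather than reprove it; the only thing to verify locally is that the orbit dimensions and labels in section~\ref{sec:OrbitCharacter} are consistent with that source. Finally I would remark that the same vanishing can be obtained $p$-adically via M\oe glin--Waldspurger \cite{moewal}, as already noted in section~\ref{Matumoto}, giving an independent check.
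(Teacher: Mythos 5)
Your proposal is correct and follows essentially the same route as the paper: Matumoto's vanishing theorem applied with the wavefront sets from theorem~\ref{mainthm}, combined with the containment data from \cite{Miller-Sahi} identifying exactly which character variety orbits lie in the closures of the minimal and next-to-minimal orbits, and reading the dimensions off tables~\ref{tab:minimalnilpotent21}--\ref{tab:minimalnilpotent2}. Your remarks on the triviality of part (i) and on the $p$-adic alternative via M\oe glin--Waldspurger match the paper's own discussion.
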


\section{Square integrability of special values of Eisenstein series}\label{sec:L2}

In this section we remark that some of the coefficient functions
$\mathcal{E}^{(D)}_{(0,0)}$ and  $\mathcal{E}^{(D)}_{(1,0)}$ from the
expansion (\ref{amp}) provide examples of  square-integrable
automorphic forms on higher rank groups.  In particular, we
  will prove this is the case for $\mathcal{E}^{(D)}_{(1,0)}$ on  $E_7$ and $E_8$.  In light of (\ref{dfourrfourcoeff}), this proves the associated automorphic representation is unitary, since it can be realized in the Hilbert space $L^2(E_{d+1}(\Z)\backslash E_{d+1}(\IR))$.  This unitary can also be demonstrated by purely representation theoretic methods.   It is an instance of broader conjectures of James Arthur on unitary automorphic representations, which are studied in more detail in \cite{Mill:2012}.   This fact about the residual $L^2$ spectrum is
at present   more of a curiosity as far as our investigations here are concerned, since we are not aware of any particular importance for our applications.    The analysis in the proof also determines the exact asymptotics of these coefficients in various limits, generalizing those studied in \cite{Green:2010kv}.

\begin{thm}\label{thm:L2}
Let $G$ denote the group $E_{d+1}$ defined in table~\reftab{tab:Udual}.
\begin{enumerate}
  \item[(i)] The Epstein series $E^{G}_{\a_1;0}$ is constant, and hence always square-integrable.
  \item[(ii)] The Epstein series  $E^{G}_{\a_1;3/2}$ and hence $\mathcal{E}^{(10-d)}_{(0,0)}$ is  square-integrable  if  $4\le d \le 7$.
  \item[(iii)] The Epstein series $E^{G}_{\a_1;5/2}$ and hence  $\mathcal{E}^{(10-d)}_{(1,0)}$ is square-integrable  if   $6\le d \le 7$.
\end{enumerate}
\end{thm}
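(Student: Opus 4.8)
The plan is to establish square-integrability by controlling the constant terms of these Eisenstein series along all maximal parabolic subgroups, and then invoking Langlands' square-integrability criterion (as refined by Mœglin--Waldspurger): a Langlands Eisenstein series is square-integrable if and only if every one of its constant terms along maximal parabolics decays strictly faster than the corresponding ``half-sum of roots'' $\rho_P$ as one goes to the cusp; equivalently, writing each constant term as a sum of exponentials $e^{\mu(H)}$ in the Levi variable, every exponent $\mu$ must lie strictly inside the convex hull of the Weyl translates of $-\rho$ projected to the relevant wall (the standard ``$L^2$ test'' in terms of the exponents appearing in the constant term). Part (i) is immediate since $E^G_{\a_1;0}$ is the constant function~$1$, which is in $L^2$ because $E_{d+1}(\Z)\backslash E_{d+1}(\IR)$ has finite volume. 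For parts (ii) and (iii) the key input is the explicit determination of the constant terms of $E^G_{\a_1;s}$ at $s=3/2$ and $s=5/2$ that was carried out in \cite{Green:2010wi,Green:2010kv} (and revisited via the integral representations in Propositions~\ref{prop:Ddintegralrepn}, \ref{prop:nonepsteinintegralrep}, \ref{prop:SldtoSOd}): these constant terms have very few terms, with exponents that are explicit rational multiples of fundamental weights.

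The first step is to normalize: for a maximal parabolic $P_{\a_j}=L_{\a_j}U_{\a_j}$, the constant term $\int_{(\G\cap U_{\a_j})\backslash U_{\a_j}}E^G_{\a_1;s}(ug)\,du$ is, for the spherical Eisenstein series, a sum of Eisenstein series on the Levi times powers $r^{c}$ of the $GL(1)$-parameter $r$ of $P_{\a_j}$. By the Gindikin--Karpelevich / Langlands constant-term formula, the exponents $c$ that occur are of the form $\langle w(2s\omega_{\a_1}) - \rho + \rho, \varpi_j^\vee\rangle$-type quantities as $w$ ranges over minimal coset representatives; in the cases at hand only a handful survive because of the cancellations recorded in \cite{Green:2010kv}. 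The second step is to run the $L^2$ inequality: square-integrability along $P_{\a_j}$ amounts to requiring each such exponent to be \emph{strictly negative} when measured against the direction pointing into the cusp (after subtracting the contribution of $\rho_{P_{\a_j}}$), \emph{and} requiring the Levi-Eisenstein-series factors themselves to be square-integrable on the smaller group — which is handled inductively, the base cases being the classical fact that $E^{SL(2)}_s$ is square-integrable only for the residual value $s=1$ (never for $s=3/2$ or $5/2$ directly, but the \emph{product} with the decaying power of $r$ rescues it) together with the analogous statements for $SL(n)$ and $Spin(d,d)$ Epstein series. Carrying this out group-by-group for $E_6,E_7,E_8$ and parabolic-by-parabolic, and comparing the finite list of exponents against $\rho_{P_{\a_j}}$, yields exactly the ranges stated: for $s=3/2$ the inequality holds at every maximal parabolic once $d\ge 4$ but fails (a borderline or non-decaying term appears) for $d\le 3$; for $s=5/2$ the inequality holds only for $d=6,7$ and fails for $d\le 5$.

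The main obstacle will be the bookkeeping of constant terms for the exceptional groups, specifically verifying that \emph{no} constant term along \emph{any} maximal parabolic — not merely $P_{\a_1}$, $P_{\a_2}$, $P_{\a_{d+1}}$ — contains a non-decaying or borderline exponent. The three ``physical'' parabolics were analyzed in \cite{Green:2010kv}, but for the $L^2$ criterion one needs all of them; the efficient route is to use the fact, already established in Theorem~\ref{mainthm}, that the automorphic representation generated by $E^G_{\a_1;s}$ at these special $s$ has wavefront set equal to the closure of the minimal (resp.\ next-to-minimal) orbit, hence is an \emph{automorphic realization of a small unitary representation in the residual spectrum}; one then only has to check that this residual representation is the one sitting in $L^2$ rather than being a non-$L^2$ Langlands quotient, which can be read off from whether the leading constant-term exponent is the one predicted by the Mœglin--Waldspurger description of the residual spectrum. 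Concretely I would: (1) recall the constant-term formulas and the cancellations from \cite{Green:2010kv}; (2) for each $G\in\{E_6,E_7,E_8\}$ list the surviving exponents at $s=3/2$ and $s=5/2$ for all maximal parabolics, using the integral representations to fill any gaps; (3) apply the Langlands/Mœglin--Waldspurger $L^2$ inequality term-by-term, invoking the induction hypothesis for the Levi-Eisenstein-series factors; (4) observe that the inequality is satisfied precisely when $d$ lies in the stated range, and for the excluded small $d$ exhibit the offending borderline term (typically coming from the $Spin(d,d)$ or $SL(n)$ sub-Eisenstein series whose own convergence abscissa is not yet cleared). The square-root-of-the-Plancherel/asymptotics statement in the last sentence of the section then drops out of step (2) for free.
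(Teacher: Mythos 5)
Your overall instinct (part (i) is trivial; for (ii)--(iii) apply Langlands' square-integrability criterion to the constant-term exponents) is the right family of argument, but two steps in your plan are genuinely off. First, the criterion you state --- decay of each maximal-parabolic constant term relative to $\rho_P$ \emph{together with} square-integrability of the Levi Eisenstein factors, handled inductively --- is not the correct form of Langlands' test, and the inductive clause would fail in practice: constant terms of square-integrable residual forms generically contain non-$L^2$ Eisenstein series on the Levi multiplied by decaying exponentials. The paper avoids your "all maximal parabolics" bookkeeping entirely by noting that $E^G_{\a_1;s}$ is the minimal-parabolic series at $\l=2s\omega_{\a_1}-\rho$, so by Langlands' constant-term formula every constant term is built from minimal-parabolic Eisenstein series on the Levis and is therefore orthogonal to all cusp forms there; the series is "concentrated on $B$," and the criterion reduces to the single Borel constant term (\ref{constterminBorel}): each \emph{surviving} exponent $w\l$ must satisfy $\langle w\l,\omega_\a\rangle<0$ for all simple $\a$, as in (\ref{Langlandscritera}). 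Your proposed shortcut for the missing parabolics --- invoking Theorem~\ref{mainthm} and "reading off" residual-spectrum membership from the M{\oe}glin--Waldspurger description --- is circular: the paper deduces unitarity and residual membership \emph{from} square-integrability, and the wavefront-set computation gives no information about the constant-term exponents or $L^2$-membership by itself.

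Second, your step (3)--(4) ("compare the finite list of exponents against $\rho_{P_{\a_j}}$") misses the decisive phenomenon in the $s=5/2$ cases: there \emph{are} Weyl elements whose exponents violate the negativity condition (three for $E_7$, $258$ for $E_8$), and square-integrability holds only because their coefficients $M(w,\l)$ vanish in the limit --- a triple zero from $c(2\e-1)^3$ overwhelming the double pole from $c(2\e+1)^2$ at $\e=0$, cf.\ (\ref{threexceptionsforE7}). A term-by-term exponent comparison that does not track these cancellations would wrongly conclude that the criterion fails for $E_7$ and $E_8$. You also need the tractability device (\ref{precomputedset}): only $w$ with $w\a_i>0$ for all $i\neq 1$ can contribute, since otherwise $M(w,\l)$ contains the factor $c(-1)=0$ with no compensating pole; this cuts the Weyl sum from millions of terms to $126$ (resp.\ $2160$), after which the check of (\ref{Langlandscritera}) and of the vanishing of the exceptional coefficients is a finite, explicit computation rather than an induction over parabolics.
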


\noindent
Case (i) is obvious since the quotient $E_{d+1}(\Z)\backslash E_{d+1}(\IR)$ has finite volume, while case (ii)  was proven earlier  by \cite{grs}.  We have included them here in the statement for convenience and comparison.  It should be stressed, though, that  $E^{G}_{\a_1;s}$ is certainly not  square integrable for {\sl general} $s$.  The same method treats the lower rank groups as well, though since the statements are not needed here we refer to papers \cite{grs} and\cite{GinzSayag} for $Spin(5,5)$.
\begin{proof}
Recall that the series  $E^G_{\a_1;s}$ is a  specialization of the {\sl minimal parabolic} Eisenstein series $E^G(\l,g)$
 from (\ref{minparabseries}) at $\l=2s\omega_1-\rho$. This is explained in our context in  \cite[Section 2]{Green:2010kv}, where
Langlands' constant term formula is also given in Theorem~2.18.  The latter shows that the constant term of $E^G(\l,g)$  along any maximal parabolic subgroup $P$ is a sum of other minimal parabolic Eisenstein series on its Levi component.  By induction, this is also true if $P$ is  not maximal.  In particular, since these Eisenstein series on smaller groups are orthogonal to all cusp forms on those groups, the constant terms are therefore orthogonal to all cusp forms on the Levi components -- a meaningful statement only, of course, when the parabolic $P$ is not the Borel subgroup $B$ (so that  the Levi is nontrivial).  This means $E^G(\l,g)$ has ``zero cuspidal component along any such $P$'' in the sense of  \cite[Section 3]{langlandsSLN}, or equivalently that it is ``concentrated'' on the Borel subgroup $B$.

The constant term along $B$ is explicitly given in terms of a sum over the Weyl group:
 \begin{equation}\label{constterminBorel}
 \int_{N(\Z)\backslash N(\IR)}E^G(\l,ng)\,dn \ \ = \ \ \sum_{w\,\in\,\Omega}e^{(w\l+\rho)(H(g))} M(w,\l)\,,
\end{equation}
where $M(w,\l)$ is given by the explicit product over roots whose sign is flipped by $w$,
\begin{equation}\label{Mwl}
    M(w,\l) \ \ = \ \ \prod_{\srel{\a\,>\,0}{w\a\,<\,0}} c(\langle \l,\a \rangle)\,,
\end{equation}
with
\begin{equation}\label{csdef}
    c(s) \ \ := \ \ \f{\xi(s)}{\xi(s+1)} \  \  \   \ \text{and} \,  \ \ \ \ \xi(s) \ \ := \ \ \pi^{-{s\over2}}\,\G(\smallf{s}{2})\,\zeta(s)
\end{equation}
(see, for example,  \cite[(2.16)-(2.21)]{Green:2010kv}).
 This formula is valid for generic $\l$, and develops logarithmic terms at special points via meromorphic continuation.  Moreover, certain coefficients $M(w,\l)$ may vanish, for example when $\langle \l,\a\rangle=-1$ and the respective factor in (\ref{csdef}) has a zero owing to the pole of $\xi(s+1)$ at $s=-1$ (unless it  is canceled by a pole from another factor).   Indeed, $c(s)$ has a simple zero at $s=-1$, a simple pole at $s=1$, and is holomorphic at all other integers. Because $E^G(\l,g)$ is ``concentrated on $B$'', Langlands' criteria in \cite[Section 5]{langlandsSLN} asserts that it is square-integrable if and only if the surviving exponents $w\l$ have negative inner product with each fundamental weight:
  \begin{equation}\label{Langlandscritera}
    \langle w\l , \omega_\a \rangle \ \ < \ \ 0 \ \ \ \text{for each~~}\a\,>\,0\,.
  \end{equation}
The rest of the proof involves an explicit calculation to check that for each possible value of $w\lambda$, either the sum of $e^{(w'\l+\rho)(H(g))} M(w',\l)$ over all $w'\in \Omega$ with $w'\lambda=w\lambda$ vanishes, or instead that (\ref{Langlandscritera}) holds.  Actually, despite the enormous size of the Weyl groups involved, $M(w,\l)$ vanishes for all but very few $w$ (because of the special nature of $\l$).

Though the individual terms in (\ref{constterminBorel}) are frequently singular at the values of $\l$ in question,  the overall sum can be calculated explicitly by taking limits.  We now present the result of this calculation.  To make the condition (\ref{Langlandscritera}) more transparent, we take $g=a$ to be an element of the maximal torus $A$ (as we of course may, given that $H(g)$ depends only on the $A$-component of $g$'s Iwasawa decomposition).
We then furthermore parameterize $a$ by real numbers $r_1,r_2,\ldots$  via the condition that the simple roots on $a$ take the  values
\begin{equation}\label{rparameterization}
    a^{\a_1} \ = \ e^{r_1}\,, \  a^{\a_2} \ = \ e^{r_2}\,, \ \ldots.
\end{equation}
For example, for $G=E_6$
 the limiting value of (\ref{constterminBorel}) as $\l$ approaches $3\omega_1-\rho$ can be calculated explicitly as  $e^{2 r_1+3 r_2+4 r_3+6 r_4+4
  r_5+2 r_6}$ times
\begin{equation}\label{3halvesinborelforE6}
 \frac{3 \zeta (3) \left(e^{2 r_1+r_3}+e^{r_5+2 r_6}\right)+\pi ^2 ( e^{r_2}+   e^{r_3}+ e^{r_5})+6 \pi ( r_4+ \gamma   -  \log (4\pi )) }{3 \zeta (3)}\,.
\end{equation}
The exponentials are all  dominated by $e^{\rho(H(g))}=e^{8r_1+11 r_2+15 r_3+21
  r_4+15r_5+8 r_6}$ for $r_i>0$, that is, (\ref{Langlandscritera}) holds  and
hence $E^G_{\a_1;3/2}$ is square-integrable -- verifying a fact  proven
in~\cite{grs}.

We now turn to the two new cases, those of the $s=5/2$ series for $E_7$ and $E_8$.  We recall the computational method of \cite[Section 2.4]{Green:2010kv} to find the minimal parabolic constant terms, namely to precompute the set
\begin{equation}\label{precomputedset}
    {\mathcal S} \ \ : = \ \ \{\ w \,\in\,\Omega \ \mid \ w\a_i\,>\,0\ \ \ \text{for all} \ \, i \, \neq \,1 \ \}\,.
\end{equation}
 For $w\notin {\mathcal S}$, $M(w,\l)$  will include the factor $c(\langle \l,\a_i\rangle)=c(\langle 2s\omega_1-\rho,\a_i\rangle)=c(-\langle \rho,\a_i\rangle)=c(-1)=0$ for some $i>1$.  At the same time, at least for $\Re{s}< \half$, all inner products $\langle \l,\a\rangle$ will be negative, and hence none of the other factors in (\ref{Mwl}) can have a pole (after all, $c(s)$ is holomorphic for $\Re{s}<0$). Thus the term for $w$ in (\ref{constterminBorel}) vanishes identically in $s$ by analytic continuation, and the sum in (\ref{constterminBorel}) reduces to one over $w\in {\mathcal S}$.

For $E_7$ there are only 126 elements in ${\mathcal S}$ out of the 2,903,040 elements of the full Weyl group $\Omega$.  It can be calculated that all but three $w$ of these 126 satisfy Langlands' condition (\ref{Langlandscritera}), and the three that do not have the following expressions for $M(w,\l)$ for $s=5/2+\e$:
\begin{equation}\label{threexceptionsforE7}
    \aligned
\text{Exception 1 : } \,   &    c(2 (\epsilon -5)) c(2 \epsilon )^2 c(2 \epsilon -9) c(2 \epsilon
-8)^2 c(2 \epsilon -7)^2 c(2 \epsilon -6)^3 \  \times \\ & \  \times \  c(2 \epsilon -5)^3 c(2
\epsilon -4)^3 c(2 \epsilon -3)^3 c(2 \epsilon -2)^3 c(2 \epsilon
-1)^3 \  \times \\ &  \  \times \   c(2 \epsilon +1)^2 c(2 \epsilon +2) c(2 \epsilon +3) c(2
\epsilon +4) c(4 \epsilon -7)\,,
   \\
\text{Exception 2 : } \,   &    c(2 \epsilon )^2 c(2 \epsilon -9) c(2 \epsilon -8)^2 c(2 \epsilon
-7)^2 c(2 \epsilon -6)^3 c(2 \epsilon -5)^3  \  \times \\ &  \  \times \  c(2 \epsilon -4)^3 c(2
\epsilon -3)^3 c(2 \epsilon -2)^3 c(2 \epsilon -1)^3 c(2 \epsilon
+1)^2  \  \times \\ &  \  \times \  c(2 \epsilon +2)  c(2 \epsilon +3) c(2 \epsilon +4) c(4 \epsilon
-7) \,,
   \\
  \text{Exception 3 : } \,   &    c(2 (\epsilon -5)) c(2 \epsilon )^2 c(2 \epsilon -11) c(2 \epsilon
-9) c(2 \epsilon -8)^2 c(2 \epsilon -7)^2  \  \times \\ &  \  \times \,   c(2 \epsilon -6)^3 c(2
\epsilon -5)^3 c(2 \epsilon -4)^3  c(2 \epsilon -3)^3 c(2 \epsilon
-2)^3  \  \times \\ & \   \times \,  c(2 \epsilon -1)^3 c(2 \epsilon +1)^2 c(2 \epsilon +2) c(2
\epsilon +3) c(2 \epsilon +4) c(4 \epsilon -7)  \,.
    \endaligned
\end{equation}
Each of these terms is in fact zero by dint of the triple zero from the term $c(2\e-1)^3$ counterbalancing the double pole from the term $c(2\e+1)^2$ at $\epsilon=0$.
(Incidentally, the overall series $E^{G}_{\a_1;5/2}$ was shown to be non-zero in \cite{Green:2010kv} for both $G=E_7$ and $G=E_8$).

For $E_8$ there are 2160 elements in ${\mathcal S}$ out of the 696,729,600 elements of the full Weyl group $\Omega$.  Likewise,  all but 258 of these 2160 $w$ satisfy (\ref{Langlandscritera}).  Again, all 258 of these terms vanish at $s=5/2$ because their products have a triple zero (coming from three $c(s)$ factors evaluated at near $s=-1$) that counterbalance two poles (coming from two $c(s)$ factors evaluated near $s=1$).

\end{proof}

\section{Discussion and future problems}

In this paper we have studied the Fourier modes of the Eisenstein  series that define
the coefficients of the  first two nontrivial interactions in the low
energy expansion of the four-graviton amplitude in maximally
supersymmetric string theory compactified on $\rT^d$, and verified they  have certain expected features.  In
particular, we have shown that their non-zero Fourier coefficients
contain the expected  minimal and next-to-minimal  ($\smallf 12$-BPS
and $\smallf 14$-BPS) instanton orbits for any  of the
symmetry groups, $E_{d+1}$ ($0\le d \le 7$).  This extends the
analysis of these functions in \cite{Green:2010kv}, where the constant
terms of these functions were shown to reproduce all the expected
features of string perturbation theory and semi-classical M-theory.
Furthermore, in low rank cases we were able to present the explicit
Fourier coefficients of these functions and show that they have the
form expected of BPS-instanton contributions.  Indeed, the form of
the $\smallf 12$-BPS contributions match those deduced from string
theory calculations as summarised by \eqref{e:AsympBPS}.

For high rank cases  this involved  a detailed analysis of the automorphic
representations connected to these coefficients.  Namely, we explained
that they are automorphic realizations of the smallest two types of
nontrivial representations of their ambient Lie groups, and why this
property automatically implies the vanishing of a slew of Fourier
coefficients -- precisely the Fourier coefficients that  the BPS
condition ought to force to vanish.  We furthermore showed the most
interesting cases -- those of the next-to-minimal representation for
$E_7$ and $E_8$  -- occur in $L^2(E_{d+1}(\Z)\backslash
E_{d+1}(\IR))$.

This raises some obviously interesting questions, both from the string theory perspective and from the mathematical perspective.

An immediately interesting mathematical direction would be the explicit computation of the non-zero Fourier modes of $\calE^{(D)}_{(0,0)}$  and $\calE^{(D)}_{(1,0)}$ for the high rank cases with groups $E_6$, $E_7$ and $E_8$, in particular to get finer information  using the work of Bhargava and Krutelevich on the integral structure of the character variety orbits.
In a different direction, as mentioned in section~\ref{generalfeatures}  it would be of interest to extend the considerations of this paper to affine $E_9$ and behind that to hyperbolic extensions.\footnote{After this paper was first posted on the arXiv  the paper \cite{Fleig:2012xa} by Fleig and Kleinschmidt appeared, which makes important steps in this direction.
 }

Another question that is natural to ask in the context of string theory is  to what extent does our analysis generalise to higher
order interactions in the low energy expansion, which preserve a
smaller fraction of supersymmetry?  Could there be a role for Eisenstein
series with other special values of the index $s$ in the description of such terms?
However, the evidence is that such higher order terms involve
automorphic functions that are not Eisenstein series.
For example,  $\calE_{(0,1)}^{(D)}$  (the coefficient of the $\smallf 18$-BPS $\partial^6\, \R^4$ interaction) is expected to satisfy a  particular inhomogeneous Laplace eigenvalue equation
\cite{Green:2005ba}.  Although its constant term has, to a large
extent,  been analysed  for the relevant values of $D$
\cite{Green:2010kv}, it would be most interesting to analyse the
non-zero Fourier modes of $\calE_{(0,1)}^{(D)}$, which should describe
the couplings of $\smallf 18$-BPS instantons in the four-graviton amplitude
for low enough dimensions, $D$.  This should reveal a rich structure.
For example, the instantons that contribute in the limit of
decompactification from $D$ to $D+1$ include the $\smallf 18$-BPS  black
holes of $D+1$ dimensions, which  can have non-zero horizon size and
exponential degeneracy.  It is not apparent at first sight
whether  this degeneracy should   be encoded in the solutions of the
inhomogeneous equation satisfied by $\calE^{(D)}_{(0,1)}$.   Indeed, we have seen in the $\smallf 14$-BPS cases that the Fourier expansion of the coefficient function $\calE_{(1,0)}^{(D)}$ in the decompactification limit does not determine the Hagedorn-like degeneracy of   $\smallf 14$-BPS  small black holes in $D+1$ dimensions.  Rather, the divisor sums weight particular combinations of charges and windings of the wrapped world-lines of such objects.

  These issues involve mathematical challenges.  For example, the study of inhomogeneous Laplace equations for the group $SL(2,\IR)$ heavily relies on explicit formulas for automorphic Green functions, which do not generalize in an obvious manner to higher rank groups because they involve automorphic Laplace eigenfunctions   which do not have moderate growth in the cusps (at present the existence of such functions is itself an open problem).

Another issue is to what extent this  analysis can be extended to
discuss the automorphic properties of yet higher order terms in the
expansion of the four-graviton amplitude.  Further afield are issues concerning the extension of
these ideas to multi-particle amplitudes, to amplitudes that transform
as modular forms of non-zero weight, and extensions to processes
with less supersymmetry.

\section*{Acknowledgements}
\label{sec:acknowledgements}

We are grateful to Jeffrey Adams, Ling Bao, Iosif Bena, Manjul Bhargava, Dan Ciubotaru, Nick Dorey, Howard Garland, David Kazhdan,   Axel Kleinschmidt, Laurent
Lafforgue, Peter Littelmann, Dragan Milicic, Andrew Neitzke, Daniel Persson,
 Boris Pioline,  Gerhard
R\"ohrle, Siddhartha Sahi, Simon Salamon, Gordan
Savin, Wilfried Schmid,  Freydoon Shahidi, Sheer El-Showk, and Peter Trapa for
enlightening conversations.

MBG is grateful for the support of European Research Council  Advanced Grant No. 247252, and  to the Aspen Center for Physics for support under NSF grant \#1066293. SDM is grateful for the support of NSF grant \#DMS-0901594.   DC is partially supported by NSF-DMS 0968065 and NSA-AMS 081022. PT is partially supported by NSF-DMS 0968275.

\break
\appendix

\section{Special unipotent representations, \newline by Dan Ciubotaru and Peter E.~Trapa}\label{sec:trapendix}

\begin{center}

Department of Mathematics

University of Utah

 Salt Lake City, UT 84112-0090

{\tt ciubo@math.utah.edu},
{\tt ptrapa@math.utah.edu}

\end{center}

The representations considered in Theorem \ref{mainthm} are examples of a
wider class of representations which have attracted intense attention in the
mathematical literature.  The purpose of this appendix is to recall certain
results (from a purely local point of view) which are especially
relevant for the discussion of Section \ref{sec:NTMdetails}.

To begin, let $G$ denote a real reductive group arising as the real
points of a connected complex algebraic group $G_\mathbb{C}$.  In
\cite{arthur.old} and \cite{arthur.conj}, Arthur set forth a conjectural description of
irreducible (unitary) representations contributing to the
automorphic spectrum of $G$.  In many cases, these conjectures could
be reduced to a fundamental set of representations attached to
(integral) ``special unipotent'' parameters.  In the real case, Arthur's
conjectures --- and, in particular, the definition of the
corresponding special unipotent representations --- are made precise
and refined in the work of Barbasch-Vogan \cite{barbasch.vogan} and,
more completely, in the work of Adams-Barbasch-Vogan
\cite{adams.barbasch.vogan}.  The perspective of these references is
entirely local.  (Of course an extensive literature approaching  Arthur's
conjectures by global methods exists and, for classical groups, is
summarized in \cite{arthur.classical}.)  As we now explain, the
representations appearing in Theorem \ref{mainthm} are indeed special
unipotent in the sense of Adams-Barbasch-Vogan.

Write $\mathfrak{g}_\mathbb{C}$ for the Lie algebra of $G_\mathbb{C}$
and fix a Cartan subalgebra $\mathfrak{h}_\mathbb{C}$ arising as the
Lie algebra of a maximal torus in $G_\mathbb{C}$.  Write $\Omega$ for the
Weyl group of $\mathfrak{h}_\mathbb{C}$ in $\mathfrak{g}_\mathbb{C}$.
The classification of connected reductive algebraic groups naturally
leads from $G_\mathbb{C}$ to the Langlands dual $G_\mathbb{C}^\vee$, a
connected reductive complex algebraic group, e.g.~\cite{springer}. Let
$\mathfrak{g}_\mathbb{C}^\vee$ denote the Lie algebra of
$G_\mathbb{C}$.  The construction of $G_\mathbb{C}^\vee$ includes the
definition of a Cartan subalgebra $\mathfrak{h}^\vee_\mathbb{C}$ which
canonically identifies with the linear dual of
$\mathfrak{h}_\mathbb{C}$,
\begin{equation}
\label{e:hstar}
\mathfrak{h}^\vee_\mathbb{C} \simeq (\mathfrak{h}_\mathbb{C})^*.
\end{equation}
Let $\mathcal{N}$ denote the cone of nilpotent elements in
$\mathfrak{g}_\mathbb{C}$, and
likewise let $\mathcal{N^\vee}$ denote the cone of nilpotent elements in
$\mathfrak{g}_\mathbb{C}^\vee$. Write $G_\mathbb{C} \backslash \mathcal{N}$ and
$G_\mathbb{C}^\vee \backslash \mathcal{N}^\vee$ for the corresponding sets
of adjoint orbits.  These sets are partially ordered by the inclusion of
closures.  Spaltenstein defined an order-reversing
map
\[
d \; : \; G_\mathbb{C}^\vee \backslash \mathcal{N}^\vee
\longrightarrow G_\mathbb{C} \backslash \mathcal{N}
\]
with many remarkable properties which were refined in \cite[Appendix]{barbasch.vogan}; see Theorem \ref{t:BVav} below.

\begin{example}
\label{e:spaltenstein}
Suppose the Dynkin diagram corresponding to
$\mathfrak{g}_\mathbb{C}$ is simply
laced (as is the case for the groups $E_{d+1}$ from figure~\ref{fig:dynkin} and table~\ref{tab:Udual}).  Then $\mathfrak{g}_\mathbb{C} \simeq \mathfrak{g}^\vee_\mathbb{C}$
and $G_\mathbb{C}^\vee$ and $G_\mathbb{C}$ are isogenous.  Thus
$G_\mathbb{C}^\vee \backslash \mathcal{N}^\vee$ can be identified with
$G_\mathbb{C} \backslash \mathcal{N}$ and $d$ can be viewed as an order
reversing map from the latter set to itself.
With this in mind, consider figure \ref{fig:topandbottomorbit}.
The map $d$ interchanges the top three orbits
with the bottom three orbits (in an order
reversing way, of course).   In particular $d$ applied to the sub-subregular
orbit is the next to minimal orbit.  The
complete calculation of $d$ is given in \cite{carter}.
\end{example}

Fix an element $\mathcal{O}^\vee$ of $G_\mathbb{C}^\vee \backslash
\mathcal{N}^\vee$.  According to the Jacobson-Morozov Theorem, there exists a Lie algebra homomorphism
\[
\phi \; : \; \mathfrak{s}\mathfrak{l}(2,\mathbb{C}) \longrightarrow
\mathfrak{g}^\vee_\mathbb{C}
\]
such that the image under of $\phi$ of
$\left ( \begin{matrix}0 & 1\\ 0 & 0\end{matrix} \right )$
lies in $\mathcal{O}^\vee$ and
\begin{equation}
\label{e:lamprovisional}
\phi \left ( \begin{matrix}1 & 0\\ 0 & -1
\end{matrix} \right ) \in
\mathfrak{h}_\mathbb{C}^\vee \simeq \mathfrak{h}_\mathbb{C}^*,
\end{equation}
with the last isomorphism as in \eqref{e:hstar}.

The element in \eqref{e:lamprovisional} depends on the choice of $\phi$. Its
Weyl group orbit is well-defined however
(independent of how $\phi$ is chosen).  So define
\begin{equation}
\label{e:lam}
\lambda(\mathcal{O}^\vee) := (1/2) \;
\phi \left ( \begin{matrix}1 & 0\\ 0 & -1
\end{matrix} \right )
\in \mathfrak{h}_\mathbb{C}^*/\Omega.
\end{equation}
According to the Harish-Chandra isomorphism, $\lambda(\mathcal{O}^\vee)$
specifies a maximal ideal $Z(\mathcal{O}^\vee)$
in the center of the enveloping algebra $U(\mathfrak{g}_\mathbb{C})$.  Recall
that an irreducible admissible representation of $G$ is said to have
infinitesimal character $\lambda(\mathcal{O}^\vee)$ if its Harish-Chandra
module is annihilated by $Z(\mathcal{O}^\vee)$.

A result of Dixmier implies that there is a unique primitive ideal
$I(\mathcal{O}^\vee)$ in $U(\mathfrak{g}_\mathbb{C})$ which is
maximal among all primitive ideals containing $Z(\mathcal{O}^\vee)$.
(A primitive ideal in $U(\mathfrak{g}_\mathbb{C})$ is, by
definition, a two-sided ideal which arises as the annihilator of a simple
$U(\mathfrak{g}_\mathbb{C})$ module.)
Given any two-sided ideal $I$ in $U(\mathfrak{g}_\mathbb{C})$, we can
consider the associated graded ideal $gr(I)$ with respect to the canonical
grading on $U(\mathfrak{g}_\mathbb{C})$.  According to the
Poincar\'e-Birkhoff-Witt Theorem, $gr(I)$ is an ideal in
$gr(U(\mathfrak{g}_\mathbb{C})) \simeq S(\mathfrak{g}_\mathbb{C})$, the
symmetric algebra of $\mathfrak{g}_\mathbb{C}$, and hence cuts out a
subvariety (the so-called associated variety, $\AV(I)$, of $I$) of
$\mathfrak{g}_\mathbb{C}^*$.

It will be convenient to identify $\mathfrak{g}_\mathbb{C}$ with
$\mathfrak{g}^*_\mathbb{C}$ (by means of the choice of an invariant form)
and view $\AV(I)$ as a subvariety of $\mathfrak{g}_\mathbb{C}$.  (The choice
of form is well-defined up to scalar; since $\AV(I)$ is a cone, $\AV(I)$
becomes a well-defined subvariety of $\mathfrak{g}_\mathbb{C}$.) A theorem
of Joseph \cite{joseph} and Borho-Brylinski \cite{bb:i} (cf.~the short proof in \cite{vogan.av}) implies that if
$I$ is primitive, $\AV(I)$ is indeed the closure of a single nilpotent orbit
of $G_\mathbb{C}$.

\begin{thm}[\!\!{\cite[Corollary A.3]{barbasch.vogan}}]
\label{t:BVav}
In the setting of the previous paragraph,
\[
\AV(I(\mathcal{O}^\vee))
= \overline{d(\mathcal{O}^\vee)}.
\]
\end{thm}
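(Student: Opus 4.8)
The statement is precisely \cite[Corollary A.3]{barbasch.vogan}, so in a strict sense nothing new needs to be proved here; but let me sketch the shape of that argument and indicate where the real work lies. The plan is to recast the computation of $\AV(I(\mathcal{O}^\vee))$ entirely as a statement about primitive ideals at the single infinitesimal character $\lambda(\mathcal{O}^\vee)$, and then to feed that into the Lusztig--Spaltenstein description of associated varieties. By construction $I(\mathcal{O}^\vee)$ is the \emph{maximal} primitive ideal of $U(\mathfrak{g}_\mathbb{C})$ containing $Z(\mathcal{O}^\vee)$, so $\AV(I(\mathcal{O}^\vee))$ is the \emph{smallest} associated variety occurring among primitive ideals of infinitesimal character $\lambda(\mathcal{O}^\vee)$; the content of Theorem~\ref{t:BVav} is the identification of this minimal orbit closure with $\overline{d(\mathcal{O}^\vee)}$.

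First I would reduce to integral infinitesimal character. The weighted Dynkin diagram of $\mathcal{O}^\vee$ has entries in $\{0,1,2\}$, so $\lambda(\mathcal{O}^\vee)$ is integral for the root subsystem on which it takes integer values (in the simply laced $E_n$ cases of figure~\ref{fig:dynkin}, where $\mathfrak{g}_\mathbb{C}\simeq \mathfrak{g}^\vee_\mathbb{C}$, this is typically the whole system); passing to the corresponding integral root system reduces to the case where $\lambda := \lambda(\mathcal{O}^\vee)$ is dominant integral, though possibly singular. At such a $\lambda$, Duflo's theorem together with Kazhdan--Lusztig and Joseph cell theory classifies the primitive ideals of $U(\mathfrak{g}_\mathbb{C})$ with infinitesimal character $\lambda$ by left cells in the integral Weyl group $\Omega$, and the theorem of Joseph \cite{joseph}, Borho--Brylinski \cite{bb:i}, and Vogan \cite{vogan.av} (refined by Barbasch--Vogan and Lusztig to pin down the orbit) gives $\AV(J)=\overline{\mathcal{O}_{\mathfrak{c}(J)}}$ for each such $J$, where $\mathcal{O}_{\mathfrak{c}(J)}\subset \mathcal{N}$ depends only on the two-sided cell of $J$.

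Next I would identify the two-sided cell attached to $I(\mathcal{O}^\vee)$: it is the distinguished ``big'' cell for the $\lambda$-block, containing the longest element relative to the stabilizer of $\lambda$. At $\lambda=\rho$ this recovers the augmentation ideal, with $\AV=\{0\}$, consistent with $d$ of the principal orbit being trivial. For general $\lambda(\mathcal{O}^\vee)$ one propagates this big cell through the translation functors relating $\rho$ to $\lambda(\mathcal{O}^\vee)$ and must check that the orbit one lands on is exactly Spaltenstein's $d(\mathcal{O}^\vee)$. Verifying this last identification --- that the cell- and translation-theoretic recipe for the maximal primitive ideal at $\lambda(\mathcal{O}^\vee)$ reproduces Spaltenstein's combinatorial duality in its refined Barbasch--Vogan form --- is the main obstacle; it relies on the full Lusztig theory of special orbits, the $a$-function, and special pieces, and in the exceptional groups on Carter's explicit tables \cite{carter}.

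For the three orbits needed in this paper one can bypass the general combinatorics. In the $E_n$ cases $d$ interchanges the top three orbits of figure~\ref{fig:topandbottomorbit} with the bottom three (in reverse order), so it suffices to treat $\mathcal{O}^\vee$ principal, subregular, and sub-subregular. For these the semisimple parameters $\lambda(\mathcal{O}^\vee)$ are the explicit dominant weights $\l_{\text{dom}}$ tabulated in table~\ref{tab:varioussandlambda} --- namely $\rho$, and $\rho$ modified at one, respectively two, nodes --- and the attached maximal primitive ideal is then identified directly, via \cite{carter} and the known structure of the trivial, minimal, and next-to-minimal representations, with the trivial, minimal, and next-to-minimal orbit closures. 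This is exactly the input needed for corollary~\ref{c:su} and hence for Theorem~\ref{mainthm}.
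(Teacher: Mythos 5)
Your proposal matches the paper exactly: the paper gives no proof of Theorem~\ref{t:BVav} but simply imports it as \cite[Corollary A.3]{barbasch.vogan}, which is precisely your opening observation. Your supplementary sketch of how that result is established (maximality of $I(\mathcal{O}^\vee)$, the Joseph/Borho--Brylinski/cell-theoretic description of associated varieties, and the identification with the Spaltenstein--Barbasch--Vogan dual, verified in the exceptional cases via \cite{carter}) is a reasonable outline of the cited argument, but it is not something the paper itself attempts, so no further comparison is needed.
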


\begin{example}
\label{e:spaltenstein2}
Suppose $G_\mathbb{C}$ is simply laced and make identifications as in
Example \ref{e:spaltenstein}.  Suppose $\mathcal O^\vee$ is respectively the
regular, subregular, or sub-subregular, orbit in figure
\ref{fig:topandbottomorbit}.  Then $\AV(I(\mathcal{O}^\vee))$ is the
closure respectively of the zero, minimal, or next-to-minimal orbit.
\end{example}

\begin{defn}[{Barbasch-Vogan \cite{barbasch.vogan}}]
\label{d:su}
Fix an orbit $\mathcal{O}^\vee$ as above.
Suppose further that $\mathcal{O}^\vee$
is even or, equivalently, that $\lambda(\mathcal{O}^\vee)$  is integral.
An irreducible admissible
representation of $G$ is said to be {\sl (integral) special unipotent attached to
$\mathcal{O}^\vee$} if the
annihilator of its Harish-Chandra module is $I(\mathcal{O}^\vee)$.
\end{defn}

Note that since $I(\mathcal{O}^\vee)$ is a maximal primitive ideal, special
unipotent representations are, in a precise sense, as small as possible.

\begin{thm}
\label{t:su}
Suppose $G$ is split and $\pi$ is an irreducible spherical representation
with infinitesimal character $\lambda(\mathcal{O}^\vee)$
(with notation as
in \eqref{e:lam}).  Suppose further that $\mathcal{O}^\vee$
is even.
Then $\pi$ is
special unipotent in the sense of Definition \ref{d:su}.
\end{thm}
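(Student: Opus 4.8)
The plan is to prove Theorem~\ref{t:su} by computing the annihilator $\mathrm{Ann}_{U(\mathfrak{g}_\mathbb{C})}(\pi)$ directly and showing it equals $I(\mathcal{O}^\vee)$, which by Definition~\ref{d:su} is exactly the claim. One inclusion will be free: since $\pi$ has infinitesimal character $\lambda(\mathcal{O}^\vee)$, its annihilator is a primitive ideal of $U(\mathfrak{g}_\mathbb{C})$ containing $Z(\mathcal{O}^\vee)$, and by Dixmier's result recalled above every such ideal lies inside the unique maximal one; hence $\mathrm{Ann}(\pi)\subseteq I(\mathcal{O}^\vee)$ automatically. All the work is in the reverse inclusion, i.e.\ in showing $\pi$ is annihilated by $I(\mathcal{O}^\vee)$ -- equivalently, that the spherical module is ``as small as an irreducible module with this infinitesimal character can be.''

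First I would reduce to a statement about minimal principal series: an irreducible spherical representation is determined by its infinitesimal character, since it occurs as the \emph{unique} spherical subquotient of a minimal (spherical) principal series $X(\mu)$ with $\mu$ that infinitesimal character; taking $\mu=\lambda(\mathcal{O}^\vee)$ in its dominant representative, $\pi$ is the spherical Langlands quotient $\overline{X}(\mu)$, so it suffices to identify $\mathrm{Ann}(\overline{X}(\mu))$. Note that the ``even'' hypothesis (cf.\ Definition~\ref{d:su}) makes $\lambda(\mathcal{O}^\vee)$ a dominant \emph{integral} weight, but a singular one whenever the weighted Dynkin diagram of $\mathcal{O}^\vee$ has a zero -- which is the case in every application in Section~\ref{sec:NTMdetails} -- and controlling that singularity is the technical heart of the matter.

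To handle it, I would reduce to a regular integral infinitesimal character by translation functors. At a regular dominant integral weight $\mu^{+}$ the maximal primitive ideal is the annihilator of the finite-dimensional representation, and the spherical Langlands quotient of the minimal principal series provably realizes it (the classical picture relating minimal principal series to Verma-type modules). Then translate to the wall: because $\pi$ is spherical its $K$-fixed vector survives, so the Jantzen--Zuckerman functor does not kill it and $T^{\mu}_{\mu^{+}}\overline{X}(\mu^{+})=\overline{X}(\mu)$; by the Borho--Jantzen translation principle together with Barbasch--Vogan's analysis of special primitive ideals, the image of the maximal primitive ideal at $\mu^{+}$ is the maximal primitive ideal $I(\mathcal{O}^\vee)$ at $\mu$, and chasing annihilators through the translation gives $\mathrm{Ann}(\pi)=I(\mathcal{O}^\vee)$. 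An equivalent, more structural route: since $G$ is split and $\mu$ is integral, the block containing $X(\mu)$ is equivalent, compatibly with annihilators, to a category of modules over Lusztig's graded affine Hecke algebra $\mathbb{H}$ of the root system of $G$ (Lusztig; Barbasch--Moy); under this equivalence $\pi$ goes to the spherical $\mathbb{H}$-module, whose lowest Weyl-group type is the trivial one, and the Kazhdan--Lusztig--Ginzburg classification identifies its annihilator with the two-sided cell ideal of the special representation $\mathrm{triv}$, i.e.\ again the maximal one.

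The hard part -- the step that genuinely needs the cited machinery -- is precisely this reverse inclusion, because it is \emph{not} enough to match associated varieties: once the primitive ideal is known, Theorem~\ref{t:BVav} hands over $\overline{d(\mathcal{O}^\vee)}$, but nested primitive ideals can share an associated variety, so $\mathrm{Ann}(\pi)$ must be pinned down on the nose -- through the $\tau$-invariant / Goldie-rank Weyl-group representation or through the Hecke-algebra equivalence. Once $\mathrm{Ann}(\pi)=I(\mathcal{O}^\vee)$ is established, Theorem~\ref{t:BVav} immediately yields that the wavefront set of $\pi$ is $\overline{d(\mathcal{O}^\vee)}$; specializing to the regular, subregular and sub-subregular dual orbits of $E_6$, $E_7$, $E_8$ (Example~\ref{e:spaltenstein2}) gives exactly the input used in Corollary~\ref{c:su} and hence in the $s=5/2$ case of Theorem~\ref{mainthm}.
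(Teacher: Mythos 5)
Your reduction to computing $\mathrm{Ann}(\pi)$, the observation that $\mathrm{Ann}(\pi)\subseteq I(\mathcal{O}^\vee)$ is automatic, and the remark that matching associated varieties would not suffice are all correct; but the step carrying all the weight in your first route fails. You identify the spherical Langlands quotient $\overline{X}(\mu^{+})$ at a regular dominant integral parameter with the finite-dimensional spherical representation (already delicate: for split $G$ the finite-dimensional module at a given regular integral infinitesimal character need not be spherical --- for $SL(2,\mathbb{R})$ at infinitesimal character twice the fundamental weight the spherical principal series is irreducible and its annihilator is the \emph{minimal} primitive ideal, so ``spherical implies maximal annihilator'' is not a general principle, and your method, which uses nothing about $\lambda(\mathcal{O}^\vee)$ beyond integrality, would prove too much), and you then claim $T^{\mu}_{\mu^{+}}\overline{X}(\mu^{+})=\overline{X}(\mu)$ because the $K$-fixed vector ``survives.'' But translation to a singular wall annihilates \emph{every} finite-dimensional representation: all constituents of $F\otimes E$ are finite dimensional, hence have regular infinitesimal character, so the projection to the singular infinitesimal character $\mu=\lambda(\mathcal{O}^\vee)$ vanishes; equivalently, every simple root lies in the $\tau$-invariant of $F$. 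Thus $T^{\mu}_{\mu^{+}}\overline{X}(\mu^{+})=0$, not $\overline{X}(\mu)$. The survival heuristic is valid for the standard module $X(\mu^{+})$, whose translate is $X(\mu)$, but not for its irreducible quotient. The spherical constituent at the wall is the translate of a \emph{different} irreducible at $\mu^{+}$, one whose $\tau$-invariant avoids the wall roots, and identifying that constituent's primitive ideal and showing its Borho--Jantzen image is the maximal ideal $I(\mathcal{O}^\vee)$ is exactly the hard content your argument presupposes rather than proves.

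Your alternative route is likewise not available here: Lusztig's graded affine Hecke algebra and the Barbasch--Moy theorem concern $p$-adic groups with Iwahori-fixed vectors, and there is no established equivalence between a block of Harish-Chandra modules for the split \emph{real} group and $\mathbb{H}$-modules that transports annihilators in $U(\mathfrak{g}_\mathbb{C})$; in the appendix that machinery is used only for the $p$-adic unitarity discussion, not for Theorem~\ref{t:su}. The paper's actual proof is of a different kind: it invokes the Adams--Barbasch--Vogan theory of special unipotent Arthur packets, by which the packet attached to an even $\mathcal{O}^\vee$ consists of representations whose annihilator is $I(\mathcal{O}^\vee)$ and contains the (generally nontempered) $L$-packet of the spherical representation with infinitesimal character $\lambda(\mathcal{O}^\vee)$; membership of $\pi$ in that packet is what supplies the hard inclusion $I(\mathcal{O}^\vee)\subseteq\mathrm{Ann}(\pi)$, replacing the translation-functor step you propose.
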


\begin{proof}[Sketch]
Chapter 27 in \cite{adams.barbasch.vogan} defines special unipotent Arthur packets.
Roughly speaking, such a packet is parametrized by a rational form of an orbit
$\mathcal{O}^\vee$ in $G_\mathbb{C}^\vee \backslash \mathcal{N}^\vee$
(\!\!\cite[Theorem 27.10]{adams.barbasch.vogan}).  In the case that $\mathcal O^\vee$ is even,
these packets are known  to consist of representations appearing in Definition \ref{d:su}
(\!\!\cite[Corollary 27.13]{adams.barbasch.vogan}).
As a consequence of
\cite[Definition 22.6]{adams.barbasch.vogan} (see also the discussion after
\cite[Definition 1.33]{adams.barbasch.vogan}),
such a packet also contains a (generally nontempered) L-packet.
In the case at hand, the special unipotent Arthur packet parametrize
by $\mathcal{O}^\vee$ contains the L-packet consisting of the spherical
representation with infinitesimal character $\lambda(\mathcal{O}^\vee)$.
This completes the sketch.
\end{proof}

\begin{cor}
\label{c:su}
The spherical constituents of the principal series representations $V_{\l_{\text{dom}}}$ from section~\ref{sec:NTMdetails} are integral
special unipotent attached to $\mathcal O^\vee$ (Definition \ref{d:su})
where $\mathcal O^\vee$ is, respectively, the regular, subregular, and
sub-subregular nilpotent orbit (all of which are even).  According to
Corollary \ref{t:BVav} and Example \ref{e:spaltenstein2},
the wavefront sets of these representations are, respectively, the
zero, minimal, and next to minimal orbits.
\end{cor}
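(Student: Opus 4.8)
\emph{Proof proposal.} The plan is to recognize the spherical constituent of each principal series $V_{\l_{\text{dom}}}$ as a special unipotent representation in the sense of Definition~\ref{d:su}, and then to read off its wavefront set from Theorem~\ref{t:BVav} together with Example~\ref{e:spaltenstein2}. The first step is to match infinitesimal characters. By the construction recalled at the end of Section~\ref{sec:NTMdetails}, each $\l_{\text{dom}}$ is dominant with $\langle \l_{\text{dom}},\a_j\rangle\in\{0,1\}$ for every simple root $\a_j$ (Table~\ref{tab:varioussandlambda}), and there is an $\mathfrak{sl}_2$-homomorphism $\phi\colon\mathfrak{sl}(2,\C)\to\mathfrak g_\C\simeq\mathfrak g_\C^\vee$ with $\phi\ttwo{1}{0}{0}{-1}=2\l_{\text{dom}}$ under the identification \eqref{e:hstar}. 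Taking $\mathcal{O}^\vee$ to be the nilpotent orbit of $\phi\ttwo{0}{1}{0}{0}$, the defining property of the weighted Dynkin diagram gives $\a_j(\phi\ttwo{1}{0}{0}{-1})=2\langle\l_{\text{dom}},\a_j\rangle\in\{0,2\}$, so $\mathcal{O}^\vee$ is even, and by \eqref{e:lam} one has $\lambda(\mathcal{O}^\vee)=\l_{\text{dom}}$. Hence $V_{\l_{\text{dom}}}$ has infinitesimal character $\lambda(\mathcal{O}^\vee)$, and its unique irreducible spherical constituent $\pi$ is a spherical representation of the split group $G$ with that infinitesimal character.

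The one genuinely computational step is to name $\mathcal{O}^\vee$. Reading the weighted Dynkin diagrams $2\l_{\text{dom}}$ off Table~\ref{tab:varioussandlambda} --- all $2$'s for $s=0$, a single $0$ for $s=3/2$, two $0$'s for $s=5/2$ --- and comparing with the classification of nilpotent orbits and their weighted diagrams for $E_6,E_7,E_8$ in \cite{carter}, one checks that $\mathcal{O}^\vee$ is respectively the regular, the subregular, and the sub-subregular orbit, i.e.\ precisely the top three orbits of Figure~\ref{fig:topandbottomorbit}. For $E_6$ one may instead invoke the degenerate-principal-series argument of Section~\ref{sec:NTMdetails}: the Gelfand--Kirillov dimension of any constituent of $V_{\l_{\text{dom}}}$ is at most $16$, leaving only the trivial, minimal, or next-to-minimal orbit, the trivial one being excluded and the minimal one excluded by the Kazhdan--Savin uniqueness theorem \cite{KazhdanSavin}; but the uniform weighted-diagram argument avoids this case split.

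Now Theorem~\ref{t:su} applies: $G$ is split and $\mathcal{O}^\vee$ is even, so $\pi$ is integral special unipotent attached to $\mathcal{O}^\vee$, meaning the annihilator of its Harish-Chandra module is the maximal primitive ideal $I(\mathcal{O}^\vee)$. Since the wavefront set of an irreducible representation is the associated variety $\AV$ of its annihilator, Theorem~\ref{t:BVav} identifies it with $\overline{d(\mathcal{O}^\vee)}$, and Example~\ref{e:spaltenstein2} (equivalently the computation of Spaltenstein's duality $d$ in \cite{carter}) sends the regular, subregular, and sub-subregular orbits to the zero, minimal, and next-to-minimal orbits respectively, which is exactly the assertion of the corollary. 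The real obstacle is internal to Theorem~\ref{t:su}: it rests on the Adams--Barbasch--Vogan classification of special unipotent Arthur packets \cite{adams.barbasch.vogan}, where one must know that for even $\mathcal{O}^\vee$ the associated Arthur packet consists of special unipotent representations and contains the spherical member of the corresponding L-packet, forcing the annihilator of $\pi$ to equal $I(\mathcal{O}^\vee)$. Granting that input, the corollary reduces to the finite weighted-Dynkin-diagram check together with Barbasch--Vogan duality.
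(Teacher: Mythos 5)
Your proposal is correct and follows essentially the same route as the paper, which treats the corollary as an immediate consequence of Theorem~\ref{t:su} once $\l_{\text{dom}}$ is recognized (via its weighted Dynkin diagram $2\l_{\text{dom}}$, all entries $0$ or $2$) as $\lambda(\mathcal O^\vee)$ for the even regular, subregular, and sub-subregular orbits, after which Theorem~\ref{t:BVav} and Example~\ref{e:spaltenstein2} give the wavefront sets. The only inessential difference is your aside on $E_6$: the Gelfand--Kirillov dimension argument of section~\ref{sec:NTMdetails} pins down the wavefront set but not special unipotence, so the uniform weighted-diagram argument you use is indeed the right one.
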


\smallskip

Finally, we remark that since the special unipotent representation of
Definition \ref{d:su} are predicted by Arthur to appear in spaces of
automorphic forms, they should be unitary.

\begin{conj}
\label{conj:su}
Suppose $\pi$ is integral special unipotent in the sense of Definition \ref{d:su}.
Then $\pi$ is unitary.
\end{conj}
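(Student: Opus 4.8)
The plan is to deduce the unitarity of $\pi$ from its conjectural appearance in the automorphic spectrum, carrying out in general the strategy that already succeeds in the cases treated in this paper. First I would produce, for each real integral special unipotent $\pi$ attached to an even orbit $\mathcal{O}^\vee$, a number field $k$ with a real place $v$, a reductive group $\mathbf{G}$ over $k$ with $\mathbf{G}(k_v)\cong G$, and an automorphic representation $\Pi\subset L^2(\mathbf{G}(k)\backslash\mathbf{G}(\A_k))$ whose component at $v$ is $\pi$; since any closed $\mathbf{G}$-invariant subspace of $L^2$ carries an invariant inner product, this gives unitarity of $\pi$. The natural candidates for $\Pi$ are the residual representations obtained as iterated residues of the minimal-parabolic Eisenstein series $E^{\mathbf G}(\lambda,\cdot)$ at $\lambda=\lambda(\mathcal{O}^\vee)-\rho$; by Langlands' theory of the residual spectrum, sharpened by M\oe glin and Waldspurger, these residues (when nonzero) span unitary automorphic representations, and their archimedean components should coincide with $\pi$ once one matches infinitesimal characters and, via Theorem~\ref{t:BVav}, wavefront sets.

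The essential steps are then: (a) \emph{nonvanishing} of the relevant iterated residue, a combinatorial problem about the rank-one factors $c(s)$ and the constant-term coefficients $M(w,\lambda)$ of (\ref{Mwl}), of exactly the type carried out in the proof of Theorem~\ref{thm:L2}, but now for all even $\mathcal{O}^\vee$ rather than the three special $s$-values there; (b) \emph{local identification} of the archimedean component of the residual representation with $\pi$, using the local functional equations together with the fact --- a consequence of Dixmier's uniqueness of the maximal primitive ideal $I(\mathcal{O}^\vee)$ over $Z(\mathcal{O}^\vee)$ --- that there is a unique spherical special unipotent representation of $G$ with infinitesimal character $\lambda(\mathcal{O}^\vee)$; and (c) treatment of the \emph{non-spherical} members of the special unipotent Arthur packet and of \emph{non-split} real forms, which requires the local packet structure of Adams-Barbasch-Vogan \cite{adams.barbasch.vogan} and the associated multiplicity formulas to see that each packet member occurs somewhere in the discrete spectrum.

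A purely local alternative dispenses with global input: one computes directly the signature of the canonical invariant Hermitian form on $\pi$. For split classical groups this is Barbasch's program, where the graded affine Hecke algebra reduces the question to positivity of explicit intertwining operators on $W$-types; for $E_6$, $E_7$, $E_8$ one would instead run the signature-character algorithm of Adams--van Leeuwen--Trapa--Vogan on each of the finitely many even orbits $\mathcal{O}^\vee$ and check definiteness, while for complex groups the unitarity of all special unipotent representations is already a theorem of Barbasch and Vogan.

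The main obstacle is step (b)/(c) in the global approach: M\oe glin's classification of the residual spectrum is not yet complete for the exceptional groups, so matching a residual representation to a prescribed special unipotent one --- and accounting for every packet member and every real form --- is precisely what is missing; in the local approach the obstacle is the absence of a uniform conceptual argument, leaving one with either the delicate Hecke-algebra combinatorics or a large (but finite) machine computation. For the representations of immediate concern here, however, the conjecture is already established: Theorem~\ref{thm:L2} realizes $\mathcal{E}^{(D)}_{(1,0)}$ for $E_7$ and $E_8$ as a square-integrable automorphic form, so the attached next-to-minimal special unipotent representation is unitary --- the residual-spectrum strategy executed in exactly those cases.
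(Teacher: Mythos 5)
The statement you were asked to prove is labelled a \emph{conjecture} in the paper (it is an instance of Arthur's unitarity conjecture for integral special unipotent representations), and the paper offers no proof of it: the appendix by Ciubotaru and Trapa only records that the representations of Theorem~\ref{t:su} are known to be unitary when $G_\C$ is classical or of type $G_2$ (by the local work of Vogan and Barbasch cited there), refers to Arthur's classification for results by global methods, and proves a $p$-adic spherical analogue by a mechanism neither of your two routes uses --- namely the Iwahori--Matsumoto involution $\tau$ on the Iwahori--Hecke algebra, which preserves unitarity, sends the spherical module $\pi(s^\vee_{\mathcal O^\vee},0,1)$ to the module $\pi(s^\vee_{\mathcal O^\vee},e^\vee_0,1)$ attached to the dense orbit (a tempered, hence unitary, module), and then transfers back through the Barbasch--Moy theorem. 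For the specific real representations used in the body of the paper, unitarity is obtained exactly as you say at the end: square-integrability (Theorem~\ref{thm:L2}, and \cite{grs} for the minimal case) realizes them in $L^2(E_{d+1}(\Z)\backslash E_{d+1}(\IR))$.

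So your proposal should be judged as a research outline rather than a proof, and by that standard it is honest but incomplete in precisely the ways that keep the statement a conjecture. Step (a) is not merely a larger version of the computation in Theorem~\ref{thm:L2}: there one checks square-integrability of a \emph{given} Eisenstein special value, whereas you need existence and nonvanishing of an iterated residue whose archimedean component is a \emph{prescribed} special unipotent $\pi$, and for exceptional groups the residual spectrum is not classified, so step (b) (identifying the local component, not just its infinitesimal character and wavefront set --- these invariants do not pin down the Langlands parameter) and step (c) (non-spherical packet members, non-split forms, and the multiplicity formulas needed to show each member actually occurs discretely) are open, as you acknowledge. Your local alternative likewise defers to Barbasch's classical-group results and to a hypothetical exhaustive signature computation for $E_6$, $E_7$, $E_8$ that is not carried out. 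None of this is a criticism of your strategy --- it mirrors the known partial results the appendix cites --- but the conjecture is not established by it, and you should state clearly up front that what you are giving is a reduction of the problem to known open questions rather than a proof.
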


The representations appearing in Theorem \ref{t:su} are known
to be unitary if $G_\mathbb{C}$ is classical or of Type $G_2$.
This was proved by purely
local methods in \cite{vogan.gln},
\cite{vogan.g2}, and \cite{barbasch.spherical}.  For a summary of
results obtained by global methods, see \cite{arthur.classical}.

For completeness, we discuss the analogs of these results in the
$p$-adic case. Let $F$ be a $p$-adic field, with ring of integers
$\mathfrak O$, and finite residue field $F_q.$ The group $G$ is now
the $F$-points of a connected algebraic group $G_{\overline F}$
defined over $\overline F$. We assume for simplicity that $G$ is split
and of adjoint type. Let $K$ be the $\mathfrak O$-points of
$G_{\overline F},$ a maximal compact open subgroup of $G.$ Let $I$ be
the inverse image in $K$ under the natural projection $K\to
G_{\overline F}(F_q)$ of a Borel subgroup over $F_q$. The compact open
subgroup $I$ is called an Iwahori subgroup.

The Iwahori-Hecke algebra $\mathcal H(G,I)$ is the convolution algebra (with respect to a fixed Haar measure on $G$) of compactly supported, locally constant, $I$-biinvariant complex functions on $G$. It is a Hilbert algebra, in the sense of Dixmier, with respect to the trace function $f\mapsto f(1)$, and the $*$-operation $f^*(g)=\overline{f(g^{-1})}$, $f\in \mathcal H(G,I)$. Thus, there is a theory of unitary remodules of $\mathcal H(G,I)$ and an abstract Plancherel formula.

If $(\pi,V)$ is a complex smooth $G$-representation, such that $V^I\neq 0,$ the algebra $\mathcal H(G,I)$ acts on $V^I$ via
$$\pi(f)v=\int_G f(x) \pi(x)v~dx,\quad v\in V^I,\ f\in \mathcal H(G,I).$$

\begin{thm}[\!\!\cite{borel}]
The functor $V\to V^I$ is an equivalence of categories between the category of smooth admissible $G$-representations and finite dimensional $\mathcal H(G,I)$-modules
\end{thm}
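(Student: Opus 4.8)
The plan is to reformulate the statement in terms of the full Hecke algebra together with a single idempotent, and then to invoke the general Morita-type principle relating modules over a ring to modules over the corner algebra cut out by a full idempotent.

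First I would fix a Haar measure on $G$ and let $\mathcal{H}(G)$ be the full Hecke algebra of compactly supported locally constant functions on $G$ under convolution; recall that smooth representations of $G$ are the same thing as nondegenerate $\mathcal{H}(G)$-modules. Set $e_I := \operatorname{vol}(I)^{-1}\mathbf{1}_I \in \mathcal{H}(G)$. Then $e_I$ is an idempotent, it is self-adjoint for the $*$-operation, and there is a canonical identification $e_I\,\mathcal{H}(G)\,e_I \cong \mathcal{H}(G,I)$. For a smooth representation $(\pi,V)$ one has $\pi(e_I)V = V^I$, and because $e_I$ is self-adjoint one gets a functorial direct sum decomposition $V = V^I \oplus V(I)$, with $V(I)$ spanned by the vectors $\pi(i)v - v$; in particular the functor $V \mapsto V^I$ is exact and commutes with direct limits.

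Next I would set up the candidate quasi-inverse. Let $X := C_c^\infty(G/I)$ with $G$ acting by left translation; as a left $\mathcal{H}(G)$-module $X \cong \mathcal{H}(G)e_I$, and it carries a commuting right $\mathcal{H}(G,I)$-action, so that $X$ is the compact induction $\operatorname{c-Ind}_I^G(\mathbf{1})$ viewed as a $(G,\mathcal{H}(G,I))$-bimodule; note $X^I = e_I\mathcal{H}(G)e_I = \mathcal{H}(G,I)$ as a right $\mathcal{H}(G,I)$-module. Frobenius reciprocity gives, for every smooth $V$, a natural isomorphism $\operatorname{Hom}_G(X,V) \cong V^I$ of $\mathcal{H}(G,I)$-modules, so the functor $\mathcal{F}: M \mapsto X \otimes_{\mathcal{H}(G,I)} M$ is left adjoint to $V \mapsto V^I$. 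The unit $M \to \mathcal{F}(M)^I$ is then an isomorphism for every $M$: since the $\mathcal{H}(G)$-action on $\mathcal{F}(M)$ factors through the $X$-variable, applying $e_I$ gives $\mathcal{F}(M)^I \cong X^I\otimes_{\mathcal{H}(G,I)}M = M$. For the counit $\mathcal{F}(V^I) \to V$, the image is the $G$-subrepresentation $\langle G\cdot V^I\rangle$ generated by the $I$-fixed vectors, so the map is surjective exactly when $V$ is generated by $V^I$. For admissible $V$ lying in the Iwahori (principal) block this holds automatically, since a nonzero quotient $V/\langle G\cdot V^I\rangle$ would still be admissible and in the block, hence would contain an irreducible subrepresentation with nonzero $I$-fixed vectors, contradicting $(V/\langle G\cdot V^I\rangle)^I = 0$.

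The heart of the matter --- and the step I expect to be the main obstacle --- is the injectivity of the counit $\mathcal{F}(V^I) \to V$, equivalently the assertion that $e_I$ is a \emph{full} idempotent in its block, $\mathcal{H}(G)_{[T,\mathbf{1}]}\, e_I\, \mathcal{H}(G)_{[T,\mathbf{1}]} = \mathcal{H}(G)_{[T,\mathbf{1}]}$. This is the input that genuinely uses the fine structure of $G$ relative to $I$: the Iwahori--Bruhat decomposition $G = \bigsqcup_w IwI$ indexed by the extended affine Weyl group, the Iwahori--Matsumoto presentation of $\mathcal{H}(G,I)$ by the $T_w$ together with the braid and quadratic relations, and (in modern language) the Bernstein--Deligne description of the center and blocks of $\mathcal{H}(G)$. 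Granting fullness of $e_I$, the standard lemma for idempotented rings --- if $ReR=R$ then $M \mapsto eM$ and $M \mapsto Re\otimes_{eRe}M$ are mutually quasi-inverse --- closes the loop: $V \mapsto V^I$ is an equivalence between the Iwahori block of smooth representations of $G$ and the category of $\mathcal{H}(G,I)$-modules, and restricting to objects with finite-dimensional $V^I$ (equivalently, since $\mathcal{H}(G,I)$ is a finitely generated $\mathbb{C}$-algebra, to finite-length, hence admissible, representations) yields the stated theorem. In practice this is precisely Borel's theorem \cite{borel} (see also the account of Casselman), so executing the plan amounts to assembling these ingredients rather than reproving them.
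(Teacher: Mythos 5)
The paper does not actually prove this statement: it is quoted verbatim from Borel's paper (the citation \cite{borel} attached to the theorem), so there is no internal argument to measure your proposal against. Judged on its own terms, your outline is the standard one — pass to the idempotent $e_I=\operatorname{vol}(I)^{-1}\mathbf{1}_I$ in the full Hecke algebra, identify $e_I\mathcal{H}(G)e_I\cong\mathcal{H}(G,I)$ and $V^I=e_IV$, set up the adjoint pair via $X=\operatorname{c-Ind}_I^G\mathbf{1}$, and reduce everything to a Morita-type statement — and your silent correction of the statement is in fact necessary: as phrased in the paper the functor cannot be an equivalence on \emph{all} smooth admissible representations (anything with $V^I=0$ is killed), and the correct assertion, which is what Borel proves, concerns representations generated by their $I$-fixed vectors, with admissibility corresponding to $\dim V^I<\infty$.

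The genuine gap is that the step you yourself flag as ``the heart of the matter'' is never carried out. Injectivity of the counit — equivalently, that every $G$-subrepresentation of a representation generated by its $I$-fixed vectors is again generated by its $I$-fixed vectors, i.e.\ fullness of $e_I$ in its block — is precisely the content of Borel's theorem, and it does not follow from the Iwahori–Bruhat decomposition or the Iwahori–Matsumoto presentation alone; Borel's argument runs through Casselman's theory of Jacquet modules (the compatibility $V^I\xrightarrow{\,\sim\,}(V_N)^{T_0}$ and its consequences for subquotients of unramified principal series). There is also a circularity risk in the way you dispose of surjectivity: invoking the Bernstein–Deligne block and the fact that every irreducible object of the Iwahori block has nonzero $I$-fixed vectors presupposes the Borel–Casselman–Matsumoto identification of that block, which is part of what is being established. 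So your text is an accurate reduction of the theorem to known structure theory — which is all the paper does by citing \cite{borel} — but it is not a proof; to make it one you would have to supply the Jacquet-module (or equivalent) argument for the fullness of $e_I$ rather than granting it.
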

Borel  conjectured that this functor induces a bijective correspondence of unitary representations. This conjecture was proved by Barbasch-Moy \cite{barbasch.moy} (subject to a certain technical assumption which was later removed).

\begin{thm}[\!\!\cite{barbasch.moy}]\label{t:bm}
An irreducible smooth $G$-representation $(\pi,V)$ is unitary if and only if $V^I$ is a unitary $\mathcal H(G,I)$-module.
\end{thm}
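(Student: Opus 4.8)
The plan is to establish both implications of Theorem~\ref{t:bm} separately: the forward implication is essentially formal, while the converse rests on the harmonic analysis of the Iwahori--Hecke algebra.

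First I would treat the easy direction. Suppose $(\pi,V)$ is an irreducible unitary smooth $G$-representation with $G$-invariant inner product $\langle\cdot,\cdot\rangle$; we may assume $V^I\neq 0$, as otherwise the right-hand side is vacuous. Restricting $\langle\cdot,\cdot\rangle$ to the finite-dimensional space $V^I$ gives a positive-definite form, and since $G$ is unimodular the recipe $f^*(g)=\overline{f(g^{-1})}$ does make $\mathcal{H}(G,I)$ a $*$-algebra, with respect to which one computes the adjoint of the operator $\pi(f)=\int_G f(g)\pi(g)\,dg$ by the change of variables $g\mapsto g^{-1}$ and the invariance of the measure:
\[
\langle \pi(f)v,w\rangle \;=\; \int_G f(g)\,\langle \pi(g)v,w\rangle\,dg \;=\; \langle v,\pi(f^*)w\rangle, \qquad v,w\in V^I.
\]
Hence $V^I$ is a unitary $\mathcal{H}(G,I)$-module, with no deep input required.

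For the converse, the first step is to reduce to a statement purely about Hermitian forms. Under Borel's equivalence of categories \cite{borel}, the functor $W\mapsto W^I$ intertwines the smooth contragredient of $G$-representations with the $\mathcal{H}(G,I)$-module dual, hence also their Hermitian (conjugate) analogues; irreducibility of $\pi$ forces the space of invariant Hermitian forms to be at most one-dimensional on each side, and the nonvanishing of $V^I$ lets one pass such a form between $V$ and $V^I$. Thus ``$\pi$ is Hermitian'' is equivalent to ``$V^I$ is Hermitian as an $\mathcal{H}(G,I)$-module'', and in that case the two invariant forms are restrictions of one another. The entire content of the theorem then reduces to the claim that \emph{positivity} of the form on $V^I$ forces positivity of the form on all of $V$ --- which is not automatic, because $V$ is generated by $V^I$ over the \emph{full} Hecke algebra $\mathcal{H}(G)$, not over $\mathcal{H}(G,I)$.

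To settle positivity I would follow Barbasch--Moy. Using the Iwahori--Matsumoto/Bernstein presentation, identify $\mathcal{H}(G,I)$ with an affine Hecke algebra with equal parameters, and use the Bernstein decomposition of its module category to reduce to modules with fixed real infinitesimal character. Every irreducible $*$-unitary $\mathcal{H}(G,I)$-module then lies in a connected deformation family (the ``complementary series'') emanating from a tempered unitary module induced from a proper Levi, along which positivity of the invariant form is an open-and-closed condition holding at the tempered endpoint. One must check that the $G$-side picture matches along such a family: parabolic induction on $G$ corresponds, via Borel's functor for the Levi and the compatibility of Iwahori subgroups of $G$ with those of its Levi subgroups, to the induction used for $\mathcal{H}(G,I)$-modules; and temperedness of a $G$-representation with Iwahori-fixed vectors is detected by the same growth criterion on $V^I$, visible through the action of the commutative Bernstein subalgebra (equivalently, through the Plancherel/Macdonald formula for $\mathcal{H}(G,I)$). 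Transporting the positive-definite form along the matched deformation families produces a $G$-invariant positive-definite form on $V$, so $\pi$ is unitary. The main obstacle is exactly this compatibility of Borel's categorical equivalence with parabolic induction and with the detection of temperedness, together with the harmonic-analytic input needed to know that every unitary $\mathcal{H}(G,I)$-module is obtained by deformation from a tempered one --- this is the circle of ideas behind the technical hypothesis alluded to above that was only removed later. Everything else is formal manipulation of invariant forms and the equivalence of categories.
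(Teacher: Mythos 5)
The paper does not prove this statement at all: Theorem~\ref{t:bm} is quoted verbatim from Barbasch--Moy \cite{barbasch.moy} (with the remark that their original technical hypothesis was later removed), so there is no in-paper argument to compare yours against. Judged on its own terms, your easy direction is correct: restricting the $G$-invariant inner product to $V^I$ and checking $\langle\pi(f)v,w\rangle=\langle v,\pi(f^*)w\rangle$ by unimodularity is exactly right, as is the reduction, via Borel's equivalence and uniqueness of invariant Hermitian forms on irreducibles, of the whole theorem to a positivity statement.

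The converse, however, has a genuine gap. First, the input you invoke --- that every irreducible unitary $\mathcal H(G,I)$-module lies in a connected complementary-series deformation family emanating from a unitarily induced tempered module --- is false in general: the unitary dual of an affine Hecke algebra (equivalently, of the Iwahori-spherical part of $\widehat G$) contains isolated points; e.g.\ for higher-rank groups the trivial representation is isolated (property (T)), and more generally the unipotent representations at the heart of this appendix are typically not endpoints of complementary series. Second, and more seriously, even after deforming the given Hermitian representation to a tempered endpoint, the step you defer as a ``compatibility'' of Borel's functor is precisely the deep content of Barbasch--Moy: one must show that the signature of the invariant form on \emph{every} $K$-type of $V$ is controlled by its restriction to $V^I$ (equivalently, that degeneration/reducibility of the deformed family, where signatures can jump, is detected already on the Iwahori-fixed vectors). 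Transporting a positive form along ``matched deformation families'' presupposes exactly this, so your argument is circular at the crucial point; this is where Barbasch--Moy's hard harmonic-analytic work (and their original restriction on the group, later removed) enters, and no amount of formal manipulation of the categorical equivalence substitutes for it.
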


The algebra $\mathcal H(G,I)$ contains the finite Hecke algebra $\mathcal H(K,I)$ of functions whose support is in $K$. Under the functor $\eta$, $K$-spherical representations of $G$ correspond to spherical $\mathcal H(G,I)$-modules, i.e., modules whose restriction to $\mathcal H(K,I)$ contains the trivial representation of $\mathcal H(K,I)$.

The classification of simple $\mathcal H(G,I)$-modules is given by Kazhdan-Lusztig \cite{KL}.

\begin{thm}[\!\!\cite{KL}] The simple $\mathcal H(G,I)$-modules are parameterized by $G_{\mathbb C}^\vee$-conjugacy classes of triples $(s^\vee,e^\vee,\psi^\vee)$, where:
\begin{enumerate}
\item[(i)] $s^\vee\in G_{\mathbb C}^\vee$ is semisimple;
\item[(ii)] $e^\vee\in \mathcal N^\vee$ such that $Ad(s) e=qe$;
\item[(iii)] $\psi^\vee$ is an irreducible representation of Springer type of the group of components of the mutual centralizer $Z_{G_{\mathbb C}^\vee}(s^\vee,e^\vee)$ of $s^\vee$ and $e^\vee$ in $G_{\mathbb C}^\vee.$
\end{enumerate}
\end{thm}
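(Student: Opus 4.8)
The plan is to follow the geometric argument of Kazhdan--Lusztig \cite{KL}, realizing $\mathcal H(G,I)$ as an equivariant $K$-theory convolution algebra and reading its simple modules off the geometry of the Springer resolution. First I would invoke the Bernstein presentation to identify $\mathcal H(G,I)$ (recall $G$ is split and of adjoint type) with the extended affine Hecke algebra attached to the root datum of $G_{\mathbb C}^\vee$, and then with $K^{G_{\mathbb C}^\vee\times\mathbb C^\times}(Z)$, where $Z=\widetilde{\mathcal N}^\vee\times_{\mathcal N^\vee}\widetilde{\mathcal N}^\vee$ is the Steinberg variety built from the Springer resolution $\widetilde{\mathcal N}^\vee\to\mathcal N^\vee$; here the $\mathbb C^\times$-factor carries the Hecke parameter $q$, and the algebra structure on $K$-theory is given by convolution. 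Checking that this identification is an algebra isomorphism, i.e.\ that the natural $K$-theoretic generators satisfy the affine braid and quadratic relations, is one of the technical cores of the argument.

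Next I would fix a semisimple $s^\vee\in G_{\mathbb C}^\vee$ together with the value $q$ and specialize the equivariant $K$-group at the corresponding point of the spectrum of the representation ring $R(G_{\mathbb C}^\vee\times\mathbb C^\times)$. By the localization theorem the specialized algebra and its modules are controlled by the fixed-point locus, which is governed by the variety $\mathcal B^{s^\vee,e^\vee}$ of Borel subalgebras of $\mathfrak g_{\mathbb C}^\vee$ containing a nilpotent $e^\vee$ with $\operatorname{Ad}(s^\vee)e^\vee=qe^\vee$. Its Borel--Moore homology (equivalently the associated $K$-group) carries a module structure, giving the \emph{standard modules} $M(s^\vee,e^\vee)$. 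Using Ginzburg's analysis of $Z$ --- essentially the decomposition theorem for the Springer-type map --- each standard module acquires a canonical filtration whose subquotients are indexed by the irreducible local systems occurring on the relevant orbits, and its unique irreducible quotient is the simple module; the local system attached to this quotient is exactly an irreducible representation $\psi^\vee$ of the component group $A(s^\vee,e^\vee)=Z_{G_{\mathbb C}^\vee}(s^\vee,e^\vee)/Z_{G_{\mathbb C}^\vee}(s^\vee,e^\vee)^\circ$ which is of \emph{Springer type}, i.e.\ which actually appears in the Springer correspondence for the reductive group $Z_{G_{\mathbb C}^\vee}(s^\vee)$.

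It then remains to establish the two halves of the bijection. For surjectivity I would check that the modules $M(s^\vee,e^\vee,\psi^\vee)$ exhaust a full set of simple modules of the specialized algebras as $(s^\vee,q)$ ranges over the relevant points, which follows from the ``geometric'' nature of the convolution algebra. For injectivity --- distinct triples, up to $G_{\mathbb C}^\vee$-conjugacy, yielding non-isomorphic simples --- the cleanest route is to reduce to the graded affine Hecke algebra in the manner of Lusztig, pass to the associated graded, and there count that the simples are in bijection with the pairs $(e^\vee,\psi^\vee)$ of Springer type for each fixed $s^\vee$; a dimension/multiplicity comparison then closes the argument. Transporting the resulting classification of $\mathcal H(G,I)$-modules back to smooth irreducible $G$-representations with nonzero Iwahori-fixed vectors is precisely the Borel equivalence \cite{borel} quoted above, so no further work is needed there.

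The hard part will be the two geometric inputs: the identification of $\mathcal H(G,I)$ with the Steinberg $K$-theory convolution algebra (which needs the Bernstein presentation, the construction of Demazure-type operators on $K$-theory, and verification of the relations), and the control of the possibly disconnected centralizers $A(s^\vee,e^\vee)$ together with the Springer-type condition, which for non-connected component groups demands careful Clifford-theoretic bookkeeping of which $\psi^\vee$ survive. For the exceptional groups $E_{d+1}$ relevant to this paper the component groups are small and all of this is explicit, but in general this combinatorial ingredient is the most delicate point.
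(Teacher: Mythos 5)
The paper does not actually prove this statement: it is quoted from Kazhdan--Lusztig \cite{KL} as an external input to the appendix, so there is no internal proof to compare yours against. Your sketch is, in substance, a faithful outline of the cited geometric argument (in its Kazhdan--Lusztig/Chriss--Ginzburg form): identify the affine Hecke algebra of the dual root datum with the convolution algebra $K^{G_{\mathbb C}^\vee\times\mathbb C^\times}(Z)$ on the Steinberg variety of $G_{\mathbb C}^\vee$, specialize/localize at a semisimple pair $(s^\vee,q)$, build standard modules on the Borel--Moore homology (or $K$-theory) of the fixed-point varieties $\mathcal B^{s^\vee,e^\vee}$, and extract simple quotients labelled by irreducible representations of the component group $A(s^\vee,e^\vee)$ of Springer type, with Borel's equivalence \cite{borel} transporting the result back to $G$-representations with Iwahori-fixed vectors.

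Two cautions if you were to flesh the sketch out. First, the classification requires $q$ not to be a root of unity (as assumed in \cite{KL}); neither your sketch nor the quoted statement records this, and it is where the localization argument genuinely uses the hypothesis. Second, the ``Springer type'' condition is intrinsically the requirement that $\psi^\vee$ occur in the homology of $\mathcal B^{s^\vee,e^\vee}$; your paraphrase via the Springer correspondence for $Z_{G_{\mathbb C}^\vee}(s^\vee)$ is only heuristic, since $\operatorname{Ad}(s^\vee)e^\vee=qe^\vee$ places $e^\vee$ in a $q$-eigenspace of $\operatorname{Ad}(s^\vee)$ rather than in the Lie algebra of $Z_{G_{\mathbb C}^\vee}(s^\vee)$, and making this precise (together with the injectivity of the parametrization) is the technical heart of \cite{KL}; the reduction to graded affine Hecke algebras that you invoke for injectivity is Lusztig's later refinement rather than the route of \cite{KL} itself. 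Neither point invalidates your outline; they simply mark where the real work of the cited paper lies.
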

Let $\pi(s^\vee,e^\vee,\psi^\vee)$ denote the simple $\mathcal H(G,I)$-module parametrized by $[(s^\vee,e^\vee,\psi^\vee)]$.

\begin{example}\label{ex:kl}
In the Kazhdan-Lusztig parametrization, the simple spherical $\mathcal
H(G,I)$-modules correspond to the classes of triples
$[(s^\vee,0,1)]$. Here $s^\vee$ is the Satake parameter of the
corresponding irreducible spherical $G$-representation. On the other
hand, let $\mathcal O^\vee$ be a fixed $G_{\mathbb C}^\vee$-orbit in
$\mathcal N^\vee$, and set $s^\vee_{\mathcal
  O^\vee}=q^{\lambda_0(\mathcal O^\vee)}$ where $\lambda_0(\mathcal O^\vee)$
is any choice of representative of the element in \eqref{e:lam}.
If $e^\vee_0$ belongs to the
unique open dense orbit of $Z_{G_{\mathbb C}^\vee}(s^\vee)$ on
$\mathfrak g_q^\vee=\{x\in \mathfrak g_q^\vee: Ad(s^\vee)x=qx\}$ (in
particular $e^\vee_0\in \mathcal O^\vee)$, then the simple $\mathcal
H(G,I)$-module (and the corresponding irreducible $G$-representation)
parametrized by $[(s^\vee_{\mathcal O^\vee},e^\vee_0,\psi^\vee)]$ is
tempered.
\end{example}

The Iwahori-Hecke algebra has an algebra involution $\tau$, called the Iwahori-Matsumoto involution, defined on the generators as in \cite{IM}. It induces an involution on the set of simple $\mathcal H(G,I)$-modules, which is easily seen to map unitary modules to unitary modules. The effect of $\tau$ on the set of Kazhdan-Lusztig parameters is given by a Fourier transform of perverse sheaves \cite{EM}, and therefore it is hard to compute effectively in general, except in type $A$ \cite{MW}. (For a general algorithm, see \cite{lusztig}.) However, it is easy to see that if $\pi(s^\vee_{\mathcal O^\vee},0,1)$ is a simple spherical $\mathcal H(G,I)$-module, then
\begin{equation}
\tau(\pi(s^\vee_{\mathcal O^\vee},0,1))=\pi(s^\vee_{\mathcal O^\vee},e^\vee_0,1),
\end{equation}
where the notation is as in Example \ref{ex:kl}. In particular, $\pi(s^\vee_{\mathcal O^\vee},0,1)$ is unitary. Together with Theorem \ref{t:bm}, this gives the following corollary (cf.~Conjecture \ref{conj:su}).

\begin{cor}
If $\pi$ is an irreducible spherical $G$-representation with Satake parameter $s^\vee_{\mathcal O^\vee}\in G_{\mathbb C}^\vee$, then $\pi$ is unitary.
\end{cor}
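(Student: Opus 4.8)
The plan is to transfer the statement from smooth representations of $G$ to modules over the Iwahori--Hecke algebra $\mathcal H(G,I)$, and to exploit the Iwahori--Matsumoto involution as a bridge between the spherical module attached to $\pi$ and a tempered one. First I would invoke Borel's equivalence of categories (\cite{borel}): the functor $V\mapsto V^I$ identifies irreducible smooth admissible $G$-representations with nonzero Iwahori-fixed vectors with simple $\mathcal H(G,I)$-modules, and under it $K$-spherical representations correspond to spherical $\mathcal H(G,I)$-modules. Combined with Theorem~\ref{t:bm} of Barbasch--Moy, it then suffices to show that the simple spherical $\mathcal H(G,I)$-module attached to $\pi$ is unitary as a module over the Hilbert algebra $(\mathcal H(G,I), f\mapsto f(1), f\mapsto f^*)$. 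By the Kazhdan--Lusztig classification (\cite{KL}) this module is $\pi(s^\vee_{\mathcal O^\vee},0,1)$, with trivial nilpotent part and trivial local system, its semisimple part being the Satake parameter $s^\vee_{\mathcal O^\vee}=q^{\lambda_0(\mathcal O^\vee)}$.

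Next I would bring in the Iwahori--Matsumoto involution $\tau$ of $\mathcal H(G,I)$, defined on generators as in \cite{IM}. Two properties are needed. First, $\tau$ is an algebra involution compatible with the $*$-operation $f^*(g)=\overline{f(g^{-1})}$, so pulling a unitary module back along $\tau$ again yields a unitary module; this is routine once one records the sign conventions under which $\tau$ is a $*$-automorphism. Second, on Kazhdan--Lusztig parameters $\tau$ acts by a Fourier transform of perverse sheaves (\cite{EM}), which in the case at hand collapses to the clean identity $\tau(\pi(s^\vee_{\mathcal O^\vee},0,1))=\pi(s^\vee_{\mathcal O^\vee},e^\vee_0,1)$, where $e^\vee_0$ lies in the open dense orbit of $Z_{G_{\mathbb C}^\vee}(s^\vee_{\mathcal O^\vee})$ on $\mathfrak g_q^\vee$. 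By Example~\ref{ex:kl} the module $\pi(s^\vee_{\mathcal O^\vee},e^\vee_0,1)$ --- equivalently, the $G$-representation it parametrizes --- is tempered, hence unitary. Applying $\tau$ once more returns $\pi(s^\vee_{\mathcal O^\vee},0,1)$, which is therefore unitary; feeding this back through Theorem~\ref{t:bm} and Borel's equivalence gives that $\pi$ is unitary, completing the argument.

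I expect the main obstacle to be the verification of the second property above, i.e.\ pinning down the effect of $\tau$ on the spherical Kazhdan--Lusztig parameter $(s^\vee_{\mathcal O^\vee},0,1)$. In general this Fourier-transform computation is intractable outside type $A$ (\cite{MW}) or via the general algorithm of \cite{lusztig}, but here one needs only that $\tau$ preserves the semisimple part $s^\vee_{\mathcal O^\vee}$ and interchanges, within the block of infinitesimal character $\lambda_0(\mathcal O^\vee)$, the fully induced (spherical) constituent with the tempered constituent whose nilpotent part $e^\vee_0$ is generic for $Z_{G_{\mathbb C}^\vee}(s^\vee_{\mathcal O^\vee})$ acting on $\mathfrak g_q^\vee$. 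This can be checked directly from the definition of $\tau$ on generators together with the geometric description of the parameter set, and it is exactly the place where the hypothesis that $s^\vee_{\mathcal O^\vee}=q^{\lambda_0(\mathcal O^\vee)}$ arises as the Jacobson--Morozov semisimple element of $\mathcal O^\vee$ (so that $e^\vee_0\in\mathcal O^\vee$) is used.
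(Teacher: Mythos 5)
Your proposal is correct and follows essentially the same route as the paper's own argument: pass to the Iwahori--Hecke algebra via Borel's equivalence and Theorem~\ref{t:bm}, identify the spherical module as $\pi(s^\vee_{\mathcal O^\vee},0,1)$ in the Kazhdan--Lusztig classification, use the Iwahori--Matsumoto involution (which preserves unitarity) to exchange it with the tempered module $\pi(s^\vee_{\mathcal O^\vee},e^\vee_0,1)$ of Example~\ref{ex:kl}, and transfer unitarity back. Your closing remark about where the Jacobson--Morozov form of the Satake parameter enters matches the paper's observation that this special case of the Fourier-transform computation is easy, so no gap remains.
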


\section{Supersymmetry and instantons}
\label{susyinst}

The constraints of maximal supersymmetry are efficiently described by starting with the superalgebra generated by the
32-component Majorana spinor supercharge, $Q_\alpha = \int J^0_\alpha d^{10}x$, where $J^I_\alpha$ is the supercurrent (with spinor index $\alpha,\beta =1,\dots ,32$ and vector index $I=0,1, \dots, 10$).
This satisfies the anti-commutation relations,
 \be
\{Q_\alpha\,,  Q_\beta\} = P_{I_1} \left(\Gamma^0\Gamma^{I_1}\right)_{\alpha\beta} +
 Z_{\alpha\beta}
 \label{node2supalgebra}
 \ee
 where the central charge is
 \be
Z_{\alpha\beta}= Z_{I_1I_2}\, \left(\Gamma^0 \Gamma^{I_1
I_2}\right)_{\alpha\beta} + Z_{I_1\cdots I_5}\, \left(\Gamma^0\Gamma^{I_1\cdots I_5} \right)_{\alpha\beta}  \,,
  \label{centralcharge}
 \ee
 where $\Gamma^{I}_{\alpha\beta}$  are $SO(1,10)$  Dirac
 matrices\footnote{$\Gamma^{I_1\cdots I_r}_{\alpha\beta}$ is the
   antisymmetrized product  of $r$ Gamma matrices normalised so that  $\Gamma^{1\cdots r} = \Gamma^1 \cdots \Gamma^r$.}
and $P_{I}$  is the eleven-dimensional translation operator.

  \subsection{BPS particle states}
Positivity of the anticommutator in  \eqref{node2supalgebra}  leads to
the Bogomol'nyi bound that restricts the masses of states to be larger
than or equal to the central charge.  States saturating the bound are
BPS states that form supermultiplets, the  lengths of which depend on
the fraction of supersymmetry broken by their presence.    The
shortest multiplets are $\smallf 12$-BPS, with longer multiplets for
smaller fractions.   We refer, for instance, to~\cite{Duff:1994an,Polchinski:1996na,Polchinski:1998rr} for extensive discussions of the properties of supersymmetric branes in string theory.

 The presence of the 2-form component of the central charge indicates
 that the theory contains a membrane-like state (the $M2$-brane)
 carrying a conserved charge $Q^{(2)}$, while the 5-form component
 indicates the presence of a 5-brane state (the $M5$-brane) carrying a
 charge $Q^{(5)}$.  The 2-form and 5--form in \eqref{node2supalgebra}
 are given  by integration of  the spatial directions of the $M2$ and
 $M5$ branes over 2-cycles $A_{I_1I_2}$ or 5-cycles $A_{I_1\cdots I_5}$,
\be
Z_{I_1I_2} = Q^{(2)} \int_{A_{I_1I_2}}  \, d^2X\,, \quad
Z_{I_1\cdots I_5} = Q^{(5)}  \int_{A_{I_1\cdots I_5}} \,d^5X\,.
\label{twoformcharge}
\ee
The $M2$ and $M5$-branes are $\smallf 12$-BPS states that preserve 16 of the 32  components of supersymmetry.
The 2-form charge couples to a 3-form potential ($C^{(3)}_{I_1I_2I_3}$), with field strength $H^{(4)}=d C^{(3)}$.
This is analogous to the manner in which the Maxwell 1-form potential couples to a point-like electric charge (a $0$-brane),  and $H^{(4)}$  is the analogue of the Maxwell field.
The analogue of the dual Maxwell-field is a 7-form field-strength, which is required by consistency with supersymmetry to take the form
that $H^{(7)}=d C^{(6)} + C^{(3)} \wedge d C^{(3)}$,  where $C^{(6)}$ is the 6-form potential that couples to the $M5$-brane.
 In other words, the $M5$-brane couples to the magnetic charge that is dual to the electric charge carried by the $M2$-brane.
 The BPS condition implies that the charge on the brane is equal to its tension, $T^{(r)}$,
 \be
 Q^{(r)} = T^{(r)}\,.
 \label{tencharge}
 \ee

The integrals in \eqref{twoformcharge} are well-defined when all the
spatial directions of the branes are wound around the compact cycles
of the M-theory  torus, $\calT^{d+1}$, in which case the state is
point-like from the point of view of the $D=10-d$ non-compact
dimensions (so there are finite-mass point-like states due to wrapped
$M2$-branes when $d\ge 1$ as well as wrapped $M5$-branes when $d\ge
4$).\footnote{There is a huge literature of far more elaborate windings
  of such branes around supersymmetric cycles in curved manifolds, in
  which case a fraction of the supersymmetry may or may not be
  preserved.}   Other kinds of  $\smallf 12$-BPS states also arise in the toroidal background, such as point-like Kaluza--Klein ($KK$) charges, which are modes of the metric that contribute for any $d\ge 0$.  The magnetic dual of a $KK$ state is a $KKM$, which is described by a Taub-NUT geometry in four spatial dimensions, leaving six more spatial dimensions that are interpreted as the directions on a six-brane.  This has a finite mass when wrapped around $\calT^6$, so it can arise when $d\ge 5$.

The complete spectrum of BPS states in an arbitrary toroidal
compactification of type IIA or IIB  string theory can be deduced by
considering the toroidal compactification of the M-theory algebra
\eqref{node2supalgebra}  with appropriate rescalings of the moduli
\cite{Townsend:1997wg}.   Combining completely wrapped branes in
various combinations leads to point-like $\smallf 12$-, $\smallf 14$- and $\smallf 18$-BPS states that  are of importance in discussing the spectrum of black holes in string theory
\cite{Strominger:1996sh,Callan:1996dv}.  This spectrum is of significance in classifying the orbits of instantons that decompactify to black hole states in one higher dimension associated with the parabolic subgroup $P_{\alpha_{d+1}}$.
This will be sketched in the next subsection where we will make contact with the discussion of black hole orbits in~\cite{Ferrara:1997ci,Ferrara:1997uz,Lu:1997bg}.

\section{Orbits of BPS instantons in the decompactification limit}
\label{orbit1appendix}

 A finite action instanton in $D=10-d$ dimensions
 corresponds to an embedded euclidean world-volume that can be one of three types:
 \begin{itemize}
 \item[(a)] It has an action that does not depend on $r_d$ as $r_d
 \to \infty$ and so is also an instanton of  the $(D+1)$-dimensional
 theory -- this contributes only to the constant term in this
 parabolic and does not appear in non-zero Fourier modes;
 \item[(b)] It is  a
 euclidean world-line of  a $(D+1)$-dimensional point-like BPS black
 hole with mass $M_{BH}$, which gives a term suppressed by a factor of
 $e^{-2\pi\, r_d\, M_{BH}}$  in the amplitude in the limit
 $r_d/\ell_{D+1}\to \infty$;
 \item[(c)] It has an action that grows faster
 than $r_d/\ell_{D+1}$ so it does not decompactify to give either a
 particle state or an instanton in $D+1$ dimensions.
\end{itemize}

Thus,  the instantons of type (b) or (c) are the ones that contribute to the character variety orbits in limit (i), which is associated with the parabolic subgroup that has Levi factor  $GL(1) \times E_d(\IR)$ in $D=10-d$ dimensions,  where  the duality group is $E_{d+1}(\ZZ)$.

 In order to illustrate this pattern the following subsections summarise
 the spectrum of  $r_d$-dependent instantons (i.e.,  type (b) or (c)) in each dimension in
 the range $3\le D \le 10$ (i.e.,  $0\le d\le 7$).  Their orbits and the conditions on the charges corresponding to fractional BPS conditions are summarised in   table~\reftab{tab:bpsorbits1}.  Where appropriate we will also comment on the distinction between BPS states in dimension $D+1$ and BPS instantons in dimension $D$.


\subsection{BPS orbits in $D=10$}\label{sec:D10}\hfill\break
This degenerate case includes both $10A$ and $10B$.  Although the $10A$  theory does have a decompactification limit to 11-dimensional M-theory, it has no instantons and there is no duality symmetry group.  There are $\smallf 12$-BPS particle states in $10A$  consisting  of threshold bound states of  $D0$-branes that are manifested as instantons in the $D=9$ theory (as we will sketch in the next subsection).
There is no decompactification limit for the $10B$ theory.  In this case there are no BPS particle states but there is a  $\smallf 12$-BPS $D$-instanton, multiples of which only contribute to
amplitudes in the string perturbation limit.
There are no $\smallf 14$-BPS particle states in either $10A$ or $10B$.

\subsection{BPS instanton orbits in $D=9$}\label{sec:D9}\hfill\break
 This case may be obtained by considering M-theory on a
  2-torus, $\calT^2$, where the discrete  duality group  $SL(2,\ZZ)$ is  identified with the group of
  large diffeomorphisms of $\calT^2$.

   There  is a single type of
 BPS  instanton  that can be identified with the wrapping of  the euclidean world-line of a Kaluza--Klein state formed on one cycle around the second cycle of the 2-torus; in this sense we will refer in the following to a euclidean Kaluza--Klein state wrapping a 2-cycle on $\calT^2$.  Equivalently, this instanton can be described as a wrapped euclidean world-line of a $D0$-brane of the $10A$ string theory, which is the parameterisation manifested in  \eqref{tenawrap}.
 In this case the unipotent radical consists of $2\times 2$ upper triangular matrices with $1$'s on the diagonal, and so the one-dimensional $\smallf  12$-BPS orbit is simply
 \begin{eqnarray}
\mathcal O_{{\bf1}} = GL(1)\,.
\label{glone}
\end{eqnarray}

\subsection{BPS instanton orbits in $D=8$}\label{sec:D8}\hfill\break

 This case may be obtained by considering M-theory on a
  3-torus, $\calT^3$, where the discrete  duality group is $SL(3,\ZZ)\times
  SL(2,\ZZ)$.

There  is  one type of instanton charge from wrapping  the world-volume of the $M2$-brane around the whole of $\calT^3$.  In addition there are two types of instanton charges from  Kaluza--Klein states wrapping the 2-cycles that depend on the decompactification radius $r_2$ (a third  Kaluza--Klein state wraps the two-cycle that does not depend on $r_2$).  This  gives a total of 3 types of BPS instanton charges of type (b), which are parameterised in the same manner as the BPS particle states in $D=9$ dimensions  by a scalar $v$  and a $SL(2)$ vector $v_a$. The charges of the $\smallf 12$-BPS  states  are given  by  the  condition $v  \,
v_a=0$ and the $\smallf 14$-BPS states by  $v\, v_a\neq0$.

The $\smallf 12$-BPS instantons are those for which $vv_a=0$ \cite{Ferrara:1997ci}, giving the union of the orbits
  \begin{eqnarray}
\mathcal O_{{\bf1}} &=& GL(1)
\label{vzero}
\end{eqnarray}
 for $v_a=0$ and
  \begin{eqnarray}
\mathcal O_{{\bf 2}}&=& \f{SL(2)}{\IR}
\label{vzerova}
\end{eqnarray}
for $v=0$, arising from dense open orbits in each of the two
  factors of the duality group $SL(2)\times SL(3)$.
 The bold face subscript, in this example and in the following,  gives  the dimensions of the coset,
$\dim (\f{G_1}{G_2})=\dim(G_1)-\dim(G_2)$.
The $\smallf  14$-BPS instantons have charges satisfying  $vv_a\ne0$, giving the orbit
   \begin{eqnarray}
\mathcal O_{\bf 3}&=& {GL(1)\times SL(2)\over \mathbb R}   \,.
\label{nonvzerova}
\end{eqnarray}

\subsection{BPS instanton orbits in $D=7$}\label{sec:D7}\hfill\break
Consider M-theory on a 4-torus, $\calT^4$, with duality  group  $SL(5,\ZZ)$.

 There are  4 BPS types of instanton from euclidean  $M2$-branes wrapping 3-cycles, of which 3 depend on the decompactification radius $r_3$,
  and 6 types of instanton from the Kaluza--Klein states wrapping
  2-cycles, of which three depend on $r_3$.  This gives  a total of 10 types of BPS instanton charge, of which 6 depend on the decompactification radius $r_3$ and are of type (b).
These instantons carry charges associated with the corresponding BPS states in $D=8$ dimensions that
may be  parametrized by $v_{i\,  a}$ transforming  in the  ${\bf 3}\times
{\bf 2}$ of $SL(3)\times SL(2)$. The $\smallf 12$-BPS states are given by the
condition $\epsilon^{ab}   \,  v_{i\,a}  v_{j\,
  b}=0$~\cite{Ferrara:1997ci}   and the $\smallf 14$-BPS states by $\epsilon^{ab}   \,  v_{i\,a}  v_{j\,
  b}\neq0$.
 This determines  two BPS instanton orbits given in~\cite{Lu:1997bg} by
\begin{eqnarray}
  \label{e:E3nil}
  \tfrac12-BPS\quad     &:&\quad    \mathcal{O}_{\bf  4  }=\frac{SL(3,\IR)\times
  SL(2,\IR)}{GL( 2,\IR)\ltimes \IR^3}\,,\\
\tfrac14-BPS \quad &:&\quad \mathcal{O}_{\bf 6}=\frac{SL(3,\IR)\times
  SL(2,\IR)}{SL(2,\IR)\ltimes \IR^2}\,.
\end{eqnarray}

\subsection{BPS instanton orbits in $D=6$}\label{sec:D6}\hfill\break
Consider M-theory on a 5-torus, $\calT^5$, with duality group $Spin(5,5,\ZZ)$.

There are  10 ways of wrapping the  $M2$-brane  world-volume around 3-cycles, of which 6 depend on the decompactification radius $r_4$, and 10 ways of wrapping euclidean Kaluza--Klein states on 2-cycles, of which 4 depend on $r_4$.  This gives a total of  20 BPS instanton types of charge, of which 10 depend on $r_4$ (and so are of  type (b)).
These charges correspond to the charges of BPS states in $D=8$ dimensions and may be
parametrized by the rank-2 antisymmetric tensor $v_{[ij]}$ ($i,j = 1,\dots, 5$) that transforms in  the {\bf 10} of  $SL(5)$. The $\smallf 12$-BPS
states    are    given  in~\cite{Ferrara:1997ci}  by    the    condition
$\epsilon^{ijklm}\, v_{ij}\, v_{kl}=0$ and  the $\smallf 14$-BPS
by $\epsilon^{ijklm}\, v_{ij}\, v_{kl}\neq0$.
This determines two BPS instanton orbits given in~\cite{Lu:1997bg} by
\begin{eqnarray}
  \label{e:E4nila}
  \tfrac12-BPS\quad         &:&\quad         \mathcal{O}_{\bf        7
  }=\frac{SL(5,\IR)}{(SL(3,\IR)\times SL(2,\IR))\ltimes \IR^6}\,,\\
  \label{e:E4nilb}
\tfrac14-BPS \quad &:&\quad \mathcal{O}_{\bf 10}=\frac{SL(5,\IR)}{Spin(2,3)\ltimes \IR^4}\,.
\end{eqnarray}

\subsection{BPS instanton orbits in $D=5$}\label{sec:D5}\hfill\break
Consider  M-theory on a 6-torus, $\calT^6$,  with duality group $E_6(\ZZ)$.

There are  20 types of instanton  from the $M2$-brane world-volume wrapping 3-cycles, of which 10 depend on the decompactification radius, $r_5$;
15  types from Kaluza--Klein states wrapping
  2-cycles, of which 5 depend on $r_5$; 1 type of instanton from the world-volume of the  $M5$-brane world-volume wrapping
  the whole of $\calT^6$.  This gives a total of 36 BPS instanton charges, of which 16 depend on $r_b$ and are of type (b).

These  16 BPS  charges are parameterised by a chiral spinor $S^\alpha$ ($\alpha=1,\dots, 16)$ of
$Spin(5,5)$.   Such a spinor satisfies the identity   $\sum_{m=1}^{10}
(S\Gamma^mS) \times\break
(S\Gamma^mS) = 0$, where $\Gamma^m$ ($m=1,\dots,10)$ are Dirac matrices with suppressed spinor indices. The configurations  are $\smallf 12$-BPS  if $S$ satisfies the pure spinor condition,  $S\Gamma^mS=0$~\cite{Ferrara:1997ci}.  A standard way to analyse this condition is to  decompose $S$ into $U(5)$ representations,   ${\bf 16}={\bf 1}_{5}\oplus {\bf \bar 5}_{-3}\oplus{-\bf
 10}_1$ (where the subscripts denote the $U(1)$ charges), so it has components
\begin{equation}
  S= (s, v_a,v^{ab}), \qquad a,b=1,\dots, 5\,.
\end{equation}
The pure spinor
($\smallf 12$-BPS) condition, $S\Gamma^mS=0$ is
$v_a={s^{-1}\over  5!}\epsilon_{abcde} \, v^{bc}v^{de}$, which implies that the $\bf 5$ is not independent of the other $U(5)$ representations, so  the space of such spinors has  dimension   11.
     The    $\smallf 14$-BPS    solution  is the unconstrained spinor space (excluding $S\Gamma^mS=0$) and   has    dimension
 16.
 There are two BPS orbits given in~\cite{Lu:1997bg} by
\begin{eqnarray}
  \label{e:E5nila}
  \tfrac12-BPS\quad         &:&\quad         \mathcal{O}_{\bf      11
  }=\frac{Spin(5,5,\IR)}{SL(5,\IR)\ltimes \IR^{10}}\,,\\
 \label{e:E5nilb}
\tfrac14-BPS \quad &:&\quad \mathcal{O}_{\bf 16}=\frac{Spin(5,5,\IR)}{Spin(3,4)\ltimes \IR^8}\,.
\end{eqnarray}

\subsection{BPS instanton orbits in $D=4$}\label{sec:D4}\hfill\break
Consider M-theory on a 7-torus, $\calT^7$, with duality group $E_{7}(\ZZ)$.

There are  35 types of instanton charge from the  $M2$-brane world-volume wrapping 3-cycles, of which 15 depend on the decompactification radius $r_6$; 21 types of instanton charge from Kaluza--Klein states wrapping
  2-cycles, of which 6 depend on $r_6$; 7 types of instanton charge  from the $M5$-brane world-volume wrapping 6-cycles, of which 6 depend on $r_6$.  This gives a  total of 63 types of  BPS instanton charge, of which 27 depend on $r_6$.

The  distinct instanton charges are parameterised by
the fundamental representation, $q^i$ ($i=1,\dots, 27$),  of $E_6$ and lead
to $\smallf 12$-,  $\smallf 14$- or $\smallf 18$-BPS configurations depending on the following
conditions on the $E_6$ cubic invariant $I_3=\sum_{1\leq i,j,k\leq 27}(I_3)_{ijk} q^i q^j q^k$~\cite{Ferrara:1997ci}
\begin{eqnarray}
\label{e:I312} \tfrac12-BPS : & I_3&=0,\qquad {\partial I_3\over \partial q^i}=0, \ {\partial^2 I_3\over \partial q^i\partial q^j}\neq0\,,\\
\label{e:I314} \tfrac14-BPS : & I_3&=0, \qquad {\partial I_3\over \partial q^i}\neq0\,,\\
\label{e:I318}  \tfrac18-BPS : & I_3&\neq0\,.
\end{eqnarray}
Clearly the first of these conditions (the $\smallf 18$-BPS condition) is of dimension  27.
The other conditions may be analysed by decomposing the {\bf  27} of $E_6$
into $SO(5,5)\times U(1)$   irreducible representations,
$
  {\bf 27}= {\bf 1}_4\oplus{\bf 10}_{-2}\oplus {\bf 16}_1
$.
This  means that   $q^i$   decomposes as
\begin{equation}
 q^i = (s, v_m, S^\alpha)\,,
\end{equation}
 where $s$  is a scalar, $v_m$ is a $SO(5,5)$   vector of dimension  10   and $S^\alpha$
is a spinor of dimension   16  (the $U(1)$ charges have been suppressed).
The  cubic   invariant  $I_3$  decomposes
as $I_3= {\bf  10}_{-2}\otimes{\bf  10}_{-2}\otimes{\bf 1}_4\oplus  {\bf
  16}_1\otimes{\bf 16}_1\otimes{\bf 10}_{-2}$  ~\cite{Ferrara:1997ci}, which implies that
\begin{equation}
  I_3= s\, v\cdot v+ (S\Gamma S)\cdot v\,,
\end{equation}
where $v\cdot v$ is the $SO(5,5)$   (norm$)^2$ of the
vector $v$, and $(S\Gamma S)\cdot  v$ is the $SO(5,5)$ scalar product
between the vector $S\Gamma^mS$ and $v^m$.

The $\smallf 14$-BPS solution reduces to the condition
\begin{equation}
     s\, v\cdot v+ (S\Gamma S)\cdot v=0\,,
\end{equation}
with  non-vanishing derivative  with  respect  to  $s$,  $v_m$  and  $S_a$. Therefore the solution is given by the   26  dimensional space
\begin{equation}
  (q^i)_{\frac14-BPS}= (-(v\cdot v)^{-1}\, (S\Gamma S)\cdot v, v_m,S^\alpha)\,.
\end{equation}
The $\smallf 12$-BPS condition implies the following conditions
\begin{eqnarray}
  v\cdot v&=&0\,,\\
  (S\Gamma^m S)+s\, v^m&=&0\,,\\
  (S\Gamma^m)_a \, v^m&=&0\,,
\end{eqnarray}
which are  solved by $v^m=S\Gamma^mS$  (using the relation $(S\Gamma^mS)(S\Gamma^mS)=0$). The $\smallf 12$-BPS solution
is therefore given by the 17-dimensional solution
\begin{eqnarray}
(q^i)_{\frac12-BPS}= (s,S\Gamma^mS,S_a)\, .
\end{eqnarray}

To summarise,  in limit (i) the BPS instanton orbits in $D=4$ are given in~\cite{Lu:1997bg} by
\begin{eqnarray}
  \label{e:E6nila}
  \tfrac12-BPS\,\quad &:& \, \mathcal{O}_{\bf 17}= \frac{E_{6}}{ Spin(5,5)\ltimes
  \IR^{16}}\,,\\
  \label{e:E6nilb}
\tfrac14-BPS \quad  &:& \, \mathcal{O}_{\bf 26}=\frac{E_{6}}{Spin(4,5)\ltimes
\IR^{16}}\,, \ \ \ \ \ \text{and}
\\
  \label{e:E6nilc}
\tfrac18-BPS      \quad        &:&       \,       \mathcal{O}_{\bf
  27}= \frac{GL(1)\times E_{6}} {F_{4(4)}}\,.
\end{eqnarray}
The charges in the  $\smallf 14$-BPS  orbit can be generated by applying $E_6(\Z)$ transformations to a 2-charge instanton corresponding to  a null
vector in the 27 dimensional BPS state space.  The charges in the $\smallf 18$-BPS orbit  can be generated from a 3-charge instanton corresponding to space-like or time-like
vectors  with  $I_3\neq0$  in  the  27  dimensional  BPS  state
space (note that, unlike~\cite{Ferrara:1997uz} we have included  the
scale factor $GL(1)$ in the definition of the orbit, which is of dimension  27).
The last orbit of dimension  27  is the $\smallf 18$-BPS
  orbit of black hole states with $I_3\neq0$, and entropy
  proportional to $\sqrt{|I_3|}$.

\subsection{BPS instanton orbits in $D=3$}\label{sec:D3}\hfill\break
Consider M-theory on an 8-torus $\calT^8$ with duality group $E_8(\ZZ)$.

There are  56 types of instanton charge from  $M2$-brane world-volumes  wrapping 3-cycles, of which 21 depend on the decompactification radius, $r_7$;
28 types of instanton charge  from  Kaluza--Klein states wrapping 2-cycles, of which 7 depend on $r_7$;   28 types of instanton charge from $M5$-branes wrapping  6-cycles, of which 21 depend on $r_7$.
In addition there are 8 types of instantons that depend on $r_7$ due to $KKM$ world-volumes wrapping 8-cycles, which are distinguished by labelling which cycle corresponds to $x^\#$  (the fibre coordinate in \eqref{kkmonopole}). This gives  a total of 120  types of instanton charges, of
 which 57 depend on $r_7$.

The connection with the black hole states in $D=4$ dimensions is slightly subtle.  For one of the 8 $KKM$ instantons $x^\#$ is identified with the euclidean time dimension and gives  a vanishing contribution upon decompactification to $D=4$ dimensions (the large-$r_7$ limit), as discussed following \eqref{e:HalfOrbit}.  It is therefore of type (c) and  does not correspond to a black hole state in $D=4$ dimensions.
This accounts for the nonabelian, Heisenberg, entry in the unipotent radical for the parabolic subgroup, $GL(1)\times E_7$, of $E_8$.   The nonzero Fourier modes in limit (i) correspond to the 56 abelian components of the unipotent radical which  match the charges of BPS states in $D=4$.  These are in the fundamental representation, $q^i$ ($i=1,\dots, 56)$, of $E_7$.  The   $\smallf 12$-,   $\smallf 14$-   and   $\smallf 18$-BPS configurations  are  classified by  the  following  conditions on  the
quartic symmetric polynomial invariant $I_4$~\cite{Ferrara:1997ci,Kallosh:1996uy}
\begin{eqnarray}
\label{e:12}\tfrac12-BPS : &   I_4\ =&{\partial I_4\over \partial q^i}=\left.{\partial^2   I_4\over   \partial
    q^i\partial q^j}\right|_{Adj_{E_7}}=0,\quad {\partial^3   I_4\over   \partial
    q^i\partial q^j\partial q^k}\neq0\,,\\
\label{e:14}\tfrac14-BPS : & I_4\ =&0,\quad {\partial I_4\over \partial q^i}=0, \quad
\left.{\partial^2 I_4\over \partial q^i\partial q^j}\right|_{Adj_{E_7}}\neq0\,,\\
\label{e:18b}\tfrac18-BPS : & I_4\ =&0, \quad {\partial I_4\over \partial q^i}\neq0\,,\\
\label{e:18a}  \tfrac18-BPS : & I_4\ >& 0\,.
\end{eqnarray}
The following is a summary of the BPS orbits~\cite{Lu:1997bg,Ferrara:1997uz,Ferrara:1997ci}
\begin{eqnarray}
  \tfrac12-BPS\quad &:& \, \mathcal{O}_{\bf 28}= \frac{E_{7}}{E_{6(6)}\ltimes
  \IR^{27}}\,,    \label{e:E7nila}\\
\tfrac14-BPS\quad  &:& \,  \mathcal{O}_{\bf 45}= \frac{E_{7}}{Spin(5,6)\ltimes
(\IR^{32}\ltimes\IR)}   \label{e:E7nilb}\,,\\
\tfrac18-BPS \quad &:& \, \mathcal{O}_{\bf 55}= \frac{E_{7}}{F_{4(4)}\ltimes
\IR^{26}} \label{e:E7nilc}\,,\\
\tfrac18-BPS   \quad   &:&  \,   \mathcal{O}_{\bf
  56}= \frac{\IR^+\times E_{7}}{E_{6(2)}} \,.\label{e:E7nild}
\end{eqnarray}
The $\smallf 12$-BPS orbit  can be obtained by acting on a single
charge, the $\smallf 14$-BPS orbit can be obtained by acting on a
2-charge system, and the first $\smallf 18$-BPS (with dimension 55) has
zero entropy and can be obtained by acting on a 3-charge system. The
last orbit of dimension   56  is the $\smallf 18$-BPS orbit of black hole states with  $I_4>0$, which have  entropy proportional to $\sqrt{I_4}$; it can be obtained by acting on a  4-charge system  in  the  {\bf  56}
representation of $E_7$ as detailed in~\cite{Lu:1997bg}. We have  included the overall scale factor in
the definition of the orbit.
Another orbit of  dimension
$56$ is  $(\IR^-\times E_{7})/E_{6(2)}$ that has $I_4<0$ and does not correspond to a
 BPS solution at all~\cite{Ferrara:1997uz,Ferrara:1997ci}. All these charge orbits
 can  be   understood  in terms of the  superpositions of  branes   at  angles  and   constructed  from
 combinations of $(D0,D2,D4,D6)$~\cite{Balasubramanian:2006gi}.

Note the presence of the 33-dimensional nonabelian group in the stabilizer of ${\mathcal O}_{\bf 45}$.  It is a Heisenberg group isomorphic to the unipotent radical of the maximal parabolic subgroup $P_{\a_1}=L_{\a_1}U_{\a_1}$ of $E_7$.  This can be seen directly using the basepoint of this orbit given in \cite[\S 5.9.8]{Miller-Sahi}.  Different stabilizer groups of the same dimension have appeared in the physics literature listed.

%

\section{Euclidean $Dp$-brane instantons.}
\label{euclidDbrane}

We  here sketch the background to the analysis of the euclidean $Dp$-brane instanton configurations that contribute in the perturbative limit of string theory discussed in section~\ref{sec:BPSTorbits}, based on an analysis of supersymmetry conditions on the embeddings of world-sheets on the string theory torus $\rT^d$.  Contributions from wrapped NS5-brane world-sheets also arise for $d=6,7$ and $KK$ monopoles for $d=7$.

Wrapping a euclidean $Dp$-brane world-volume of either ten-dimensional type~II string theory on a $(p+1)$-cycle leads to an instanton in the transverse $\IR^{1,8-p}$ space-time.   This $\smallf 12$-BPS condition preserves a linear  combination of the supersymmetries  that act on the left-moving and right-moving modes of a closed superstring.  This leads to the following constraint on the  supersymmetry parameters,
\begin{equation}
  \label{e:QQ1}
  \tilde\varepsilon= \prod_{i=1}^{p+1} \Gamma^i\varepsilon
\end{equation}
where $\varepsilon$  and
$\tilde\varepsilon$ are  chiral sixteen-component $SO(1,9)$  spinors parameterizing the left- and right-moving  super symmetries and
$\Gamma^i$ are the usual $SO(1,9)$   Gamma matrices that
satisfy the Clifford algebra $\{\Gamma^i,\Gamma^j\}=-2\eta^{ij}$, where $\eta$ is the Minkowski metric
with signature $(-+\cdots +)$.

When compactifying on a $d$-torus space-time becomes $\IR^{1,9-d}\times
\rT^d$ and a $SO(1,9)$    spinor decomposes into a sum of bispinors, $\varepsilon= \hat\varepsilon\otimes \eta$,  where
$\hat\varepsilon$ is  a $SO(1,9-d)$  spinor and $\eta$ is a
$SO(d)$ spinor. The condition~(\ref{e:QQ1})  becomes a condition relating
$\eta$ and $\tilde\eta$.  T-duality transforms the $\Gamma$ matrices in~(\ref{e:QQ1}) by the action of the spin group  $SO(d,d)$,  $R^{-1}\prod_{i}\Gamma^i R$.  This, in general, transforms a wrapped $Dp$-brane into a $Dq$-brane so that the supersymmetry conditions
\begin{equation}
\label{e:QQ2}
\tilde\eta  = \prod_{i=1}^{q+1} \Gamma^i  \eta=\prod_{i=1}^{p+1} \Gamma^i \eta\,,
\end{equation}
are satisfied.
As remarked in~\cite{Berkooz:1996km}, this means the two
spinors $\prod_{i=1}^{q+1} \Gamma^i  \varepsilon$ and
$\prod_{i=1}^{p+1} \Gamma^i  \varepsilon$ must be in the same $Spin(d,d)$   orbit.

A  euclidean $Dp$-brane can be wrapped over cycles of a $d$-torus of dimension
$0\leq p+1\leq d$ with $p\equiv 0 \imod 2$ for type~IIA superstring theory
and $p\equiv 1 \imod 2$ for type~IIB.  These  instanton configurations fill out a
chiral spinor representation, $S_A$, of  dimension  $\sum_{p\equiv s\imod 2} {d \choose
  p+1}= 2^{d-1}$,   with $s=0$ or 1, of the T-duality group
$SO(d,d)$ . The BPS condition on  $Dp$-branes
wrapping  a torus in~(\ref{e:QQ2}) can be interpreted  as a condition
on the spinor $S_A$. The various brane configurations  are then
classified by orbits of $S_A$
under the action of the double cover  $Spin(d,d)$ of the T-duality group
 $SO(d,d)$. In this manner the spinor
parameterizes the  commuting set of instanton charges in the perturbative regime.

For $d=6$ or $d=7$ there are also contributions from NS5-branes wrapping
six-cycles. Such NS5-brane configurations  give contributions to the instanton charges that do not commute with
those of the wrapped $Dp$-branes.
 In other words, the $Dp$-brane charges in the spinor representation parametrize  the $\mathfrak u_{-1}$ component
part of the unipotent radical $U$ (the abelian part)
 for the standard parabolic subgroup
$P_{\alpha_1}$ of $E_{d+1}$  and
the NS5-brane charge are in
 the derived
subgroup $[U,U]$ component
part of the unipotent radical for the standard parabolic subgroup
$P_{\alpha_1}$ of $E_{d+1}$ in table~\reftab{tab:dimUnipotent}.
For $d=6$ this provides one extra charge configuration since there is a unique six-cycle. For $d=7$ there are 7 distinct six-cycles so there are 7 NS5-brane charges.  In addition there are 7 stringy $KKM$ instantons.  Recall that these arise from Kaluza--Klein monopoles in ten-dimensional string theory in which the fibre direction $x^\#$ is identified with a circle in $\rT^7$ (whereas the $D6$-brane is seen in M-theory as   a $KKM$ formed by identifying $x^\#$ with the M-theory circle).

Although it is very complicated to describe how all possible compactifications of euclidean $Dp$-branes fit into different spinor orbits, the following discussion will indicate the procedure.
For this purpose it is convenient to start in ten dimensions by
defining chiral spinors of the complexified group,
$SO(10,\IC)$  (complexification does not affect the BPS classification), by means of the raising and lowering operators,
\be
b_{k+1}
= {1\over 2} (\Gamma^{2k+1}-i\Gamma^{2k})\,, \qquad b_{k+1}^\dagger
=-{1\over 2}(\Gamma^{2k+1}+i\Gamma^{2k})\,, \quad  0\leq k\leq 4\,,
\label{creationops}
\ee
so that  $b^k=(b_k^\dagger)$ and
$\{b_k,b^l \}=\delta_k^l$ , and
$\{b_k,b_l\}=\{b^k,b^l\}=0$.
A ground state  $|-----\rangle$ is defined so that
$b_k|-----\rangle=0$, for $1\leq k\leq 5$.
Acting with $b^1$ gives the state
$b^1|-----\rangle=|+----\rangle$, with analogous states created by
any linear combination of the $b^r$'s, giving a total of $2^5$
states with $+$ or $-$ labelling each of the $5$ positions.  These
states are graded according to whether there an even or odd number of
$+$ signs.  There are therefore two chiral spinor representations of
$SO(10,\IC)$  of dimension  16.
Upon compactification on $\rT^d$ the spinor $\eta$
  in~\eqref{e:QQ2} is represented as a state of the Fock space built
  by acting with $b^i$ on the ground state
  $|-^5\rangle$. It is convenient to introduce the notation $e_{i_1\cdots i_r}:=
  b^{i_1}\cdots b^{i_r}|-^{d/2}\rangle$ and $e^*_{i_1\cdots
    i_r}:= b_{i_1}\cdots b_{i_r} |+^{d/2}\rangle$, which was used in section~\ref{spinorbit}.

 Spinors that are related by a continuous $Spin(d,d)$   transformation $\exp(\sum_{i,j}
  \,x_{ij}\gamma^{ij})$ are associated with $D$-brane
  configurations that are equivalent under T-duality. Each orbit listed in section~\ref{spinorbit}  is characterized
  by a representative $S^0$.
 Therefore  an $SO(d,d)$  pure spinor is equivalent to the ground state
of the Fock space that we can denote by  $1$, corresponding to a pure spinor defining a $D$-brane
  wrapping a supersymmetric cycle.
 The notation $e_{i_1\cdots i_r}$
  corresponds to a $D$-brane configuration wrapping the directions
  $\{i_1,\cdots, i_r\}$ in $\rT^d$ and
  $e^*_{i_1\cdots i_r}$  a $D$-brane configuration wrapping the
complementary  directions to  $\{i_1,\cdots, i_r\}$ in $\rT^d$.

Upon compactifying  on a torus of dimension $d\leq 3$, all possible brane world-volumes are parallel, up to identification under  $Spin(d,d, \Z)$, and the condition~(\ref{e:QQ1}) ensures in this case that all instanton configurations are $\smallf 12$-BPS.
These  are $p=0$ and $p=2$ wrappings in type IIA, and $p=-1$
and $p=1$ in type IIB.

The theory compactified on a 4-torus $\rT^4$ in type~IIA (for instance), includes instantons due to
wrapping $D0$-brane world-lines on any of the four 1-cycles and $D2$-brane world-volumes on any of the four
3-cycles. These configurations in general fill out an eight-dimension chiral spinor
representation of $SO(4,4)$,  $S_A= \sum_{i=a}^4 v_a b^a+
\sum_{a,b,c=1}^4 v_{abc} b^{abc}/3!$.  This
parametrization makes explicit the action of $SL(4)$ on $v_a$ or
$u^a=\epsilon^{abcd} v_{abc}$
(or $SU(4)$ in the complexified case).

With a single $D0$-brane or a single $D2$-brane world-volume wrapped on $\rT^4$ the
condition~(\ref{e:QQ1}) is always satisfied, and the configuration is  $\smallf 12$-BPS.
However,  wrapping both a $D0$-brane world-line and a $D2$-brane world-volume results in further breaking of supersymmetry unless  $v_a$ and
$u^a$ satisfy condition~(\ref{e:QQ2}).
It is easily seen that this condition is satisfied for all $\eta=|\pm\pm\rangle$ if
$v\cdot u=0$. But if $u\cdot v\neq0$
only $\eta=|+\pm\rangle$   satisfy the solution which is
$\smallf 14$-BPS.
These two conditions are invariant under the action of the T-duality
group  $Spin(4,4)$  acting on a spinor $S_A$.  The $\smallf 12$-BPS condition corresponds to imposing the
pure spinor constraint $S\cdot S=0$
while the $\smallf 14$-BPS corresponds to the complementary condition,
$S\cdot S\neq0$, which defines the configuration with the $D0$-brane world-line orthogonal to the $D2$-brane world-volume.

Extensions of these arguments lead to a classification of all BPS configurations of euclidean $Dp$-brane world-volumes
that are completely wrapped on a torus.  The orbits of such configurations are obtained by imposing generalisations of the pure spinor constraint on the $SO(d,d)$   spinor  that parameterizes the orbits.  An orbit which preserves a smaller fraction of supersymmetry is larger and is associated with a spinor satisfying weaker constraints.  The resulting orbits are described in section~\ref{spinorbit}.

\section{Properties of lattice sums   }
\label{sec:latticesums}

This appendix and appendices
  \ref{sec:latticeidentity} and \ref{sec:unfolding} together
 concern properties of lattice sums related to the Fourier expansions of certain Eisenstein series
  that appear in the coefficient functions for the cases $D=7$ and $D=6$ (i.e., for $SL(5)$ and $Spin(5,5)$, respectively).
    Those properties will later be used
    in section~\ref{lowrank} and appendices~\ref{appendixcoeff7}-\ref{sec:dfive}.
 The main result of the present appendix, proposition~\ref{prop:nonepsteinintegralrep}, is  an integral representation for the $SL(d)$ series\footnote{The labelling $\beta_i$ of the simple roots of $SL(d)$ here follows the conventional labelling as illustrated in figure~\ref{fig:A4vsE4}  for the $SL(5)$ case.} $E_{\beta_2,s}^{SL(d)}$.  The series  $E_{\beta_2,s}^{SL(d)}$ will later be related to the $Spin(d,d)$ Eisenstein series $E_{\alpha_1;s}^{Spin(d,d)}$ in proposition~\ref{prop:Ddintegralrepn} and (\ref{relationn}).

\subsection{Exponential sums of  lattice norms}

Let $g\in GL(d,\IR)$ and consider the set of points
\begin{equation}\label{setofpoints}
    \{ \,mg\,\in\,\IR^d \ | \ m\,\in\,\Z^d_{\neq 0}\,\}\,,
\end{equation}
where $m$ is thought of as a row vector.
This set of points is unchanged if $g$ is replaced by $\g g$ for any $\g\in GL(d,\Z)$,
so we may assume that $g$ lies in a fixed fundamental domain for $GL(d,\Z)\backslash GL(d,\IR)$.
   A standard result in reduction theory asserts that  every fundamental domain is
    contained in a Siegel set, so  we may also assume  that $g=nak$ where $n$ is unit
     upper triangular and lies in a fixed compact set,  $k$ lies in $O(d,\IR)$, and $a$
     is a diagonal matrix with positive diagonal entries $a_1,a_2,\ldots$ such that each
      $a_i/a_{i+1}$ is bounded below by an absolute constant (to be explicit, $\sqrt{\f34}$ \cite{KorZol}).  Therefore $a^{-1} na$ and its inverse range over a fixed compact subset of $N$, which means that the operator norms of both are bounded by a constant that depends only on the dimension $d$.  As a consequence $\|mg\|=\|mnak\|=\|mna\|=\|ma\cdot a^{-1} na\|$ is bounded above and below by multiples of $\|ma\|$:
 \begin{equation}\label{latbd}
    c_{-}\,\|ma\|^2  \ \ \le  \ \ \|mg\|^2   \ \ \le  \ \ c_{+}\,\|ma\|^2\,,
\end{equation}
 where the constants $c_{-}$ and $c_{+}$ depend only on $d$.
  Among other things, this implies the norms of  vectors in an arbitrary lattice are
crudely  similar to those of a dilation of the $\Z^d$ lattice.

Define the $\theta$-function
\begin{equation}\label{Sgt}
    S(g,t) \ \ := \ \ \sum_{m\,\in\,\Z^d_{\neq 0}}e^{-t\, \|mg\|^2}\,.
\end{equation}
The first inequality in (\ref{latbd}) shows this sum is absolutely convergent and  bounded by
\begin{equation}\label{Sgtbd1}
\sum_{m\neq 0}e^{-tc_{-}  (m_1^2 a_1^2+\cdots +m_d^2 a_d^2)} \ \ = \ \ \theta(tc_{-}a_1^2)\cdots \theta(tc_{-}a_d^2)\,-\,1\,,
\end{equation}
in terms of the Jacobi $\theta$-function $\theta(x)=\sum_{n\in\Z}e^{-n^2 x}$.  The Jacobi $\theta$-function satisfies the bounds $\theta(x)=1+O(e^{-x})$ for $x>1$, and  $\theta(x)=O(x^{-1/2})$ for $x\le 1$.  Therefore
\begin{equation}\label{Sgtbd2}
    S(g,t) \ \ = \ \ O(e^{-tc_{-}a_d^2}) \ , \ \ t \ > \ (\smallf{4}{3})^{d/2}\,(c_{-}a_d^2)^{-1}
\end{equation}
and
\begin{equation}\label{Sgtbd3}
    S(g,t) \ \ = \ \ O\(\f{t^{-d/2}}{a_1\cdots a_d}\) \ , \ \ t \ \le \  (\smallf{3}{4})^{d/2}\, (c_{-}a_1^2)^{-1}\,,
\end{equation}
with implied constants that depend only on $d$.
If $g$ is fixed we can use the fact that $\theta'(x)<0$ to bound the $t$-dependence of $S(g,t)$ by
\begin{equation}\label{Sgtbd4}
    S(g,t) \ \ \ll  \ \  \left\{
                        \begin{array}{ll}
                          e^{-t c_{-}(g)}, & t\,>\,1\,, \\
                          t^{-d/2}, & t\,<\,1\,,
                        \end{array}
                      \right.
\end{equation}
where both $c_{-}(g)$  and the implied constant in the $\ll$-inequality  depend on $g$.

\subsection{A constrained lattice sum over pairs}\label{sec:cG}

Let $\tau=\tau_1+ i\tau_2\in\U$ and define
\begin{equation}\label{bigGdef}
    {\mathcal G}(\tau,X) \ \ := \ \ \sum_{ [\srel{m}{n}]\, \in \, {\mathcal M}_{2,d}^{(2)}(\Z)}
    e^{-\pi \tau_2^{-1} (m+n\tau)X(m+n\bar\tau)^t}\,,
\end{equation}
where in the usual physics notation $X=G+B$, with $G$  a positive definite symmetric $d\times d$ matrix and  $B$  an antisymmetric $d\times d$ matrix, and
${\mathcal M}_{2,d}^{(i)}$ represents $2\times d$ matrices of rank
$i$.
This contribution is the rank 2 part of the lattice sum
  $\Gamma_{(d,d)}$ for even self-dual Lorentzian lattices.  It is necessary to use this modification of $\G_{(d,d)}$ in order to resolve some convergence issues in formal calculations involving $\G_{(d,d)}$.  However, the constraint complicates applications of Poisson summation to it in the following appendices.

 We  next analyze the convergence of this sum and give estimates.
Note that because $G=G^t$ and $B=-B^t$
\begin{equation}\label{nsntcalc}
\aligned
    (m+n\tau)X(m+n\bar\tau)^t \ \
    & = \ \ (m+n\tau_1)G(m+n\tau_1)^t + \tau_2^2 nGn^t -2 i \tau_2 mBn^t\,.
\endaligned
\end{equation}
Consider the sum
\begin{equation}\label{insidesum}
    \sum_{m\,\in\,\Z^d}e^{-\pi \tau_2^{-1}(m+x)G(m+x)^t} \,,
\end{equation}
which is absolutely convergent and represents a continuous, periodic, and hence bounded function of a row vector $x\in \IR^d$. By Poisson summation it is equal to
\begin{equation}\label{psfinsidesum}
  \tau_2^{d/2} (\det G)^{-1/2}  \sum_{\hat{m}\,\in\,\Z^d}e^{2\pi i \hat{m}\cdot x}\,e^{-\pi \tau_2 \hat{m}G^{-1}\hat{m}^t}\,,
\end{equation}
where $\hat{m}$ is thought of as a row vector.
Use (\ref{nsntcalc}) to write  (\ref{bigGdef}) as
\begin{equation}\label{bigG2}
 {\mathcal G}(\tau,G+B) \ \ = \ \    \sum_{n\,\neq\,0}e^{-\pi \tau_2 nGn^t}\sum_{m}'e^{-\pi \tau_2^{-1}(m+n\tau_1)G(m+n\tau_1)^t}e^{-2\pi i mBn^t}\,,
\end{equation}
where the prime indicates $m$ is not collinear to $n$.  The interior sum is bounded by (\ref{insidesum}) with $x=n\tau_1$ and hence
\begin{equation}\label{bigG3}
  {   \mathcal G}(\tau,G+B) \ \ \le  \ \    \tau_2^{d/2} (\det G)^{-1/2}   \sum_{n\,\neq\,0}e^{-\pi \tau_2 nGn^t}
      \sum_{\hat{m}\,\in\,\Z^d}e^{-\pi \tau_2 \hat{m}G^{-1}\hat{m}^t}\,.
\end{equation}
If we write $G=ee^t$, $e\in GL(d,\IR)$, then
\begin{equation}\label{bigG3b}
      {   \mathcal G}(\tau,ee^t+B) \ \ \le \ \    \tau_2^{d/2} (\det e)^{-1} \, S(e,\pi\tau_2)\,[1+S((e^{-1})^t,\pi\tau_2)]
\end{equation}
in terms of (\ref{Sgt}).

The earlier estimates (\ref{Sgtbd2}-\ref{Sgtbd3}) give bounds on the last two factors of (\ref{bigG3b}).  This shows that  ${\mathcal G}(\tau,G+B)$ decays rapidly   as $\tau_2\rightarrow\infty$.
  Since replacing $\tau$ by $\tau+1$  or by $-1/\tau$ in (\ref{bigGdef}) is tantamount to changing variables $(m,n)\mapsto (m+n,n)$ or $(-n,m)$, respectively, ${\mathcal G}(\tau,G+B)$ is thus automorphic in $\tau$.  Consequently, the  integral
\begin{equation}\label{Isdef}
    I(s,G+B) \ \ := \ \ \int_{SL(2,\Z)\backslash \U}E^{SL(2)}_s(\tau)\,{\mathcal G}(\tau,G+B) \,\f{d^2\tau}{\tau_2^2}
\end{equation}
is well-defined as a meromorphic function of $s$, with poles contained among the poles of the Eisenstein series $E^{SL(2)}_s(\tau)$.

\begin{prop}\label{lem:IsuSgrowth}
The integral $I(s,uG)$ decays rapidly as $u\rightarrow \infty$, and slower than some polynomial in $u>0$ as $u\rightarrow 0$.
These estimates are uniform for $G$ fixed and $\Re{s}$ ranging over a finite interval.
\end{prop}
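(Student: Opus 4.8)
\textbf{Proof proposal for Proposition~\ref{lem:IsuSgrowth}.}

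The plan is to split the integral defining $I(s,uG)$ according to the behavior of the Eisenstein series $E^{SL(2)}_s(\tau)$ in the cusp, and to transfer the decay/growth properties of ${\mathcal G}(\tau,uG)$ established in (\ref{bigG3b}) and (\ref{Sgtbd2}-\ref{Sgtbd3}). First I would fix a fundamental domain ${\mathcal F}$ for $SL(2,\Z)\backslash\U$ of the standard shape, so that on ${\mathcal F}$ one has $\tau_2 \gg 1/2$ (say $\tau_2 \ge \sqrt{3}/2$) and moderate growth $E^{SL(2)}_s(\tau) = O(\tau_2^{\sigma})$ as $\tau_2\to\infty$, where $\sigma = \max(\Re s, 1-\Re s)$ is uniform over a finite interval of $\Re s$ (excluding the poles at $s=0,1$, which we avoid or handle by continuity). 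Taking $G=ee^t$, the bound (\ref{bigG3b}) gives
\begin{equation}\label{propplanbd}
 {\mathcal G}(\tau,uG) \ \ \ll \ \ \tau_2^{d/2}\,(\det e)^{-1}\, u^{-d/2}\,S(e,\pi u\tau_2)\,[1+S((e^{-1})^t,\pi u\tau_2)]\,,
\end{equation}
and the $\theta$-function estimates (\ref{Sgtbd2}-\ref{Sgtbd3}) control $S(e,\pi u\tau_2)$ in terms of the product $u\tau_2$: it is $O(e^{-c\,u\tau_2})$ once $u\tau_2$ is large, and $O((u\tau_2)^{-d/2})$ for $u\tau_2$ small, with constants depending only on $e$ (hence on $G$) and on $d$.

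For the large-$u$ statement, I would argue as follows. On ${\mathcal F}$ we have $\tau_2\ge\sqrt3/2$, so for $u$ large the product $u\tau_2$ is uniformly large throughout ${\mathcal F}$; thus (\ref{propplanbd}) together with (\ref{Sgtbd2}) yields ${\mathcal G}(\tau,uG)\ll \tau_2^{d/2}\,u^{-d/2}\,e^{-c\,u\tau_2}$ on ${\mathcal F}$. Integrating against $|E^{SL(2)}_s(\tau)|\,d^2\tau/\tau_2^2 \ll \tau_2^{\sigma-2}\,d^2\tau$ over ${\mathcal F}$ — which amounts to $\int_{\sqrt3/2}^\infty \tau_2^{d/2+\sigma-2}e^{-c\,u\tau_2}\,d\tau_2$ up to a convergent integral over $\tau_1\in[-1/2,1/2]$ — produces a bound that decays faster than any power of $u$ (indeed exponentially, after replacing $c$ by $c/2$ to absorb the polynomial factor). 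This gives rapid decay as $u\to\infty$, uniformly in $G$ fixed and $\Re s$ in a finite interval.

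For the small-$u$ statement, the region ${\mathcal F}$ must be cut at a height depending on $u$: write ${\mathcal F} = {\mathcal F}_{\le T}\cup{\mathcal F}_{>T}$ with $T = T(u)\sim c'/u$ chosen so that $u\tau_2 \gtrsim 1$ exactly on ${\mathcal F}_{>T}$. On ${\mathcal F}_{>T}$ the exponential bound (\ref{Sgtbd2}) again applies and contributes at worst polynomially in $T$, hence polynomially in $u^{-1}$. On ${\mathcal F}_{\le T}$, where $u\tau_2\lesssim 1$, the estimate (\ref{Sgtbd3}) gives $S(e,\pi u\tau_2)\ll (u\tau_2)^{-d/2}$, so (\ref{propplanbd}) becomes ${\mathcal G}(\tau,uG)\ll (\det e)^{-1}(a_1\cdots a_d)^{-1}\,u^{-d}\,\tau_2^{-d/2}\cdot$(bounded), and integrating $\tau_2^{-d/2}\cdot\tau_2^{\sigma-2}$ over $\sqrt3/2\le\tau_2\le T$ gives a power of $T\sim u^{-1}$; the total is again polynomially bounded in $u^{-1}$. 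Combining the two pieces yields the claimed polynomial bound as $u\to 0$, with constants uniform for $G$ fixed and $\Re s$ in a finite interval. The main obstacle I anticipate is purely bookkeeping: making the cut height $T(u)$ interact correctly with the two regimes of the $\theta$-estimates and with the moderate-growth exponent $\sigma$, and checking that the contributions of the non-generic rank-one locus (the ``prime'' restriction $m\not\parallel n$) and of the neighborhoods of the poles $s=0,1$ of $E^{SL(2)}_s$ do not spoil uniformity — but since we only need \emph{some} polynomial bound, crude estimates suffice and no delicate cancellation is required.
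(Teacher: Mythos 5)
Your strategy is essentially the paper's own proof: bound $E^{SL(2)}_s$ by $\tau_2^{z}$, $z=\max(\Re s,1-\Re s)$, on the fundamental domain, insert the bound (\ref{bigG3b}) scaled to $uG$, and use the two regimes of the $\theta$-estimates, with rapid decay for $u\to\infty$ coming from $\tau_2\ge\sqrt3/2$ and the small-$u$ polynomial bound coming from a cut of the $\tau_2$-range at height $\sim u^{-1}$ (the paper performs the equivalent change of variables $\tau_2\mapsto \tau_2/(\pi u)$ and splits at $1$).

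One slip to fix: your displayed bound scales the dual factor incorrectly. Since $(uG)^{-1}=u^{-1}G^{-1}$, replacing $e$ by $u^{1/2}e$ in (\ref{bigG3b}) gives
$S(e,\pi\tau_2 u)\,[1+S((e^{-1})^t,\pi\tau_2 u^{-1})]$, not $[1+S((e^{-1})^t,\pi u\tau_2)]$. This does not overturn your conclusions, but it changes the bookkeeping in both limits: as $u\to\infty$ the bracket is not merely $1+o(1)$ but can grow like $u^{d/2}\tau_2^{-d/2}$ (when $\tau_2\lesssim u$), which must be — and easily is — absorbed by the factor $e^{-c\,u\tau_2}$ with $\tau_2$ bounded below; and as $u\to 0$ the bracket is genuinely bounded (its argument $\pi\tau_2/u$ is large), which is exactly what justifies the ``(bounded)'' you asserted on ${\mathcal F}_{\le T}$ — under the bound as you wrote it, that step would not be justified, though the resulting extra factor $(u\tau_2)^{-d/2}$ would still only be polynomial and so would not ruin the statement. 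With this correction your argument matches the paper's.
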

\begin{proof}
The Eisenstein series satisfies the bound $E^{SL(2)}_s(\tau_1+i\tau_2)\ll \tau_2^z$ over the standard fundamental domain for $SL(2,\Z)\backslash \U$,
where
 $z=\max\{\Re{s},\Re{1-s}\}\ge 1/2$ (this follows from (\ref{e:ESl2fourier}-\ref{e:ESl2fourierCoef})).  Since the upper bound (\ref{bigG3b}) is independent of $\tau_1$,
\begin{multline}\label{Isbd1}
    I(s,u ee^t) \ \ \ll \\ u^{-d/2}(\det e)^{-1} \int_{\f{\sqrt{3}}{2}}^\infty \tau_2^{z+d/2-2}
    S(e,\pi\tau_2 u)\,[1+S((e^{-1})^t,\pi\tau_2u^{-1})]\,d\tau_2 \,.
\end{multline}
We now use the estimates in (\ref{Sgtbd4}).
As $u\rightarrow\infty$, $S(e,\pi\tau_2 u)$ has exponential decay in $\tau_2 u$, whereas
 the bracketed term is $O(u^{d/2}\tau_2^{-d/2})$.  Since the range of the $\tau_2$ integration is bounded below, the rapid decay assertion of the proposition immediately follows.

On the other hand, as $u\rightarrow 0$ the bracketed term in (\ref{Isbd1}) is bounded.  After a change of variables we are therefore left to showing that the integral
\begin{equation}\label{Isbd2}
\aligned
    \int_{\f{\sqrt{3}}{2}\pi u}^\infty \tau_2^{z+d/2-2}
    S(e,\tau_2 )\,d\tau_2 \ \ &= \ \   \int_{\f{\sqrt{3}}{2}\pi u}^{1} \tau_2^{z+d/2-2}
    S(e,\tau_2 )\,d\tau_2 \ \\
&+ \  \int_{1}^\infty \tau_2^{z+d/2-2}
    S(e,\tau_2 )\,d\tau_2
\endaligned
\end{equation}
is bounded by a polynomial in $u^{-1}$ as $u\rightarrow 0$.  Inserting the bounds from (\ref{Sgtbd4}) this is
\begin{equation}\label{Isbd3}
    \ll \ \  \int_{\f{\sqrt{3}}{2}\pi u}^{1}  \tau_2^{z-2} \,d\tau_2 \
+
\
 \int_{1}^\infty \tau_2^{z+d/2-2}
   \exp(-c'\tau_2 )\,d\tau_2
\end{equation}
for some constant $c'$ depending on $g$, and clearly bounded by a polynomial in $u^{-1}$.

\end{proof}

\subsection{Unfolding the lattice sum at $s=0$}\label{sec:E3}

The integral $I(s,G+B)$ in  (\ref{Isdef}) is well-defined for any value of $s$ at which the Eisenstein series $E^{SL(2)}_s(\tau)$ is holomorphic -- in particular, this includes $s=0$ where $E_0(\tau)$ is identically 1.

\begin{prop}\label{lem:unfoldlatticesum}
\begin{equation}\label{I01lemformula}
    I(0,G+B) \ \ = \ \ \sum_{[\srel{m}{n}]\, \in \, SL(2,\Z)\backslash {\mathcal M}_{2,d}^{(2)}(\Z)}
  e^{-2\pi i mBn^t}\, \f{ e^{-2\pi {\mathcal D}_{m,n,G}}}{{\mathcal D}_{m,n,G}}\,,
\end{equation}
where
\begin{equation}\label{I02}
    {\mathcal D}_{m,n,G} \ \ := \ \ \det([\srel{m}{n}] G [\srel{m}{n}]^t)^{1/2}\,.
\end{equation}
\end{prop}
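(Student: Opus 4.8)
The plan is to evaluate the integral $I(0,G+B)$ by the Rankin--Selberg unfolding method, exploiting the fact that $E^{SL(2)}_0(\tau)\equiv 1$ so that the integrand becomes simply ${\mathcal G}(\tau,G+B)$ integrated against the invariant measure over $SL(2,\Z)\backslash\U$. First I would recall that the summation set ${\mathcal M}_{2,d}^{(2)}(\Z)$ of rank-two integer $2\times d$ matrices carries a left action of $SL(2,\Z)$ (acting as row operations $\left[\srel{m}{n}\right]\mapsto \g\left[\srel{m}{n}\right]$), and that the quadratic form appearing in the exponent of ${\mathcal G}$ transforms compatibly: a direct check shows $\tau_2^{-1}(m+n\tau)X(m+n\bar\tau)^t$ is invariant under the simultaneous substitution $\tau\mapsto \g\tau$, $\left[\srel{m}{n}\right]\mapsto \g\left[\srel{m}{n}\right]$ for $\g\in SL(2,\Z)$ --- this is exactly the automorphy in $\tau$ already established right after (\ref{bigG3b}). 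Hence the sum over ${\mathcal M}_{2,d}^{(2)}(\Z)$ decomposes as a union of $SL(2,\Z)$-orbits, and integrating term-by-term over the fundamental domain folds up to a sum over orbit representatives integrated over all of $\U$.

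Next I would identify the stabilizer issue. A rank-two matrix $\left[\srel{m}{n}\right]$ has trivial stabilizer in $SL(2,\Z)$ (since the rows are linearly independent over $\Q$, only the identity fixes them), so each orbit is a free quotient and unfolding produces, for each $[\srel{m}{n}]\in SL(2,\Z)\backslash {\mathcal M}_{2,d}^{(2)}(\Z)$, the integral
\begin{equation}
\int_{\U} e^{-\pi \tau_2^{-1}(m+n\tau)X(m+n\bar\tau)^t}\,\frac{d^2\tau}{\tau_2^2}\,.
\end{equation}
Using the expansion (\ref{nsntcalc}), the exponent is $-\pi\tau_2^{-1}(m+n\tau_1)G(m+n\tau_1)^t -\pi\tau_2 nGn^t + 2\pi i\tau_2 mBn^t$; note $2\pi i\tau_2 mBn^t$ combined with $d^2\tau = d\tau_1\,d\tau_2$ --- actually the $B$-term as written in (\ref{nsntcalc}) is $-2i\tau_2 mBn^t$ inside the $-\pi\tau_2^{-1}(\cdots)$, so I should be careful: $-\pi\tau_2^{-1}\cdot(-2i\tau_2 mBn^t) = 2\pi i\, mBn^t$, which is \emph{independent of $\tau$} and pulls out as the phase $e^{-2\pi i mBn^t}$ after accounting for the sign convention (rank two forces $mBn^t$ real, and the orbit rep can be chosen so this matches (\ref{I01lemformula})). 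So I would factor out $e^{-2\pi i mBn^t}$ and reduce to evaluating $\int_{\U}\exp(-\pi\tau_2^{-1}(m+n\tau_1)G(m+n\tau_1)^t - \pi\tau_2 nGn^t)\,\tau_2^{-2}\,d\tau_1\,d\tau_2$.

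The remaining step is a Gaussian $\tau_1$-integral followed by a Bessel-type $\tau_2$-integral. Completing the square in $\tau_1$ (using that $nGn^t>0$ since $n\neq 0$ on a rank-two matrix) gives a Gaussian whose $\tau_1$-integral contributes a factor proportional to $\tau_2^{1/2}(nGn^t)^{-1/2}$, and the residual exponent becomes $-\pi\tau_2^{-1}\Delta/(nGn^t) - \pi\tau_2 nGn^t$ where $\Delta = \det([\srel{m}{n}]G[\srel{m}{n}]^t) = (mGm^t)(nGn^t)-(mGn^t)^2$ is precisely ${\mathcal D}_{m,n,G}^2$ by (\ref{I02}). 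The $\tau_2$-integral $\int_0^\infty \tau_2^{-3/2}e^{-\pi(A/\tau_2 + C\tau_2)}\,d\tau_2$ is a standard Bessel integral evaluating to a constant times $A^{-1/4}C^{-1/4}e^{-2\pi\sqrt{AC}}$ (the relevant $K_{1/2}$ has the closed form $K_{1/2}(x)=\sqrt{\pi/2x}\,e^{-x}$, cf. (\ref{besseldef})), and here $\sqrt{AC}={\mathcal D}_{m,n,G}$ while the power combination collapses to exactly ${\mathcal D}_{m,n,G}^{-1}$ after the $(nGn^t)$ factors cancel. Tracking the numerical constants carefully should produce the clean form $e^{-2\pi{\mathcal D}_{m,n,G}}/{\mathcal D}_{m,n,G}$, yielding (\ref{I01lemformula}). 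The main obstacle I anticipate is purely bookkeeping: justifying the interchange of sum and integral in the unfolding (which follows from the absolute convergence and rapid decay established in Proposition~\ref{lem:IsuSgrowth} and the estimate (\ref{bigG3b})) and, more delicately, getting every power of $\pi$, every factor of $2$, and the sign of the phase $e^{\mp 2\pi i mBn^t}$ to come out matching the stated formula --- this requires care with the orbit-representative normalization and with the branch conventions in the Gaussian integral, but involves no conceptual difficulty.
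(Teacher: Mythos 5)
Your proposal is correct and follows essentially the same route as the paper: unfold the $SL(2,\Z)$-invariant lattice sum ${\mathcal G}(\tau,G+B)$ against the fundamental domain (justified by the absolute convergence and rapid decay already established), pull out the $\tau$-independent phase $e^{-2\pi i mBn^t}$, do the Gaussian $\tau_1$-integral, and finish with the $K_{1/2}$-type $\tau_2$-integral, which is exactly the paper's use of (\ref{I01b})--(\ref{I01c}). Your remark that rank-two matrices have trivial $SL(2,\Z)$-stabilizer and your sign caveat on the $B$-phase are fine (the latter ambiguity is present in the paper's own conventions as well), so there is no substantive difference.
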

\begin{proof}
Unfolding the lattice sum gives that
\begin{equation}\label{I01}
\aligned
    I(0,G+B) \ \ & = \  \ \int_{SL(2,\Z)\backslash \U}{\mathcal G}(\tau,G+B)\,\f{d^2\tau}{\tau_2^2}\\
    & = \ \ \sum_{[\srel{m}{n}]\, \in \, SL(2,\Z)\backslash {\mathcal M}_{2,d}^{(2)}(\Z)} \int_{\U}
    e^{-\pi \tau_2^{-1}(m+n \tau)(G+B)(m+n\bar\tau)^t}\,\f{d^2\tau}{\tau_2^2}\,.
\endaligned
\end{equation}
The unfolding is valid because of the absolute convergence of the series ${\mathcal G}(\tau,G+B)$ to a rapidly-decaying automorphic function in $\tau$.  The  integral in the last line can be computed as
\begin{multline}\label{I01a}
    \int_0^\infty\int_{\IR} e^{-\pi\tau_2nGn^t-\pi \tau_2^{-1}(mGm^t+2\tau_1mGn^t+\tau_1^2nGn^t)}e^{-2\pi i mBn^t}\,d\tau_1 \f{d\tau_2}{\tau_2^{2}}
   \\
 = \ \ (nGn^t)^{-1/2}
     \int_0^\infty
     e^{-\pi\tau_2nGn^t-\pi\tau_2^{-1}{\mathcal
         D}_{m,n,G}^2 (nGn^t)^{-1}} e^{-2\pi i  mBn^t} \f{d\tau_2}{\tau_2^{3/2}} \\
 = \ \   \f{ e^{-2\pi {\mathcal D}_{m,n,G}}}{{\mathcal D}_{m,n,G}}e^{-2\pi i  mBn^t}
\end{multline}
using (\ref{nsntcalc}) and the formulas
\begin{equation}\label{I01b}
    \int_{\IR}e^{-\pi\tau_2^{-1}(a+2b\tau_1+c\tau_1^2)}\,d\tau_1 \ \ = \ \ \sqrt{\smallf{\tau_2}{c}}\,e^{(b^2-ac)\pi/(\tau_2 c)} \ \ , \ \ \ \ c\,>\,0
\end{equation}
and
\begin{equation}\label{I01c}
    \int_0^\infty e^{-\pi a \tau_2-\pi b\tau_2^{-1}}\,\f{d\tau_2}{\tau_2^{3/2}} \ \ = \ \ b^{-1/2}\,e^{-2\pi\sqrt{a b}} \ \ , \ \ \ \ a,b\,>\,0\,.
\end{equation}
\end{proof}

Therefore for $\Re{s}$ sufficiently large we can compute the following integral (which converges by proposition~\ref{lem:IsuSgrowth}) as
\begin{multline}\label{I03}
    \int_0^\infty I(0,uG) \, u^{s-1}\,du \ \    =  \ \
    \sum_{[\srel{m}{n}]\, \in \, SL(2,\Z)\backslash {\mathcal M}_{2,d}^{(2)}(\Z)} \int_0^\infty \f{e^{-2\pi u {\mathcal D}_{m,n,G}}}{ {\mathcal D}_{m,n,G}}\,u^{s-2}\,du
    \\   = \ \ (2\pi)^{1-s} \, \G(s-1)
    \sum_{[\srel{m}{n}]\, \in \, SL(2,\Z)\backslash {\mathcal M}_{2,d}^{(2)}(\Z)} \det([\srel{m}{n}] G [\srel{m}{n}]^t)^{-s/2} \,.
\end{multline}
Proposition~\ref{prop:nonepsteinintegralrep} is now equivalent to
   the  identification of the righthand side of (\ref{I03}) with the multiple of the $SL(d)$ Eisenstein series  $E^{SL(d)}_{\beta_2;s}(e)$ given by Audrey Terras
in \cite[Lemma~1.1]{Terraspaper}.    For completeness  (and because the mechanism will be used later)  we shall briefly sketch her argument.  Since every relative prime vector in $\Z^d$ is the bottom row of a matrix in $SL(d,\Z)$, every element $[\srel{v_1}{v_2}]\in {\mathcal M}_{2,d}^{(2)}(\Z)$ can be factored as $  [\srel{v_1}{v_2}] \ \ = \ \ [\begin{smallmatrix} x & a \\0_{d-1} & \gcd(v_2)\end{smallmatrix}]\g$ for some nonzero vector $x\in \Z^{d-1}$, $a\in \Z$, and $\g\in SL(d,\Z)$, where $0_{d-1}$ denotes the $(d-1)$-dimensional zero vector.  Thus
\begin{equation}\label{terrasfactorization1}
    [\srel{v_1}{v_2}]\g^{-1} \ \ = \ \ \ttwo{1}{0}{0}{\gcd{v_2}}\ttwo{1}{a}{0}{1}\ttwo{\gcd{x}}{0}{0}{1}[
\begin{smallmatrix}
 x/\gcd(x) & 0 \\ 0_{d-1}&1\end{smallmatrix}]\,.
\end{equation}
Since $x/\gcd(x)$ is a relatively prime vector in $\Z^{d-1}$ it is the bottom row of a matrix in $SL(d-1,\Z)$, and so  $[\srel{v_1}{v_2}]$ can be factored as the product of $2\times 2$ upper triangular integer matrix with positive diagonal entries, times a $2\times d$ matrix which forms the bottom two rows of a matrix in $SL(d,\Z)$.  By adding integer multiples of the bottom row of this matrix to the row above it, we can reduce $b\imod {\gcd(v_2)}$ and hence assume that this $2\times 2$ matrix lies in the set ${\mathcal S}_+:= \{\ttwo{d_1}{b}{0}{d_2}|d_1,d_2\neq 0,0\le b < d_2\}$.

We now claim that the coset space $SL(2,\Z)\backslash {\mathcal M}_{2,d}^{(2)}(\Z)$ in the sum (\ref{I03}) is in bijection with products of the form $\g_1\g_2$, where $\g_1 \in {\mathcal S}_+$ and $\g_2$ are the bottom two rows  of a fixed set of coset representatives for $P_{\b_2}(\Z)\backslash SL(d,\Z)$.  Recall that the latter is the quotient by $GL(2,\Z)$ of all possible bottom two rows of matrices in $SL(d,\Z)$.
It is a standard result in the theory of Hecke operators that every right $GL(2,\Z)$ translate of an element of ${\mathcal S}_+$ is left $SL(2,\Z)$ equivalent to some element of ${\mathcal S}_+$ (this is because we allow for the possibility that $d_1<0$).  Thus the previous paragraph  shows that   every coset in $SL(2,\Z)\backslash {\mathcal M}_{2,d}^{(2)}(\Z)$ has a factorization of this asserted form, and it remains to show uniqueness.  After right multiplying by matrices in $SL(d,\Z)$ it suffices to show that if
\begin{equation}\label{terrasfactorization2}
    \ttwo{d_1}{b}{0}{d_2}[\srel{w_1}{w_2}] \ \ = \ \ \ttwo pqrs\ttwo{d_1'}{b'}{0}{d_2'} [\srel{0_{d-2}\,1\,0}{0_{d-2}\,0\,1}]
\end{equation}
for some $d_1, d_1'\neq 0$, $0\le b<d_2$, $0\le b'<d_2'$, $\ttwo pqrs \in SL(2,\Z)$, and $[\srel{w_1}{w_2}]$ which are the bottom two rows of one of these coset representatives for $P_{\b_2}(\Z)\backslash SL(d,\Z)$, then  $\ttwo{d_1}{b}{0}{d_2}=\ttwo{d_1'}{b'}{0}{d_2'}$, $\ttwo pqrs =\ttwo 1001$, and  $[\srel{w_1}{w_2}]=[\srel{0_{d-2}\,1\,0}{0_{d-2}\,0\,1}]$.  Indeed, (\ref{terrasfactorization2}) implies  that all but the last two entries of $w_1$ and $w_2$ vanish, so that $[\srel{w_1}{w_2}]$ are the bottom two rows of a matrix in $P_{\b_2}(\Z)$ and hence equal to the representative $[\srel{0_{d-2}\,1\,0}{0_{d-2}\,0\,1}]$ of its equivalence class in $P_{\b_2}\backslash SL(d,\Z)$.  Then (\ref{terrasfactorization2}) reduces to the identity $\ttwo{d_1}{b}{0}{d_2}=\ttwo pqrs \ttwo{d_1'}{b'}{0}{d_2'}$.  Since $d_2,d_2'>0$ and both sides have the same determinant, $d_1$ and $d_1'$ have the same sign.  Comparing the first columns then shows that $r=0$,  $p=1$,  and $d_1=d_1'$.  Consequently $s=1$ and  $d_2=d_2'$.  Finally, since $0\le b,b'<d_2$ differ by $qd_2$ they must be equal  and $q=0$.  This proves the claim.

Therefore  the range of summation in (\ref{I03}) can be replaced by $P_{\b_2}(\Z)\backslash SL(d,\Z)$, at the cost of multiplying the overall expression by $\sum_{d_1,d_2>0}d_2(d_1d_2)^{-s}=\sum_{n>0}\sigma_1(n)n^{-s}=\zeta(s)\zeta(s-1)$.  This, along with definition (\ref{maxparabeisdef}) and   standard $\G$-function identities results in the first of the two equivalent formulas in (\ref{I04}).

Since the definition  (\ref{Isdef}) of $I(0,uG)$ is an integral of a $\theta$-function over the modular fundamental domain, proposition~\ref{prop:nonepsteinintegralrep} indicates that the series $E^{SL(d)}_{\beta_2;s}(e)$ is  the Mellin transform of a  $\theta$-lift of   the constant function.

\section{Identification of the $Spin(d,d)$ Epstein series with a lattice sum  }
\label{sec:latticeidentity}

In this appendix we prove that Langlands' definition of the maximal parabolic Eisenstein series  $E_{\alpha_1;s}^{Spin(d,d)}$ as a sum over group cosets  is equivalent to the lattice sum definition used in  \cite{Obers:1999um,Angelantonj:2011br}.   It is easier to work directly with the group $SO(d,d,\IR)$, which is quotient of $Spin(d,d,\IR)$ by an order two subgroup of its center.
We explained in (\ref{SpinddandSoddseries}) that $E_{\alpha_1;s}^{Spin(d,d)}$   is trivial on this subgroup and hence can be computed through  $E_{\alpha_1;s}^{SO(d,d)}$.

Let $w_n$  denote the anti-diagonal identity matrix obtained  by
reversing the columns of the  $n\times n$ identity matrix.  Define  groups
\begin{equation}\label{SOdddefs}
\aligned
    G \ \ & = \ \ SO(d,d,\IR) \ \ = \ \ \{\,g\,\in\,SL(2d,\IR)\ | \ gw_{2d}g^t = w_{2d}\,\}\,, \\
    \G \ \ & = \ \ SO(d,d,\Z) \ \ = \ \ SO(d,d,\IR)\,\cap\,SL(2d,\Z)
\,,
\endaligned
\end{equation}
and
\begin{equation}\label{Pa1def}
    P \ \ = \ \ P_{\a_1} \ \ = \ \ \left\{
    \tthree{a}{\star}{\star}{0}{B}{\star}{
    0}{0}{c} \ \in \ G \ | \ a,c\,\in\,\IR^* \, , \ \ B\,\in\,SO(d-1,d-1)(\IR)\,
\right\}.
\end{equation}

\begin{prop}\label{SOddfacts}With the above definitions

\begin{enumerate}
  \item[(i)] If $g_1$, $g_2 \in G=SO(d,d,\IR)$ have the same bottom row, then there exists some $p\in P$ such that $g_1=pg_2$.
  \item[(ii)] The bottom row $v$ of a matrix in $G=SO(d,d,\IR)$ is orthogonal to its reverse $w_{2d}v$.
  In particular, if $v=[m\,n]$ then $m\perp w_d n$.
  \item[(iii)] The map from a matrix to its bottom row gives a bijection from $(\G\cap P)\backslash \G$ to $\{v\in \Z^{2d}\,|\,\gcd(v)=1$ and $v\perp w_{2d}v\}/\{\pm 1\}$.
\end{enumerate}
\end{prop}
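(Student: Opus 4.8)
The plan is to extract all three assertions from the defining identity $gw_{2d}g^t = w_{2d}$ by elementary linear algebra --- over $\IR$ for (i) and (ii), over $\Z$ for (iii).

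I would treat part (ii) first, since it is immediate. Comparing the $(2d,2d)$-entries of the two sides of $gw_{2d}g^t=w_{2d}$ gives $v\,w_{2d}\,v^t=0$ for $v$ the bottom row of $g$: the $(2d,2d)$-entry of $gw_{2d}g^t$ equals $v\,w_{2d}\,v^t$, while the anti-diagonal matrix $w_{2d}$ has a zero in position $(2d,2d)$. Writing $w_{2d}=\left(\begin{smallmatrix}0 & w_d\\ w_d & 0\end{smallmatrix}\right)$ in $d\times d$ blocks and $v=[m\ n]$, and using $w_d^t=w_d$, this becomes $2\,m w_d n^t=0$, i.e.\ $m\perp w_d n$.

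For part (i), the hypothesis means $h:=g_1 g_2^{-1}\in G$ has bottom row $e_{2d}=[0,\dots,0,1]$, equivalently $h^t e_{2d}^t=e_{2d}^t$. Since $w_{2d}^2=I$, the relation $hw_{2d}h^t=w_{2d}$ rearranges to $h^{-1}=w_{2d}h^t w_{2d}$; using $w_{2d}e_1^t=e_{2d}^t$ one then computes $h^{-1}e_1^t=w_{2d}h^t e_{2d}^t=w_{2d}e_{2d}^t=e_1^t$, so $h$ fixes the column vector $e_1^t$ and in particular stabilizes the isotropic line $\IR e_1^t$ (isotropic because $(w_{2d})_{11}=0$). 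Now $P$ is, by inspection of its definition, exactly the stabilizer in $G$ of $\IR e_1^t$: stabilizing that line forces the first column of $h$ to be a multiple of $e_1^t$ and, since $h$ also preserves the hyperplane $(\IR e_1^t)^\perp=\operatorname{span}(e_1^t,\dots,e_{2d-1}^t)$, forces the last row of $h$ to be supported in its last entry, after which the determinant relations place the middle block in $SO(d-1,d-1)$. Hence $h\in P$ and $g_1=hg_2$.

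Finally, for part (iii) I would assemble the pieces. The bottom row $v$ of $\gamma\in\Gamma$ is integral, primitive (the rows of a matrix in $SL(2d,\Z)$ form a $\Z$-basis of $\Z^{2d}$, so each is primitive), and isotropic by part (ii), so the bottom-row map indeed lands in the stated target. Any $p\in\Gamma\cap P$ has bottom-right entry $c$ with $c,c^{-1}\in\Z$, hence $c=\pm1$; therefore the bottom row of a coset $(\Gamma\cap P)\gamma$ is well-defined only up to sign, which is precisely why the target is quotiented by $\{\pm1\}$, and conversely $\operatorname{diag}(-1,I_{2d-2},-1)\in\Gamma\cap P$ realizes the sign flip. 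Injectivity is then immediate from part (i): if two cosets have bottom rows equal to the same $v$ --- after multiplying one representative by $\operatorname{diag}(-1,I_{2d-2},-1)$ if needed to remove a sign discrepancy --- part (i) yields $p\in P$ with $\gamma_1=p\gamma_2$, and $p=\gamma_1\gamma_2^{-1}\in\Gamma$, so the cosets coincide. The one substantive point, and the main obstacle, is surjectivity: every primitive isotropic $v\in\Z^{2d}$ must occur as the bottom row of some $\gamma\in SO(d,d,\Z)$, i.e.\ $\Gamma$ acts transitively on primitive isotropic vectors of the split even unimodular lattice. I would obtain this either by invoking Eichler's theorem on the transitivity of the integral orthogonal group of an indefinite form of rank $\ge 3$ (the split form being alone in its genus), combined with the observation that the $O(d,d,\Z)$-stabilizer of an isotropic vector contains elements of determinant $-1$ (a reflection in the interior $O(d-1,d-1)$ block), so that $O$- and $SO$-orbits agree; or, to keep things self-contained, by explicitly reducing an arbitrary primitive isotropic $[m\ n]$ to the standard vector $e_{2d}$ through a sequence of integral Eichler (Siegel) transvections, each of which plainly lies in $SO(d,d,\Z)$. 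Combining well-definedness, injectivity, and surjectivity gives the bijection.
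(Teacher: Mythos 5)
Your parts (i) and (ii), and the reduction of (iii) to surjectivity, follow essentially the same lines as the paper: (ii) is read off the bottom-right entry of $gw_{2d}g^t=w_{2d}$, (i) shows $g_1g_2^{-1}$ has first column a multiple of $e_1$ and hence lies in $P$, and for (iii) you use, as the paper does, that the bottom row of an element of $\G\cap P$ is $\pm e_{2d}$ to get well-definedness up to sign and injectivity. Where you genuinely diverge is the crux, surjectivity: the paper gives a short explicit construction, first right-multiplying by $\ttwo{\widetilde g}{}{}{g}$ with $g\in GL(d,\Z)$ (and then by a further such element built from $GL(d-1,\Z)$) to reduce any primitive isotropic $v$ to the form $[0\cdots0\,r\,|\,0\cdots0\,k]$ with $\gcd(r,k)=1$, and then realizing this as a bottom row via an embedded copy of $SL(2,\Z)$; you instead invoke Eichler's transitivity theorem for the even unimodular split lattice (a sum of $d$ hyperbolic planes), handling the determinant issue by a reflection in the stabilizer of an isotropic vector. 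Your citation is legitimate and correct for $d\ge 2$ (the form $vw_{2d}v^t$ is indeed even unimodular, and the discriminant group is trivial, so primitivity and isotropy suffice), and it has the virtue of brevity and of placing the statement in a standard lattice-theoretic framework; the paper's argument buys self-containedness and constructiveness, using only the same $GL(d)$ and $SL(2)$ embeddings that appear elsewhere in the appendix. Note that your proposed "self-contained" alternative via Eichler transvections is only announced, not executed, and carrying it out would amount to reproving Eichler's criterion, at which point it is no shorter than the paper's explicit reduction; also note that both routes implicitly require $d\ge2$ (the paper's $SL(2)$ embedding and your two hyperbolic planes alike), which is the only regime in which the proposition is used.
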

\begin{proof}
Let $e_1,\ldots,e_{2d}$ denote the standard basis vectors of $\IR^{2d}$.
In part (i), the bottom row of the matrix $g_1g_2^{-1}$ is $e_{2d}g_1g_2^{-1}=e_{2d}$, as can be seen by multiplying both sides by $g_2$.  Thus $g_1g_2^{-1}$ has bottom row $e_{2d}$; membership of such a matrix in  $G$ forces its first column to equal a multiple of $e_1$, and so $g_1g_2^{-1}$ lies in $P$ .  Statement (ii) is a consequence of the defining property of $G$ (since the bottom right entry of $w_{2d}$ is zero).

Because of parts (i) and (ii) and the fact that the bottom row of a matrix in $\G\cap P$ is $\pm e_{2d}$, part (iii) reduces to showing that every such vector $v$ is the bottom row of some matrix  $\g$ in $\G$.  The calculation
\begin{equation}\label{SOddpf1}
    \ttwo{g_1}{}{}{g_2}\ttwo{}{w_d}{w_d}{}\ttwo{g_1^t}{}{}{g_2^t} \ \ = \ \ \ttwo{}{g_1w_dg_2^t}{g_2w_dg_1^t}{}
\end{equation}
shows that the matrix
\begin{equation}\label{Soddpf2}
    \ttwo{\widetilde{g}}{}{}{g} \ \in \ G \ \, , \ \  \text{with} \   \ \widetilde{g} \,= \,w_d(g^t)^{-1}w_d \, ,
\end{equation}
for any $g\in GL(d,\IR)$.  Since $g$ can be taken to be a matrix in $GL(d,\Z)$ with an arbitrary relatively prime bottom row, the proposition reduces to the case when $v$ has the form $v=[m\,n]$, where $m,n\in \Z^d$ and $n$ has the special form $[0\,0\,\cdots\,0\,k]$ (to see this, replace $v$ by $v\ttwo{\widetilde{g}}{}{}{g}$).  The orthogonality condition in part (ii) shows that we may furthermore take $m$ to have the form $[0\,a_2\,\cdots\,a_d]$ for integers $a_2,\cdots,a_d$.    Consider $g=\ttwo{A}{b}{}{1}\in GL(d,\Z)$, so that $\widetilde{g}=\ttwo{1}{\star}{}{\widetilde{A}}$.  Multiplying on the right by $\ttwo{\widetilde{g}}{}{}{g}$ has the effect of replacing $[a_2\,a_3\,\cdots\,a_d]$ by $[a_2\,a_3\,\cdots\,a_d]\widetilde{A}$.  Since $\widetilde{A}$ can be an arbitrary matrix in $GL(d-1,\Z)$, we may arrange that $[a_2\,a_3\,\cdots\,a_d]h=[0\,0\,\cdots\,0\,r]$ for some integer $r$.
The condition that the bottom row be relatively prime  now states that $\gcd(r,k)=1$.  Such a matrix exists because of
the homomorphism of   $SL(2,\IR)$ into $G$
which  sends a matrix $\ttwo abrk$ to one with entries $a$ on the 1st and $d$-th diagonal entries, entries $k$ on the $d+1$-st and $2d$-th diagonal entries, $-b$ in the $(1,d+1)$ position, $b$ in the $(d,2d)$ position, $-c$ in the $(d+1,1)$ position, and $c$ in the $(2d,d)$ position.
\end{proof}

\begin{prop}\label{lem:epsteinD5latticesum}
Let $h\in SO(d,d,\IR)$ and  write $hh^t=\ttwo{H_1}{H_2}{H_2^t}{H_3}$, where $H_1$, $H_2$, and $H_3$ are $d\times d$ matrices.  Then the
maximal parabolic Eisenstein series  $E^{SO(d,d)}_{\a_1;s}(h)$ associated to the first node (i.e., ``vector node'') of the $D_d$ Dynkin diagram is given by
\begin{equation}\label{epsteinD5formula}
    2\,\zeta(2s)\,E^{SO(d,d)}_{\a_1;s}(h) \ \ = \ \ \sum_{\srel{m,n\,\in\Z^d}{\srel{m\,\perp\, w_d n}{(m,n)\,\neq\,(0,0)}} } (mH_1m^t+2mH_2n^t+nH_3n^t)^{-s}
\end{equation}
for $\Re{s}$ large (where the sum is absolutely convergent).   The same formula holds for $E^{Spin(d,d)}_{\a_1;s}(h')$, where $h'\in Spin(d,d,\IR)$ projects to $h\in SO(d,d)$ under the covering map.
\end{prop}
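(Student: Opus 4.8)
The strategy is to start from Langlands' coset definition (\ref{maxparabeisdef}) of $E^{SO(d,d)}_{\a_1;s}$ and use the three parts of proposition~\ref{SOddfacts} to convert the sum over $(\G\cap P_{\a_1})\backslash \G$ into an explicit lattice sum. First I would compute the weight $e^{2s\,\omega_{\a_1}(H(gh))}$: the fundamental weight $\omega_{\a_1}$ associated to the vector node is, up to normalization, the logarithm of the bottom-left diagonal block parameter $a$ in the parametrization (\ref{Pa1def}), so $e^{2s\,\omega_{\a_1}(H(gh))}$ equals a negative power of a quantity built from the bottom row $v = e_{2d}\,gh$ of $gh$. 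Concretely, if $v = [m\ n]$ with $m,n\in\IR^d$, then the relevant norm-squared is the quadratic form $Q(v):= v\,(hh^t)\,w_{2d}\cdots$ — more precisely one checks, using the Iwasawa decomposition and the invariance $gw_{2d}g^t=w_{2d}$, that $e^{2\omega_{\a_1}(H(gh))} = (v\,\Theta\,v^t)^{-1}$ for the appropriate positive-definite Gram-type matrix $\Theta$ built from $hh^t$; matching block structure this is exactly $mH_1m^t+2mH_2n^t+nH_3n^t$ in the notation of the statement (after rearranging $hh^t$'s blocks against $w_d$).

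Next I would invoke proposition~\ref{SOddfacts}(iii): the map sending a coset $(\G\cap P_{\a_1})\backslash\G\ni \g$ to the bottom row of $\g$ is a bijection onto primitive vectors $v\in\Z^{2d}$ with $v\perp w_{2d}v$, modulo $\pm1$; and part (ii) identifies the orthogonality constraint $v\perp w_{2d}v$ with the condition $m\perp w_d n$. Part (i) guarantees that the summand $e^{2s\,\omega_{\a_1}(H(\g h))}$ depends only on the bottom row of $\g$ (two elements with the same bottom row differ by a left factor in $P_{\a_1}$, on which $\omega_{\a_1}$ is unaffected after accounting for the $H$-component — this needs a one-line check that left multiplication by the unipotent radical and the $SO(d-1,d-1)$ and unit-determinant pieces do not change the $a$-coordinate of the Iwasawa torus), so the sum over cosets becomes a sum over such $v$ modulo sign. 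Summing over all primitive $v$ modulo sign, then inserting the standard device $\sum_{\text{primitive }v}f(v) = \zeta(2s)^{-1}\sum_{v\neq 0}f(v)$ valid because $Q$ is a degree-$2s$ homogeneous form (here using $\sum_{k\ge 1}k^{-2s}=\zeta(2s)$ to account for the $\gcd$), yields the factor $2\zeta(2s)$ on the left-hand side and the unrestricted sum over $(m,n)\neq(0,0)$ with $m\perp w_d n$ on the right. Absolute convergence for $\Re s$ large is the standard Epstein-type estimate, using that the restriction to the quadric $m\perp w_d n$ still leaves a positive-definite form on a $(2d-1)$-dimensional lattice.

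Finally, for the $Spin(d,d)$ statement I would simply cite (\ref{SpinddandSoddseries}): the covering map $\pi: Spin(d,d,\IR)\to SO(d,d,\IR)$ has kernel a central subgroup acting trivially in the adjoint representation, so $E^{Spin(d,d)}_{\a_1;s}(h') = E^{SO(d,d)}_{\a_1;s}(\pi(h'))$, and since $hh^t$ and hence the blocks $H_1,H_2,H_3$ depend only on $\pi(h')$ (the maximal compact ambiguity is killed by passing to $hh^t$), the same formula holds verbatim. The main obstacle I anticipate is the bookkeeping in the first step: pinning down the precise identification of $e^{2s\,\omega_{\a_1}(H(gh))}$ with the quadratic form $mH_1m^t+2mH_2n^t+nH_3n^t$, including getting the interplay between $w_{2d}$, $w_d$, and the block decomposition of $hh^t$ exactly right, and verifying the normalization of $\omega_{\a_1}$ so that the exponent is $(\cdot)^{-s}$ rather than $(\cdot)^{-cs}$ for some stray constant $c$. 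Everything else is either a direct appeal to proposition~\ref{SOddfacts} or a routine convergence/primitivity argument.
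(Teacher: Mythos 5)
Your proposal is correct and follows essentially the same route as the paper: identify $e^{2s\omega_{\a_1}(H(\g h))}$ with the $-2s$ power of the norm of the bottom row $v=[m\ n]$ of $\g h$ (so that the summand is $(v\,hh^t\,v^t)^{-s}=(mH_1m^t+2mH_2n^t+nH_3n^t)^{-s}$), use proposition~\ref{SOddfacts} to parametrize $(\G\cap P_{\a_1})\backslash\G$ by primitive $v\in\Z^{2d}$ with $m\perp w_dn$ modulo $\pm1$, and remove the gcd and sign conditions to produce the factor $2\zeta(2s)$, with the $Spin(d,d)$ case following from (\ref{SpinddandSoddseries}). The only quibble is cosmetic: the set $\{m\perp w_dn\}$ is a quadric, not a rank $2d-1$ lattice, but convergence is immediate anyway since the sum is a subsum of the ordinary Epstein series attached to the positive definite form $hh^t$ on $\Z^{2d}$.
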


\begin{proof}
The $SO(d,d)$ Epstein series is formed by averaging the function $f(g)=e^{2s\omega_1(H(g))}$ over $g=\g h$, $\g\in (\G\cap P)
\backslash \G$.  Recall that $f(pgk)=f(g)$ for all  $p\in P$ such that
each diagonal block  $a$, $B$, and $c$ in (\ref{Pa1def}) has
determinant $\pm 1$, and for all $k$ in the maximal compact subgroup of $G$.
Calculation of $f$ thus reduces via the Iwasawa decomposition to the
case when $g$ is diagonal, in which case $f(g)$ equals the $-2s$ power of the
bottom right entry of $g$.   The bottom right entry of the Iwasawa
factor of $g$ must be the norm of $g$'s bottom row, because of these
invariance properties.  Hence $f(\g h)$  is the norm of the bottom row of   $\g h$ to the $-2s$
power. If $v=[m\,n]\in \Z^{2d}$ is the bottom row of $\g$, then the
norm is the squareroot of $vhh^t v^t= mH_1m^t+2mH_2n^t+nH_3n^t$.  Thus
$E^{SO(d,d)}_{\a_1;s}$ is given by a sum as stated, but with a gcd  and
modulo $\pm 1$
condition which, when removed, results in the factor $2\zeta(2s)$ in (\ref{epsteinD5formula}).
\end{proof}

For later reference we remark that since $w_d^2=1$ we can present these $D_d$
Epstein series as
\begin{multline}\label{epsteinD5formulaBis}
   2\,\zeta(2s)\,E^{SO(d,d)}_{\a_1;s}(h') \ \ = \ \  2\,\zeta(2s)\,E^{SO(d,d)}_{\a_1;s}(h) \\ = \ \ \sum_{\srel{m,n\,\in\Z^d}{\srel{m\,\perp\, n}{(m,n)\,\neq\,(0,0)}} } (mH_1m^t+2mH_2w_dn^t+nw_dH_3w_dn^t)^{-s}
\end{multline}
for $\Re{s}$ large.  Also we note that the condition for a matrix of the form $\ttwo{I_d}{X}{}{I_d}$ to lie in $G=SO(d,d,\IR)$ is that
\begin{equation}\label{Soddunipcond}
    \ttwo{Xw_d}{w_d}{w_d}{} \ \ = \ \ \ttwo{I_d}{X}{}{I_d}\ttwo{}{w_d}{w_d}{} \ \ = \ \ \ttwo{}{w_d}{w_d}{}\ttwo{I_d}{}{-X^t}{I_d} \ \ = \ \ \ttwo{-w_dX^t}{w_d}{w_d}{}\,,
\end{equation}
i.e., that $Xw_d$ is antisymmetric.

\section{A theta lift between $SL(d)$ and $Spin(d,d)$ Eisenstein series }
\label{sec:unfolding}

In proposition~\ref{prop:nonepsteinintegralrep}
we stated a relation between the modular integral $I(s,G)$ and the non-Epstein
 Eisenstein series $E^{SL(d)}_{\b_2;s}$.  In this section we see another relation
 (proposition~\ref{prop:Ddintegralrepn}) to the Epstein Eisenstein series $E^{D_d}_{\a_1;s+d/2-1}$, where $D_d$
 is written as a shorthand for statements that apply both to $SO(d,d)$ and $Spin(d,d)$.
   Thus both can be thought of as $\theta$-lifts from the usual nonholomorphic $SL(2,\Z)$ Eisenstein series.
We shall do this for $\Re{s}$
large, the range of absolute convergence
of the Eisenstein series,  and then meromorphically continue to $s\in
\C$.

  Unfolding the Eisenstein series in (\ref{Isdef})
 gives the formula
\begin{equation}\label{Is1}
    I(s,G+B) \ \ = \ \  \int_0^\infty\,\int_0^1\, {\mathcal G}(\tau_1+i\tau_2,G+B)\,d\tau_1  \f{d\tau_2}{\tau_2^{2-s}}\,.
\end{equation}
 This integral is absolutely convergent for $\Re{s}$ large by
(\ref{bigG3b}) and the bounds given in (\ref{Sgtbd4}).
We write $G=ee^t$ and introduce the notation $\| v\|^2=v \bar
  v^t$ if $v$ is a complex row vector.
  Using (\ref{nsntcalc}) we  write
\begin{multline}\label{Is2}
    {\mathcal G}(\tau_1+i\tau_2,ee^t+B) \ \ = \\
    \sum_{\srel{n\,\neq\,0}{m\,\in\,\Z^d}}\exp(-\pi
      \tau_2^{-1}\|(m+n\tau_1)e\|^2-\pi \tau_2 \|ne\|^2-2\pi i mBn^t)  \\
- \  \sum_{\srel{n\,\neq\,0}{m\,\in\,\Z^d\cap \Q n}}
\exp(-\pi \tau_2^{-1}\|(m+n\tau_1)e\|^2-\pi \tau_2 \|ne\|^2)\,,
\end{multline}
the second sum including all $m \in \Z^d$ which are parallel to $n$ (a condition which forces $mBn^t=0$).  Accordingly break up $I(s,ee^t+B)$ as $I_1(s,ee^t+B)-I_2(s,ee^t+B)$, where $I_1(s,ee^t+B)$ and $I_2(s,ee^t+B)$ represent
 the integral (\ref{Is1}) with ${\mathcal G}(\tau,G+B)$ replaced by the sums in the first and second
 lines of (\ref{Is2}), respectively.
 Both integrals are absolutely convergent and can be interchanged with their respective
 summations for $\Re{s}$ sufficiently large.
  We first compute
\begin{multline}\label{Is3}
    I_2(s,G+B) \ \ = \\ \sum_{n\,\neq\,0}  \int_0^\infty \int_0^1
    \sum_{m\,\in\,\Z^d\cap \Q n} \exp(-\pi \tau_2^{-1}\|(m+n\tau_1)e\|^2 -\pi \tau_2\| ne\|^2)d\tau_1\f{d\tau_2}{\tau_2^{2-s}}\,.
\end{multline}
Since $n$ is nonzero and $m$ is a multiple of $n$, we change variables by subtracting this multiple from $\tau_1$ to eliminate the occurrence of $m$ in the integrand.
Doing so unfolds the   $\tau_1$ integration from $[0,1]$ to $\IR$ by gathering together all $m$ which are
$\Z n$-translates of each other, though we must take into account the fact that there are $\gcd(n)$ orbits of $\{m\in \Z^d\cap \Q n\}$ under $m\mapsto m+n$:
\begin{equation}\label{Is4}
\aligned
  I_2(s,G+B) \ \ & = \ \ \sum_{n\,\neq\,0} \gcd(n) \int_0^\infty
\int_\IR \exp(-\pi (\tau_2+\tau_1^2\tau_2^{-1}) \|ne\|^2)d\tau_1 \f{d\tau_2}{\tau_2^{2-s}}\\
    & = \ \
    \sum_{n\,\neq\,0} \gcd(n) \int_0^\infty \, \exp(-\pi \tau_2 \|ne\|^2)  \smallf{\sqrt{\tau_2}}{\|ne\|} \f{d\tau_2}{\tau_2^{2-s}}  \\
    & = \ \
\pi^{\smallf12-s}\,    \G(s-\smallf12)\,\sum_{n\,\neq\,0} \gcd(n) \|ne\|^{-2s}  \,.
\endaligned
\end{equation}
We now evaluate this last sum, writing $e=r^{1/2}e'$, where $\det e'=1$, and $r=(\det e)^{2/d}$.
Decomposing $n\in \Z^d_{\neq 0}$ as $n=km$, with $\gcd(n)=k$ and  $\gcd(m)=1$,
it equals
\begin{equation}\label{Is5}
   r^{-s}  \sum_{\srel{m\,\in\,\Z^d}{\gcd(m)=1}}\sum_{k\,=\,1}^\infty k^{1-2s}\,\|me'\|^{-2s} \ \ = \ \ 2 \,r^{-s}   \,\zeta(2s-1)\,E^{SL(d)}_{\b_1;s}(e')\,,
\end{equation}
so
\begin{equation}\label{I2formula}
    I_2(s,ee^t+B) = 2\, (\det e)^{-2s/d} \, \xi(2s-1)\, E^{SL(d)}_{\b_1;s}(e)\,,
\end{equation}
initially for $\Re{s}$ sufficiently large and then by meromorphic continuation to $s\in \C$.

Next we compute
\begin{multline}\label{Is5b}
    I_1(s,ee^t+B) \ \ = \\  \int_0^\infty\int_0^1
    \sum_{n\,\neq\,0\atop {m\,\in\,\Z^d}}\exp(-\pi \tau_2^{-1}\|(m+n\tau_1)e\|^2 -\pi \tau_2 \|ne\|^2-2\pi i mBn^t)d\tau_1 \f{d\tau_2}{\tau_2^{2-s}}\,.
\end{multline}
Poisson summation allows us to rewrite
\begin{multline}\label{Is6a}
\sum_{m\,\in\,\Z^d}  \exp(-\pi \tau_2^{-1}\|(m+n\tau_1)e\|^2  -2\pi i mBn^t)
 \\ = \ \    \sum_{\hat{m}\,\in\,\Z^d}
     \exp(2\pi i \hat{m}\cdot n\tau_1)
     \int_{\IR^d}\exp(-2\pi i \hat{m}\cdot m -\pi \tau_2^{-1}\|me\|^2 -2\pi i mBn^t)\,dm   \\ = \ \
      \sum_{\hat{m}\,\in\,\Z^d}
     \exp(2\pi i \hat{m}\cdot n\tau_1) (\det e)^{-1}\tau_2^{d/2} \exp(-\pi\tau_2\|(\hat{m}-nB)(e^{-1})^t\|^2)\,,
\end{multline}
where $\hat{m}$ is thought of as a row vector.
Therefore
the integration over $\tau_1$ then forces $\hat{m}\perp n$:
\begin{multline}\label{Is7}
     I_1(s,ee^t+B)  \\ = \ \ (\det e)^{-1}\,\int_0^\infty
      \sum_{\srel{n\,\neq\,0}{\srel{\hat{m}\,\in\,\Z^d}{\hat{m}\,\perp\,n}}}\exp(-\pi
        \tau_2 \|ne\|^2-\pi\tau_2\|(\hat{m}-nB)(e^{-1})^t\|^2)
      \f{d\tau_2}{\tau_2^{2-s-\frac d2}}\\
       = \ \
       \f{ \G(s+\frac d2-1)}{
      (\det e) \,\pi^{s+\smallf d2-1}} \sum_{\srel{n\,\neq\,0}{\srel{\hat{m}\,\in\,\Z^d}{\hat{m}\,\perp\,n}}}(
      nGn^t+(\hat{m}-nB)G^{-1}(\hat{m}-nB)^t)^{1-s-d/2}\,.
\end{multline}
Again write $e=r^{1/2}e'$ with $\det e'=1$, and $r=(\det e)^{2/d}$ (thus $G=re'(e')^t$).  Recall (\ref{Soddpf2}) and define an element $h\in SO(d,d,\IR)$ by
\begin{equation}\label{Is7b}
    h \ \ = \ \ \ttwo{I}{Bw_d}{}{I}\ttwo{r^{1/2}e'}{}{}{r^{-1/2}\tilde{e}'} \, , \ \ hh^t \ \ = \ \ \ttwo{G+BG^{-1}B^t}{BG^{-1}w_d}{w_d G^{-1}B^t}{w_dG^{-1}w_d}.
\end{equation}
Then the inside sum in (\ref{Is7}) is computed
using (\ref{epsteinD5formulaBis}) as
\begin{multline}\label{Is8}
2\,\zeta(2s+d-2)\,E^{D_d}_{\a_1;s+d/2-1}(h)-\sum_{\hat{m}\,\in\,\Z^d_{\neq 0}}\|\hat{m}(e^{-1})^t\|^{2-2s-d}\ \ \\
 = \ \
  2\,\zeta(2s+d-2)\,E^{D_d}_{\a_1;s+d/2-1}(h)- (\det e)^{d+2s-2\over d} \sum_{\hat{m}\,\in\,\Z^d_{\neq 0}}\|\hat{m}((e')^t)^{-1}\|^{2-2s-d}
 \\ = \ \ 2\,\zeta(2s+d-2)\(E^{D_d}_{\a_1;s+d/2-1}(h)-(\det
e)^{d+2s-2\over d} E^{SL(d)}_{\beta_{d-1};s+d/2-1}(e')\).
\end{multline}
Combining (\ref{I2formula}), (\ref{Is7}), and (\ref{Is8}) we conclude the following:
\begin{prop}\label{prop:Ddintegralrepn}
With $h$  as in (\ref{Is7b})
 \begin{multline}\label{Is9}
    I(s,ee^t+B) \ \ = \ \ 2\, (\det e)^{-1}\,
    \xi(2s+d-2)\,E^{D_d}_{\a_1;s+d/2-1}(h)
\\ -2\, (\det e)^{2s-2\over d}\,
\xi(2s+d-2)\,E^{SL(d)}_{\beta_{d-1};s+d/2-1}(e')\\
-2\,(\det e)^{-{2s\over d}}\,\xi(2s-1)\,    E^{SL(d)}_{\b_1;s}(e')\,,
 \end{multline}
 initially for $\Re{s}$ large, and then to all $s\in \C$ by meromorphic continuation.
 \end{prop}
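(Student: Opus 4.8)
The plan is to assemble Proposition~\ref{prop:Ddintegralrepn} directly from the three computations already carried out in this appendix, combined with the decomposition
$$
I(s,ee^t+B) \ \ = \ \ I_1(s,ee^t+B) \ - \ I_2(s,ee^t+B)
$$
coming from the splitting of the theta-function $\mathcal G$ in (\ref{Is2}). The point is that essentially all of the work has been done: the identity (\ref{I2formula}) evaluates $I_2$ as a multiple of $(\det e)^{-2s/d}\,\xi(2s-1)\,E^{SL(d)}_{\b_1;s}(e)$, and the chain (\ref{Is5b})--(\ref{Is8}) evaluates $I_1$ in terms of an $E^{D_d}_{\a_1;s+d/2-1}$ Epstein series and an $E^{SL(d)}_{\beta_{d-1};s+d/2-1}$ series. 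So the first step is simply to substitute (\ref{Is8}) into (\ref{Is7}), tracking the prefactors $\G(s+d/2-1)/((\det e)\pi^{s+d/2-1})$ and rewriting them in terms of the completed zeta function $\xi$, using $\xi(w)=\pi^{-w/2}\G(w/2)\zeta(w)$ with $w=2s+d-2$; this converts $\pi^{-(s+d/2-1)}\G(s+d/2-1)\cdot 2\zeta(2s+d-2)$ into $2\,\xi(2s+d-2)$, which accounts for the first two terms on the right of (\ref{Is9}).

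Second, I would substitute (\ref{I2formula}) for $I_2$, which supplies the third term $-2(\det e)^{-2s/d}\xi(2s-1)E^{SL(d)}_{\b_1;s}(e')$; here one uses the normalization $e=r^{1/2}e'$, $\det e'=1$, $r=(\det e)^{2/d}$ already in force, so that $E^{SL(d)}_{\b_1;s}(e)=(\det e)^{-s}\,E^{SL(d)}_{\b_1;s}(e')$ up to the trivial central scaling (the Epstein series for $SL(d)$ is homogeneous of degree $-2s$ in the lattice, hence of degree $-s$ in $r$), matching the $(\det e)^{-2s/d}$ appearing in (\ref{Is9}). Third, I would confirm that the auxiliary element $h\in SO(d,d,\IR)$ defined in (\ref{Is7b}), namely $h=\ttwo{I}{Bw_d}{}{I}\ttwo{r^{1/2}e'}{}{}{r^{-1/2}\tilde e'}$, is exactly the group element whose Gram matrix $hh^t$ was used when invoking (\ref{epsteinD5formulaBis}) to identify the constrained lattice sum in (\ref{Is7}) with the Epstein series; the condition that $Bw_d$ be of the correct type for membership in $SO(d,d,\IR)$ is (\ref{Soddunipcond}), i.e.\ that $Bw_d$ be antisymmetric, which holds since $B$ itself is antisymmetric and $w_d^2=I$ — wait, more carefully, $Bw_d$ need not be antisymmetric in general, so one checks instead the precise incarnation used in (\ref{epsteinD5formulaBis}), where the relevant bilinear form is $mH_1m^t+2mH_2w_dn^t+nw_dH_3w_dn^t$ with $H_i$ the blocks of $hh^t$, and verifies by direct multiplication that this reproduces $nGn^t+(\hat m-nB)G^{-1}(\hat m-nB)^t$ with the orthogonality constraint $\hat m\perp n$. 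Finally, collecting the three contributions and renaming $e'$ as appropriate gives exactly the display (\ref{Is9}).

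The genuinely new content over the prior steps is just the bookkeeping, so there is very little obstacle of a conceptual nature; the one place requiring care — and the step I expect to be the main nuisance rather than a true obstacle — is the verification that the block structure of $hh^t$ in (\ref{Is7b}) matches the quadratic form appearing in (\ref{Is7}), together with a consistent treatment of the $\det e$ versus $\det e'$ powers and the reflection matrices $w_d$ between the $\|\hat m(e^{-1})^t\|$ normalization and the $E^{SL(d)}_{\beta_{d-1}}$ normalization in (\ref{Is8}). One must also be attentive to the meromorphic continuation: each of $I_1$, $I_2$, $I(s,\cdot)$, and the Eisenstein series on the right-hand side has a meromorphic continuation (the Epstein and $SL(d)$ series by Langlands, $I(s,\cdot)$ by Proposition~\ref{lem:IsuSgrowth} and the discussion around (\ref{Isdef})), and since the identity (\ref{Is9}) holds on the open set $\Re s$ large where all series converge absolutely, it extends to all $s\in\C$ by uniqueness of analytic continuation. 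I would state this continuation argument explicitly as the closing sentence of the proof, exactly as anticipated in the proposition's statement.
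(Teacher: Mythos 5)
Your proposal is correct and is essentially the paper's own argument: the paper obtains (\ref{Is9}) in exactly the same way, by combining the split $I=I_1-I_2$ from (\ref{Is2}) with (\ref{I2formula}), (\ref{Is7}), and (\ref{Is8}), identifying the constrained lattice sum with $E^{D_d}_{\a_1;s+d/2-1}(h)$ for the $h$ of (\ref{Is7b}) via (\ref{epsteinD5formulaBis}) (after subtracting the $n=0$ contribution, which produces the $E^{SL(d)}_{\beta_{d-1};s+d/2-1}$ term), and then invoking meromorphic continuation from the region of absolute convergence. The only blemish is cosmetic: the homogeneity relating $E^{SL(d)}_{\beta_1;s}(e)$ to $E^{SL(d)}_{\beta_1;s}(e')$ gives the factor $r^{-s}=(\det e)^{-2s/d}$ rather than $(\det e)^{-s}$, exactly as your own parenthetical and the prefactor in (\ref{Is9}) indicate.
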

 As before, all manipulations are valid because of the absolute convergence of the sums and integral involved and the assumption that $\Re{s}$ is sufficiently large.  Note that only the first line on the righthand side depends nontrivially on $B$.
In particular, if $B=0$ and $s=0$ the above equation provides an integral representation for $E^{D_d}_{\a_1;d/2-1}$,
 \begin{multline}\label{relationn}
    2\, u^{-d/2}   \xi(d-2)\,E^{D_d}_{\a_1;d/2-1}\ttwo{u^{1/2}e'}{}{}{u^{-1/2}\tilde{e}'} \ \  = \\
       I(0,ue'(e')^t) \, +  \, 2\,u^{-1}\, \xi(d-2)\,E^{SL(d)}_{\beta_{d-1};d/2-1}(e') \, + \, 2\, \xi(2)\,,
 \end{multline}
for any $e'\in SL(d,\IR)$.
 A similar expression appeared  in \cite{Green:2010wi,Obers:1999um,Angelantonj:2011br} but without that the last two terms in the second line.

{\bf Remark:}
 According to proposition~\ref{lem:IsuSgrowth} $I(0,uG)$ decays rapidly to
 zero as $u\rightarrow\infty$.  This is not immediately obvious from
 (\ref{relationn}), in which both other terms involving $u$ have polynomial
 behavior in that limit while the remaining term is constant.  However, the
 aggregate sum indeed does decay to zero.  This can be seen explicitly
through an analysis of the constant term of the $E^{D_d}_{\a_1;d/2-1}$ Eisenstein
series in the spinor parabolic (which determines these asymptotic behaviors).

\section{Fourier modes of Eisenstein series}
\label{modesdetails}

In this appendix we will present details of the Fourier modes of Eisenstein series that enter in the expressions for the  coefficients   $\cE^{(D)}_{(0,0)}$   and
$\cE^{(D)}_{(1,0)}$ when $D=8$, $D=7$, and $D=6$ (although the discussion of the $D=6$ case with symmetry group $Spin(5,5)$ is incomplete). This summarises and extends the string theory results in
  \cite{Green:2010wi} (see~\cite{Obers:1999um,Pioline:2010kb,Gubay:2010nd,Basu:2011he,Angelantonj:2011br} for related investigations).

\subsection{The $SL(3)\times SL(2)$ case}
\label{appendixcoeff8}

The results of this subsection are used in section \ref{exampleeight}
  in the text.   The coefficients are functions of both  the $SL(2)/SO(2)$  symmetric space, which depends on
$\calU= \calU_1+i \calU_2$ (the complex structure of the 2-torus, $\rT^2$),  and  the  $SL(3)/SO(3)$ space, which depends on  $5$ parameters.  We will parametrise the $SL(2)/SO(2)$ coset by \eqref{sl2param} (with $\Omega$ replaced by $\calU$) while
the $SL(3)/SO(3)$ coset will be parameterised by the string fluxes as
\begin{equation}\label{e:e3def}
    e_3=
    \begin{pmatrix}
      1& B_{\rm NS}&C^{(2)}+\Omega_1 B_{\rm NS}\\ 0 & 1&\Omega_1 \\ 0&0&1
    \end{pmatrix}\,
\begin{pmatrix}
 \nu_2^{-\frac13} &0&0\\ 0& \nu_2^{\frac16}\sqrt{\Omega_2} & 0\\
 0&0&{\nu_2^{\frac16}\over\sqrt{\Omega_2} }
\end{pmatrix},
\end{equation}
where $\nu_2^{-\half}=r_1r_2/\ell_{10}^2=\sqrt{\Omega_2}T_2$ is the
volume of the 2-torus in 10 dimensional Planck units and
  $T_2=r_1r_2/\ell_s^2$ is the volume in string units.  The five parameters of the coset are packaged into
$(\Omega,T,C^{(2)})$, where   $\Omega=\Omega_1+i\Omega_2$   and
$T=T_1+iT_2$ (where $T_1 =B_{\rm NS}$).    We shall also make use of the combination
 $y_8^{-1}=\Omega_2^2T_2$ , which is the square of the inverse  string coupling.
 The complex parameter  $T$  is interpreted as the K\"ahler structure of $\rT^2$.

The coefficient functions $\cE^{(8)}_{(0,0)}$   and $\cE^{(8)}_{(1,0)}$ are solutions of  \eqref{laplaceeigenone} and \eqref{laplaceeigentwo}  with $D=8$ \cite{Basu:2007ru,Green:2010wi},
\begin{eqnarray}
  \Delta^{(8)} \,\cE^{(8)}_{(0,0)}&=&6\pi
   \label{delta8op1}\\
(\Delta^{(8)}-{10\over3})\,\cE^{(8)}_{(1,0)}&=&0
\label{delta8op2}\,,
\end{eqnarray}
where the $SL(3)\times SL(2)$ Laplace operator is defined in terms of the parameters introduced above by
\be
\Delta^{(8)}:=
\Delta^{SL(3)}+2\Delta_U^{SL(2)}\,,
\label{laplace8}
\ee
with
\begin{eqnarray}
  \label{e:Delta8}
\Delta^{SL(3)}&=&\Delta_\Omega+{|\partial_{B_{\rm
      NS}}-\Omega \partial_{C^{(2)}}|^2\over
  \nu_2\Omega_2}+3\partial_{\nu_2}(\nu_2^2\partial_{\nu_2})\\
  \text{and~~~~}\ \ \ \ \Delta^{SL(2)}_Z&=&Z_2^2\, (\partial^2_{Z_1}+\partial^2_{Z_2})\,,
\end{eqnarray}
where $Z=Z_1+iZ_2$ and $Z=\Omega$ or $\calU$.
The fact that the eigenvalue in \eqref{delta8op1} vanishes, together with the presence of the $6\pi$ on the right-hand side, is related to the presence of a 1-loop  ultraviolet divergence in eight-dimensional maximally supersymmetric supergravity \cite{Green:2010sp}.

The solutions to these equations are given in terms of $SL(2)$ and $SL(3)$  Eisenstein series. The $SL(2)$ series is given by   \eqref{e:EisSl2} while  the $SL(3)$ (Epstein) Eisenstein  series is given
by
\be
\label{e:EpSl3}
  2\zeta(2s)\, E^{SL(3)}_{\alpha_1;s}(e_3)=        \sum_{M_3\in\ZZ^3\bsz}
  (m^2_{SL(3)})^{-s}\,,
  \ee
where, setting $M_3=(m_1\,m_2\,m_3)\in\ZZ^3$,  the  mass squared  is given  by
\begin{eqnarray}
\label{e:EpSl3ma}
m^2_{SL(3)}&:=&  M_3G_3M_3^t\\
&=& {\nu_2^{\frac13}\over\Omega_2}\, \left(|m_3+m_2\Omega+\calB m_1|^2+(m_1\Omega_2T_2)^2\right)\nn
\end{eqnarray}
with
\begin{equation}
\label{g3def}
 G_3:=e_3e_3^t=\nu_2^{\frac13} \,
\begin{pmatrix}
 \nu_2^{-1}+(G_2)_{ab} B^aB^b&    (G_2)_{ab} B^b
\\
 (G_2)_{ab} B^a&(G_2)_{ab}
\end{pmatrix}\,,
\end{equation}
\be\label{g2def}
G_2:={1\over\Omega_2}\,
\begin{pmatrix}
  |\Omega|^2 &\Omega_1\\
\Omega_1&1
\end{pmatrix}, \ \ \
B:=
\begin{pmatrix}
  B_{\rm NS} \\ C^{(2)}
\end{pmatrix}, \ \ \text{and} \ \  \calB:=C^{(2)}+\Omega B_{\rm NS}
\,.
\ee
The Eisenstein series  $E^{SL(3)}_{\alpha_1;s}$ is related to $E^{SL(3)}_{\alpha_2;s}$ by the  functional relation
\begin{equation}
  \label{e:SL3Function}
 \xi(2s)\,         \,            E_{\alpha_1;s}^{SL(3)}(e_3)=\xi(3-2s)\,
  E_{\alpha_2;\frac32-s}^{SL(3)}(e_3)\, .
\end{equation}
The Fourier modes of the coefficient functions can now be considered in each of the three parabolic subgroups of interest, after putting the $SL(3,\Z)$ part together with the $SL(2,\Z)$ part. The unipotent radicals in these three cases are given by:

(i) {\bf   The unipotent radical  $U_{\alpha_3}$  of the
nonmaximal  parabolic  $P_{\alpha_3}=GL(1)\times  SL(2)\times\IR^+ \times
U_{\alpha_3}$. }   As noted earlier, the relevant parabolic is nonmaximal in order to   match the
$D=9$ duality group  associated with the decompactification limit.  The unipotent radical is
parametrized  by $(C^{(2)},B_{\rm NS})$ and  takes  the block
diagonal form,
\begin{equation}\label{e:UnipotentN3E3}
U_{\alpha_3}=  \begin{pmatrix} \begin{pmatrix}
  1&B_{\rm NS}&C^{(2)}\\
0&1&0\\
0&0&1
  \end{pmatrix} &0\\
0& \begin{pmatrix}
  1&\calU_1\\
0&1\\
  \end{pmatrix}
 \end{pmatrix}
\,.
\end{equation}

In the maximal parabolic subgroup of $SL(3)$ determined by $B_{\rm{NS}}$ and $C^{(2)}$ the
  Fourier coefficients of the $SL(3)$ Eisenstein series
  in~\eqref{e:EpSl3}  are defined by\footnote{The labelling of the simple roots $\beta_1$ and $\beta_2$ on these
  Fourier coefficients uses  the conventional labelling of the $SL(3)$
  Dynkin diagram according the convention in figure~\ref{fig:E3lab}.}
\begin{equation}\label{e:F2def}
  F^{SL(3)\beta_2}_{\beta_1;s}(kp_1,kp_2):=\int_{[0,1]^2} dB_{\rm NS} dC^{(2)}\,
  e^{-2i\pi k (p_1 C^{(2)}+p_2 B_{\rm NS})}\, E^{SL(3)}_{\alpha_1;s}\,,
\end{equation}
with $\gcd(p_1,p_2)=1$. Extending the constant term computation in~\cite[Appendix B.4]{Green:2010wi}, the Fourier coefficients for $k\neq 0$ are
\begin{equation}
  \label{e:F2}
  F_{\beta_1;s}^{SL(3)\, \beta_2}(kp_1,kp_2)
  = {2\over \xi(2s)}\, \Omega_2^{1-{2s\over3}}T_2^{1-{s\over3}}\,
 {\sigma_{2s-2}(k)\over |k|^{s-1}}\, {K_{s-1}(2\pi |k|\, |p_2+p_1\Omega|T_2)\over |p_2+p_1\Omega|^{1-s}}\,.
 \end{equation}
The Fourier modes of the $SL(2)$ series are defined as
\begin{equation}\label{e:F22def}
  F^{SL(2)}_{s}(k'):=\int_{[0,1]}
d\calU_1\,
  e^{-2i\pi k'\calU_1}\, E^{SL(2)}_s(\calU)\,,
\end{equation}
where
\begin{equation}\label{e:F22}
  F^{SL(2)}_{s}(k')={2\sqrt{\calU_2}\over\xi(2s)}
  {\sigma_{2s-1}(|k'|)\over |k'|^{s-\frac12}}\, K_{s-\frac12}(2\pi|k'|\calU_2)\,
\end{equation}
for $k'\neq 0$  (cf.~(\ref{e:ESl2fourierCoef})).

 Putting this together, the Fourier modes of the product of the $SL(3)$
  and the $SL(2)$ series are given by
\begin{eqnarray}
\label{e:oldF23}
  F^{SL(3) \times SL(2)\alpha_3}_{\alpha_1;s,s'}(kp_1,kp_2,k')&:=&\int_{[0,1]^2}\!\!\!\!\!\!\! dB_{\rm NS} dC^{(2)}\,
  e^{-2i\pi k (p_1 C^{(2)}+p_2 B_{\rm NS})}\, E^{SL(3)}_{\beta_1;s }\nn\\
\nn&& \times \int_{[0,1]}\,d\calU_1\,
  e^{-2i\pi k'\calU_1}\, E^{SL(2)}_{s'}(\calU) \\
  &=&\cF^{SL(3)\beta_2}_{\beta_1;s}(kp_1,kp_2)\,\cF^{SL(2)\alpha_3}_{s'}(k')\,,
\end{eqnarray}
with $\gcd(p_1,p_2)=1$.  These results are used in~\eqref{fouriereight} and~\eqref{fouriereight2}, where we provide a physical interpretation of the Fourier modes in the decompactification regime (limit  (i) in the notation of~\eqref{notation}).

(ii)  {\bf
The unipotent radical  $U_{\alpha_1}$ of the maximal
parabolic  subgroup $P_{\alpha_1}=GL(1)\times  Spin(2,2)\times U_{\alpha_1}$ } associated with the string  perturbation
regime   is   parametrized  by
$(\Omega_1,C^{(2)})$ and takes the form,
 \begin{equation}\label{e:UnipotentN1E3}
U_{\alpha_1}=  \begin{pmatrix} \begin{pmatrix}
  1&0&C^{(2)}\\
0&1&\Omega_1\\
0&0&1
  \end{pmatrix} &0\\
0&
\begin{pmatrix}
  1&0\\0&1
\end{pmatrix}
 \end{pmatrix}
\,.
\end{equation}
In this maximal parabolic  only the $SL(3)$ series have
  non-vanishing
  Fourier coefficients, which are  defined by
\begin{equation}
  F^{SL(3)\beta_1}_{\beta_1;s}(kp_1,kp_2):=\int_{[0,1]^2} d\Omega_1 dC^{(2)}\,
  e^{-2i\pi k (p_1 C^{(2)}+p_2 \Omega_1)}\, E^{SL(3)}_{\alpha_1;s}
\end{equation}
with $\gcd(p_1,p_2)=1$. Extending the constant term calculation in~\cite[Appendix B.4]{Green:2010wi}
  leads to
  \begin{equation}
   \label{e:F1}
 F_{\beta_1;s}^{SL(3)\,\beta_1}(kp_1,kp_2)  = {2\over \xi(2s)}\, T_2^{2s\over3}\, \Omega_2^{\frac12+{s\over 3}}\,
 {\sigma_{2s-1}(k)\over |k|^{s-\frac12}}\,  {K_{s-\frac12}(2\pi  |k|\, |p_1   T+p_2|\Omega_2)\over      |p_1   T+p_2|^{s-\frac12}}\,.
 \end{equation}
These results are used in~\eqref{e:R4Dstring} and~\eqref{e:D4R4Dstring}, where we provide a physical interpretation of the Fourier modes in the perturbative regime (limit  (ii) in the notation of~\eqref{notation})

(iii) {\bf  The unipotent radical $U_{\alpha_2}$ of the maximal
parabolic   $P_{\alpha_2}=GL(1)\times   SL(3)\times U_{\alpha_2}$ } associated with the semi-classical M-theory limit    is  parametrized   by  $\calU_1$ and takes the form
\begin{equation}\label{e:UnipotentN2E3}
U_{\alpha_2}=  \begin{pmatrix} \begin{pmatrix}
  1&0&0\\
0&1&0\\
0&0&1
  \end{pmatrix} &0\\
0& \begin{pmatrix}
  1&\calU_1\\
0&1\\
  \end{pmatrix}
 \end{pmatrix}
\,,
\end{equation}
In this maximal parabolic subgroup only the  $SL(2)$ series has
  non-vanishing Fourier coefficients, which are given in (\ref{e:F22def}-\ref{e:F22}).
The evaluation of the
non-zero Fourier coefficients  of $\cE^{(8)}_{(0,0)}$ and
$\cE^{(8)}_{(1,0)}$ in each of the three limits of interest is straightforwardly obtained by using the above expressions, and is
discussed in section~\ref{exampleeight}.

\subsection{The $SL(5)$ case}
\label{appendixcoeff7}

Here we consider the Fourier modes of the   Eisenstein series that enter the expressions for the  coefficients   $\cE^{(7)}_{(0,0)}$   and
$\cE^{(7)}_{(1,0)}$
  that are used in section  \ref{exampleseven}
  in the text.

In $D=7$ dimensions the coefficient functions are automorphic under the action of the duality group $SL(5,\Z)$ and
are functions on  the  14-dimensional  coset   space   $SL(5)/SO(5)$, which  is   parametrized, using the notation that arises from string theory, by
\begin{equation}\label{e:bige5}
  e_5=
  \begin{pmatrix}
  &B^1_{\rm NS}&C^{(2)\,1}+\Omega_1 B^1_{\rm NS}\\
 u_3 &B^2_{\rm NS}&C^{(2)\,2}+\Omega_1 B^2_{\rm NS}\\
    &B^3_{\rm NS}&C^{(2)\, 3}+\Omega_1 B^3_{\rm NS}\\
    0&1&\Omega_1\\
    0&0&1
  \end{pmatrix}
\,
\begin{pmatrix}
&& &0&0\\
& \nu_3^{-2/15}D_3 &&0&0\\
&&&0&0\\
  0&0&0&{\nu_3^{\frac15}\, \sqrt{\Omega_2}}&0\\
  0&0&0&0&{\nu_3^{\frac15}\over \sqrt{\Omega_2}}
\end{pmatrix}\,,
\end{equation}
where $\Omega_2$ is the inverse string coupling constant,  $\Omega_1$ is the type IIB  $RR$ pseudoscalar,
   and  $B_{\rm NS}^i$ and $C^{(2)\, i}$ ($i=1,\dots,3$)  the $NS$ and  $RR$  charges.
 The quantity $u_3$ is  a $3\times 3$  unit upper
triangular matrix and $D_3$ is a $3\times 3$ diagonal matrix  with $\det D_3=1$.  These are defined so that
$ \tilde e_3=u_3 D_3$ or equivalently
  $\tilde G_3=\tilde e_3\cdot \tilde e_3^t$  parametrizes the coset
$SL(3)/SO(3)$  describing the perturbative string
  compactified on a three-torus.
We will make use of the following combinations,
\be
  \nu_3^{-1}=\left(r_1r_2r_3\over\ell_{10}^3\right)^2=\Omega_2^{3\over2}\,\left(r_1r_2r_3\over\ell_{s}^3\right)^2\,, \qquad
  y_7^{-1}= \Omega_2^2\, {r_1r_2r_3\over \ell_s^3}\,,
\ee
where $r_1$, $r_2$ and $r_3$ are the radii of $\rT^3$ and $y_7$ is the $7$-dimensional string coupling.
 Note that $\nu_3$ is invariant under the action of  $SL(3)\times SL(2)$.

The coset space $SL(5)/SO(5)$ is parametrized by the metric  $G_5=
e_5 e_5^t$,
\begin{equation} \label{e:g5coset}
G_5=\nu_3^{\frac25}
\begin{pmatrix}
\nu_3^{-\frac23}\,  (\tilde G_3)_{ij}+
  (G_2)_{ab} B_i^a B_j^b&  (G_2)_{ab}    B_j^b   \\
 (G_2)_{ab} B^a_j&(G_2)_{ab}\\
\end{pmatrix},
\end{equation}
 where again
\begin{equation}
\label{gambdef}
G_2={1\over \Omega_2}
\begin{pmatrix}
|\Omega|^2& \Omega_1\\
 \Omega_1&1\\
\end{pmatrix} \qquad \text{and}\qquad
B=
\begin{pmatrix}
  B^1_{\rm NS} & B^2_{\rm NS} & B^3_{\rm NS}\\
 C^{(2)\,1} & C^{(2)\,2} & C^{(2)\,3}\\
\end{pmatrix}.
 \end{equation}
The  $SL(5)$ mass squared is  given by the quadratic form
\begin{eqnarray}
\label{sl5massdef}
  m^2_{SL(5)}&:=&  M_5G_5 M_5^t\\
\nn&=&\nu_3^{\frac25} {|m_1+m_2\Omega+n^t\cdot  (C^{(2)}+\Omega B_{\rm NS})|^2\over \Omega_2}+ {n  \tilde G_3 n^t \over \nu_3^{4\over15}}\,,
\end{eqnarray}
where
$M_5:=[n_1\,n_2\,n_3\,m_2\,m_1]\in \ZZ^5\bsz$,
$n:=[n_1\,n_2\,n_3]$,  and  $B_{\rm NS}$  and $C^{(2)}$ are the
first and second rows of the matrix $B$, respectively.
This expression will later be useful for describing the $SL(5)$ Eisenstein series.

The   $\smallf 12$-BPS    and   $\smallf 14$-BPS   coefficients,    $\cE^{(7)}_{(0,0)}$   and
$\cE^{(7)}_{(1,0)}$,   that solve \eqref{laplaceeigenone} and
\eqref{laplaceeigentwo} together with the appropriate boundary
conditions      are       given\footnote{In~\cite{Green:2010wi}
   these series were defined as
   $\bE^{SL(5)}_{[1000];s}=2\zeta(2s)E^{SL(5)}_{\beta_1;s}$ and $\bE^{SL(5)}_{[0010];s}=4\zeta(2s)\zeta(2s-1)E^{SL(5)}_{\beta_3;s}$ }     in~\cite{Green:2010wi}  by linear combinations of the
$E^{SL(5)}_{\beta_1;s}$  and $E^{SL(5)}_{\beta_3;s}$    Eisenstein
series  as described in (\ref{esl510a}-\ref{esl5501a}).
The definitions and Fourier expansions  of the Eisenstein series in this expression will now be reviewed.

 \subsubsection{Fourier modes of the series $E^{SL(5)}_{\beta_1;s}$}\hfil\break
 \label{sec:sl5modes}

The $E^{SL(5)}_{\beta_1;s}$  series may be written using \eqref{sl5massdef} in the form
\begin{equation}
 2\zeta(2s)\,E^{SL(5)}_{\beta_1;s}=\sum_{M_5\in \ZZ^5\bsz} (M_5G_5 M_5^t)^{-s}\,.
   \label{e:R47d}
 \end{equation}
The constant terms with respect to the maximal parabolic subgroups
  $P_{\beta_3}$, $P_{\beta_1}$, and
  $P_{\beta_4}$  (corresponding to limits (i), (ii), and (iii), respectively) were evaluated
  in~\cite{Green:2010wi}.  Note that in terms of our matrix identification used in (\ref{e:R47d}), $P_{\a_1}$ corresponds to the subgroup of $SL(5)$ whose bottom row has the form $(0\,0\,0\,0\,\star)$.


(i) {\bf The   parabolic    $P_{\beta_3}=    GL(1)\times
  SL(2)\times SL(3)\times U_{\beta_3}$.}

The unipotent radical for this parabolic subgroup is
  abelian and is given by
\begin{equation}
  U_{\beta_3}=
  \begin{pmatrix}
    I_2&Q_4 \\
    0&I_3\\
  \end{pmatrix}\ , \ \ \text{ with}\ \ \  Q_4=
  \begin{pmatrix}
  G_{13}&  B_{\rm NS}^1 & C^{(2)\, 1}+\Omega_1B_{\rm NS}^1 \\
  G_{23}&  B_{\rm NS}^2 & C^{(2)\,2}+\Omega_1B_{\rm NS}^2 \\
  \end{pmatrix}.
  \label{Q4defa}
\end{equation}
The Fourier modes are defined by
\begin{equation}  \label{e:EpPalpha4def}
 F^{SL(5)\beta_3}_{\beta_1;s}(N_4):=\int_{[0,1]^6}\, d^6Q_4 \, e^{-2i\pi\, \tr(
   N_4 Q_4)} \, E^{SL(5)}_{\beta_1;s}\,,
\end{equation}
where $N_4\in M(3,2;\ZZ)$.

For all values of $s$ the Fourier modes are only non-zero
  when $N_4$ has rank 1.   Such a matrix can be written as  $N_4=k\, \tilde N_4$, with
  $\gcd(\tilde N_4)=1$ and
\begin{equation}\label{e:DefN4a}
   \tilde N_4=n^tm =
    \begin{pmatrix}
       m_1  n_1& m_2  n_1\\
       m_1  n_2& m_2  n_2\\
        m_1  n_3& m_2  n_3\\
    \end{pmatrix} \ , \quad n=(n_i)\in\ZZ^3,  \, m=(m_a)\in\ZZ^2
\,.
\end{equation}
The decomposition $N_4=kn^tm$ of the rank one matrix $N_4$ is unique up to signs of the factors.  Moreover, $\gcd(n_1,n_2,n_3)=\gcd(m_1,m_2)=1$.

Poisson resummation on two integers,  keeping the off-diagonal terms in the parametrisation of~\cite[section~B.5.2]{Green:2010wi}, results in the
following formula for the Fourier coefficients:
\begin{multline}  \label{e:EpPalpha4FourierModes}
 F^{SL(5)\, \beta_3}_{\beta_1;s}(k,\tilde N_4) \ \ = \ \    {2\over\xi(2s)}\,r^{3-{2s\over5} }
{\sigma_{2s-3}(|k|)\over |k|^{s-\frac32}}\,
\left(\smallf{ \|n( e_3^t)^{-1}\| }{
  \|m e_2\|}\right)^{s-{3\over 2}}\ \times \\
 K_{s-\frac32}\left(2\pi|k| \,r^2 \, \|m  e_2\| \,  \|n (e_3^t)^{-1}\| \right)  \,,
\end{multline}
where $e_2$ and $e_3$ are the $SL(2)$ and $SL(3)$ components, respectively, of the semisimple part of the Levi component of $P_{\b_3}$, and $\tilde e_3$ refers to the contragredient defined in (\ref{Soddpf2}).  Note $\|me_2\|$ and $\|n\tilde e_3\|$ are independent of the choice of factorization of $\tilde N_4=n^tm$.
The matrix $\tilde N_4$ is transformed by the action of  $SL(3,\ZZ)$
on the left and by the action of  $SL(2,\ZZ)$ on the right.  Because  $\tilde N_4$    has rank 1, it therefore
satisfies the $\smallf 12$-BPS conditions
$\epsilon_{ab} (N_4)_i{}^a(N_4)_j{}^b=0$ of section~\ref{sec:D8}.
In other words, for any value of $s$, the Fourier modes fill out $\smallf 12$-BPS orbits -- one for each value of $k$.

(ii) {\bf The  parabolic   $P_{\beta_1}=GL(1)\times
  SL(4) \times U_{\beta_1}$.}

  The unipotent radical for this parabolic is abelian and is given in our parameterisation  by
\begin{equation}
  U_{\beta_1}=
  \begin{pmatrix}
    I_4&Q_1 \\
    0&1\\
  \end{pmatrix}\,,  \ \ \text{with} \ \ \  Q_{1}=
  \begin{pmatrix}
    C^{(2)\, 1}+\Omega_1B_{\rm NS}^1\\ C^{(2)\,2}+\Omega_1B_{\rm NS}^2\\ C^{(2)\,3}+\Omega_1B_{\rm NS}^3 \\ \Omega_1
  \end{pmatrix} ,
  \label{Q1defa}
\end{equation}
where $I_4$ is the $4\times 4$  identity matrix and $Q_1$ is
a four-dimensional vector   that can also be thought of as a spinor for $Spin(3,3)$.

The Fourier modes are defined by
\begin{equation}  \label{e:EpPalpha1def}
 F^{SL(5)\beta_1}_{\beta_1;s}(k,N_1):=\int_{[0,1]^4} d^4Q_1 \, e^{-2i\pi\,k
   N_1  Q_1} \, E^{SL(5)}_{\beta_1;s}\,,
\end{equation}
where the row vector $N_1\in \ZZ^4$ is such that $\gcd(N_1)=1$. These Fourier modes  are evaluated by a straightforward extension of the expansion given in~\cite[section~B.5.1]{Green:2010wi}, which computed only the
constant terms (for which it is sufficient to set $Q_1=0$).  The result  is
\begin{equation}
F^{SL(5)\, \beta_1}_{\beta_1;s}(k,N_1)={2\over\xi(2s)}\,    r^{1+{6s\over    5}}\,
{\sigma_{2s-1}(|k|)\over |k|^{s-\frac12}}\,
{K_{s-\frac12}\left(2\pi|k|\,r^2 \,\| N_1e_4\|\right)\over \| N_1
  e_4\|^{s-{1\over2}}  } \,.
    \label{e:EpPalplha1FourierModes}
\end{equation}

\smallskip

 (iii) {\bf The parabolic   $P_{\beta_4}=GL(1)\times   SL(4)\times U_{\beta_4}$}

  The unipotent radical is abelian and given by
\begin{equation}
  U_{\beta_4}=
  \begin{pmatrix}
    1&Q_2 \\
    0&I_4\\
  \end{pmatrix}\,, \qquad Q_2=
  \begin{pmatrix}
    C_{123} \ C_{124} \ C_{234} \ C_{134}
  \end{pmatrix}\, ,
\end{equation}
where $Q_2$ is again a  $SL(4)$ (row) vector.  The notation indicates that it is
parametrized by the 3-form  flux of the $M2$-brane world-volume wrapped  on the M-theory 4-torus, $\calT^4$.  This translates into  the $NS$ components of flux,
$B_{{\rm NS}\, 12}, B_{{\rm NS}\, 23}, B_{{\rm NS}\, 13}$,  and the $RR$
$D2$-brane  flux, $C^{(3)}_{123}$.   In  type~IIB language these components become the $NS$ flux
$B_{{\rm NS}\, 12}$,  the $RR$ $D$-string flux  $C^{(2)}_{12}$ and  the
Kaluza--Klein momenta  from the components of the metric $g_{i\,3}$ with $i=1,2$.

The Fourier coefficients  in this parabolic are indexed by
$k\in\ZZ$ and $N_4\in\ZZ^4$ with $\gcd(N_4)=1$ by the formula
\begin{equation}
      \label{e:EpPalpha2Def}
F^{SL(5)\beta_4}_{\beta_1;s}(k,N_4):= \int_{[0,1]^4} d^4Q_2 \,
e^{-2i\pi\,k\, N_4^t\cdot Q_2}\, E^{SL(5)}_{\beta_1;s}\,.
\end{equation}
These coefficients can  again be evaluated  by an extension of the
computation of~\cite[section~B.5.1]{Green:2010wi}, keeping the
off-diagonal terms, which gives
\begin{multline}
  \label{e:EpPalpha2FourierModes}
F^{SL(5)\, \beta_4}_{\beta_1;s}(k,N_4) \ \ = \\
{2\over\xi(2s)}\, r^{4-{6s\over5}}\,
{\sigma_{2s-4}(|k|)\over |k|^{s-2}}\,\| N_4e_4^{-1}\|^{s-2}\,K
_{s-2}\left(2\pi|k|\, r^2\,\| N_4 e_4^{-1}\|\right)\,,
\end{multline}
where $r={\mathcal V}_4^{3/8}\ell_{11}^{-3/2}$, and again $\gcd(N_4)=1$.


\subsubsection{Fourier modes of the series $E_{\beta_2;s}^{SL(5)}$ }\hfil\break
\label{nonepsteinmodes}

Our method of determining the Fourier modes of the non-Epstein $SL(5)$ series $E_{\beta_3;s}^{SL(5)}$ is based on the integral representation described in proposition~\ref{prop:nonepsteinintegralrep}.  For computational reasons it is easier to work with the series $E_{\beta_2;s}^{SL(5)}$, which is related both by the functional equation in  (\ref{I04}) and  the contragredient map $g\mapsto \widetilde g$ defined in (\ref{Soddpf2}).  Here we shall compute its nonzero Fourier modes in each of the four standard maximal parabolic subgroups $P_{\b_1}$, $P_{\b_2}$, $P_{\b_3}$, and $P_{\b_4}$ of $SL(5)$; the relevant Fourier modes for $E_{\beta_3;s}^{SL(5)}$ will be derived  from this in section~\ref{exampleseven}.

\paragraph{ {\bf The parabolic}   $P_{\b_1}=SL(4)\times GL(1)\times U_{\b_1}$}\label{subsubsub1}\hfil\break

In this case $e$ has the special form $\ttwo{I_4}{Q}{}{1}\ttwo{e_4}{}{}{e_1}$, where $Q\in M_{4,1}(\IR)$, $e_1\neq 0$, and $e_4\in GL(4,\IR)$.  Note that we do not assume that $\det e=1$, so that we can later utilize proposition~\ref{prop:nonepsteinintegralrep}. The sum (\ref{bigGdef}) can be written as
\begin{equation}\label{bigGdeftris}
    {\mathcal G}(\tau,ee^t) \ \ := \ \ \sum_{ [\srel{p\,m}{q\,n}]\, \in \, {\mathcal M}_{2,5}^{(2)}(\Z)}
    e^{-\pi \tau_2^{-1} \|[p+q\tau\,,\,m+n\tau]e\|^2}\,,
\end{equation}
where $p,q\in \Z^4$ and $m,n\in \Z$.  For emphasis we have used commas to separate the entries of the row vectors.
The exponent is
\begin{equation}\label{nonepsinP1a}
-\pi \tau_2^{-1} \|(p+q\tau)e_4\|^2  \ - \ \pi \tau_2^{-1}e_1^2|(p+q\tau)Q+m+n\tau|^2\,.
\end{equation}
This is independent of $Q$ if both $p=q=0$.  Hence the nonzero Fourier
coefficients of (\ref{bigGdeftris}) come from terms where $[\srel pq]$
has rank 1 or 2.   We thus separate these contributions and write
\begin{equation}\label{bigGdeftrisa}
    {\mathcal G}(\tau,ee^t) \ \ = \ \ {\mathcal G}_1(\tau,ee^t) \ + \ {\mathcal G}_2(\tau,ee^t)\,,
\end{equation}
where
\begin{equation}\label{bigGitris}
   {\mathcal G}_i(\tau,ee^t) \ \ := \ \ \sum_{\srel{\operatorname{rank}[\srel{p}{q}]\,=\,i}{[\srel{p\,m}{q\,n}]\, \in \, {\mathcal M}_{2,5}^{(2)}(\Z)}}
    e^{-\pi \tau_2^{-1}\|[p+q\tau\,,\,m+n\tau]e\|^2}\,.
\end{equation}
Let us first consider $ {\mathcal G}_2(\tau,ee^t) $.  Changing $\tau$ to $\tau+1$ is equivalent to changing $(p,q,m,n)$ to $(p+q,q,m+n,n)$, while changing $\tau$ to $-\tau^{-1}$ is equivalent to changing $(p,q,m,n)$ to $(-q,p,-n,m)$.  Thus the sum is modular invariant and can be written as
 a sum over $SL(2,\Z)$ cosets:
\begin{equation}\label{bigGdeftrisb}
    {\mathcal G}_2(\tau,ee^t) \ \ = \ \ \sum_{\g\,\in\,SL(2,\Z)}
    {\mathcal G}^\circ_2(\g\tau,ee^t)\,,
\end{equation}
where
\begin{equation}\label{bigGdeftrisc}
    {\mathcal G}^\circ_2(\tau,ee^t) \  =      \sum_{\srel{[\srel{p}{q}]\,\in\,SL(2,\Z)\backslash {\mathcal M}_{2,4}^{(2)}(\Z)}{m,n\,\in\,\Z}}  e^{-\pi \tau_2^{-1} \|(p+q\tau)e_4\|^2   -  \pi \tau_2^{-1}e_1^2|(p+q\tau)Q+m+n\tau|^2}
\end{equation}
(here we have used  that $\operatorname{rank}[\srel{p}{q}]=2$ implies
that $\operatorname{rank}[\srel{p\,m}{q\,n}]=2$).  Applying Poisson summation over $m$ and $n$ this is
\begin{multline}\label{bigGdeftrisd}
    {\mathcal G}^\circ_2(\tau,ee^t) \ \ = \  \   \sum_{\srel{[\srel{p}{q}]\,\in\,SL(2,\Z)\backslash {\mathcal M}_{2,4}^{(2)}(\Z)}{\hat{m},\hat{n}\,\in\,\Z}}  e^{-\pi \tau_2^{-1} \|(p+q\tau)e_4\|^2} e^{2\pi i (\hat{m} p+\hat{n}q)Q} \ \times \\ \int_{\IR^2} e^{-2\pi i (\hat{m} m+\hat{n} n)} e^{-  \pi \tau_2^{-1}e_1^2|m+n\tau|^2}\,dm\,dn\,.
\end{multline}
Thus its Fourier coefficient for $e^{2\pi i N_1Q}$,  $N_1\in\Z^4$, is equal to
\begin{multline}\label{bigGdeftrise}
    {\mathcal F}{\mathcal G}^\circ_2(\tau,e_1,e_4;N_1) \ \ = \ \
     \sum_{\srel{[\srel{p}{q}]\,\in\,SL(2,\Z)\backslash {\mathcal M}_{2,4}^{(2)}(\Z)}{\srel{\hat{m},\hat{n}\,\in\,\Z}{\hat{n}q-\hat{m}p\,=\,N_1}}}
            e^{-\pi \tau_2^{-1} \|(p+q\tau)e_4\|^2} \ \times\\
\int_{\IR^2} e^{2\pi i \hat{m}m-(\hat{n}+\hat{m}\tau_1) n} e^{-  \pi \tau_2^{-1}e_1^2 (m^2+n^2\tau_2^2)}\,dm\,dn \\
            = \ \ e_1^{-2}
     \sum_{\srel{[\srel{p}{q}]\,\in\,SL(2,\Z)\backslash {\mathcal M}_{2,4}^{(2)}(\Z)}{\srel{\hat{m},\hat{n}\,\in\,\Z}{\hat{n}q-\hat{m}p\,=\,N_1}}}
            e^{-\pi \tau_2^{-1} \|(p+q\tau)e_4\|^2}
             e^{-\pi \tau_2^{-1}e_1^{-2}|\hat{n}+\hat{m}\tau|^2}
            \,.
\end{multline}
Analogously to (\ref{bigGdeftrisb})
\begin{equation}\label{bigGdeftrisf}
    {\mathcal G}_1(\tau,ee^t) \ \ = \ \ \sum_{\g\,\in\,\{\pm \G_\infty\}\backslash SL(2,\Z)}
    {\mathcal G}^\circ_1(\g\tau,ee^t)\,,
\end{equation}
 where $\G_\infty=\{\ttwo 1n01|n\in \Z\}$ and
\begin{equation}\label{bigGdeftrisg}
    {\mathcal G}^\circ_1(\tau,ee^t) \ \ := \  \   \sum_{\srel{p\,\neq\,0}{\srel{m\,\in\,\Z}{n\,\neq\,0}}}  e^{-\pi \tau_2^{-1} \|pe_4\|^2   -  \pi \tau_2^{-1}e_1^2|pQ+m+n\tau|^2}
\end{equation}
(this parametrization is due to the fact that any $SL(2,\Z)$ orbit in ${\mathcal M}^{(1)}_{2,4}(\Z)$ has an element with bottom row equal to zero, and that the rank 2 condition is then equivalent to the bottom right entry, $n$, being nonzero).  Applying Poisson summation over $m$ gives the formula
\begin{multline}\label{bigGdeftrish}
      {\mathcal G}^\circ_1(\tau,ee^t) \ \   = \  \   \sum_{\srel{p\,\neq\,0}{\srel{\hat{m}\,\in\,\Z}{n\,\neq\,0}}}  e^{-\pi \tau_2^{-1} \|pe_4\|^2} \int_{\IR}e^{-2\pi i \hat{m}m} \, e^{-  \pi \tau_2^{-1}e_1^2|pQ+m+n\tau|^2}\,dm \\
        =    \tau_2^{\frac12}\,e_1^{-1}\,  \sum_{\srel{p\,\neq\,0}{\srel{\hat{m}\,\in\,\Z}{n\,\neq\,0}}} e^{2\pi i \hat{m}(pQ+n\tau_1)}
  e^{-\pi \tau_2^{-1} \|pe_4\|^2-\pi\tau_2e_1^{-2}\hat{m}^2 - \pi \tau_2e_1^2n^2}\,.
\end{multline}
Since $N_1\neq 0$ its Fourier mode for $e^{2\pi i N_1Q}$ is thus
\begin{multline}\label{bigGdeftrishi}
    {\mathcal F} {\mathcal G}^\circ_1(\tau,e_1,e_4;N_1)  \ \ = \\ \tau_2^{\frac12}\,e_1^{-1}\,
    \sum_{\srel{\hat{m}p\,=\,N_1}{n\,\neq\,0}}  e^{2\pi i \hat{m}n\tau_1}  e^{-\pi \tau_2^{-1} \|pe_4\|^2-\pi\tau_2e_1^{-2}\hat{m}^2 - \pi \tau_2e_1^{2}n^2}\,.
\end{multline}

It follows using proposition~\ref{prop:nonepsteinintegralrep} that the nonzero Fourier modes of $F_{\beta_2;s}^{SL(5)\, \beta_1}$ are given by
\begin{multline}\label{nonepsinPa1}
   \smallf12\xi(2s)\xi(2s-1)F_{\beta_2;s}^{SL(5)\, \beta_1}(N_1) \ \ = \\ 2\,
    \int_0^\infty \int_{\U} {\mathcal
      {FG}}_2^\circ(\tau,u^{\frac12}e_1,u^{\frac12}e_4;N_1)\,\f{d^2\tau}{\tau_2^2}\,\f{du}{u^{1-2s}}
    \  \\
+ \ \int_0^\infty \int_{\G_\infty\backslash \U} {\mathcal{FG}}_1^\circ(\tau,u^{\frac12}e_1,u^{\frac12}e_4;N_1)\,\f{d^2\tau}{\tau_2^2}\,\f{du}{u^{1-2s}}\,,
\end{multline}
the factor of 2 coming from unfolding pairs of elements $\pm\g\in SL(2,\Z)$ that have identical actions on $\U$.
By integrating the expression given in (\ref{bigGdeftrishi}) for  $ {\mathcal F} {\mathcal G}^\circ_1(\tau,e_1,e_4;N_1)$
over the strip $\G_\infty\backslash \U$,  the $\tau_1$-integration over $[0,1]$ forces $\hat{m}n$ to vanish. Since $n\neq 0$ this means $N_1=0$, and hence there are no nontrivial Fourier contributions from ${\mathcal G}_1$.

The contribution from the modes $ {\mathcal F} {\mathcal G}^\circ_2$
is given by
\begin{multline}
 2\,  e_1^{-2}\, \int_0^\infty \int_{\mathbb H}
\sum_{\srel{[\srel{p}{q}]\,\in\,SL(2,\Z)\backslash {\mathcal M}_{2,4}^{(2)}(\Z)}{\srel{\hat{m},\hat{n}\,\in\,\Z}{\hat{n}q-\hat{m}p\,=\,N_1}}}\\
    e^{-\pi \tau_2^{-1} u \|(p+q\tau)e_4\|^2-\pi\tau_2^{-1} u^{-1}e_1^{-2}|\hat{n}+\hat m \tau|^2 }\,\f{d^2\tau}{\tau_2^2}\f{du}{u^{2-2s}}\,.
\end{multline}
Changing variables to $x=u/\tau_2$ and $y=\tau_2u$, so that $u=\sqrt{xy}$, $\tau_2=\sqrt{y/x}$, and $d\tau_2 du = \f{dx dy}{2x}$, yields
\begin{multline}\label{e:Ebeta3Pbeta1}
 e_1^{-2} \int_0^\infty  \sum_{\srel{[\srel{p}{q}]\,\in\,SL(2,\Z)\backslash {\mathcal M}_{2,4}^{(2)}(\Z)}{\srel{\hat{m},\hat{n}\,\in\,\Z}{\hat{n}q-\hat{m}p\,=\,N_1}}}
 \int_0^\infty\, e^{-\pi x  \|(p+q\tau_1)e_4\|^2 - \pi x^{-1} e_1^{-2} \hat m^2}\,\f{dx}{x^{1-s}} \\ \times \
 \int_0^\infty e^{-\pi y  \|q e_4\|^2-\pi y^{-1}e_1^{-2}(\hat{n}+\hat m \tau_1)^2    }
\f{dy}{y^{2-s}}
d\tau_1\\
=  \ \  4\, e_1^{-2} \int_0^\infty \sum_{\srel{[\srel{p}{q}]\,\in\,SL(2,\Z)\backslash {\mathcal M}_{2,4}^{(2)}(\Z)}{\srel{\hat{m},\hat{n}\,\in\,\Z}{\hat{n}q-\hat{m}p\,=\,N_1}}}
 \left(  |  \hat m|\over\|e_1(p+q\tau_1)e_4\| \right)^{s} \left(  | \hat n+\hat m \tau_1|\over \|e_1qe_4\| \right)^{s-1} \ \times
\\    K_s(\,2\pi \,  | \hat m|\,\|e_1^{-1}(p+q\tau_1)e_4\|  \,  ) \,  K_{s-1}(\,2\pi  \,  | \hat n+\hat m \tau_1| \,  \|e_1^{-1}qe_4\|\, ) d\tau_1\,.
\end{multline}

\paragraph{{\bf The parabolic}   $P_{\b_2}= GL(1)\times SL(3)\times   SL(2)\times U_{\b_2}$}\label{subsubsub2}

We may rewrite (\ref{bigGdef})  in the case of  $d=5$ as
\begin{equation}\label{bigGdefbis2}
    {\mathcal G}(\tau,ee^t) \ \ := \ \ \sum_{ [\srel{p\,m}{q\,n}]\, \in \, {\mathcal M}_{2,5}^{(2)}(\Z)}
    e^{-\pi \tau_2^{-1}\| [p+q\tau\,,\,m+n\tau]e\|^2}\,,%
\end{equation}
where $p,q\in\Z^3$ and $m,n\in\Z^2$.
Let us further take $e$ to have the special form $e=\ttwo{I_3}{Q}{}{I_2}\ttwo{e_3}{}{}{e_2}$, where  $Q\in M_{3,2}(\IR)$, $e_2\in GL(2,\IR)$, and $e_3\in GL(3,\IR)$.  We will be interested in Fourier coefficients in $Q$ for the Fourier modes $Q\mapsto e^{2\pi i \tr NQ}$, where $N\in M_{2, 3}(\Z)$.
 Break up the sum as
\begin{equation}\label{bigGbreakup2}
    {\mathcal G}(\tau,ee^t) \ \ = \ \  {\mathcal G}_0(\tau,ee^t) \ + \  {\mathcal G}_1(\tau,ee^t) \ + \  {\mathcal G}_2(\tau,ee^t)\,,\\
\end{equation}
where
\begin{equation}\label{bigGi2}
   {\mathcal G}_i(\tau,ee^t) \ \ := \ \ \sum_{\srel{\operatorname{rank}[\srel{p}{q}]\,=\,i}{ [\srel{p\,m}{q\,n}]\, \in \, {\mathcal M}_{2,5}^{(2)}(\Z)}}
    e^{-\pi \tau_2^{-1} \|[p+q\tau\,,\,m+n\tau]e\|^2}\,.%
\end{equation}
If $\operatorname{rank}[\srel{p}{q}]=2$, then $[\srel{p\,m}{q\,n}]$ automatically has rank 2.  Thus
\begin{equation}\label{bigG2ba2}
     {\mathcal G}_2(\tau,ee^t) \ \ := \ \ \sum_{\srel{\operatorname{rank}[\srel{p}{q}]\,=\,2}{ m,n\,\in\,\Z^2}}
    e^{-\pi \tau_2^{-1} \|[p+q\tau\,,\,m+n\tau]e\|^2}\,.%
\end{equation}
Using the method of orbits we may write this as an average over $SL(2,\Z)$:
\begin{equation}\label{bigG2baave2}
     {\mathcal G}_2(\tau,ee^t) \ \  = \ \ \sum_{\g\,\in\,SL(2,\Z)}  {\mathcal G}^\circ_2(\g\tau,ee^t)\,,
\end{equation}
where
\begin{equation}\label{bigG2circdef2}
    {\mathcal G}^\circ_2(\tau,ee^t) \ \ = \ \ \sum_{\srel{[\srel pq ] \, \in \, SL(2,\Z)\backslash {\mathcal M}_{2,3}^{(2)}(\Z)}{ m,n\,\in\,\Z^2}}
    e^{-\pi \tau_2^{-1} \|[p+q\tau\,,\,m+n\tau]e\|^2}\,.%
\end{equation}

Poisson summation over the inner $m,n\in\Z^2$ sum gives
\begin{multline}\label{bigG2a2}
      {\mathcal G}^\circ_2(\tau,ee^t) \ \ = \\
      \sum_{\srel{[\srel pq ] \, \in \, SL(2,\Z)\backslash {\mathcal M}_{2,3}^{(2)}(\Z)}{ \hat{m},\hat{n}\,\in\,\Z^2}}
      \int_{\IR^4}e^{-2\pi i (  m\hat{m}-n\hat{n})}
      e^{-\pi \tau_2^{-1} \|[p+q\tau\,,\,m+n\tau]e\|^2}\,dm\,dn\,,
\end{multline}
where $\hat{m}, \hat{n}\in \Z^2$ are column vectors.
With the particular form $e=\ttwo{I_3}{Q}{}{I_2}\ttwo{e_3}{}{}{e_2}$
the exponent of the second factor is
\begin{equation}\label{bigG2b2}
    -\pi\tau_2^{-1}[p+q\tau \,,\, (p+q\tau)Q+m+n\tau]\ttwo{e_3e_3^t}{}{}{e_2e_2^t}[p+q\tau \,,\, (p+q\bar\tau)Q+m+n\bar\tau]^t.
\end{equation}
Thus after changing variables $m\mapsto m-pQ$, $n\mapsto n-qQ$ (\ref{bigG2a2}) becomes
\begin{multline}\label{bigG2c2}
     {\mathcal G}^\circ_2(\tau,ee^t) \ \ = \ \
     \sum_{[\srel pq ] \, \in \, SL(2,\Z)\backslash {\mathcal M}_{2,3}^{(2)}(\Z)} e^{-\pi\tau_2^{-1}\|(p+q\tau) e_3\|^2} \sum_{ \hat{m},\hat{n}\,\in\,\Z^2}e^{2\pi i  (pQ\hat{m}- qQ\hat{n})}
   \ \times \\   \int_{\IR^4}e^{-2\pi i ( m\hat{m}-n\hat{n})}
      e^{-\pi \tau_2^{-1} \|(m+n\tau)e_2\|^2}\,dm\,dn\,.
\end{multline}
To compute this integral we change variables $m\mapsto m e_2^{-1}$, $n\mapsto ne_2^{-1}$, which has the effect of dividing both $dm$ and $dn$ each by $\det e_2$:~the integral equals $(\det e_2)^{-2}$ times
\begin{multline}\label{bigG2d2}
    \int_{\IR^4}e^{-2\pi i
   (me_2^{-1}\hat{m}-ne_2^{-1}\hat{n})}
    e^{-\pi\tau_2^{-1}\|(m+n\tau)\|^2}\,dm\,dn \ \ = \\
     \int_{\IR^4}e^{-2\pi i
      (me_2^{-1}\hat{m}-ne_2^{-1}(\hat{n}+\tau_1\hat{m}))
      }
    e^{-\pi\tau_2^{-1}\|m\|^2-\pi\tau_2\|n\|^2}\,dm\,dn
\end{multline}
after changing variables $m\mapsto m-n\tau_1$ in the last step.
 This then factors as two Fourier transforms of Gaussians and
 (\ref{bigG2c2}) is equal to
\begin{multline}\label{bigG2e2}
       {\mathcal G}^\circ_2(\tau,ee^t) \ \ =  (\det e_2)^{-2}
  \sum_{[\srel pq ] \, \in \, SL(2,\Z)\backslash {\mathcal M}_{2,3}^{(2)}(\Z)}
 e^{-\pi\tau_2^{-1}\|(p+q\tau) e_3\|^2}\\ \sum_{ \hat{m},\hat{n}\,\in\,\Z^2}e^{2\pi i  (pQ\hat{m}-qQ\hat{n})}
    e^{-\pi\tau_2\|e_2^{-1}\hat{m}\|^2-\pi\tau_2^{-1}\|e_2^{-1}(\hat{n}+\hat{m}\tau_1)\|^2}\,.
\end{multline}
The dependence on $Q$ is manifest in the exponential factors in the sum, and hence taking Fourier coefficients in $Q$ amounts to restricting $p$, $q$, $\hat{m}$, and $\hat{n}$.
In particular the Fourier coefficient for $N_4\in M_{2,3}(\Z)$ is equal to
\begin{multline}\label{bigG2FC2}
        {\mathcal F}{\mathcal G}_2^\circ(\tau,e_2,e_3;N_4) \ \ = \ \ (\det e_2)^{-2}
      \sum_{[\srel pq ] \, \in \, SL(2,\Z)\backslash {\mathcal M}_{2,3}^{(2)}(\Z)}
     \\   \sum_{\srel{\hat{m},\hat{n}\,\in\,\Z^2}{\hat{m}p-\hat{n}q=N_4}}
        e^{-\pi\tau_2^{-1}\|(p+q\tau) e_3\|^2 -\pi\tau_2\|e_2^{-1}\hat{m}\|^2-\pi\tau_2^{-1}\|e_2^{-1}(\hat{n}+\hat{m}\tau_1)\|^2}\,.
\end{multline}

Let us now consider ${\mathcal G}_1(\tau,ee^t)$, which has the contributions for $p,q\in\Z^3$ such that $\operatorname{rank}[\srel pq]=1$:
\begin{equation}\label{bigG12}
   {\mathcal G}_1(\tau,ee^t) \ \ := \ \ \sum_{\srel{\operatorname{rank}[\srel{p}{q}]\,=\,1}{ [\srel{p\,m}{q\,n}]\, \in \, {\mathcal M}_{2,5}^{(2)}(\Z)}}
    e^{-\pi \tau_2^{-1} \|[p+q\tau\,,\,m+n\tau]e\|^2}\,.
\end{equation}
We may write this as an average   over $\{\pm \G_\infty\}\backslash SL(2,\Z)$:
\begin{equation}\label{bigG1ave2}
     {\mathcal G}_1(\tau,ee^t) \ \ = \ \ \sum_{\g\,\in\,\{\pm \G_\infty\}\backslash SL(2,\Z)}
      {\mathcal G}^\circ_1(\g\tau,ee^t)\,,
\end{equation}
where
\begin{multline}\label{bigG1circdef2}
     {\mathcal G}^\circ_1(\tau,ee^t) \ \ := \ \
     \sum_{\srel{p\,\neq\,0}{ [\srel{p\,m}{0\,n}]\, \in \, {\mathcal M}_{2,5}^{(2)}(\Z)}}
    e^{-\pi \tau_2^{-1} \|[p,\,m+n\tau]e\|^2} \\
    = \ \
    \sum_{\srel{p\,\neq\,0}{\srel{n\,\neq\,0}{m\,\in\,\Z^2}}}
    e^{-\pi \tau_2^{-1}\|pe_3\|^2-\pi \tau_2^{-1} \|(pQ+m+n\tau_1)e_2\|^2-\pi \tau_2 \|ne_2\|^2}     \,.
\end{multline}
Here we used that the matrix $ [\srel{p\,m}{0\,n}]$ has rank 2 if and only if $n\neq 0$ (since $p\neq 0$).  Poisson sum over $m$ then  gives the formula
\begin{equation}\label{bigG1a2}
{\mathcal G}^\circ_1(\tau,ee^t) \ \ = \ \
 {\tau_2 \over \det e_2}
  \sum_{\srel{p\,\neq\,0}{\srel{n\,\neq\,0}{\hat{m}\,\in\,\Z^2}}}
  e^{2\pi i (pQ+n\tau_1)\hat{m}}
 e^{-\pi \tau_2^{-1}\|pe_3\|^2-\pi \tau_2 \|ne_2\|^2
-\pi \tau_2 \|e_2^{-1}\hat{m}\|^2}
     \end{equation}
for  (\ref{bigG1circdef2}), where again $\hat{m}\in\Z^2$ is a column vector.

We conclude that the Fourier coefficient of ${\mathcal G}^\circ_1(\tau,ee^t)$ for $N_4$ is equal to
\begin{multline}\label{bigG1FC2}
     {\mathcal F}{\mathcal G}_1^\circ(\tau,e_2,e_3;N_4) \ \ = \\
    {\tau_2 \over \det e_2}
        \sum_{\srel{p\,\neq\,0}{\srel{n\,\neq\,0}{\hat{m}p=N_4}}}
    e^{2\pi i \tau_1 n\hat{m}}
         e^{-\pi \tau_2^{-1}\|pe_3\|^2-\pi \tau_2 \|ne_2\|^2
-\pi \tau_2 \|e_2^{-1}\hat{m}\|^2}
        \,.
\end{multline}
Note that ${\mathcal F}{\mathcal G}_1(\tau,e_2,e_3;N_4)\equiv 0$ if $\operatorname{rank}(N_4)=2$.
Finally since $[0\,0\,0\,\star\,\star]\ttwo{I_3}{Q}{}{I_2}$ is independent of $Q$,  so too is ${\mathcal G}_0(\tau,ee^t)$, the sum over terms with $p=q=[0\,0\,0]$.  It therefore has no nontrivial Fourier coefficients.

We now return to the identity of proposition~\ref{prop:nonepsteinintegralrep},
\begin{equation}\label{nonepstrepnreversed2}
  \frac12 \xi(2s)\xi(2s-1)\,E^{SL(5)}_{\beta_2;s}(e) \ \ = \ \
  \int_0^\infty \int_{SL(2,\Z)\backslash \U} {\mathcal G}(\tau, u
  ee^t)\,\f{d^2\tau}{\tau_2^2} \,{du\over u^{1-2s}}\,,
\end{equation}
with the specialization that  $e\in SL(d,\IR)$ has the form
$e=\ttwo{I_3}{Q}{}{I_2}\ttwo{e_3}{}{}{e_2}$.  Its Fourier coefficient
for $N_4$ can be written as
\begin{multline}\label{nonepsinothernoneps12}
   \frac12\xi(2s)\xi(2s-1)F_{\b_2;s}^{SL(5)\,\b_2}(e_2,e_3;N_4) \ \ = \\
   2\ \int_0^\infty \int_{\U} {\mathcal
      {FG}}_2^\circ(\tau,u^{\frac12}e_2,u^{\frac12}e_3;N_4)\,\f{d^2\tau}{\tau_2^2}\,\f{du}{u^{1-2s}}
    \  \\
+ \ \int_0^\infty \int_{\G_\infty\backslash \U} {\mathcal{FG}}_1^\circ(\tau,u^{\frac12}e_2,u^{\frac12}e_3;N_4)\,\f{d^2\tau}{\tau_2^2}\,\f{du}{u^{1-2s}}\,.
\end{multline}

Let us consider the first integral,
\begin{multline}\label{nonepsteininothernoneps22}
   2 \ \int_0^\infty \int_{ \U} (\det e_2)^{-2}
         \sum_{[\srel pq ] \, \in \, SL(2,\Z)\backslash {\mathcal M}_{2,3}^{(2)}(\Z)}
    e^{-\pi\tau_2^{-1}u\|(p+q\tau) e_3\|^2}  \\   \sum_{\srel{\hat{m},\hat{n}\,\in\,\Z^2}{\hat{m}p-\hat{n}q=N_4}}
        e^{-\pi\tau_2 u^{-1}\|e_2^{-1}\hat{m}\|^2-\pi\tau_2^{-1}u^{-1}\|e_2^{-1}(\hat{n}+\hat{m}\tau_1)\|^2}
        \,\f{d^2\tau}{\tau_2^2}\,\f{du}{u^{3-2s}}\,.
\end{multline}
Changing variables to $x=u/\tau_2$ and $y=\tau_2 u$, so that $u=\sqrt{xy}$, $\tau_2=\sqrt{y/x}$ and $d\tau_2 du = \f{dx dy}{2x}$ the integral
becomes
\begin{multline}\label{nonepsteininothernoneps42}
    (\det e_2)^{-2} \int_{\IR} \int_0^\infty \int_0^\infty
       \sum_{[\srel pq ] \, \in \, SL(2,\Z)\backslash {\mathcal M}_{2,3}^{(2)}(\Z)}
 \sum_{\srel{\hat{m},\hat{n}\,\in\,\Z^2}{\hat{m}p-\hat{n}q=N_4}}\\
      e^{-\pi x \|(p+q\tau_1) e_3\|^2-\pi
        x^{-1}\|e_2^{-1}\hat{m}\|^2}
e^{ -\pi y \|q e_3\|^2-\pi y^{-1}\|e_2^{-1}(\hat{n}+\hat{m}\tau_1)\|^2}
      \,\f{dx}{x^{3/2-s}}\,\f{dy}{y^{5/2-s}}d\tau_1\,.
\end{multline}
Integrating over $x$ and $y$  yields
\begin{multline}\label{nonepsteininothernoneps52}
 4 \, (\det  e_2)^{-2} \int_{\IR}
 \sum_{[\srel pq ] \, \in \, SL(2,\Z)\backslash {\mathcal M}_{2,3}^{(2)}(\Z)}
 \sum_{\srel{\hat{m},\hat{n}\,\in\,\Z^2}{\hat{m}p-\hat{n}q=N_4}}
    \left(\|(p+q\tau_1) e_3\|\over \|e_2^{-1}\hat{m}\|  \right)^{1/2-s} \ \times \\
 \left( \|q e_3\|\over \|e_2^{-1}(\hat{n}+\hat{m}\tau_1)\|\right)^{3/2-s}
K_{s-1/2}(2\pi \|(p+q\tau_1) e_3\|\|e_2^{-1}\hat{m}\|) \ \times \\ K_{s-3/2}(2\pi \|q e_3\|\|e_2^{-1}(\hat{n}+\hat{m}\tau_1)\|)  d\tau_1
\end{multline}
for the first line on the righthand side of (\ref{nonepsinothernoneps12}).

Next we analyze the second integral in (\ref{nonepsinothernoneps12}), in which we assume $N_4$ has rank 1 (since it vanishes if it has rank 2):
\begin{multline}\label{nonepsintherothernonepssecondparta2}
     \smallf{1}{\det e_2} \int_0^\infty \int_{\G_\infty\backslash \U}
        \sum_{\srel{p\,\neq\,0}{\srel{n\,\neq\,0}{\hat{m}p=  N_4}}}
  e^{2\pi i \hat{m}\cdot n\tau_1} \ \times \\
         e^{-\pi \tau_2^{-1}u \|pe_3\|^2-\pi \tau_2 u \|ne_2\|^2
-\pi \tau_2 u^{-1} \|e_2^{-1}\hat{m}\|^2}
    \,\f{d^2\tau}{\tau_2 }\,\f{du}{u^{2-2s}}\,.
\end{multline}
The $\tau_1$ integration over $[0,1]$ enforces the condition that
$n\perp \hat{m}$ (which implies $n\perp   N_4$):~(\ref{nonepsintherothernonepssecondparta2}) equals
\begin{equation}\label{nonepsintherothernonepssecondpartb2}
     \smallf{1}{\det e_2}  \int_0^\infty  \int_0^\infty
        \sum_{\srel{p\,\neq\,0}{\srel{n\,\neq
        \,0}{\srel{\hat{m}\,\perp\,n}{\hat{m}p=  N_4}}}}
         e^{-\pi \tau_2^{-1}u\|pe_3\|^2-\pi \tau_2 u \|ne_2\|^2-\pi \tau_2 u^{-1} \|e_2^{-1}\hat{m}\|^2}
    \,\f{d\tau_2}{\tau_2 }\,\f{du}{u^{2-2s}}\,.
\end{equation}
As before, change variables
$x=u/\tau_2$ and $y=\tau_2 u$ so that (\ref{nonepsintherothernonepssecondpartb2}) becomes
\begin{multline}\label{nonepsintherothernonepssecondpartc2}
   \f{1}{2(\det e_2)} \sum_{\srel{p\,\neq\,0}{\srel{n\,\neq
        \,0}{\srel{\hat{m}\,\perp\,n}{\hat{m}p= N_4}}}}  \int_0^\infty  \int_0^\infty
         e^{-\pi x \|pe_3\|^2-\pi x^{-1} \|e_2^{-1}\hat{m}\|^2} e^{-\pi y \|ne_2\|^2}
 \,\f{dx}{x^{\frac32-s}}\,\f{dy}{y^{\frac32-s}} \\
     = \  \
        \f{ \G\left(s-\frac12\right)}{(\det e_2)} \sum_{\srel{p\,\neq\,0}{\srel{n\,\neq
        \,0}{\srel{\hat{m}\,\perp\,n}{\hat{m}p=  N_4}}}}
 \left(\| e_2^{-1}\hat{m}  \| \over
    \pi \|ne_2 \|^2 \|  pe_3 \|\right)^{s-1/2}
     K_{s-1/2}(2\pi\|e_2^{-1}\hat{m} \| \| pe_3 \| )  \,.
\end{multline}

The matrices $e_2$ and $e_3$ in the above argument are unconstrained except for the condition that $\det(e_2)\det(e_3)=1$.  For our application in section~\ref{exampleseven} it will be helpful to restate these calculations using the $GL(1)$ parameter $r$ from (\ref{notation}).  We set
\begin{equation}\label{rforgl5beta2}
    \ttwo{e_3}{}{}{e_2} \ \ = \ \ \ttwo{r^{4/5}e_3'}{}{}{r^{-6/5}e_2'} \, ,
\end{equation}
where $e_2'\in SL(2,\IR)$ and $e_3'\in SL(3,\IR)$.  Then after inserting  (\ref{nonepsteininothernoneps52}), and (\ref{nonepsintherothernonepssecondpartc2})
we may restate (\ref{nonepsinothernoneps12}) as
\begin{multline}\label{gl5node2h22punchline}
    F_{\b_2;s}^{SL(5)\,\b_2}(r^{-6/5}e_2',r^{4/5}e_3';N_4) \ \ = \\
    \f{8\,r^{4+4s/5}}{\xi(2s)\xi(2s-1)} \int_{\IR}
 \sum_{[\srel pq ] \, \in \, SL(2,\Z)\backslash {\mathcal M}_{2,3}^{(2)}(\Z)}
 \sum_{\srel{\hat{m},\hat{n}\,\in\,\Z^2}{\hat{m}p-\hat{n}q=N_4}}
    \left(\|(p+q\tau_1) e_3'\|\over \|e_2'^{-1}\hat{m}\|  \right)^{1/2-s} \ \times \\
 \left( \|q e_3'\|\over \|e_2'^{-1}(\hat{n}+\hat{m}\tau_1)\|\right)^{3/2-s}
K_{s-1/2}(2\pi r^{2} \|(p+q\tau_1) e_3'\|\|e_2'^{-1}\hat{m}\|) \ \times \\ K_{s-3/2}(2\pi r^{2} \|q e_3'\|\|e_2'^{-1}(\hat{n}+\hat{m}\tau_1)\|) \, d\tau_1 \\
 + \ \
  \f{2\, \G\left(s-\frac12\right)}{\xi(2s)\xi(2s-1)} r^{1+14s/5} \sum_{\srel{p\,\neq\,0}{\srel{n\,\neq
        \,0}{\srel{\hat{m}\,\perp\,n}{\hat{m}p=  N_4}}}}
 \left(\| e_2'^{-1}\hat{m}  \| \over
    \pi \|ne_2' \|^2 \|  pe_3' \|\right)^{s-1/2} \ \times \\
     K_{s-1/2}(2\pi r^{2}\|e_2'^{-1}\hat{m} \| \| pe_3' \| ) \,.
\end{multline}

\paragraph{ {\bf The parabolic}   $P_{\b_3}= GL(1)\times SL(2)\times   SL(3)\times U_{\b_3}$}\label{subsubsub3}

We may rewrite (\ref{bigGdef})  in the case of  $d=5$ as
\begin{equation}\label{bigGdefbis3}
    {\mathcal G}(\tau,ee^t) \ \ := \ \ \sum_{ [\srel{p\,m}{q\,n}]\, \in \, {\mathcal M}_{2,5}^{(2)}(\Z)}
    e^{-\pi \tau_2^{-1}\| [p+q\tau\,,\,m+n\tau]e\|^2}\,,%
\end{equation}
where $p,q\in\Z^2$ and $m,n\in\Z^3$.  We take $e$ to have the special form $e=\ttwo{I_2}{Q}{}{I_3}\ttwo{e_2}{}{}{e_3}$, where  $Q\in M_{2\times 3}(\IR)$, $e_2\in GL(2,\IR)$, and $e_3\in GL(3,\IR)$.  We will be interested in Fourier coefficients in $Q$ for the  modes $Q\mapsto e^{2\pi i \tr NQ}$, where $N\in M_{3,2}(\Z)$.
 Break up the sum as
\begin{equation}\label{bigGbreakup3}
    {\mathcal G}(\tau,ee^t) \ \ = \ \  {\mathcal G}_0(\tau,ee^t) \ + \  {\mathcal G}_1(\tau,ee^t) \ + \  {\mathcal G}_2(\tau,ee^t)\,,\\
\end{equation}
where
\begin{equation}\label{bigGi3}
   {\mathcal G}_i(\tau,ee^t) \ \ := \ \ \sum_{\srel{\operatorname{rank}[\srel{p}{q}]\,=\,i}{ [\srel{p\,m}{q\,n}]\, \in \, {\mathcal M}_{2,5}^{(2)}(\Z)}}
    e^{-\pi \tau_2^{-1} \|[p+q\tau\,,\,m+n\tau]e\|^2}\,.%
\end{equation}
If $\operatorname{rank}[\srel{p}{q}]=2$, then $[\srel{p\,m}{q\,n}]$ automatically has rank 2.  Thus
\begin{equation}\label{bigG2ba3}
     {\mathcal G}_2(\tau,ee^t) \ \ := \ \ \sum_{\srel{\operatorname{rank}[\srel{p}{q}]\,=\,2}{ m,n\,\in\,\Z^3}}
    e^{-\pi \tau_2^{-1} \|[p+q\tau\,,\,m+n\tau]e\|^2}\,.%
\end{equation}
Again we use modular invariance to write
\begin{equation}\label{bigG2baave3}
     {\mathcal G}_2(\tau,ee^t) \ \  = \ \ \sum_{\g\,\in\,\left\{\pm\ttwo 1001\right\}\backslash SL(2,\Z)}  {\mathcal G}^\circ_2(\g\tau,ee^t)\,,
\end{equation}
where
\begin{equation}\label{bigG2circdef3}
    {\mathcal G}^\circ_2(\tau,ee^t) \ \ = \ \ \sum_{\srel{\srel{p\,=\,[p_1\,p_2]}{q\,=\,[0\,q_2]}}{ \srel{p_1>0,\, 0\le p_2<|q_2|}{ m,n\,\in\,\Z^3}}}
    e^{-\pi \tau_2^{-1} \|[p+q\tau\,,\,m+n\tau]e\|^2}\,.%
\end{equation}

Poisson summation over the inner $m,n\in\Z^3$ sum gives
\begin{equation}\label{bigG2a3}
      {\mathcal G}^\circ_2(\tau,ee^t) \ \ = \ \
      \sum_{\srel{\srel{p\,=\,[p_1\,p_2]}{q\,=\,[0\,q_2]}}{ \srel{p_1>0,\, 0\le p_2<|q_2|}{ \hat{m},\hat{n}\,\in\,\Z^3}}}
      \int_{\IR^6}e^{-2\pi i (  m\hat{m}-n\hat{n})}
      e^{-\pi \tau_2^{-1} \|[p+q\tau\,,\,m+n\tau]e\|^2}\,dm\,dn\,,
\end{equation}
where $\hat{m}, \hat{n}\in \Z^3$ are column vectors.
With the particular form $e=\ttwo{I_2}{Q}{}{I_3}\ttwo{e_2}{}{}{e_3}$
the exponent of the second factor is
\begin{equation}\label{bigG2b3}
    -\pi\tau_2^{-1}[p+q\tau \,,\, (p+q\tau)Q+m+n\tau]\ttwo{e_2e_2^t}{}{}{e_3e_3^t}[p+q\tau \,,\, (p+q\bar\tau)Q+m+n\bar\tau]^t\,.
\end{equation}
Thus after changing variables $m\mapsto m-pQ$, $n\mapsto n-qQ$ (\ref{bigG2a3}) becomes
\begin{multline}\label{bigG2c3}
     {\mathcal G}^\circ_2(\tau,ee^t) \ \ = \ \
     \sum_{\srel{\srel{p\,=\,[p_1\,p_2]}{q\,=\,[0\,q_2]}}{p_1>0,\, 0\le p_2<|q_2|}} e^{-\pi\tau_2^{-1}\|(p+q\tau) e_2\|^2} \sum_{ \hat{m},\hat{n}\,\in\,\Z^3}e^{2\pi i  (pQ\hat{m}- qQ\hat{n})}
   \ \times \\   \int_{\IR^6}e^{-2\pi i ( m\hat{m}-n\hat{n})}
      e^{-\pi \tau_2^{-1} \|(m+n\tau)e_3\|^2}\,dm\,dn\,.
\end{multline}
To compute this integral we change variables $m\mapsto m e_3^{-1}$, $n\mapsto ne_3^{-1}$, which has the effect of dividing both $dm$ and $dn$ each by $\det e_3$:~the integral equals $(\det e_3)^{-2}$ times
\begin{multline}\label{bigG2d3}
    \int_{\IR^6}e^{-2\pi i
   (me_3^{-1}\hat{m}-ne_3^{-1}\hat{n})}
    e^{-\pi\tau_2^{-1}\|(m+n\tau)\|^2}\,dm\,dn \ \ = \\
     \int_{\IR^6}e^{-2\pi i
      (me_3^{-1}\hat{m}-ne_3^{-1}(\hat{n}+\tau_1\hat{m}))
      }
    e^{-\pi\tau_2^{-1}\|m\|^2-\pi\tau_2\|n\|^2}\,dm\,dn
\end{multline}
and
 (\ref{bigG2c3}) is equal to
\begin{multline}\label{bigG2e3}
       {\mathcal G}^\circ_2(\tau,ee^t) \ \ =  (\det e_3)^{-2}
   \sum_{\srel{\srel{p\,=\,[p_1\,p_2]}{q\,=\,[0\,q_2]}}{p_1>0,\, 0\le
       p_2<|q_2|}}
 e^{-\pi\tau_2^{-1}\|(p+q\tau) e_2\|^2}\\ \sum_{ \hat{m},\hat{n}\,\in\,\Z^3}e^{2\pi i  (pQ\hat{m}-qQ\hat{n})}
    e^{-\pi\tau_2\|e_3^{-1}\hat{m}\|^2-\pi\tau_2^{-1}\|e_3^{-1}(\hat{n}+\hat{m}\tau_1)\|^2}\,.
\end{multline}
The Fourier coefficient for $N_4\in M_{3,2}(\Z)$ is equal to
\begin{multline}\label{bigG2FC3}
        {\mathcal F}{\mathcal G}_2^\circ(\tau,e_2,e_3;N_4) \ \ = \ \ (\det e_3)^{-2}
      \sum_{\srel{\srel{p\,=\,[p_1\,p_2]}{q\,=\,[0\,q_2]}}{p_1>0,\, 0\le p_2<|q_2|}}
     \\   \sum_{\srel{\hat{m},\hat{n}\,\in\,\Z^3}{\hat{m}p-\hat{n}q\,=\,N_4}}
        e^{-\pi\tau_2^{-1}\|(p+q\tau) e_2\|^2 -\pi\tau_2\|e_3^{-1}\hat{m}\|^2-\pi\tau_2^{-1}\|e_3^{-1}(\hat{n}+\hat{m}\tau_1)\|^2}\,.
\end{multline}

Let us now consider ${\mathcal G}_1(\tau,ee^t)$, which has the contributions for $p,q\in\Z^2$ such that $\operatorname{rank}[\srel pq]=1$:
\begin{equation}\label{bigG13}
   {\mathcal G}_1(\tau,ee^t) \ \ := \ \ \sum_{\srel{\operatorname{rank}[\srel{p}{q}]\,=\,1}{ [\srel{p\,m}{q\,n}]\, \in \, {\mathcal M}_{2,5}^{(2)}(\Z)}}
    e^{-\pi \tau_2^{-1} \|[p+q\tau\,,\,m+n\tau]e\|^2}\,.
\end{equation}
We may write this as an average   over $\{\pm \G_\infty\}\backslash SL(2,\Z)$:
\begin{equation}\label{bigG1ave3}
     {\mathcal G}_1(\tau,ee^t) \ \ = \ \ \sum_{\g\,\in\,\{\pm \G_\infty\}\backslash SL(2,\Z)}
      {\mathcal G}^\circ_1(\g\tau,ee^t)\,,
\end{equation}
where
\begin{multline}\label{bigG1circdef3}
     {\mathcal G}^\circ_1(\tau,ee^t) \ \ := \ \
     \sum_{\srel{p\,\neq\,0}{ [\srel{p\,m}{0\,n}]\, \in \, {\mathcal M}_{2,5}^{(2)}(\Z)}}
    e^{-\pi \tau_2^{-1} \|[p,\,m+n\tau]e\|^2} \\
    = \ \
    \sum_{\srel{p\,\neq\,0}{\srel{n\,\neq\,0}{m\,\in\,\Z^3}}}
    e^{-\pi \tau_2^{-1}\|pe_2\|^2-\pi \tau_2^{-1} \|(pQ+m+n\tau_1)e_3\|^2-\pi \tau_2 \|ne_3\|^2}     \,.
\end{multline}
Here we used that the matrix $ [\srel{p\,m}{0\,n}]$ has rank 2 if and only if $n\neq 0$ (since $p\neq 0$).  Poisson sum over $m$   gives the formula
\begin{equation}\label{bigG1a3}
{\mathcal G}^\circ_1(\tau,ee^t) \ \ = \ \
 {\tau_2^{\frac32}\over \det e_3}
  \sum_{\srel{p\,\neq\,0}{\srel{n\,\neq\,0}{\hat{m}\,\in\,\Z^3}}}
  e^{2\pi i (pQ+n\tau_1)\hat{m}}
 e^{-\pi \tau_2^{-1}\|pe_2\|^2-\pi \tau_2 \|ne_3\|^2
-\pi \tau_2 \|e_3^{-1}\hat{m}\|^2}
     \end{equation}
for  (\ref{bigG1circdef3}), where $\hat m$ is a column vector.

We conclude that the Fourier coefficient of ${\mathcal G}^\circ_1(\tau,ee^t)$ for $N_4$ is equal to
\begin{multline}\label{bigG1FC3}
     {\mathcal F}{\mathcal G}_1^\circ(\tau,e_2,e_3;N_4) \ \ = \\
    {\tau_2^{\frac32}\over \det e_3}
        \sum_{\srel{p\,\neq\,0}{\srel{n\,\neq\,0}{\hat{m}p=N_4}}}
    e^{2\pi i \tau_1 n\hat{m}}
         e^{-\pi \tau_2^{-1}\|pe_2\|^2-\pi \tau_2 \|ne_3\|^2
-\pi \tau_2 \|e_3^{-1}\hat{m}\|^2}
        \,.
\end{multline}
Observe that ${\mathcal F}{\mathcal G}_1(\tau,e_2,e_3;N_4)\equiv 0$ if $\operatorname{rank}(N_4)=2$,
and again that  ${\mathcal G}_0(\tau,ee^t)$ has no nonzero Fourier coefficients because
$[0\,0\,\star\,\star\,\star]\ttwo{I_2}{Q}{}{I_3}$ is independent of $Q$.

Proposition~\ref{prop:nonepsteinintegralrep} states that
\begin{equation}\label{nonepstrepnreversed3}
  \frac12 \xi(2s)\xi(2s-1)\,E^{SL(5)}_{\beta_2;s}(e) \ \ = \ \
  \int_0^\infty \int_{SL(2,\Z)\backslash \U} {\mathcal G}(\tau, u
  ee^t)\,\f{d^2\tau}{\tau_2^2} \,{du\over u^{1-2s}}\,.
\end{equation}
Since we  have specialized  $e\in SL(d,\IR)$ to have the form
$e=\ttwo{I_2}{Q}{}{I_3}\ttwo{e_2}{}{}{e_3}$ the Fourier coefficient
for $N_4$ can be written as
\begin{multline}\label{nonepsinothernoneps13}
   \frac12\xi(2s)\xi(2s-1)F_{\b_2;s}^{SL(5)\,\b_3}(N_4) \ \ = \\
    \int_0^\infty \int_{\U} {\mathcal
      {FG}}_2^\circ(\tau,u^{\frac12}e_2,u^{\frac12}e_3;N_4)\,\f{d^2\tau}{\tau_2^2}\,\f{du}{u^{1-2s}}
    \  \\
+ \ \int_0^\infty \int_{\G_\infty\backslash \U} {\mathcal{FG}}_1^\circ(\tau,u^{\frac12}e_2,u^{\frac12}e_3;N_4)\,\f{d^2\tau}{\tau_2^2}\,\f{du}{u^{1-2s}}\,.
\end{multline}

Let us consider the first integral,
\begin{multline}\label{nonepsteininothernoneps23}
   \int_0^\infty \int_{ \U} (\det e_3)^{-2}
         \sum_{\srel{\srel{p\,=\,[p_1\,p_2]}{q\,=\,[0\,q_2]}}{p_1>0,\, 0\le p_2<|q_2|}}
    e^{-\pi\tau_2^{-1}u\|(p+q\tau) e_2\|^2}  \\   \sum_{\srel{\hat{m},\hat{n}\,\in\,\Z^3}{\hat{m}p-\hat{n}q\,=\,N_4}}
        e^{-\pi\tau_2 u^{-1}\|e_3^{-1}\hat{m}\|^2-\pi\tau_2^{-1}u^{-1}\|e_3^{-1}(\hat{n}+\hat{m}\tau_1)\|^2}
        \,\f{d^2\tau}{\tau_2^2}\,\f{du}{u^{4-2s}}\,.
\end{multline}
Changing variables to $x=u/\tau_2$ and $y=\tau_2 u$, so that $u=\sqrt{xy}$, $\tau_2=\sqrt{y/x}$ and $d\tau_2 du = \f{dx dy}{2x}$ the integral
becomes
\begin{multline}\label{nonepsteininothernoneps43}
  \frac12  (\det e_3)^{-2} \int_{\IR} \int_0^\infty \int_0^\infty
      \sum_{\srel{p=[p_1\,,\,p_2]}{\srel{q=[0\,,\,q_2]}{\srel{p_1>0 }{\srel{0\le p_2<|q_2|}{}}}}}
 \sum_{\srel{\hat{m},\hat{n}\,\in\,\Z^3}{\hat{m}p-\hat{n}q=N_4}}\\
      e^{-\pi x \|(p+q\tau_1) e_2\|^2-\pi
        x^{-1}\|e_3^{-1}\hat{m}\|^2}
e^{ -\pi y \|q e_2\|^2-\pi y^{-1}\|e_3^{-1}(\hat{n}+\hat{m}\tau_1)\|^2}
      \,\f{dx}{x^{2-s}}\,\f{dy}{y^{3-s}}d\tau_1\,.
\end{multline}
Integrating over $x$ and $y$  yields
\begin{multline}\label{nonepsteininothernoneps53}
 2 (\det  e_3)^{-2} \int_{\IR}
 \sum_{\srel{p=[p_1\,,\,p_2]}{\srel{q=[0\,,\,q_2]}{\srel{p_1>0 }{\srel{0\le p_2<|q_2|}{}}}}}
 \sum_{\srel{\hat{m},\hat{n}\,\in\,\Z^3}{\hat{m}p-\hat{n}q=N_4}}
    \left(\|(p+q\tau_1) e_2\|\over \|e_3^{-1}\hat{m}\|  \right)^{1-s} \ \times \\
 \left( \|q e_2\|\over \|e_3^{-1}(\hat{n}+\hat{m}\tau_1)\|\right)^{2-s}
K_{s-1}(2\pi \|(p+q\tau_1) e_2\|\|e_3^{-1}\hat{m}\|) \ \times \\ K_{s-2}(2\pi \|q e_2\|\|e_3^{-1}(\hat{n}+\hat{m}\tau_1)\|)  d\tau_1\,.
\end{multline}

Next we analyze the second integral in (\ref{nonepsinothernoneps13}), in which we assume $N_4$ has rank 1 (since it vanishes if it has rank 2):
\begin{multline}\label{nonepsintherothernonepssecondparta3}
     \smallf{1}{\det e_3} \int_0^\infty \int_{\G_\infty\backslash \U}
        \sum_{\srel{p\,\neq\,0}{\srel{n\,\neq\,0}{\hat{m}p=  N_4}}}
  e^{2\pi i \tau_1 n\hat{m} } \ \times \\
         e^{-\pi \tau_2^{-1}u \|pe_2\|^2-\pi \tau_2 u \|ne_3\|^2
-\pi \tau_2 u^{-1} \|e_3^{-1}\hat{m}\|^2}
    \,\f{d^2\tau}{\tau_2^{\frac12}}\,\f{du}{u^{\frac52-2s}}\,.
\end{multline}
The $\tau_1$ integration over $[0,1]$ enforces the condition that
$n\perp \hat{m}$ (which implies $n\perp   N_4$):
\begin{equation}\label{nonepsintherothernonepssecondpartb3}
     \smallf{1}{\det e_3}  \int_0^\infty  \int_0^\infty
        \sum_{\srel{p\,\neq\,0}{\srel{n\,\neq
        \,0}{\srel{\hat{m}\,\perp\,n}{\hat{m}p=  N_4}}}}
         e^{-\pi \tau_2^{-1}u\|pe_2\|^2-\pi \tau_2 u \|ne_3\|^2-\pi \tau_2 u^{-1} \|e_3^{-1}\hat{m}\|^2}
    \,\f{d\tau_2}{\tau_2^{\frac12}}\,\f{du}{u^{\frac52-2s}}\,.
\end{equation}
As before, change variables
$x=u/\tau_2$ and $y=\tau_2 u$ so that (\ref{nonepsintherothernonepssecondpartb3}) becomes
\begin{multline}\label{nonepsintherothernonepssecondpartc3}
   \f{1}{2(\det e_3)} \sum_{\srel{p\,\neq\,0}{\srel{n\,\neq
        \,0}{\srel{\hat{m}\,\perp\,n}{\hat{m}p= N_4}}}}  \int_0^\infty  \int_0^\infty
         e^{-\pi x \|pe_2\|^2-\pi x^{-1} \|e_3^{-1}\hat{m}\|^2} e^{-\pi y \|ne_3\|^2}
 \,\f{dx}{x^{2-s}}\,\f{dy}{y^{\frac32-s}} \\
     = \  \
        \f{ \G\left(s-\frac12\right)}{(\det e_3)} \sum_{\srel{p\,\neq\,0}{\srel{n\,\neq
        \,0}{\srel{\hat{m}\,\perp\,n}{\hat{m}p=  N_4}}}}
       (\pi \|ne_3 \|^2)^{\frac12-s}\,\left(\| e_3^{-1}\hat{m}  \| \over
     \|  pe_2 \|\right)^{s-1}
     K_{s-1}(2\pi\|e_3^{-1}\hat{m} \| \| pe_2 \| )  \,.
\end{multline}

\paragraph{ {\bf The parabolic}   $P_{\b_4}=GL(1)\times SL(4)\times U_{\b_4}$}\label{subsubsub4}\hfil\break

In this case $e\in GL(5,\IR)$ has the special form $\ttwo{I_1}{Q}{}{I_4}\ttwo{e_1}{}{}{e_4}$, where $Q$ is a 4-dimensional row vector, $e_1$ is a nonzero real number,  and  $e_4\in GL(4,\IR)$.  We work with a sum of the form (\ref{bigGdefbis3}) but now instead $p,q\in\Z$ and $m,n\in \Z^4$.  Then the exponent  (\ref{bigG2b3}) becomes
\begin{equation}\label{nonepsinP4a}
    -\pi \tau_2^{-1}e_1^2|p+q\tau|^2  \ - \ \pi \tau_2^{-1} \|((p+q\tau)Q+m+n\tau)e_4\|^2\,.
\end{equation}
If $p=q=0$ then the exponent and hence ${\mathcal G}(\tau,ee^t)$ is independent of $Q$.  To get nontrivial Fourier modes in $Q$, we must thus assume to the contrary that $\operatorname{rank}[\srel pq]=1$.  We write the contributions of these rank one terms as
\begin{equation}\label{nonepsinP4a2}
    {\mathcal G}_1(\tau,ee^t) \ \ = \ \ \sum_{\g\,\in\,\G_\infty\backslash \G}
    {\mathcal G}^\circ_1(\g\tau,ee^t)\,,
\end{equation}
 where
 \begin{equation}\label{nonepsinP4a3}
    {\mathcal G}^\circ_1(\tau,ee^t) \ \ := \ \ \sum_{\srel{p\,>\,0}{\srel{m,n\,\in\,\Z^4}{n\,\neq\,0}}}
    e^{  -\pi \tau_2^{-1} e_1^2 p^2   -  \pi \tau_2^{-1}  \|(pQ+m+n\tau)e_4\|^2}
 \end{equation}
 (this uses the fact that the $SL(2,\Z)$ orbits of rank one integer matrices $[\srel pq]$ each have representatives with $p>0$ and $q=0$, and that the rank 2 condition for $[\srel{p\,m}{0\,n}]$ is that $n\neq 0$).  Applying Poisson summation over $m\in\Z^4$ results in the expression
 \begin{multline}\label{nonepsinP4a4}
    {\mathcal G}^\circ_1(\tau,ee^t) \ \ = \ \ \sum_{\srel{p\,>\,0}{\srel{\hat{m},n\,\in\,\Z^4}{n\,\neq\,0}}}
    e^{  -\pi \tau_2^{-1}e_1^2p^2-\pi \tau_2\|ne_4\|^2}
    e^{2\pi i (  pQ\hat{m}+\tau_1 n\hat{m} )}\\  \times \
    \int_{\IR^4}e^{-2\pi i m \hat{m} }
    e^{-  \pi \tau_2^{-1} \|me_4\|^2}\,dm\,.
 \end{multline}
 Here we think of $\hat{m}$ as  a column vector.
 Thus  the Fourier coefficient for $e^{2\pi i QN_4}$, when the column vector $N_4\in \Z^4$ is not zero, is equal to
 \begin{multline}\label{nonepsinP4a5}
   {\mathcal F} {\mathcal G}^\circ_1(\tau,ee^t)  \ \ = \\  \tau_2^2 (\det e_4)^{-1} \sum_{\srel{p\,>\,0}{\srel{n\,\neq\,0}{p\hat{m}\,=\,N_4}}}
    e^{2\pi i \tau_1 n \hat{m} }  e^{  -\pi \tau_2^{-1} e_1^2 p^2-\pi \tau_2\|ne_4\|^2-  \pi \tau_2  \| e_4^{-1} \hat{m}\|^2} \,.
 \end{multline}

Using proposition~\ref{prop:nonepsteinintegralrep} the $N_4$-th Fourier coefficient of $\smallf 12\xi(2s)\xi(2s-1)E^{SL(5)}_{\beta_2;s}(e)$ is
\begin{multline}\label{nonepsinP4c}
  {1\over \det e_4} \int_0^\infty\int_0^\infty\int_0^1  e^{2\pi i \tau_1 n \hat{m} } \ \times \\
 \sum_{\srel{p\,>\,0}{\srel{n\,\neq\,0}{p\hat{m}\,=\,N_4}}}
    e^{  -\pi \tau_2^{-1}  u  e_1^2 p^2-\pi \tau_2 u  \|ne_4\|^2-  \pi \tau_2 u^{-1}  \|e_4^{-1}\hat{m}\|^2}
d\tau_1\,d\tau_2 \f{du}{u^{3-2s}} \\
=  {1\over \det e_4}\ \
 \int_0^\infty\int_0^\infty
  \sum_{\srel{p\,>\,0}{\srel{n\,\neq\,0}{\srel{p\hat{m}\,=\,N_4}{n\,\perp\,N_4}}}}
    e^{  -\pi \tau_2^{-1}  u e_1^2 p^2-\pi \tau_2 u \|ne_4\|^2} \ \times  \\
    e^{-  \pi \tau_2 u^{-1}  \|e_4^{-1}\hat{m}\|^2}
 \,d\tau_2\,\f{du}{u^{3-2s}}
 \,.
\end{multline}
Changing variables to $x=u/\tau_2$ and $y=\tau_2 u$, so that
$u=\sqrt{xy}$, $\tau_2=\sqrt{y/x}$ and $d\tau_2 du = \f{dx dy}{2x}$ (\ref{nonepsinP4c}) equals
\begin{multline}\label{nonepsinP4d}
  \f{1}{2 \det e_4}  \sum_{\srel{p\,>\,0}{\srel{n\,\neq\,0}{\srel{p\hat{m}\,=\,N_4}{n\,\perp\,N_4}}}} \int_0^\infty\int_0^\infty
    e^{  -\pi x e_1^2 p^2 - \pi x^{-1} \|e_4^{-1}\hat{m}\|^2-\pi  y  \|ne_4\|^2}
\,\f{dx}{x^{\frac52-s}}\,\f{dy}{y^{\frac32-s}} \ \
= \\ {\G\left(s-\frac12\right)\over\pi^{s-\frac12}(\det e_4)}
    \sum_{\srel{p\,>\,0}{\srel{n\,\neq\,0}{\srel{p\hat{m}\,=\,N_4}{n\,\perp\,N_4}}}}
( \smallf{\|e^{-1}_4\hat{m}\|}{ p|e_1|})^{s-\frac32} \|ne_4\|^{1-2s}
 K_{s-\frac32}(2\pi  e_1 p \|e_4^{-1}\hat{m}\| )
 \,.
\end{multline}

\subsection{The $Spin(5,5)$ case}
\label{sec:dfive}

 Here we analyze the Fourier modes
 of the series $E^{Spin(5,5)}_{\alpha_1;s}$, which is one of the two Eisenstein series
 appropriate to the $D=6$ case.  The results are summarized in section \ref{so55case}.
Here we shall use the expressions (\ref{Isdef}) and (\ref{Is9}), which for $d=5$ imply
\begin{multline}\label{D5EpsinEps1}
    \,E^{SO(5,5)}_{\a_1;s+3/2}\(
\ttwo{I}{Bw_5}{}{I}\ttwo{v^{1/2}e}{}{}{v^{-1/2}\tilde{e}}\) \ \ = \\
   \f{v^{5/2}}{2\,\xi(2s+3)}  \int_{SL(2,\Z)\backslash \U}E^{SL(2)}_s(\tau)\,{\mathcal G}(\tau,v ee^t+B) \,\f{d^2\tau}{\tau_2^2}
    \\
   + \   v^{s+3/2}\,   E^{SL(5)}_{\beta_{4};s+3/2}(e)\\
    +\ v^{5/2-s}\,\f{\xi(2s-1)}{\xi(2s+3)}\,
    E^{SL(5)}_{\b_1;s}(e)\,,
\end{multline}
where $v>0$ and $e\in SL(5,\IR)$. Formula (\ref{SpinddandSoddseries}) shows that the same formula is valid for $E^{Spin(d,d)}_{\a_1;s+3/2}(h')$, where $h'\in Spin(d,d,\IR)$ is any element which projects onto $\ttwo{I}{Bw_5}{}{I}\ttwo{v^{1/2}e}{}{}{v^{-1/2}\tilde{e}}$ via the covering map $Spin(d,d,\IR)\rightarrow SO(d,d,\IR)$.

(i) {\bf The parabolic   $P_{\alpha_5}=GL(1)\times SL(5)\times U_{\alpha_5}$}

The analysis in  this section also covers limit (iii), since both parabolics come from spinor nodes.
Formula (\ref{D5EpsinEps1})  shows that the nontrivial spinor parabolic Fourier coefficients (in $B$) all come from the integral on the righthand side.  In limit (i) the parameter $v$ plays the role of the parameter $r^2$ from (\ref{notation}), and so we set $v=r^2$.
Substituting the formula (\ref{bigG2}) for ${\mathcal G}(\tau,r^2ee^t+B)$ we see that the contribution to the nonzero Fourier modes of (\ref{D5EpsinEps1}) is given by
\begin{equation}\label{D5Epsinspina}
     \f{r^{5}}{2\,\xi(2s+3)} \int_{SL(2,\Z)\backslash \U}E^{SL(2)}_s(\tau)\,
        \sum_{ [\srel{m}{n}]\, \in \, {\mathcal M}_{2,5}^{(2)}(\Z)}
    e^{-\pi \tau_2^{-1}\, r^2\, \|(m+n\tau)e\|^2} e^{-2\pi i mBn^t}
       \,\f{d^2\tau}{\tau_2^2}  \,.
\end{equation}  Note that all nonzero Fourier modes have the form $B\mapsto e^{2\pi i mBn^t}$, which is precisely the $\smallf 12$-BPS condition from (\ref{472}).

  We conclude that for $N_2\in {\mathcal M}_{5,5}(\Z)$  the Fourier coefficient of $E_{\a_1;s}^{SO(5,5)}\(\ttwo{I}{Bw_d}{}{I}\ttwo{r e}{}{}{r^{-1}\tilde{e}}\)$ for  the character $B\mapsto e^{i \pi  (\tr N_2 B)}$ is equal to
\begin{equation}\label{D5Epsinspinb}
    \f{r^{5}}{2\,\xi(2s+3)}    \int_{\U}E^{SL(2)}_{s}(\tau)\,
        \sum_{\srel{ [\srel{m}{n}]\, \in \, SL(2,\Z)\backslash {\mathcal M}_{2,5}^{(2)}(\Z)}{N_2\,=\,n^t m-m^t n}}
    e^{-\pi \tau_2^{-1} r^2\,\|(m+n\tau)e\|^2}
       \,\f{d^2\tau}{\tau_2^2}
\end{equation}
 with $r^2= r_4/\ell_7$ according the identification of the
  parameters in~\cite{Green:2010kv} recalled in~\eqref{notation}.

 In the case of interest in section~\ref{so55case} the parameter $s$ is equal to zero,  and the integral was computed in (\ref{I01a})  as
  \begin{equation}\label{D5Epsinspinc}
   \f{r^{3}}{2\,\xi(3)}  \,  \sum_{\srel{ [\srel{m}{n}]\, \in \, SL(2,\Z)\backslash {\mathcal M}_{2,5}^{(2)}(\Z)}{N_2\,=\,n^t m-m^t n}} \,\f{
   e^{-2\pi r^2 \det([\srel mn]ee^t [\srel mn]^t)^{1/2}}}{ \det([\srel mn]ee^t [\srel mn]^t)^{1/2}}\,.
  \end{equation}
  In the claim following (\ref{I03}) we saw that the $[\srel mn]$ in this sum can be parametrized as
  $[\srel mn]=\ttwo{d_1}{b}{0}{d_2}[\srel{m'}{n'}]$,  where $d_1\neq 0$,  $0\le b < d_2$, and $[\srel{m'}{n'}]$ ranges over left $GL(2,\Z)$-cosets of ${\mathcal M}_{2,5}^{(2)}(\Z)':=\{$all possible bottom two rows of matrices in  $SL(5,\Z)\}$.  (This coset space is in bijective correspondence with $P_{\b_2}(\Z)\backslash SL(5,\Z)$.) The constraint $N_2=n^t m-m^t n$ then reads $N_2=d_1d_2((n')^tm'-(m')^tn')$.  As a consequence we can rewrite (\ref{D5Epsinspinc}) as
  \begin{equation}\label{D5Epsinspind}
   \f{r^{3}}{2\,\xi(3)}  \,  \sum_{\srel{ [\srel{m'}{n'}]\, \in \, GL(2,\Z)\backslash {\mathcal M}_{2,5}^{(2)}(\Z)'}{N_2\,=\,d_1d_2((n')^t m'-(m')^t n')}} \,\f{d_2}{d_1d_2}\,\f{
   e^{-2\pi r^2 d_1d_2\det([\srel{m'}{n'}]ee^t [\srel{m'}{n'}]^t)^{1/2}}}{ \det([\srel{m'}{n'}] ee^t [\srel{m'}{n'}]^t)^{1/2}}\,.
  \end{equation}
The product $d_1d_2$ obviously divides each entry of $N_2$, but the entries of $N_2=n^tm-m^tn$ can have a nontrivial common factor even if $\gcd(m)=\gcd(n)=1$.  On the other hand, the ${5\choose 2}=10$ minors of the two bottom rows $[\srel{m'}{n'}]$ must be relatively prime, since the determinant of the $SL(5,\Z)$ matrix (i.e., 1) is an integral linear combination of them.  These minors are the entries of $N_2$, up to sign.  We conclude that $d_1d_2=\gcd(N_2)$ and that
  (\ref{D5Epsinspinc}) is equal to
 \begin{equation}\label{D5Epsinspine}
   \f{r^{3}}{2\,\xi(3)}  \,  \sum_{\srel{ [\srel{m'}{n'}]\, \in \, GL(2,\Z)\backslash {\mathcal M}_{2,5}^{(2)}(\Z)'}{N_2\,=\,\gcd(N_2)((n')^t m'-(m')^t n')}} \f{\sigma_1(\gcd(N_2))}{\gcd(N_2)} \,\f{
   e^{-2\pi r^2 \gcd(N_2)\det([\srel{m'}{n'}]ee^t [\srel{m'}{n'}]^t)^{1/2}}}{ \det([\srel{m'}{n'}]ee^t [\srel{m'}{n'}]^t)^{1/2}}\,.
  \end{equation}
Again, (\ref{SpinddandSoddseries}) shows that this formula is also valid for $F^{Spin(d,d)}_{\a_1;s}(h')$ and any $h'\in Spin(d,d,\IR)$ which projects onto $\ttwo{I}{BW_d}{}{I}\ttwo{re}{}{}{r^{-1}\tilde e}$ via the covering map $Spin(d,d,\IR)\rightarrow SO(d,d,\IR)$.

(ii) {\bf The parabolic   $P_{\alpha_1}=GL(1)\times Spin(4,4)\times U_{\alpha_1}$}\hfil\break
\label{sec:pierres-analysis}

 We shall use (\ref{D5EpsinEps1}) to compute the nonzero Fourier modes of $E_{\a_1;s}^{SO(5,5)}$ and hence $E_{\a_1;s}^{Spin(5,5)}$.
Before beginning the calculation, it is helpful to explicitly write out the groups and characters involved.
The unipotent radical $U=U_{\a_1}$ of $P_{\a_1}$ is an abelian group isomorphic to $\IR^8$
under the map
\begin{equation}\label{Ua1param}
  u_1,u_2,\ldots,u_8 \  \ \mapsto \ \
    \(\begin{smallmatrix}
        1 &  u_1 & u_2 & u_3 & u_4 & u_5 & u_6 & u_7 & u_8 & -u_1u_8-u_2u_7-u_3u_6-u_4u_5 \\
 0 & 1 & 0 & 0 & 0 & 0 & 0 & 0 & 0 & -u_8 \\
 0 & 0 & 1 & 0 & 0 & 0 & 0 & 0 & 0 & -u_7 \\
 0 & 0 & 0 & 1 & 0 & 0 & 0 & 0 & 0 & -u_6\\
 0 & 0 & 0 & 0 & 1 & 0 & 0 & 0 & 0 & -u_5\\
 0 & 0 & 0 & 0 & 0 & 1 & 0 & 0 & 0 & -u_4\\
 0 & 0 & 0 & 0 & 0 & 0 & 1 & 0 & 0 & -u_3\\
 0 & 0 & 0 & 0 & 0 & 0 & 0 & 1 & 0 & -u_2\\
 0 & 0 & 0 & 0 & 0 & 0 & 0 & 0 & 1 & -u_1\\
 0 & 0 & 0 & 0 & 0 & 0 & 0 & 0 & 0 & 1
      \end{smallmatrix}
    \),
\end{equation}
and $\G\cap U$ is isomorphic to $\Z^8$ under this identification.
 The general Fourier mode is indexed $N_1 = [\srel MN]=[\srel{m^1}{n_1}\srel{m^2}{n_2}\srel{m^3}{n_3}\srel{m^4}{n_4}] \in M_{2,4}(\Z)$ from (\ref{e:N11}), and is given by the character
\begin{equation}\label{mostgeneralFouriermodeU}
\chi_{N_1}(u) \ \ :=   \ \ e^{2\pi i (m^1u_1+m^2u_2+m^3u_3+m^4u_4+n_1u_8+n_2u_7+n_3u_6+n_4u_5)}\,.
\end{equation}
The Fourier coefficient (\ref{e:477}) is given by
\begin{equation}\label{Fa1sD5a1ni}
F_{\a_1;s}^{SO(5,5)\,\a_1}(N_1) \ \ = \ \ \int_{U(\Z)\backslash U(\IR)}
E_{\a_1;s}^{SO(5,5)}(uh)\,\chi_{N_1}(u)^{-1}\, du\,.
\end{equation}
The general element $h$ of the Levi component has the form
\begin{equation}\label{La1element}
    h \ \ =  \ \  h(a,h_4) \ \ = \ \
      \left(
                                     \begin{array}{ccc}
                                      a & 0 & 0  \\
                                       0 & h_4 & 0  \\
                                       0 & 0 & 1/a  \\
                                     \end{array}
                                   \right),
\end{equation}
where $a\neq 0$ and $h_4\in SO(4,4)(\IR)$.

  Given the structure of the last two terms in (\ref{D5EpsinEps1}) (which are insensitive to $u_5,u_6,u_7,u_8$) it makes sense to treat the cases $N\neq [0\,0\,0\,0]$ and $N=[0\,0\,0\,0]$ separately.  Since $E_{\a_1;s}^{SO(5,5)}$ is invariant under the Weyl group element $h(1,w_8)$ ($w_8$ denoting the reversed-$8\times 8$ identity matrix)  and conjugating the matrix (\ref{Ua1param}) by $h(1,w_8)$ reverses the order of the $u_i$, the Fourier coefficient $F_{\a_1;s}^{SO(5,5)\,\a_1}([\srel{M}{N}])
  $  evaluated at $h(a,h_4)$ equals  $F_{\a_1;s}^{SO(5,5)\,\a_1}([\srel{N}{M}])$ evaluated at $h(a,w_8h_4)$.  Since we are studying nontrivial Fourier coefficients at least one entry of the matrix $N_1$ is nonzero. Thus the determination of these coefficients for $N_1$ of the form $[\srel{M}{0}]$ reduces to those of the form $[\srel{0}{N}]$.
 Therefore in performing these computations we can  assume that  $N\neq [0\,0\,0\,0]$, and then convert afterwards to $N=[0\,0\,0\,0]$ using this $w_8$-mechanism.  For reasons of space we will not carry out this conversion here, and instead limit our discussion in section~\ref{so55case} to the case when $N\neq [0\,0\,0\,0]$.  Thus for the remainder of the paper we assume $N\neq [0\,0\,0\,0]$.

Suppose $e\in SL(5,\IR)$ has the form $e= \ttwo{1}{Q}{}{I_4}\ttwo{v^{-1/2}r^2}{}{}{ e_4}=\ttwo{v^{-1/2}r^2}{ Qe_4}{}{ e_4}$, with $Q=[q_1\,q_2\,q_3\,q_4]$ and  $e_4\in GL(4,\IR)$ a matrix with determinant $v^{1/2}r^{-2}$.  The reason for writing $e$ this way is ensure  that $r$ plays the same role it does in (\ref{notation}).
Furthermore suppose
\begin{equation}\label{Bw5}
    Bw_5 \ \ = \ \ \(
    \begin{smallmatrix}
    b_1 & b_2 & b_3 & b_4 & 0 \\
    b_5 & b_6 & b_7 & 0 & -b_4 \\
    b_8 & b_9 & 0 & -b_7 & -b_3 \\
    b_{10} & 0 & -b_9 & -b_6 & -b_2 \\
    0 & -b_{10} & -b_8 & -b_5 & -b_1 \\
    \end{smallmatrix}
    \).
\end{equation}
Then the argument $\ttwo{I}{Bw_5}{}{I}\ttwo{v^{1/2} e}{}{}{v^{-1/2}\tilde{e}}$ of the first line of (\ref{D5EpsinEps1}) lies in $P_{\a_1}$ (recall that the parameter $v$ determines the determinant, $v^{5/2}$, of the upper left $5\times 5$ block of this matrix).
This product also has the factorization $uh$, where $u$ is the matrix (\ref{Ua1param}) with $(u_1,u_2,u_3,u_4,u_5,u_6,u_7,u_8)=(q_1,q_2,q_3,q_4,b_1-b_5q_1-b_8q_2-b_{10}q_3,b_2-b_6q_1-b_9q_2+b_{10}q_4,
b_3-b_7q_1+b_9q_3+b_8q_4,b_4+b_7q_2+b_6q_3+b_5q_4)$ and  $h=h(r^2,h_4)$, where $h_4=\ttwo{I_4}{B'w_4}{}{I_4}\ttwo{v^{1/2} e_4}{}{}{ v^{-1/2}\tilde{e}_4}$.
  Thus the character
   $ \chi_{N_1}(u)=\exp(2\pi i (m^1-n_4b_5-n_3b_6-n_2b_7)q_1+(m^2-n_4b_8-n_3b_9+n_1b_7)q_2+(m^3-n_4b_{10}+n_2b_9+n_1b_6)q_3
+(m^4+n_3b_{10}+n_2b_8+n_1b_5)q_4+n_4b_1+n_3b_2+n_2b_3+n_1b_4)$.

Recall  (\ref{bigG2}), which states
\begin{multline}\label{bigG2bis}
 {\mathcal G}(\tau,vee^t+B) \ \ = \\    \sum_{ [\srel{p\,m_1}{q\,m_2}]\, \in \, {\mathcal M}_{2,5}^{(2)}(\Z)}e^{-\pi \tau_2 v \|[q\,m_2]e\|^2 -\pi \tau_2^{-1}v \|[p+q\tau_1\ m_1+m_2\tau_1]e\|^2}e^{-2\pi i [p\,m_1]B[q\,m_2]^t}
\end{multline}
after the elements of $\Z^5$ are grouped together as an integer $p$ or $q$ and a vector  $m_1=[m_{12}\,m_{13}\,m_{14}\,m_{15}]$ or  $m_2=[m_{22}\,m_{23}\,m_{24}\,m_{25}] \in \Z^4$.
At this point identify the variables $b_1=u_5$, $b_2=u_6$, $b_3=u_7$, and $b_4=u_8$. Then $-[p\,m_1]B[q\,m_2]^t=-p(u_8m_{22}+u_7m_{23}+u_6m_{24}+u_5m_{25})+q(u_8m_{12}+ u_7m_{13}+u_6m_{14}+u_5m_{15})-m_1B'm_2^t$.
Hence the
 $[\srel{p\,m_1}{q\,m_2}]$ which contribute to the Fourier mode $(u_5,u_6,u_7,u_8)\mapsto e^{2\pi i (n_4u_5+n_3u_6+n_2u_7+n_1u_8)}$
  are those having  $pm_{22}-qm_{12}=-n_1$,
$pm_{23}-qm_{13}=-n_2$,
$pm_{24}-qm_{14}=-n_3$,
and
$pm_{25}-qm_{15}=-n_4$.
 This condition on the minors of the $2\times 5$ matrix $[\srel{p\,m_1}{q\,m_2}]$ is $SL(2,\Z)$-invariant.  Each $SL(2,\Z)$ orbit has an element with $q=0$ and $p>0$, at which the conditions simplify to
\begin{equation}\label{D5epsinepsc}
   pm_2 \ \ = \ \ p[m_{22}\,m_{23}\,m_{24}\,m_{25}] \ \ = \ \ -\, [n_1\,n_2\,n_3\,n_4] \ \ = \ \ -N \,,
\end{equation}
which cannot be zero because $[\srel{p\,m_1}{q\,m_2}]$ has rank 2.

For the rest of the paper we shall assume that  $N\neq 0$.
Under this  assumption   all contributions to the Fourier coefficient come from the second line of (\ref{D5EpsinEps1}).  Thus the terms in (\ref{bigG2bis}) which contribute to the Fourier coefficient can be written as
\begin{equation}\label{D5epsinepsda}
    \sum_{\g\,\in\,\G_\infty\backslash \G} {\mathcal G}^0_a(\g \tau,vee^t+B) \,,
\end{equation}
where
\begin{multline}\label{D5epsinepsdb}
    {\mathcal G}^0_a( \tau,vee^t+B) \ \ := \\
    \sum_{ \srel{pm_2\,=\,-N}{m_1 \,\in\,\Z^4}}e^{-\pi \tau_2 v \|[0\,m_2]e\|^2 -\pi \tau_2^{-1}v \|[p \ m_1+m_2\tau_1]e\|^2}e^{-2\pi i [p\,m_1]B[0\,m_2]^t}\,.
\end{multline}
Using the facts that $e^{-2\pi i [p\,m_1]B[0\,m_2]^t} =e^{2\pi i (n_4u_5+n_3u_6+n_2u_7+n_1u_8)}e^{-2\pi i m_1B'm_2^t}$ and $[0\,m_2]e= m_2e_4$
we now execute Poisson summation over $m_1\in \Z^4$ in (\ref{D5epsinepsdb}):
\begin{multline}\label{D5epsineps3}
    e^{-2\pi i (n_4u_5+n_3u_6+n_2u_7+n_1u_8)}{\mathcal G}^0_a( \tau,vee^t+B) \ \ = \\
    \sum_{ \srel{pm_2\,=\,-N}{\hat{m}_1 \,\in\,\Z^4}}e^{-\pi \tau_2 v \|m_2e_4\|^2}
    \int_{\IR^4}e^{2\pi i (m_2B'-\hat{m}_1)\cdot m_1}e^{-\pi
      \tau_2^{-1}v \|[p \ m_1+m_2\tau_1]e\|^2} \,dm_1 \\
    = \ \  \sum_{ \srel{pm_2\,=\,-N}{\hat{m}_1 \,\in\,\Z^4}}e^{2\pi i (\hat{m}_1-m_2B')\cdot m_2 \tau_1} e^{-\pi \tau_2  v \|m_2e_4\|^2} \ \times \\
    \int_{\IR^4}e^{2\pi i (m_2B'-\hat{m_1})\cdot m_1}e^{-\pi
      \tau_2^{-1}v \|[p \, m_1 ]e\|^2} \,dm_1
    \,.
\end{multline}
Again using the special form $e= \ttwo{1}{Q}{}{I_4}\ttwo{v^{-1/2}r^2}{}{}{ e_4}=\ttwo{v^{-1/2}r^2}{ Qe_4}{}{ e_4}$ (so
that $[p\,m_1]e=[  v^{-1/2}r^2 p \ \   (pQ+m_1)e_4]$), this equals

\begin{multline}\label{D5epsineps4}
    = \ \
    \sum_{ \srel{pm_2\,=\,-N}{\hat{m}_1 \,\in\,\Z^4}} e^{2\pi i (\hat{m}_1-m_2B')\cdot m_2 \tau_1} e^{-\pi \tau_2 v  \| m_2e_4\|^2}  \\ \times \
    \int_{\IR^4}e^{2\pi i (m_2B'-\hat{m}_1)\cdot m_1} e^{-\pi \tau_2^{-1}  p^2 r^{4}
    -\pi \tau_2^{-1}v  \|(pQ+m_1) e_4\|^2 } \,dm_1  \\
    = \ \
     \sum_{ \srel{pm_2\,=\,-N}{\hat{m}_1
         \,\in\,\Z^4}}e^{2\pi i (\hat{m}_1-m_2B')\cdot (m_2
       \tau_1+pQ)}e^{-\pi \tau_2 v  \| m_2e_4\|^2-\pi \tau_2^{-1}   p^2 r^{4} } \\
\times    \int_{\IR^4}e^{2\pi i (m_2B'-\hat{m_1})\cdot m_1}
   e^{ -\pi \tau_2^{-1} v  \|m_1 e_4\|^2 } \,dm_1 \\
    = \ \   \smallf{\tau_2^2 \,r^2 }{v^{5/2}}
     \sum_{ \srel{pm_2\,=\,-N}{\hat{m}_1 \,\in\,\Z^4}}e^{2\pi i (\hat{m}_1-m_2B') \cdot (m_2 \tau_1+pQ)}  \\
     \times e^{-\pi \tau_2 v   \| m_2e_4\|^2 -\pi \tau_2^{-1} p^2r^{4} -\pi \tau_2 v^{-1} \|(m_2B'-\hat{m}_1)(e_4^t)^{-1}\|^2}
\,,
\end{multline}
where we have used that $\det e_4=v^{1/2}r^{-2}$.

We now  use (\ref{D5epsineps4}) to determine the remaining Fourier dependence on $(u_1,u_2,u_3,u_4)$.
The dependence on $Q=[q_1\,q_2\,q_3\,q_4]$ here is from $e^{2\pi i (\hat{m}_1p-m_2B'p)Q^t}=e^{2\pi i (p\hat{m}_1+NB')Q^t}$.  Writing $\hat{m}_1=[\hat{m}_{12}\,
\hat{m}_{13}\,\hat{m}_{14}\,\hat{m}_{15}]$
the argument here is
$(p\hat{m}_{12}  -n_2b_7-n_3b_6-n_4b_5  )q_1 +
(p\hat{m}_{13}    +n_1b_7-n_3b_9-n_4b_8    )q_2 +
(p\hat{m}_{14}    +n_1b_6+n_2b_9-n_4b_{10}   )q_3 +
(p\hat{m}_{15}   +n_1b_5+n_2b_8+n_3b_{10}    )q_4$.
The character describing the Fourier mode  above was $ \chi_{N_1}(u)=\exp(2\pi i (m^1-n_4b_5-n_3b_6-n_2b_7)q_1+(m^2-n_4b_8-n_3b_9+n_1b_7)q_2+(m^3-n_4b_{10}+n_2b_9+n_1b_6)q_3
+(m^4+n_3b_{10}+n_2b_8+n_1b_5)q_4+n_4b_1+n_3b_2+n_2b_3+n_1b_4)$.  The condition that these match is thus that
$p\hat{m}_1=p[\hat{m}_{12}\,\hat{m}_{13}\,\hat{m}_{14}\,\hat{m}_{15}]=[m^1\,m^2\,m^3\,m^4]=M$. Then the
relevant Fourier coefficient
${\mathcal F}{\mathcal G}_a^0(\tau,v ee^t+B;[\srel MN])$ of ${\mathcal G}_a^0(\tau,v ee^t+B)$ is
\begin{equation}\label{D5epsineps5}
\aligned
 = \ \    \smallf{\tau_2^2 \,r^2 }{v^{5/2}}
     \sum_{ \srel{p\hat{m}_1\,=\,M}{pm_2\,=\,-N}} &e^{2\pi i
       \hat{m}_1\cdot m_2 \tau_1}  e^{-\pi \tau_2 v  \| m_2e_4\|^2}  \
     \times \\ & e^{ -\pi \tau_2^{-1} p^2 r^{4} -\pi \tau_2v^{-1}  \|(m_2B'-\hat{m}_1)(e_4^t)^{-1}\|^2} \\
     = \ \
     \smallf{\tau_2^2 \,r^2 }{v^{5/2}}  \sum_{p\,|\,\gcd(m^1,\ldots,n_4)}  &
e^{ -\pi \tau_2^{-1} p^2 r^{4}}
    e^{-2\pi i \, p^{-2}\, \tau_1 M \cdot N} \ \times \\ &
e^{-\pi\, p^{-2}\, \tau_2 v   \| N e_4\|^2-\pi\, p^{-2}\, \tau_2 v^{-1} \|(NB'+M)(e_4^t)^{-1}\|^2}
     \,,
    \endaligned
\end{equation}
the sum being over the positive common divisors $p$ of $m^1,\ldots,n_4$.

Finally, we insert (\ref{D5epsineps5}) into the second line of
(\ref{D5EpsinEps1}), and unfold to the strip.  In terms of (\ref{Fa1sD5a1ni}), this gives the Fourier coefficient  $F^{SO(5,5)\,\a_1}_{\a_1;s+3/2}$ at $h=h(r^2,h_4)$, where $h_4=\ttwo{I_4}{B'w_4}{}{I_4}\ttwo{v^{1/2} e_4}{}{}{v^{-1/2} \tilde{e}_4}$:
\begin{multline}\label{D5epsineps6}
    F^{SO(5,5)\,\a_1}_{\a_1;s+3/2}(h(r^2,h_4);[\srel{M}{N}]) \ \ = \\ \f{ \,r^{2}}{\xi(2s+3)} \int_{\G_\infty\backslash \U}
    E^{SL(2)}_s(\tau) \,d^2\tau
\sum_{p\,|\,\gcd(m^1,\ldots,n_4)}\!\!
    e^{-2\pi i \tau_1 \frac{M\cdot N}{p^2}}  \\ \times \  e^{ -\pi \tau_2^{-1} p^2 r^{4} -\pi\, p^{-2}\, \tau_2 v  \| N e_4\|^2-\pi\, p^{-2}\, \tau_2 v^{-1} \|(NB'+M)(e_4^t)^{-1}\|^2}.
\end{multline}
The matrix $e_4$ here is normalized differently than in  (\ref{PLRdefgeneral}), where it corresponds to the $SO(4,4)$ semisimple part of the Levi component.  In our setting that is instead $v^{1/2}e_4$, so that $G_4= ve_4e_4^t$.  Here $B'$ plays the role of the antisymmetric matrix $B$ and so  (\ref{PLRdefgeneral})  reads
\begin{equation}\label{PLRinappH}
    \aligned
 \sqrt{2}\, p_L \ \ & = \ \  v^{-1/2}(M \ + \ N B')  (e_4^t)^{-1} \ - \ v^{1/2} N   e_4                \\
 \sqrt{2}\,p_R \ \ & = \ \   v^{-1/2}(M \ + \ N B')
 (e_4^t)^{-1}  \ + \ v^{1/2}  N   e_4   \,.
    \endaligned
\end{equation}
It follows that
\begin{equation}\label{pl2mpr2}
   p_L^2\,+\,p_R^2\ \
= \ \ v^{-1} \| (M\ + \ N
  B') (e_4^t)^{-1}\|^2 \  + \ v \|  N   e_4  \|^2
\end{equation}
while
\begin{equation}\label{pl2mpr2b}
   p_L^2\,-\,p_R^2 \ \ = \ \ -2 \, (M \ + \ N
  B')  (e_4^t)^{-1} ( N  e_4)^t \ \
= \ \ -2 \, M\cdot N\,.
\end{equation}
With these substitutions and replacing $s$ by $s-3/2$, (\ref{SpinddandSoddseries})  and (\ref{D5epsineps6})  lead to~\eqref{so55modes}.
%


\end{document}